\documentclass[11pt]{article}
\usepackage{amsmath,amsfonts,amsthm,amssymb,color}
\usepackage{fancybox}
\usepackage{framed}
\usepackage{comment}
\usepackage{mathtools}
\usepackage{tikz}
\usepackage{mathrsfs}
\usetikzlibrary{arrows}
\usepackage{graphicx}
\usepackage{subcaption}

\usepackage{thmtools}

\usepackage[margin=1in]{geometry}

\newtheorem{theorem}{Theorem}[section]
\newtheorem{corollary}[theorem]{Corollary}
\newtheorem{lemma}[theorem]{Lemma}

\newtheorem{fact}[theorem]{Fact}

\theoremstyle{definition}
\newtheorem{definition}[theorem]{Definition}
\newtheorem{remark}[theorem]{Remark}

\newenvironment{fminipage}%
  {\begin{Sbox}\begin{minipage}}%
  {\end{minipage}\end{Sbox}\fbox{\TheSbox}}

\def\dim#1{\mathrm{dim} (#1)}

\newcommand\Exp{E}

\newcommand\zero{\mathrm{zero}}

\newcommand\trans{\mathrm{Trans}}

\newcommand\matrixspaceA{{\mathfrak{A}}}
\newcommand\matrixspaceB{{\mathfrak{B}}}
\newcommand\matrixspaceC{{\mathfrak{C}}}

\newcommand\matrixspaceG{{\mathfrak{G}}}
\newcommand\matrixspaceH{{\mathfrak{H}}}
\newcommand\matrixspaceX{{\mathfrak{X}}}
\newcommand\matrixspaceY{{\mathfrak{Y}}}
\newcommand\matrixspaceZ{{\mathfrak{Z}}}

\renewcommand\SS{{\mathrm{SS}}}
\newcommand\GL{{\mathrm{GL}}}

\renewcommand\AA{{\mathcal{A}}}
\newcommand\BB{{\mathcal{B}}}
\newcommand\FF{{\mathcal{F}}}
\newcommand\GG{{\mathcal{G}}}
\newcommand\HH{{\mathcal{H}}}

\newcommand\semic{\mathbf{SC}}

\newcommand{\rowspan}{\mathrm{rowspan}}
\newcommand{\rank}[1]{\mathrm{rank} \left(#1\right)}

\title{Faster Isomorphism for $p$-Groups of Class 2 and Exponent $p$}
\author{Xiaorui Sun\thanks{\texttt{xiaorui@uic.edu}. 
University of Illinois at Chicago.}} 
\date{}
\begin{document}
{
\clearpage
\maketitle 
\thispagestyle{empty}

\setcounter{page}{0}

\begin{abstract}
The group isomorphism problem determines whether two groups, given by their Cayley tables, are isomorphic. For groups with order $n$, an algorithm with $n^{(\log n + O(1))}$ running time, attributed to Tarjan,  was proposed in the 1970s [Mil78]. Despite the extensive study over the past decades, the current best group isomorphism algorithm has an  $n^{(1 / 4 + o(1))\log n}$ running time [Ros13].

The isomorphism testing for $p$-groups of (nilpotent) class 2 and exponent $p$ has been identified as a major barrier to obtaining an $n^{o(\log n)}$ time algorithm for the group isomorphism problem. Although the $p$-groups of class 2 and exponent $p$ have much simpler algebraic structures than general groups, the best-known isomorphism testing algorithm for this group class also has an $n^{O(\log n)}$ running time. 

In this paper, we present an isomorphism testing algorithm for $p$-groups of class 2 and exponent $p$ with running time $n^{O((\log n)^{5/6})}$ for any prime $p > 2$. 
Our result is based on a novel reduction to the skew-symmetric matrix tuple isometry problem [IQ19]. To obtain the reduction, we develop several tools for matrix space analysis, including a matrix space individualization-refinement method and a characterization of the low rank matrix spaces.

\end{abstract}
}

\newpage 

\section{Introduction}
The group isomorphism problem is to determine whether two groups, given by their Cayley (multiplication) tables, are isomorphic. 
The problem is among a few classes of problems in NP that are not known to be solvable in polynomial time or NP-Complete~\cite{garey1979computers}. 
The group isomorphism problem and its variants have close connections to cryptography, computational group theory, and algebraic complexity theory~\cite{brooksbank2019incorporating}.
Furthermore, following Babai's breakthrough on the quasi-polynomial time algorithm for graph isomorphism~\cite{babai-quasipolynomial, babai2019canonical}, 
group isomorphism has become a bottleneck for the $n^{o(\log n)}$ time algorithm of graph isomorphism because
group isomorphism reduces to graph isomorphism.

The group isomorphism problem has been extensively studied since the 1970s~\cite{felsch1970programme, miller-latin, savage1980n2, o1994isomorphism, vikas1996ano, kavitha2007linear, gall2008efficient, wilson2009decomposing, wilson2009finding,
babai2011code, babai2012polynomial, babai2012polynomial_1, brooksbank2012computing, lewis2012isomorphism, qiao2012isomorphism, rosenbaum2013bidirectional, brooksbank2015fast,
luks2015group, rosenbaum2015beating, li2017linear, brooksbank2020improved, grochow2021p, dietrich2022group, grochow2021complexity}. 
A simple algorithm for group isomorphism, attributed to Tarjan,
picks a generating set in one of the groups and checks for all possible images of the generating set in the other group, whether the partial correspondence extends to an isomorphism~\cite{miller-latin}. 
%
%
Since every group of order $n$ has a generating set of size at most $\log_2 n$, this algorithm results in an $n^{\log_2 n+O(1) }$ running time. 
The current best-known algorithm for the group isomorphism problem has an $n^{(1/4 + o(1))\log_2 n}$ running time~\cite{rosenbaum2013bidirectional}.




It is long believed that the isomorphism testing of $p$-groups of class 2 and exponent $p$ is a major bottleneck for the group isomorphism problem~\cite{lewis2012isomorphism, babai2011code,brooksbank2012computing, rosenbaum2013bidirectional, brooksbank2015fast, li2017linear, brooksbank2019incorporating}. 
A group $G$ is a $p$-group of (nilpotent) class 2 and exponent $p$ for some prime number $p$ if 
every element except the identity has an order of $p$, and 
$G$ is not abelian but $[G, [G, G]]$ only contains the identity element, where $[G, H]$ denotes the 
group generated by 
$xyx^{-1}y^{-1}$ for all $x\in G, y \in H$.


The best-known algorithm for the isomorphism testing of $p$-groups of class 2 and exponent $p$ does not have a major advantage in the running time, being $n^{O(\log_2 n)}$~\cite{rosenbaum2013bidirectional},  over the general groups, 
even though 
the structure of $p$-groups of class 2 and exponent $p$ was well understood~\cite{baer1938groups, webb1983rank, wilson2009decomposing, wilson2009finding},
and the isomorphism testing of this group class has been studied in depth~\cite{lewis2012isomorphism, brooksbank2012computing, rosenbaum2013bidirectional, brooksbank2015fast, li2017linear, brooksbank2019incorporating, schrock2019complexity}. 
Hence, to develop a better algorithm for isomorphism testing of general groups, 
it is necessary to provide a faster algorithm for $p$-groups of class 2 and exponent $p$.


\subsection{Our result}

In this paper, 
we present an isomorphism testing algorithm for $p$-groups of class 2 and exponent $p$ with $n^{o(\log n)}$ running time for any odd prime $p$.


\begin{theorem}\label{thm:main_group}
Let $G$ and $H$ be two groups of order $n$. 
If both $G$ and $H$ are $p$-groups of class 2 and exponent $p$ for some prime number $p > 2$,
then   
given the Cayley tables of $G$ and $H$, there is an algorithm with running time $n^{O((\log n)^{5/6})}$ to determine whether $G$ and $H$ are isomorphic. 
\end{theorem}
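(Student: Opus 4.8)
The plan is to reduce the isomorphism problem for $p$-groups of class 2 and exponent $p$ to an instance of the skew-symmetric matrix tuple isometry problem, and then to solve that instance faster than the naive $n^{O(\log n)}$ bound by exploiting structure in the associated matrix space. The starting point is the well-known Baer correspondence: a $p$-group $G$ of class 2 and exponent $p$ (for odd $p$) is determined up to isomorphism by the alternating bilinear map $G/Z(G) \times G/Z(G) \to [G,G]$ given by the commutator, and isomorphism of two such groups corresponds to pseudo-isometry of the associated bilinear maps. Fixing bases, this means $G$ is encoded by a tuple of skew-symmetric matrices (the structure constants of the commutator map), and $G \cong H$ iff the tuples are pseudo-isometric, i.e. isometric up to a change of basis on the target space. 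So the first step of the proof is to set up this dictionary carefully and observe that a $p$-group of order $n$ with $\log_p n = m$ yields a tuple of skew-symmetric matrices of total dimension $O(m)$ over $\mathbb{F}_p$, where $p$ may itself be as large as $n$.

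The second, and main, step is the algorithmic heart: solving the resulting matrix tuple isometry/pseudo-isometry instance in time $n^{O((\log n)^{5/6})}$. I would use the individualization-refinement paradigm lifted from graph isomorphism to the matrix space setting. One picks a small set of "pivot" elements or directions in the matrix space (individualization), costing a branching factor of roughly $p^{O(k)} \le n^{O(k)}$ for $k$ pivots; then one runs a refinement procedure that propagates the consequences of these choices, using invariants of the matrix space — ranks, radicals, the structure of low-rank elements — to cut down the space of possible isometries. The key structural input here is the promised characterization of low-rank matrix spaces: if the matrix space spanned by the tuple has many low-rank elements, it has a constrained global form (essentially a block/compression structure) that can be decomposed and handled recursively; if it does not, the generic high-rank elements give enough rigidity that individualizing a small number of them pins down the isometry, analogously to how individualizing few vertices in a graph with no large automorphism-friendly substructure kills the search. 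Balancing the branching cost $n^{O(k)}$ against the recursion depth / subproblem size, with the threshold between the "low rank" and "high rank" regimes tuned appropriately, should yield an exponent of the form $(\log n)^{5/6}$ — the fractional power arising from optimizing a recurrence of the shape $T(m) \le n^{O(m^\alpha)} \cdot T(m^{1-\beta})$ or similar.

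I would organize the write-up as: (i) the group-theoretic reduction to skew-symmetric matrix tuple pseudo-isometry (citing the Baer correspondence and reducing pseudo-isometry to isometry by the standard trick of also guessing the action on the $O(m)$-dimensional target, at a cost absorbed into $n^{O((\log n)^{5/6})}$); (ii) the matrix space individualization-refinement framework, with the refinement operators and their invariance stated as lemmas; (iii) the low-rank matrix space characterization and the resulting case split; (iv) assembling the recurrence and optimizing the parameters to obtain the claimed running time. Throughout, one must be careful that the field $\mathbb{F}_p$ is large, so all bounds should be polynomial in $p$ and $n = p^m$, never exponential in $p$ alone; the individualization steps branch over $\mathrm{poly}(n)$-many choices, not over all of $\mathrm{GL}_m(\mathbb{F}_p)$.

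The hard part will be step (ii)--(iii): making the refinement both powerful enough to guarantee real progress (a genuine drop in problem size or a large gain in rigidity after each individualization round) and cheap enough that the accumulated branching stays at $n^{(\log n)^{5/6}}$. In particular, proving that the low-rank-heavy case really does decompose the matrix space into strictly smaller, independently-solvable pieces — rather than merely reshuffling it — is the crux, and it is exactly where the characterization of low-rank matrix spaces (Theorem-to-come in the body) must do its work. Getting the quantitative trade-off to land precisely at the exponent $5/6$, as opposed to some other fractional power, requires care in how the pivot-set size scales with the current dimension at each level of the recursion.
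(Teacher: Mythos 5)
Your opening paragraph (Baer correspondence, reduction to pseudo-isometry of skew-symmetric tuples/spaces, and the framing in terms of individualization-refinement plus a low-rank matrix space characterization) matches the paper's high-level plan, but the machinery you propose for extracting the $n^{O((\log n)^{5/6})}$ bound is not what the paper does, and as described it would not produce the bound. Three concrete gaps.

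First, you are missing the case split on $k=\log_p n$. The matrix-space algorithm's running time is $p^{O((n+m)^{1.8}\log p)}$, which carries a $\log p$ factor in the exponent; this only lands below $n^{O((\log n)^{5/6})}$ when $k$ is large compared to $\log p$. When $k\le (\log_2 p)^5$ the paper does not run the matrix-space algorithm at all: it falls back to the classical generating-set enumeration (Tarjan/Miller), which gives $p^{O(k^2)}$, and checks directly that $p^{O(k^2)} = n^{O((\log n)^{5/6})}$ in that regime. Your write-up has no such dichotomy, and without it the $\log p$ factor would blow the bound when $p$ is large and $k$ is small.

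Second, the $5/6$ exponent does not arise from optimizing a recursion of the form $T(m)\le n^{O(m^\alpha)}T(m^{1-\beta})$. There is no recursion in the paper's matrix-space algorithm: a single choice of individualization matrices (of dimension roughly $n^{0.8}$ in the relevant directions) and a single application of the low-rank characterization produce a ``semi-canonical'' tensor form, and the residual problem is handed off wholesale to the Ivanyos--Qiao polynomial-time algorithms for skew-symmetric matrix tuple isometry and matrix tuple equivalence. The entire branching cost is $p^{O((n+m)n^{0.8}\log p)}$ from enumerating characterization tuples, not from repeated subdivision. The $5/6$ in the final theorem is purely the balance point $k=(\log_2 p)^5$ between $p^{O(k^2)}$ (brute force) and $p^{O(k^{1.8}\log p)}$ (matrix-space algorithm); your recurrence heuristic points at a different mechanism and would not reproduce the paper's analysis.

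Third, ``guessing the action on the $O(m)$-dimensional target'' as you suggest would cost $p^{O(m^2)} = n^{O(\log n)}$, which is exactly what the paper is trying to beat. The whole point of the semi-canonical form machinery (individualizing the $\matrixspaceX$, $\matrixspaceY$, $\matrixspaceZ$ slices simultaneously with compatible attribute sets and formatting matrices, proving that the kernel is invariant, and encoding the surface into an auxiliary skew-symmetric tuple) is to avoid ever enumerating the target change of basis. Your low-rank case analysis also goes in a different direction: the paper does not decompose into smaller independent instances in the low-rank-heavy regime; instead, it uses the $O(r^2)$-attribute-set characterization to compress the offending subspace into a thin strip of the tensor and then reduces to tuple isometry on the compressed object. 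You should replace the recursive picture with this compression-plus-reduction picture, and add the $k$ versus $(\log_2 p)^5$ case split, to make the proof go through.
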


Theorem~\ref{thm:main_group} utilizes the Baer's correspondence~\cite{baer1938groups},
which reduces  
the group isomorphism problem for $p$-groups of class 2 and exponent $p$ to the isometry testing problem of skew-symmetric matrix spaces. 

A square matrix $A$ is a skew-symmetric matrix if $A^T = -A$.
In the isometry testing problem for skew-symmetric matrix spaces,
the input consists of the linear bases of two skew-symmetric matrix spaces $\matrixspaceA$ and $\matrixspaceB$.
The problem is to
decide whether there is an isometry $S$ from $\matrixspaceA$ to $\matrixspaceB$, 
i.e., an invertible matrix  $S$ such that 
$S \matrixspaceA S^T = \matrixspaceB$, where $S\matrixspaceA S^T$ is the linear span of the matrices $S A S^T$ for all the matrices $A \in \matrixspaceA$.
We prove the following result for the isometry testing problem of skew-symmetric matrix spaces. 

\begin{theorem}\label{thm:main_matrix_space}
Let $\matrixspaceA$ and $\matrixspaceB$ be two linear matrix spaces, both of dimension $m$,
such that every matrix in $\matrixspaceA$ or $\matrixspaceB$ is an $n\times n$ skew-symmetric matrix over $\mathbb{F}_p$ for some prime number $p > 2$ and positive integers $m, n$.
There is an algorithm with running time $p^{O((n+m)^{1.8} \cdot \log (p))}$ to determine whether there is an invertible $n\times n$ matrix $S$ over $\mathbb{F}_p$ such that 
$S \matrixspaceA S^T = \matrixspaceB$.
\end{theorem}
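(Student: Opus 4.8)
The plan is to reduce the skew-symmetric matrix space isometry problem to the skew-symmetric matrix \emph{tuple} isometry problem of Ivanyos--Qiao~\cite{IQ19}, which admits an algorithm whose running time is quasi-polynomial in $p^{n+m}$ (more precisely $p^{O(n+m)\cdot \mathrm{polylog}}$ type bounds), and then to beat the naive bound by a divide-and-conquer / individualization-refinement strategy that keeps the effective dimension of the ambient space under control. Concretely, given bases $(A_1,\dots,A_m)$ of $\matrixspaceA$ and $(B_1,\dots,B_m)$ of $\matrixspaceB$, an isometry $S$ with $S\matrixspaceA S^T=\matrixspaceB$ is exactly the data of a change-of-basis matrix $T\in\GL_m(\mathbb{F}_p)$ together with $S\in\GL_n(\mathbb{F}_p)$ such that $S A_i S^T = \sum_j T_{ij} B_j$ for all $i$; thus matrix-space isometry is tuple isometry "up to the $\GL_m$ action on the tuple," and the two problems are polynomial-time interreducible when $m$ is small. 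The naive cost of enumerating $T\in\GL_m(\mathbb{F}_p)$ is $p^{O(m^2)}$, which together with the $p^{O(n+m)}$-type cost of each tuple-isometry call is far too large; the point of the paper's machinery is to avoid paying $p^{m^2}$.

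The core idea is the matrix space individualization-refinement method announced in the abstract, combined with the characterization of low-rank matrix spaces. First I would use the low-rank characterization to case-split: either $\matrixspaceA$ (and correspondingly $\matrixspaceB$, else output "not isometric") contains a matrix of rank $\Omega((n+m)^{0.9})$ or so, in which case individualizing that matrix and a Witt-type decomposition of the form it induces splits the isometry group into a product over smaller blocks and recurses on spaces of strictly smaller size; or every matrix in the space has rank at most $O((n+m)^{0.9})$, in which case the low-rank characterization forces $\matrixspaceA$ to have rigid combinatorial structure (a common "core" subspace on which all matrices are supported, or a block-triangular form) that can be read off directly and that again reduces to isometry on a much smaller instance plus a cheap combinatorial matching. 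Balancing the two recursion branches — the branch where we pay for enumerating over a rank-$r$ stabilizer costing roughly $p^{O(r\cdot(n+m)/r)}=p^{O(n+m)}$ versus the branch where the low-rank structure lets us strip off dimensions — is what produces the exponent $1.8 = 2\cdot 0.9$: each level reduces $(n+m)$ by a polynomial factor while the per-level cost is $p^{O((n+m)^{1.8}\log p)}$, so the whole recursion is dominated by its top level.

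The main obstacle I expect is making the individualization step actually reduce the problem size in \emph{both} the $n$ direction and the $m$ direction simultaneously, and controlling it when the isometry group of the individualized structure is still large. Fixing one matrix $A\in\matrixspaceA$ up to congruence pins down the form of $S$ only up to the isometry group of $A$ itself (a symplectic-type group when $A$ is full rank even, or a parabolic when $A$ has a kernel), which can have size $p^{\Omega(n^2)}$, so individualizing a single matrix is not enough; one must individualize a whole carefully chosen \emph{subspace} (a "base" in the permutation-group sense) so that the simultaneous stabilizer is small, and then argue that such a base of size $\Otil((n+m)^{0.8})$ always exists, analogous to small bases for primitive groups. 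Establishing the existence of such a small base for the relevant matrix-space stabilizers — i.e., that $O((n+m)^{0.8})$ generic matrices in $\matrixspaceA$ have simultaneous congruence-stabilizer of size $p^{\Otil((n+m)^{1.8})}$ — is the technical heart, and I would prove it by a rank-increment argument: each newly individualized matrix either increases the rank of the accumulated pencil's generic member by a polynomial amount or else, by the low-rank characterization, exposes exploitable structure, so after polynomially many steps the stabilizer is forced down to the claimed size. The remaining pieces — invoking the Ivanyos--Qiao tuple-isometry algorithm as a black box on the reduced instances, and handling the $\GL_m$ ambiguity via the same refinement applied to the "dual" tuple $(A_1,\dots,A_m)$ viewed as a single object — should be routine bookkeeping once the base-size bound is in hand.
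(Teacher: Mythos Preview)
Your proposal names the right high-level ingredients—individualization-refinement, a low-rank structure theorem, and a reduction to the Ivanyos--Qiao tuple-isometry algorithm—but the concrete mechanism you sketch is \emph{not} what the paper does, and the mechanism you do sketch has a genuine gap.

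\textbf{What the paper actually does.} There is no recursion and no case split on ``there exists a high-rank matrix vs.\ all matrices are low rank.'' Instead the paper views the space as a $3$-tensor $\mathbf G\in\mathbb F_p^{m\times n\times n}$ and applies individualization \emph{simultaneously} to the three slice spaces $\matrixspaceX_{\mathbf G},\matrixspaceY_{\mathbf G},\matrixspaceZ_{\mathbf G}$. The individualization is not ``fix a matrix $A\in\matrixspaceA$ up to congruence''; it is the choice of small external matrices $L,R$ so that $A\mapsto LAR$ is injective on the high-rank part of the space (Lemma~\ref{lem:individualzation_main}). Whatever survives in $\zero_{L,R}$ is a low-rank subspace, and Lemma~\ref{lem:low_rank_main} shows it can be confined to the last $O(r^2)$ rows/columns by a formatting matrix. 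Packaging this as a \emph{semi-canonical form} of the tensor (Section~\ref{sec:semi_canonical_form_tensor}) yields a fixed ``kernel'' and a thin ``surface''; the surface is then encoded directly as a skew-symmetric matrix \emph{tuple} (Section~\ref{sec:reduction}), and one invokes Ivanyos--Qiao once per enumerated characterization tuple. The entire algorithm is a single enumeration over characterization tuples of total bit-size $O((n+m)n^{0.8}\log p)$, followed by a polynomial-time call; there is no divide-and-conquer. Your arithmetic for the exponent is also off: the $1.8$ is $1+0.8$ with $0.8=2\cdot 0.4$ coming from the rank threshold $r=n^{0.4}$ and the $O(r^2)$ in the low-rank lemma, not $2\cdot 0.9$.

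\textbf{The gap in your route.} The linchpin of your plan is the assertion that $O((n+m)^{0.8})$ ``generic'' matrices in $\matrixspaceA$ have simultaneous congruence-stabilizer of size $p^{\widetilde O((n+m)^{1.8})}$. This is a nontrivial structural claim with no proof sketch: the stabilizer of a single skew form of rank $2k$ already has order $p^{\Theta(k^2+(n-2k)^2)}$, and there is no a~priori reason that adding $(n+m)^{0.8}$ further forms forces the intersection down to $p^{(n+m)^{1.8}}$ rather than, say, staying at $p^{\Theta(n^2)}$ when the space is highly degenerate. Even granting the stabilizer bound, you have not said what you actually enumerate (cosets of which subgroup in which group?) nor how individualizing elements of $\matrixspaceA$ simultaneously cuts down the $\GL_m$ ambiguity on the tuple side—this is precisely the problem the paper solves via the tensor semi-canonical form, and ``the same refinement applied to the dual tuple'' is not an argument. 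Finally, the block-diagonal constraint that arises when reducing semi-canonical-form isometry to tuple isometry (Lemma~\ref{lem:restricted_tuple_isometry} and the delicate case analysis in Lemmas~\ref{lem:reduction_easy_case}--\ref{lem:s_transform_final}) is a substantial technical obstacle that your plan does not anticipate.
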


We obtain 
Theorem~\ref{thm:main_matrix_space} by combining several new tools to analyze matrix spaces, 
including an individualization-refinement method for matrix spaces, a characterization of low rank matrix spaces, and a reduction from the isometry testing of skew-symmetric matrix spaces to the isometry testing of skew-symmetric matrix tuples~\cite{ivanyos2019algorithms}.



To obtain Theorem~\ref{thm:main_group}, let $k$ denote $\log_p (n)$.
We apply Theorem~\ref{thm:main_matrix_space} for the case of $k > (\log_2( p))^5$ by constructing the skew-symmetric matrix spaces for both input groups according to the Baer's correspondence~\cite{baer1938groups}. Theorem~\ref{thm:main_matrix_space} implies the running time for this case is $n^{O((\log n)^{5/6})}$.
For the case of $k \leq (\log_2( p))^5$, we run the aforementioned generating set enumeration algorithm~\cite{miller-latin}.
Because every $p$ group of order $p^k$ has a generating set of size at most $k$, the running time of the algorithm for this case is $p^{O(k^2)}$, 
which is also $n^{O((\log n)^{5/6})}$. 



\subsection{Related work}
The group isomorphism problem has been studied for variant group classes.
Polynomial time algorithms have been developed for abelian groups~\cite{kavitha2007linear,savage1980n2,vikas1996ano}, 
groups formed by semidirect products of an abelian group and a cyclic group~\cite{gall2008efficient, wilson2009decomposing, wilson2009finding},
groups with normal Hall subgroups~\cite{qiao2012isomorphism},  
groups with abelian Sylow towers~\cite{babai2012polynomial_1}, 
and groups with no abelian normal subgroups~\cite{babai2012polynomial}.
Dietrich and Wilson recently showed that the group isomorphism problem can be solved in nearly linear time for most orders~\cite{dietrich2022group}.

For $p$-groups of class 2 and exponent $p$, 
algorithms for some nontrivial subclasses of this group class have been proposed~\cite{lewis2012isomorphism, brooksbank2012computing, brooksbank2015fast}. 
Li and Qiao showed that 
if the $p$-groups of class 2 and exponent $p$ are generated randomly, then the isomorphism testing problem can be solved in polynomial time in the average case~\cite{li2017linear}. 
In \cite{brooksbank2019incorporating}, the average case running time was further improved to linear. 
In this work, we focus on the isomorphism testing for $p$-groups of class 2 and exponent $p$ in the worst case. 

The refinement methods, such as the naive refinement~\cite{babai1980random} and Weisfeiler-Leman refinement~\cite{weisfeiler-lehman}, have been powerful tools for the graph isomorphism problem. 
The refinement methods have been successfully used for graph isomorphism testing algorithms~\cite{bab-srg,
babai1980random, babai-pcc, babai1983canonical, zemlyachenko1985graph, spielman,
datta2009planar, bcstw, bw-stoc13, cst,sw-pcc, lokshtanov2017fixed, grohe2019canonisation, kiefer2019weisfeiler,
wiebking2019graph, grohe2020faster,
grohe2020isomorphism,neuen2022isomorphism}, including the celebrated quasi-polynomial time algorithm for graph isomorphism~\cite{babai-quasipolynomial, babai2019canonical}.


The refinement approach does not extend to groups in a naive way. 
Several representations of groups that allow refinement have been proposed recently.
In~\cite{brooksbank2019incorporating}, the authors defined a hypergraph using recursively refinable filters and proposed applying the Weisfeiler-Leman refinement on the hypergraph.
Brachter and Schweitzer proposed defining colors of group element tuples by
group operation patterns of the elements involved in the tuple and applying the  Weisfeiler-Leman refinement  to refine the colors of element tuples~\cite{brachter2020weisfeiler}.
Both approaches can distinguish between several non-isomorphic constructions of $p$-groups of class 2 and exponent $p$. 
However, it was unclear how these refinement methods could be used to develop faster worst case isomorphism testing algorithms. 

The isometry testing of skew-symmetric matrix spaces was studied in~\cite{li2017linear, brooksbank2020improved, grochow2021p, grochow2021complexity}. 
Its applications 
in cryptography were investigated in~\cite{brooksbank2019incorporating, ji2019general, tang2022practical}. 




\subsection{Technique overview}
We provide an overview of the algorithm for the isometry testing of skew-symmetric matrix spaces (Theorem~\ref{thm:main_matrix_space}).

\paragraph{Isometry Testing for Skew-Symmetric Matrix Tuples}
We start by introducing the skew-symmetric matrix tuple isometry problem, which is related to our problem.
A skew-symmetric matrix tuple $\AA = (A_1, \dots, A_k)$ of length $k$ is a sequence of $k$ skew-symmetric matrices of the same dimensions. 
 For matrices $P$ and $Q$, we use $P\AA Q$ to denote the matrix tuple $(P A_1 Q, P A_2 Q, \dots, P A_k Q)$.

Similar to the isometry between two skew-symmetric matrix spaces, we also define the isometry between two skew-symmetric matrix tuples. 
For two skew-symmetric matrix tuples $\AA$ and $\BB$, a matrix $S$ is an isometry from $\AA$ to $\BB$ if $S\AA S^T = \BB$, i.e., $S A_i S^T = B_i$ for all the $A_i \in \AA$.
The isometry problem of skew-symmetric matrix tuples determines whether there is an isometry between two input skew-symmetric matrix tuples. 
The difference between the skew-symmetric matrix tuple isometry problem and the skew-symmetric matrix space isometry problem is that 
the correspondence between matrices from two matrix tuples is fixed by the indices of the matrices, but 
for matrix spaces, no such correspondence is given. 
Ivanyos and Qiao presented a polynomial time algorithm for the isometry testing of skew-symmetric matrix tuples~\cite{ivanyos2019algorithms}. 

\begin{theorem}[Theorem 1.7 of \cite{ivanyos2019algorithms}]\label{thm:skew_tuple_isometry_intro}
Let $\AA = (A_1, \dots, A_k)$ and $\BB = (B_1, \dots, B_k)$ be two skew-symmetric matrix tuples of length $k$ such that the matrices in $\AA$ and $\BB$ are of dimension $n \times n$ over $\mathbb{F}_p$ for some prime $p > 2$. 
There is an algorithm with running time $\mathrm{poly}(n, k, p)$ to determine whether there is an isometry from $\AA$ to $\BB$. 
If yes, the algorithm also returns an isometry from $\AA$ to $\BB$. 
\end{theorem}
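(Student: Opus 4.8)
\noindent\emph{Proof strategy.} The plan is to exploit the fact that, because the index correspondence $A_i \leftrightarrow B_i$ is fixed, the isometry problem for tuples is governed entirely by linear algebra together with the structure theory of finite associative algebras with involution. First observe that the set of isometries, when nonempty, is a coset of the \emph{autometry group} $\mathrm{Isom}(\AA) = \{T \in \GL_n(\mathbb{F}_p) : T A_i T^T = A_i \text{ for all } i\}$: if $S$ and $S_0$ are both isometries from $\AA$ to $\BB$, then $(S_0^{-1} S) A_i (S_0^{-1} S)^T = A_i$ for every $i$, so $S \in \mathrm{Isom}(\AA)\, S_0$. Hence it suffices to (a) decide whether a single isometry exists and, if so, produce one; and (b) compute a generating set for $\mathrm{Isom}(\AA)$. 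Before either step I would remove the common radical: if $\bigcap_i \ker A_i$ and $\bigcap_i \ker B_i$ have different dimensions, report that no isometry exists; otherwise restrict both tuples to fixed complements of these subspaces, solve the problem for the resulting nondegenerate cores, and extend any isometry found by letting it act as an arbitrary isomorphism between the two radicals.

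Both subproblems route through the \emph{adjoint algebra}
$$\mathrm{Adj}(\AA) = \{(X,Y) \in M_n(\mathbb{F}_p) \times M_n(\mathbb{F}_p) : X A_i = A_i Y \text{ for all } i\},$$
which is a finite-dimensional associative $\mathbb{F}_p$-algebra computable by solving a single linear system, and which — because each $A_i$ satisfies $A_i^T = -A_i$ — carries the involution $(X,Y)^* = (Y^T, X^T)$, turning it into a $*$-algebra. A matrix $S \in \GL_n(\mathbb{F}_p)$ is an autometry of $\AA$ precisely when $(S, S^{-T}) \in \mathrm{Adj}(\AA)$ and $(S,S^{-T})(S,S^{-T})^* = (I,I)$; that is, $\mathrm{Isom}(\AA)$ is exactly the unitary group of the $*$-algebra $(\mathrm{Adj}(\AA), *)$. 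To generate this group in time $\mathrm{poly}(n,k,p)$ I would: decompose $\mathrm{Adj}(\AA)$ into its Jacobson radical $J$ and semisimple quotient using the standard polynomial-time algorithms for finite associative algebras; on the semisimple part, invoke the classification of involutions of matrix algebras over a finite field of odd characteristic — each simple $*$-factor is of orthogonal, symplectic, or unitary type, or is the exchange of two isomorphic simple blocks — and write down the known generators of the corresponding classical group; and finally lift these generators through the nilpotent radical $J$, which contributes a unipotent normal subgroup whose generators are read off from a basis of the $*$-symmetric part of $J$. It is here that the hypothesis $p > 2$ is essential: in characteristic $2$ the symplectic/orthogonal dichotomy and the distinction between $1$ and $-1$ both collapse.

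For subproblem (a), finding a single isometry, I would set up the analogous relative adjoint bimodule $\mathrm{Adj}(\AA,\BB) = \{(X,Y) : X A_i = B_i Y \text{ for all } i\}$, a $(\mathrm{Adj}(\BB), \mathrm{Adj}(\AA))$-bimodule in which isometries correspond to the elements $(S, S^{-T})$ with $S$ invertible; deciding whether such an element exists is a module-isomorphism-type question over a fixed finite-dimensional algebra, which is solvable in polynomial time by the module-isomorphism algorithms of Chistov--Ivanyos--Karpinski, Brooksbank--Luks, and Ivanyos--Karpinski--Saxena. I expect the main obstacle to be precisely the descent from such an abstract isomorphism to an honest matrix $S$ with $S A_i S^T = B_i$: a module isomorphism only matches the two form-tuples up to multiplication by an element of $\mathrm{Adj}(\AA)$ and need not be compatible with the involution, so one must show that the residual obstruction can always be absorbed by composing with a unitary element — a statement that again reduces to the classification of $*$-algebras over $\mathbb{F}_p$ and, in the orthogonal blocks, to controlling a determinant/spinor-norm invariant, with the non-semisimple (radical) part requiring a separate lifting argument since naive dimension counting does not suffice there. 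Once one isometry $S_0$ is produced, together with the generators of $\mathrm{Isom}(\AA)$ from step (b) it describes the entire solution set, and all steps run within the claimed $\mathrm{poly}(n,k,p)$ bound, the dependence on $p$ itself (rather than $\log p$) entering through the generation of the classical groups over $\mathbb{F}_p$.
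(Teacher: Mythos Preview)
The paper does not prove this statement: it is quoted verbatim as Theorem~1.7 of Ivanyos--Qiao \cite{ivanyos2019algorithms} and used as a black box throughout (see also the companion citation, Theorem~\ref{thm:tuple_isometry}). There is therefore no ``paper's own proof'' to compare against.

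That said, your sketch is broadly in line with how the result is actually established in the cited work and its predecessors (Brooksbank--Wilson, Brooksbank--Luks, Chistov--Ivanyos--Karpinski). The reduction to the adjoint $*$-algebra $\mathrm{Adj}(\AA)$, the identification of the autometry group with the unitary group of that $*$-algebra, the Wedderburn/Jacobson decomposition, the classification of involutions on simple factors over a finite field of odd characteristic, and the lifting through the radical are exactly the ingredients Ivanyos and Qiao assemble. Your identification of the delicate step --- passing from a module isomorphism in $\mathrm{Adj}(\AA,\BB)$ to an element that is genuinely $*$-unitary, and handling the radical where a naive argument fails --- is also accurate; in \cite{ivanyos2019algorithms} this is precisely where most of the technical work goes, and it is resolved via a cohomological/lifting argument specific to $*$-algebras rather than by pure module-isomorphism technology. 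So your proposal is not wrong, but it is a high-level outline of a nontrivial paper rather than a self-contained proof, and the present paper makes no attempt to reproduce any of it.
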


Our approach for the isometry testing of skew-symmetric matrix spaces is obtained by providing a $p^{O((n+m)^{1.8} \cdot \log p)}$ time reduction to the  skew-symmetric matrix tuple isometry problem, 
where $m$ is the dimension of the matrix space, and $n$ is the number of rows or columns for each square matrix in the matrix space.

\paragraph{Individualization-refinement for matrix spaces}
One powerful technique for graph isomorphism is the individualization-refinement method~\cite{bab-srg, babai1980random, babai-pcc, zemlyachenko1985graph, spielman, bcstw, bw-stoc13, cst, sw-pcc, babai-quasipolynomial}. 
For graphs, 
the individualization-refinement method first chooses a set of a small number of vertices and assigns each chosen vertex 
a distinct vertex color,
and then it refines the vertex colors by assigning distinguished vertices different colors in a canonical way until vertices of the same color cannot be further distinguished. 


A natural question for the group isomorphism problem is whether it is possible to define  individualization-refinement operations for group isomorphism. 
Based on the connection between group isomorphism for $p$-groups of class 2 and exponent $p$ and the skew-symmetric matrix space isometry problem~\cite{baer1938groups}, 
Li and Qiao proposed a matrix space individualization-refinement method, which follows the individualization-refinement for random graphs~\cite{babai1980random}, and 
analyzed the isometry testing of skew-symmetric matrix spaces in the average case~\cite{li2017linear}.




In this work, we propose a different matrix space individualization-refinement to enable the analysis of the isometry of skew-symmetric matrix spaces in the worst case.
Consider an $m \times n$ matrix space $\matrixspaceA$. 
The individualization in our scenario is defined by a left individualization matrix $L$ 
and a right individualization matrix $R$, where $L$ is a matrix with $m$ columns and $R$ is a matrix with $n$ rows. 
In the refinement, we compute $LAR$ for each matrix $A\in \matrixspaceA$. If  $LA'R$ does not equal $LA''R$ for some $A', A'' \in \matrixspaceA$, then $A'$ and $A''$ are distinguished.  

Ideally, if $LA'R$ does not equal $LA''R$ for any two matrices $A', A'' \in \matrixspaceA$, then each matrix $A$ in the space can be uniquely identified by $LAR$, and thus all the matrices in $\matrixspaceA$ are distinguished. 
Consider two isometric skew-symmetric matrix spaces $\matrixspaceA$ and $\matrixspaceB$.
Let  $L_\matrixspaceA$ and  $R_\matrixspaceA$ be individualization matrices for $\matrixspaceA$ that distinguish all the matrices in $\matrixspaceA$.
Let   $L_\matrixspaceB$ and  $R_\matrixspaceB$  be individualization matrices for $\matrixspaceB$ 
such that 
 $L_\matrixspaceB$ equals $L_\matrixspaceA S^{-1}$, and $R_\matrixspaceB$ equals $(S^T)^{-1} R_\matrixspaceA$ for some isometry $S$ from $\matrixspaceA$ to $\matrixspaceB$. 
One can distinguish all the matrices in both spaces by their individualization matrices and then establish a bijection between the matrices in the two spaces. 
Thus the skew-symmetric matrix space isometry problem reduces to the skew-symmetric matrix tuple isometry problem, which can be efficiently solved by Theorem~\ref{thm:skew_tuple_isometry_intro}.
Furthermore, suppose $L_\matrixspaceA$ contains a small number of rows and $R_\matrixspaceA$ contains a small number of columns.
Then one can solve the skew-symmetric matrix space isometry problem efficiently by enumerating all the possible corresponding $L_\matrixspaceB$ and  $R_\matrixspaceB$.

We show that
the number of rows for the left individualization matrices  and the number of columns for the right individualization matrices are related to the rank of matrices in the matrix space.
More specifically, we show that for a matrix space of dimension $d$ and any parameter $k$, 
there exist left and right individualization matrices $L$ and $R$ with $O(\max\{d \log(p), k\} / \sqrt{k})$ rows and columns, respectively, such that 
for each matrix $A$ in the matrix space with rank at least $k$,
$LAR$ is a non-zero matrix (Lemma~\ref{lem:individualzation_main}). 
In other words, if every matrix (except the zero matrix) in a skew-symmetric matrix space is of high rank, then the skew-symmetric matrix space isometry problem reduces to the skew-symmetric matrix tuple isometry problem efficiently.




\paragraph{Low rank matrix space characterization}
The hard case for the matrix space individualization/refine method is that 
there are some matrices $A$ in the space such that $LAR$ are zero matrices. 
Because of the linearity, such matrices form a linear subspace of the original matrix space. 
To tackle this hard case, 
we characterize the structure of the matrix space in which every matrix is of low rank. 
Such a matrix space is called a low rank matrix space.

As our main technical result for the low rank matrix space characterization, 
we show that, for  a matrix space  $\matrixspaceA$ such that every matrix in the space is of rank at most $r$,
there are invertible matrices $P$ and $Q$, called left and right formatting matrices, such that 
for each $A \in \matrixspaceA$, $PAQ$ has non-zero entries only in the last $O(r^2)$ rows or 
columns (Lemma~\ref{lem:low_rank_main}). 
Furthermore, if $\matrixspaceA$ is a skew-symmetric matrix space, 
then $Q = P^T$. 

Together with matrix space individualization-refinement, we can represent a matrix space in a more structured way.
First, we construct a ``semi-canonical'' basis for the input matrix space. 
Suppose we apply left and right individualization matrices $L$ and $R$ to a matrix space $\matrixspaceA$ of dimension $d$ and compute a linear basis $(A_1, \dots, A_d)$ of $\matrixspaceA$
such that $(L A_1 R, L A_2 R, \dots, LA_d R)$ is lexically minimized among all the linear basis of $\matrixspaceA$. 
Because the zero matrix is lexically the smallest among all the matrices, 
the first few matrices in the semi-canonical basis correspond to a linear basis of $\matrixspaceC$, which is the linear span of all the matrices $A \in \matrixspaceA$ such that $LAR$ is a zero matrix.


We further apply formatting matrices $P$ and $Q$ for $\matrixspaceC$ to each matrix in the semi-canonical basis of $\matrixspaceA$ (every matrix $A$ in the semi-canonical basis becomes $PAQ$). 
Then by our low rank matrix space characterization, the matrices that form a linear basis of $\matrixspaceC$
have non-zero entries only in the last few rows or columns. See Figure~\ref{fig:ind_low_rank_one_dim_intro} for an illustration.

\begin{figure}[h]
\begin{center}
\includegraphics[width=0.5\textwidth]{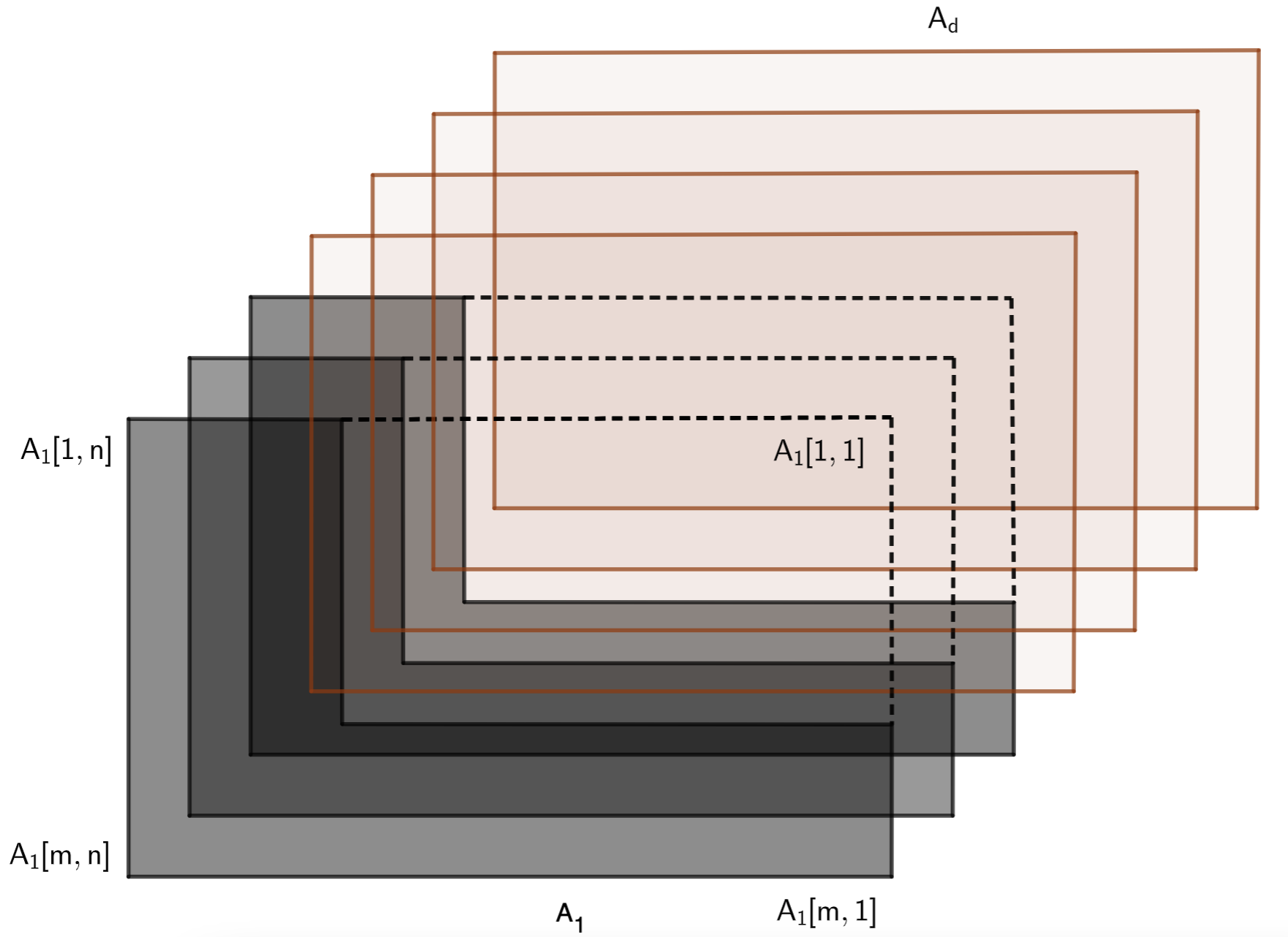}
\end{center}
\vspace{-0.6cm}\caption{The semi-canonical basis of a matrix space after applying matrix space individualization-refinement and the low rank matrix space characterization. The three black matrices in the front form a basis of the space spanned by all the matrices $A \in \matrixspaceA$ such that $LAR$ is a zero matrix. The transparent rectangles enclosed by the dashed black lines are zero matrices. The four light brown matrices in the back are the rest matrices in the basis.
}\label{fig:ind_low_rank_one_dim_intro}
\end{figure}

The semi-canonical basis is not canonical because, for fixed individualization matrices, there can be different semi-canonical bases. 
But the semi-canonical bases can provide a partial correspondence between two 
isometric skew-symmetric matrix spaces. 
Suppose two skew-symmetric matrix spaces $\matrixspaceA$ and $\matrixspaceB$ are isometric  and let $S$ be an isometry from $\matrixspaceA$ to $\matrixspaceB$. 
For individualization matrices $L$ and $R$ of $\matrixspaceA$, 
let $(A_1, \dots, A_d)$ be a semi-canonical basis  of $\matrixspaceA$ with $L$ and $R$ as individualization matrices, 
and $(B_1, \dots, B_d)$  be a semi-canonical basis of $\matrixspaceB$ with $L S^{-1}$ and $(S^T)^{-1}R$  as individualization matrices. 
Then 
for each $1 \leq i \leq d$, $S A_i S^T = B_i + B_i'$ for some $B_i'$ satisfying the condition that $L S^{-1} B_i' (S^T)^{-1}R$ is a zero matrix. 
The partial correspondence also holds for two equivalent matrix spaces. 
Two matrix spaces $\matrixspaceA$ and $\matrixspaceB$, in which matrices are not necessarily square matrices, are equivalent if there are invertible matrices $X$ and $Y$ such that $X \matrixspaceA Y = \matrixspaceB$, i.e., $\matrixspaceB$ equals the space spanned by $XAY$ for all the matrices $A \in \matrixspaceA$.


\paragraph{Tensor representation of skew-symmetric matrix spaces}
Next, we combine the matrix space individualization-refinement and the low rank matrix space characterization to analyze skew-symmetric matrix spaces. 
For convenience, let us define a three-tensor representation for skew-symmetric matrix spaces following~\cite{li2017linear}.
For a skew-symmetric matrix space $\matrixspaceA$ of dimension $m$ such that every matrix in the space is an $n \times n$ matrix, 
a three-tensor $\mathbf{G} \in \mathbb{F}_p^{m \times n \times n}$ is a skew-symmetric matrix space tensor of $\matrixspaceA$ if 
$\mathbf{G}[i, j, k] = A_i[j, k]$ for a linear basis $(A_1, \dots, A_m)$ of $\matrixspaceA$, 
where
$A_i[j, k]$ is the $(j, k)$-th entry of $A_i$, and $\mathbf{G}[i, j, k]$ is the $(i, j, k)$-th entry of $\mathbf{G}$.

Given a skew-symmetric matrix space tensor $\mathbf{G}$, 
we use $\matrixspaceX_{\mathbf{G}, i}$ to denote the $n \times n$ skew-symmetric matrix such that $\matrixspaceX_{\mathbf{G}, i}[j, k] = \mathbf{G}[i, j, k]$, 
use $\matrixspaceY_{\mathbf{G}, j}$ to denote the $m \times n$ matrix such that $\matrixspaceY_{\mathbf{G}, j}[i, k] = \mathbf{G}[i, j, k]$, and 
use $\matrixspaceZ_{\mathbf{G}, k}$ to denote the $m \times n$ matrix such that $\matrixspaceZ_{\mathbf{G}, k}[i, j] =  \mathbf{G}[i, j, k]$.
We also use $\matrixspaceX_\mathbf{G}$ to denote the matrix space $\langle \matrixspaceX_{\mathbf{G}, 1}, \dots \matrixspaceX_{\mathbf{G}, m}\rangle $,
use $\matrixspaceY_\mathbf{G}$ to denote the matrix space $\langle \matrixspaceY_{\mathbf{G}, 1}, \dots \matrixspaceY_{\mathbf{G}, n}\rangle $,
and use $\matrixspaceZ_\mathbf{G}$ to denote the matrix space $\langle \matrixspaceZ_{\mathbf{G}, 1}, \dots \matrixspaceZ_{\mathbf{G}, n}\rangle $, where $\langle \cdot \rangle$ is the linear span.
We remark that $\matrixspaceX_{\mathbf{G}}$ is a skew-symmetric matrix space, but $\matrixspaceY_{\mathbf{G}}$  and $\matrixspaceZ_{\mathbf{G}}$  are not. 

One can verify that 
two skew-symmetric matrix spaces are isometric if and only if their tensors (denoted as $\mathbf{G}$ and $\mathbf{H}$) are isometric,
i.e., 
there is an $n \times n$ invertible matrix $N$ and an $m \times m$ invertible matrix $M$ such that 
the transform of $\mathbf{G}$ by $N$ and $M$, denoted as $\trans_{N, M}(\mathbf{G})$, equals $\mathbf{H}$, where 
\[\matrixspaceX_{\trans_{N, M}(\mathbf{G}), i} = \sum_{i' = 1}^m  M[i, i'] \cdot\left( N  \cdot \matrixspaceX_{\mathbf{G}, i'} \cdot N^T \right).\]

\paragraph{Semi-canonical form of skew-symmetric matrix space tensors}
In this work, 
the purpose of the tensor representation of a skew-symmetric matrix space is to incorporate the matrix space  individualization-refinement and the low rank matrix space characterization techniques so the tensor is transformed into a more structured form, called the ``semi-canonical form'' of the tensor.

For a skew-symmetric matrix space tensor $\mathbf{G}$,
the semi-canonical form of $\mathbf{G}$, denoted as $\semic(\mathbf{G})$, is obtained by applying the two techniques to the three matrix spaces $\matrixspaceX_\mathbf{G}$, $\matrixspaceY_\mathbf{G}$, and $\matrixspaceZ_\mathbf{G}$
so matrices in each of the three matrix spaces have the structure shown in Figure~\ref{fig:ind_low_rank_one_dim_intro}.
To achieve this, 
we need to carefully choose the individualization and formatting matrices in a coordinated fashion. 
In particular, 
if the individualization and formatting matrices are chosen such that, 
for the left formatting matrix $P$ used for $\matrixspaceX_{\mathbf{G}}$, 
$P^T$ can also be used as the right formatting matrix for $\matrixspaceY_{\mathbf{G}}$ and $\matrixspaceZ_{\mathbf{G}}$, 
then 
the tensor semi-canonical form has the structure shown in Figure~\ref{fig:semi_tensor_intro}(a). 
The tensor values in the transparent region are all zero.
The union of the transparent region and the red cube is called the kernel of the tensor semi-canonical form.
The blue region is called the surface of the tensor semi-canonical form. 


\begin{figure}[h!]
\begin{center}
    \begin{subfigure}[b]{0.48\textwidth}
        \includegraphics[width=\textwidth]{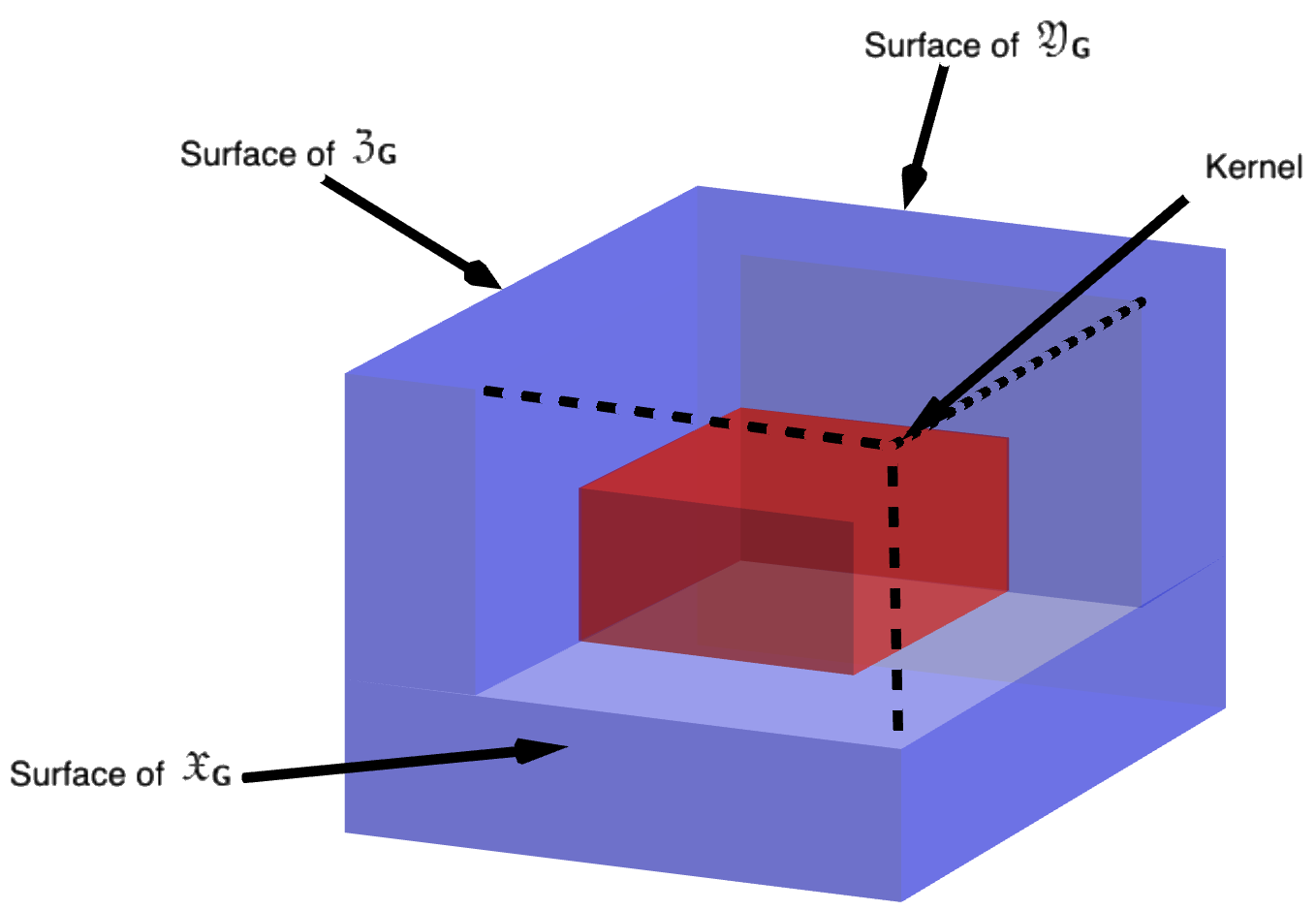}
        \caption{}
    \end{subfigure}
    ~ 
    \begin{subfigure}[b]{0.48\textwidth}
        \includegraphics[width=\textwidth]{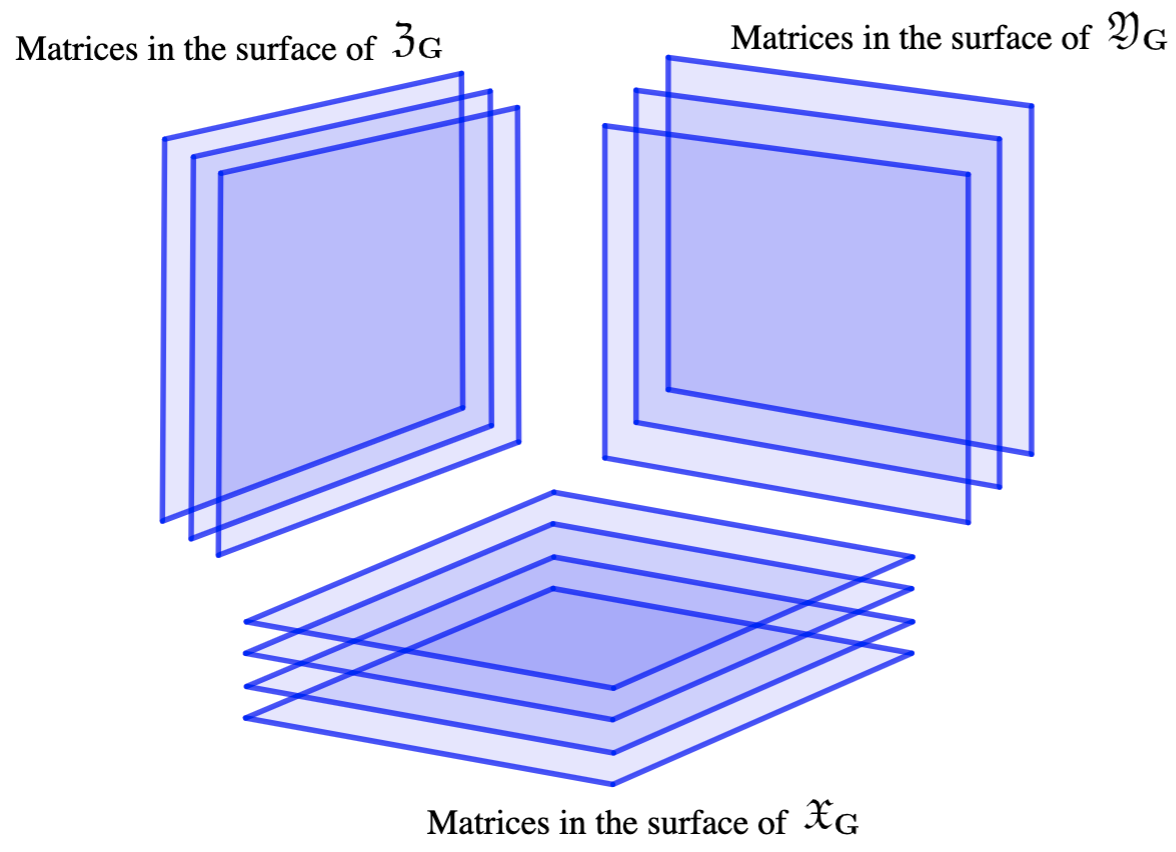}
        \caption{}
    \end{subfigure}

\end{center}
\vspace{-.2cm}\caption{(a). Semi-canonical form of a skew-symmetric matrix space tensor. (b) Matrices in the surfaces of $\matrixspaceX_\mathbf{G}, \matrixspaceY_\mathbf{G}$, and $\matrixspaceZ_\mathbf{G}$.} 
\label{fig:semi_tensor_intro}
\end{figure}




For fixed individualization and formatting matrices, the tensor semi-canonical form is also fixed. 
However, for an efficient tensor isometry testing algorithm, 
it is unacceptable to enumerate all possible formatting matrices, though it is affordable to enumerate all possible individualization matrices.
To address this issue, we show that if the individualization matrices are fixed, and the formatting matrices are partially fixed (i.e., only a few key rows are fixed, and all the other rows satisfy certain conditions), 
then the kernel is fixed (Lemma~\ref{lem:property_semi_canonical_form_tensor_iso}). 
This is also the reason for the term ``semi-canonical form'': the semi-canonical form is not unique for fixed individualization matrices and partially fixed formatting matrices, but the kernel is unique.

In other words, 
if two tensors are isometric, and one constructs the semi-canonical forms of the two tensors using individualization matrices and partially fixed formatting matrices that are the same up to some isometry, 
then the kernels of the two semi-canonical forms are identical. 
Therefore, to determine whether the two tensors are isometric, one only needs to check further if there are formatting matrices that make the surface identical for the two tensors while keeping the kernel unchanged.

In addition, based on the results from the matrix space individualization-refinement and the low rank matrix space characterization,
there are always semi-canonical forms such that 
the numbers of matrices in the surfaces of $\matrixspaceX_\mathbf{G}$, $\matrixspaceY_\mathbf{G}$, and $\matrixspaceZ_\mathbf{G}$ (Figure~\ref{fig:semi_tensor_intro}(b)) are small. 
Hence, in the partially fixed formatting matrices, 
we also fix the rows related to the surfaces of $\matrixspaceX_\mathbf{G}$, $\matrixspaceY_\mathbf{G}$, and $\matrixspaceZ_\mathbf{G}$. 
Then 
matrices in the surfaces from the three matrix spaces are fixed up to some formatting matrices satisfying the partially fixed constraint.   
%
%

Hence, the isometry testing of skew-symmetric matrix space tensors 
reduces to the isometry testing of their semi-canonical forms by enumerating individualization matrices and partially fixed formatting matrices for both tensors. 
Due to the fixed kernel for all the semi-canonical forms, 
the isometry testing of semi-canonical forms further reduces to deciding whether the surfaces are identical between semi-canonical forms 
up to some formatting matrices satisfying the partially fixed constraint.




\paragraph{Reduction to skew-symmetric matrix tuple isometry testing}
Finally, we reduce the isometry testing of semi-canonical forms of skew-symmetric matrix spaces to the aforementioned skew-symmetric matrix tuple isometry problem. 
The high-level idea is to construct a skew-symmetric matrix tuple to encode the surface of the tensor semi-canonical form. 
Because the matrices in the surfaces of $\matrixspaceX_\mathbf{G}$, $\matrixspaceY_\mathbf{G}$, and $\matrixspaceZ_\mathbf{G}$ are fixed, we can use different matrices in the matrix tuple to encode the matrices in the surface.

Suppose the kernel is of dimension $m' \times n' \times n'$ for some $1 \leq m' \leq m$ and $1 \leq n' \leq n$.
In our skew-symmetric matrix tuple of $\semic(\mathbf{G})$, denoted as $\FF_{\semic(\mathbf{G})}$, each matrix is of dimension  $(3 + n + m') \times (3 + n + m')$. 
The rows from the fourth to the $(3+n)$-th of matrices in $\FF_{\semic(\mathbf{G})}$ correspond to the rows of matrices in $\matrixspaceX_{\mathbf{G}}$.
The last $m'$ rows of matrices in $\FF_{\semic(\mathbf{G})}$ correspond to the first $m'$ rows of matrices in $\matrixspaceY_\mathbf{G}$ (or equivalently $\matrixspaceZ_\mathbf{G}$). 
The first three rows of matrices in $\FF_{\semic(\mathbf{G})}$ are auxiliary rows used to ensure that the other rows satisfy the constraints of the partially fixed formatting matrices. 
See Figure \ref{fig:fig_semi_canonical_iso_intro} for an illustration.

\begin{figure}[h]
\begin{center}
\includegraphics[width=0.6\textwidth]{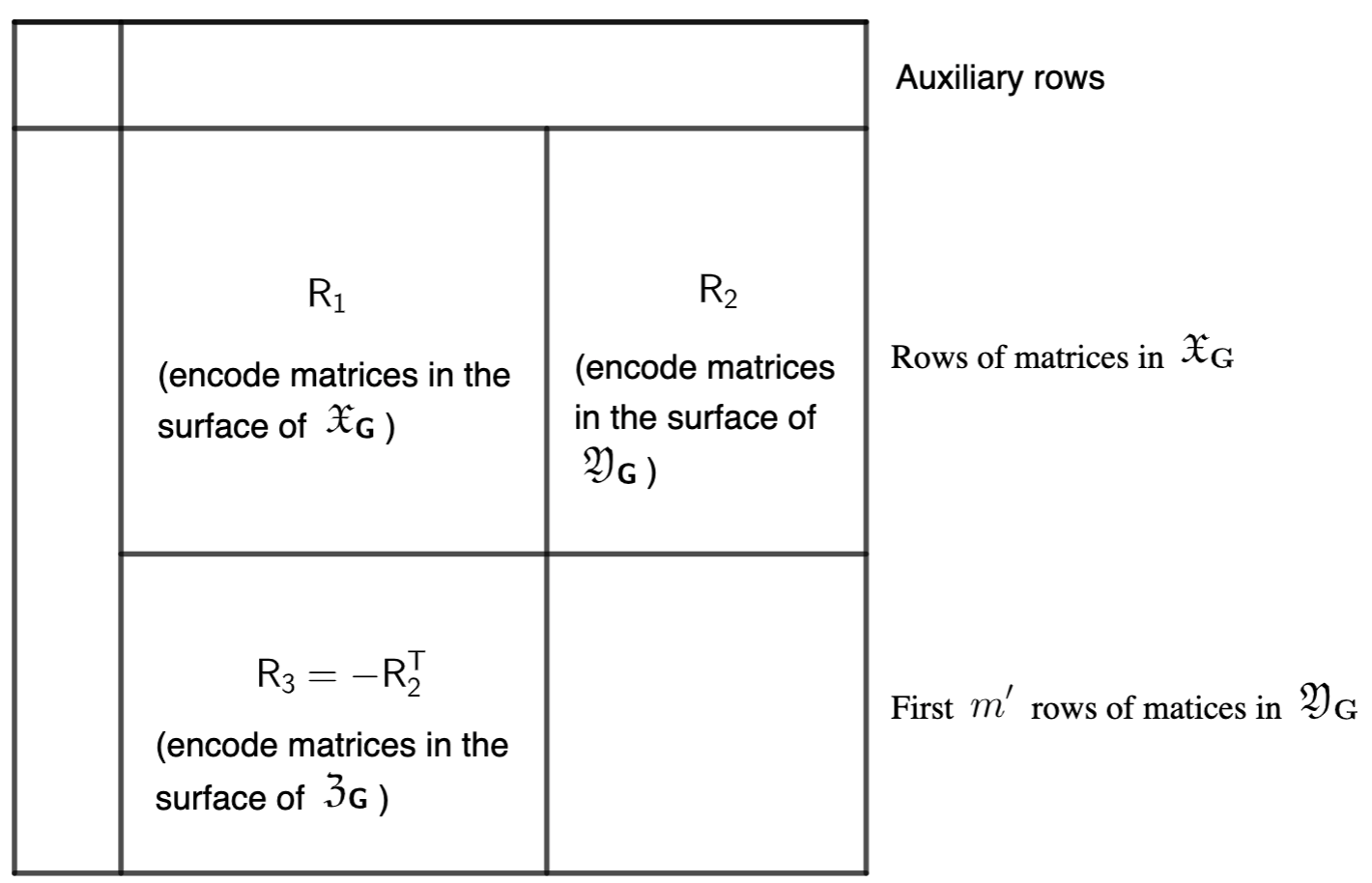}
\end{center}
\vspace{-.4cm}\caption{The matrices in $\FF_{\semic(\mathbf{G})}$.}\label{fig:fig_semi_canonical_iso_intro}
\end{figure}

We use the submatrices on $R_1$ (as Figure \ref{fig:fig_semi_canonical_iso_intro}) for all the matrices in $\FF_{\semic(\mathbf{G})}$ to 
encode the skew-symmetric matrices in the surface of $\matrixspaceX_{\mathbf{G}}$.
We also use the submatrices on $R_2$ for all the matrices in $\FF_{\semic(\mathbf{G})}$ to encode the matrices in the surface of $\matrixspaceY_{\mathbf{G}}$ 
(excluding the intersection with the surface of $\matrixspaceX_{\mathbf{G}}$). 
Consequently, 
the submatrices on $R_3$ for all the matrices in $\FF_{\semic(\mathbf{G})}$, which is the negative transpose of submatrices on $R_2$ by the skew-symmetric condition,  encode the matrices in the surface of $\matrixspaceZ_{\mathbf{G}}$ (excluding the intersection with the surface of $\matrixspaceX_{\mathbf{G}}$). 
We use the other submatrices to ensure constraints given by the partially fixed formatting matrices. 



By carefully designing matrix tuples constructed from tensor semi-canonical forms,
we show that the semi-canonical forms of two skew-symmetric matrix space tensors are isometric if and only if there is an isometry $S$ between the skew-symmetric matrix tuples 
such that $S$ is a block diagonal matrix
\[S = \left(\begin{array}{cc} Q & 0 \\ 0   & W\end{array}\right)\]
for some 
$(3 + n) \times (3+n)$ matrix $Q$ and $m' \times m'$ matrix $W$ (Lemma~\ref{lem:restricted_tuple_isometry}). 

Naturally, we want to determine the isometry of the two tensors by running the skew-symmetric matrix tuple isometry algorithm (Theorem~\ref{thm:skew_tuple_isometry_intro}) on the matrix tuples constructed from the semi-canonical forms. 
However, the requirement of $S$ being block diagonal makes things more complex. 

Suppose we run the algorithm for skew-symmetric matrix tuple isometry on the matrix tuples constructed. 
If the algorithm returns no, then the two semi-canonical forms are not isometric. 
If the algorithm returns yes and an isometry that is block diagonal, then the two semi-canonical forms are isometric. 
The difficult case is when 
the algorithm returns yes and an isometry that is not block diagonal. 
For this case, we can neither certify that the two semi-canonical forms are isometric,
nor show 
that the two semi-canonical forms are not isometric.

Let us consider an easier scenario: Suppose
for each non-zero row vector $v \in \mathbb{F}_p^n$, there is a matrix $X$ in the surface of $\matrixspaceX_\mathbf{G}$ such that $v X$ is a non-zero vector. 
With this condition, together with our construction of matrix tuples, 
we can show that the isometry returned is of the form 
\[\left( \begin{array}{cc} X & Y \\ 0 & Z \end{array}\right).\]
After carefully analyzing the matrix tuples constructed, we show that 
\[\left( \begin{array}{cc} X & 0 \\ 0 & Z \end{array}\right)\]
is also an isometry, and thus the two semi-canonical forms are isometric. 

The general case is more complex because the left bottom submatrix of the isometry returned can be non-zero. 
However, we show that either we can certify that there exists another block diagonal isometry for the skew-symmetric matrix tuples,
or we can reduce the problem to a matrix tuple equivalence problem, i.e., the problem of determining whether two matrix tuples $\AA$ and $\BB$ have invertible matrices $P$ and $Q$ such that 
$P \AA Q = \BB$. 
According to~\cite{ivanyos2019algorithms}, the matrix tuple equivalence problem can be solved efficiently.

\paragraph{Paper organization} 
In Section~\ref{sec:prelim}, we define the notations and provide the preliminaries. 
In Section~\ref{sec:individualization}, we present our results on matrix space individualization-refinement.
In Section~\ref{sec:low_rank}, we present our results on low rank matrix space characterization.
Section~\ref{sec:semi_canonical_form_tensor} defines skew-symmetric matrix space tensor and its semi-canonical form. %
In Section~\ref{sec:reduction}, we present the reduction to the skew-symmetric matrix tuple isometry problem.   
Section~\ref{sec:final} proves Theorem~\ref{thm:main_group} and Theorem~\ref{thm:main_matrix_space}.


\section{Notations and preliminaries}\label{sec:prelim}

Throughout the paper,  the vectors and  matrices are over $\mathbb{F}_p$ for a prime number $p > 2$.
We use $\langle \cdot \rangle$ to denote the linear span.
The base of the logarithm is two unless specified.
Let $\mathbb{F}_p^n$ be the linear space of row vectors of length $n$ over $\mathbb{F}_p$. 
Unless specified,
the vectors are row vectors. 
For a vector $v \in \mathbb{F}_p^n$, we use $v[i]$ to denote the $i$-th entry of $v$ for any $1 \leq i \leq n$.

\paragraph{Matrices} 
Let $M(n, \mathbb{F}_p)$ (and respectively $M(m, n, \mathbb{F}_p)$) be the linear space of $n \times n$ (and respectively $m \times n$) matrices over $\mathbb{F}_p$.
 Let $\mathrm{GL}(n, \mathbb{F}_p)$ be the group of $n \times n$ invertible matrices over $\mathbb{F}_p$.

For a matrix $A\in M(m, n, \mathbb{F}_p)$, let $\rank{A}$ be the rank of $A$, and $A^T$ be the transpose of $A$.
A square matrix $A \in M(n, \mathbb{F}_p)$ is a skew-symmetric matrix if and only if $A = -A^T$. 
For any $1 \leq i \leq m, 1 \leq j \leq n$, let
 $A[i, j]$ be  the entry of $A$ in the $i$-th row and $j$-th column.
For $1 \leq i \leq i' \leq m, 1 \leq j \leq j' \leq n$, 
let $A[i, i'; j, j']$  be the submatrix of $A$ on the rows between $i$ and $i'$ and the columns between $j$ and $j'$. 

For two matrices $A, B \in M(m \times n, \mathbb{F}_p)$, 
$A$ is lexically smaller than $B$, denoted as $A \prec B$, if there exist $1 \leq q \leq m$ and $1 \leq r \leq n$
such that the following conditions hold:
\begin{itemize}
\item $A[i, j] = B[i, j]$ for any $1 \leq i \leq q -1, 1 \leq j \leq n$ or any $i = q, 1 \leq j  < r$;
\item $A[q, r] < B[q, r]$.
\end{itemize}
We denote $A \preceq B$ if $A \prec B$ or $A = B$.


We use $I_{n}$ to denote the $n \times n$ identity matrix. 




\paragraph{Matrix tuples and matrix spaces}
 An $m \times n$ matrix tuple $\AA$ of length $k$, denoted as $\AA = (A_1, \dots, A_k)$, is an element in $M(m, n, \mathbb{F}_p)^k$.
 For 
any $P \in M(\alpha, m, \mathbb{F}_p)$ and $Q\in M(n, \beta, \mathbb{F}_p)$  with some positive integers $\alpha$ and $\beta$, let $P\AA Q$ be the matrix tuple $(P A_1 Q, P A_2, Q, \dots, P A_k Q)$. 



 An $m \times n$ matrix space $\matrixspaceA$ is a linear subspace of $M(m, n, \mathbb{F}_p)$.
 For any $P \in M(\alpha, m, \mathbb{F}_p)$ and $Q\in M(n, \beta, \mathbb{F}_p)$ with some positive integers $\alpha$ and $\beta$, let $P\matrixspaceA Q$ be the linear space spanned by $P A Q$ for all the $A \in \matrixspaceA$. 
 For any row vector $ v \in \mathbb{F}_p^m $, we use $\langle v \matrixspaceA\rangle$ to denote the row vector space spanned by $v A$ for all the $A \in \matrixspaceA$. 
  For two matrix spaces $\matrixspaceA$ and $\matrixspaceB$,
we denote $\matrixspaceA \leq \matrixspaceB$ if $\matrixspaceA$ is a subspace of $\matrixspaceB$.

Since any linear combination of skew-symmetric matrices of the same dimension 
is also a skew-symmetric matrix, 
we use $\mathrm{SS}(n, \mathbb{F}_p)$ to denote the linear space of all the $n \times n$ skew-symmetric matrices.

\paragraph{Isometry and equivalence for matrix tuples and spaces} 

We define equivalence relations for matrix tuples.

\begin{definition}[Matrix tuple equivalence]
Let $\AA = (A_1, \dots, A_k), \BB = (B_1, \dots, B_k)$ be two matrix tuples in $M(m, n, \mathbb{F}_p)^k$.
$\AA$ and $\BB$ are equivalent if there exist two matrices $P \in \GL(m, \mathbb{F}_p)$ and $Q \in \GL(n, \mathbb{F}_p)$ such that 
$P \AA Q = \BB$. 
\end{definition}

\begin{definition}[Skew-symmetric matrix tuple isometry]
Let $\AA = (A_1, \dots, A_k)$ and  $\BB = (B_1, \dots, B_k)$ be two skew-symmetric matrix tuples in $\SS(n, \mathbb{F}_p)^k$.
$\AA$ and $\BB$ are isometric if there exists a matrix $P \in \GL(n, \mathbb{F}_p)$ such that 
$P \AA P^T = \BB$. 
$P$ is called an isometry from $\AA$ to $\BB$ if $P$ exists.
\end{definition}

In this paper, we use the algorithm for the isometry testing of two skew-symmetric matrix tuples (Theorem~\ref{thm:skew_tuple_isometry_intro})
and the algorithm for the equivalence testing of two matrix tuples (Theorem~\ref{thm:tuple_isometry}), both proposed by Ivanyos and Qiao in~\cite{ivanyos2019algorithms}.

\begin{theorem}[Proposition 3.2 of \cite{ivanyos2019algorithms}]\label{thm:tuple_isometry}
Given two matrix tuples $\AA = (A_1, \dots, A_k)$ and $\BB = (B_1, \dots, B_k)$ in $M(m, n, \mathbb{F}_p)^k$ for some prime $p > 2$ and positive integers $k, m$ and $n$, 
there is an algorithm with running time $\mathrm{poly}(k, n, m, p)$ to determine whether $\AA$ and $\BB$ are equivalent.
\end{theorem}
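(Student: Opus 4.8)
The plan is to recognize matrix tuple equivalence as an instance of module isomorphism over a finite-dimensional algebra, and then solve the latter by a combination of linear algebra and structure theory. A tuple $\AA = (A_1,\dots,A_k) \in M(m,n,\mathbb{F}_p)^k$ is exactly the data of a representation of the quiver $\Gamma$ with two vertices $1,2$ and $k$ parallel arrows $a_1,\dots,a_k\colon 1 \to 2$: put $V_1 = \mathbb{F}_p^n$, $V_2 = \mathbb{F}_p^m$, and let $a_i$ act as the linear map $A_i\colon V_1 \to V_2$. Equivalently, $\AA$ is a module over the path algebra $R = \mathbb{F}_p\Gamma$, a finite-dimensional $\mathbb{F}_p$-algebra of dimension $k+2$. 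Chasing definitions, an isomorphism of representations $\AA \to \BB$ is a pair $(f_1 \in \GL(n,\mathbb{F}_p),\, f_2 \in \GL(m,\mathbb{F}_p))$ with $f_2 A_i = B_i f_1$ for all $i$; writing $P = f_2$, $Q = f_1^{-1}$ this says precisely $P\AA Q = \BB$. Hence $\AA$ and $\BB$ are equivalent if and only if the associated $R$-modules are isomorphic, and it suffices to solve module isomorphism over $R$ in time $\mathrm{poly}(\dim R, \dim_{\mathbb{F}_p}\AA, p) = \mathrm{poly}(k,m,n,p)$.

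Next I would assemble the algorithm from standard subroutines for finite-dimensional algebras over finite fields, all running in polynomial time. First compute $\mathrm{Hom}_R(\AA,\BB)$, $\mathrm{End}_R(\AA)$ and $\mathrm{End}_R(\BB)$, which are the solution spaces of explicit homogeneous linear systems. Next compute the Jacobson radical $\mathrm{rad}(\mathrm{End}_R(\AA))$, decompose the semisimple quotient into matrix algebras over field extensions, and lift a complete set of primitive orthogonal idempotents back to $\mathrm{End}_R(\AA)$ (idempotent lifting along the nilpotent radical); this yields a Krull--Schmidt decomposition $\AA \cong \bigoplus_i \AA_i$ into indecomposables, and likewise $\BB \cong \bigoplus_j \BB_j$. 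By the uniqueness part of Krull--Schmidt, $\AA \cong \BB$ iff the two multisets of indecomposables agree up to isomorphism, so I only need to decide $\AA_i \cong \BB_j$ for indecomposables. For this I would use a trace-ideal test: the span $T_{ij}$ of all products $\psi\phi$ with $\phi \in \mathrm{Hom}_R(\AA_i,\BB_j)$, $\psi \in \mathrm{Hom}_R(\BB_j,\AA_i)$ is a two-sided ideal of the local ring $\mathrm{End}_R(\AA_i)$, and $1_{\AA_i} \in T_{ij}$ iff $\AA_i$ is a summand of some power of $\BB_j$, which for indecomposable $\BB_j$ means $\AA_i \cong \BB_j$; otherwise $T_{ij} \subseteq \mathrm{rad}(\mathrm{End}_R(\AA_i))$. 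Checking membership of $1_{\AA_i}$ in $T_{ij}$ is again linear algebra. Finally, partition the indecomposables into isomorphism classes using these tests and compare multiplicities.

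A more elementary randomized route also works: once $\dim\mathrm{Hom}_R(\AA,\BB) = \dim\mathrm{End}_R(\AA) = \dim\mathrm{End}_R(\BB)$ is verified, note that if an isomorphism exists then fixing one identifies $\mathrm{Hom}_R(\AA,\BB)$ with $\mathrm{End}_R(\AA)$ as a module, so the fraction of isomorphisms in $\mathrm{Hom}_R(\AA,\BB)$ equals the fraction of units in the algebra $\mathrm{End}_R(\AA)$; sampling a random homomorphism, testing invertibility, and repeating then decides the question with high probability. Since the unit fraction over $\mathbb{F}_p$ can be as small as $(1-1/p)^{O(\dim R)}$, I would first base-change to $\mathbb{F}_{p^t}$ with $t = O(\log(km+kn))$ to push the unit density above a constant; this is harmless because module isomorphism over a finite-dimensional algebra is reflected by any field extension (Noether--Deuring). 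The running time stays $\mathrm{poly}(k,m,n,p)$. (The hypothesis $p>2$ is not used here; it is inherited from the paper's global convention.)

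The step I expect to be the main obstacle is exactly deciding whether the $\mathrm{Hom}$-space contains an invertible element: stated naively this is a full-rank-in-a-matrix-space / symbolic-determinant question, which is intractable in general. The whole point is to replace this by the module-theoretic structure --- Krull--Schmidt uniqueness, locality of endomorphism rings of indecomposables, and the trace-ideal dichotomy (or, in the randomized version, the density of units dictated by the semisimple quotient) --- so that every remaining computation is either plain linear algebra or a standard finite-field algebra routine. Consequently the care in the proof goes into the constructive handling of the Jacobson radical, the Wedderburn decomposition of the semisimple quotient, and idempotent lifting over $\mathbb{F}_p$, rather than into the linear algebra itself.
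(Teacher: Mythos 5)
The paper does not give its own proof of this statement; it cites it as Proposition~3.2 of~\cite{ivanyos2019algorithms}, and the proof there proceeds exactly as you propose: interpret the matrix tuple as a representation of the $k$-arrow Kronecker quiver (equivalently, a module over its path algebra), and invoke polynomial-time module isomorphism testing over a finite-dimensional $\mathbb{F}_p$-algebra (Chistov--Ivanyos--Karpinski / Brooksbank--Luks), where Krull--Schmidt, locality of endomorphism rings of indecomposables, and radical/Wedderburn computations do the work. Your reduction $P = f_2$, $Q = f_1^{-1}$ is correct, the trace-ideal test and the randomized unit-density route via Noether--Deuring are both standard and sound, and the running time bound $\mathrm{poly}(k,n,m,p)$ follows (in fact one gets $\mathrm{poly}(k,n,m,\log p)$ from the deterministic subroutines), so this matches the cited argument.
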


Following the definitions for matrix tuples, we also define the equivalence of matrix spaces and 
the isometry of  skew-symmetric matrix spaces. 

\begin{definition}[Matrix space equivalence]
Let $\matrixspaceA,\matrixspaceB \leq M(m, n, \mathbb{F}_p)$ be two matrix spaces for some positive integers $m$ and $n$.
$\matrixspaceA$ and $\matrixspaceB$ are equivalent if there exist two matrices $P \in \GL(m, \mathbb{F}_p)$  and $Q \in \GL(n, \mathbb{F}_p)$ such that 
$P \matrixspaceA Q = \matrixspaceB$. 

\end{definition}

\begin{definition}[Skew-symmetric matrix space isometry]\label{def:skew_space_isometry}
Let $\matrixspaceA,\matrixspaceB \leq \SS(n, \mathbb{F}_p)$ be two skew-symmetric matrix spaces.
$\matrixspaceA$ and $\matrixspaceB$ are isometric if there exists a matrix $P \in \GL(n, \mathbb{F}_p)$ such that 
$P \matrixspaceA P^T = \matrixspaceB$. 
$P$ is called an isometry from $\matrixspaceA$ to $\matrixspaceB$ if $P$ exists.

\end{definition}

\paragraph{Baer's correspondence} 

For a $p$-group of nilpotent class 2 and exponent $p$, let $p^k$ denote the order of the group.  
Because of the class two and exponent $p$ condition, $G / Z(G)$ is isomorphic to $\mathbb{Z}_p^n$, and $[G, G]$ is isomorphic to $\mathbb{Z}_p^m$ for some positive integers $n$ and $m$ such that $m + n \leq k$, 
where $Z(G)$ denotes the center of $G$ and 
$[G, G]$ denotes the 
group generated by 
$xyx^{-1}y^{-1}$ for all $x, y\in G$.
Taking an arbitrary basis
of $G/Z(G)$, an arbitrary basis of $[G, G]$, and taking the commutator bracket, we obtain a skew-symmetric bilinear
map $b_G$ : $\mathbb{F}_p^n \times \mathbb{F}_p^n \rightarrow \mathbb{F}_p^m$, which can be represented by a skew-symmetric matrix tuple $\GG = (G_1, \dots, G_m)$ such that every $G_i$ is a matrix in $\mathrm{SS}(n, \mathbb{F}_p)$.
Such a skew-symmetric matrix tuple is called a skew-symmetric matrix tuple of $G$.

For two $p$-groups $G$ and $H$ of nilpotent class 2 and exponent $p$,
it is necessary for $H$ to be isomorphic to $G$ that 
$\mathrm{dim}_{\mathbb{Z}_p}(G / Z(G)) = \mathrm{dim}_{\mathbb{Z}_p}(H / Z(H))$
and $\mathrm{dim}_{\mathbb{Z}_p}([G, G]) = \mathrm{dim}_{\mathbb{Z}_p}([H, H])$.
The following theorem, also called Baer's correspondence, was proved by Baer in~\cite{baer1938groups}. 


\begin{theorem}[Baer's correspondence~\cite{baer1938groups}, rephrased]\label{thm:baer_correspondance}
Let $G$ and $H$ be two $p$-groups of class two and exponent $p$  for some prime number $p$ with the same order. 
Let $\GG$ and $\HH$ be the skew-symmetric matrix tuples of $G$ and $H$, respectively. 
If both $\GG$ and $\HH$ are $n\times n$ skew-symmetric matrix tuples of length $m$,
then $G$ and $H$ are isomorphic if and only if 
there are matrices $P \in \GL(n, \mathbb{F}_p)$ and $Q \in \GL(m, \mathbb{F}_p)$
such that $G_i = \sum_{j = 1}^m  Q[i, j](P\cdot   H_j \cdot P^T)$.
\end{theorem}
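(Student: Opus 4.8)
The plan is to prove the two directions of Baer's correspondence through the intrinsic data of an alternating bilinear map, translating to matrix tuples only at the end. For a class-$2$ exponent-$p$ group $G$, both $Z := Z(G)$ and $\bar G := G/Z$ are elementary abelian (the quotient because $G$ has class $2$, hence $G/Z$ is abelian, and exponent dividing $p$; the center because it is a subgroup of an exponent-$p$ group), so they are $\mathbb{F}_p$-vector spaces. Since in a class-$2$ group every commutator is central, the identities $[xy,z]=[x,z][y,z]$, $[x,yz]=[x,y][x,z]$ and $[x,x]=1$ hold, so $(x,y)\mapsto[x,y]$ descends to a well-defined alternating $\mathbb{F}_p$-bilinear map $b_G:\bar G\times\bar G\to[G,G]$. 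With the bases of $\bar G$ and $[G,G]$ that define $\GG$, the matrix tuple $\GG=(G_1,\dots,G_m)$ is exactly the coordinate form of $b_G$: $b_G(u,v)=(uG_1v^T,\dots,uG_mv^T)$ for row vectors $u,v\in\mathbb{F}_p^n$. Hence the matrix identity in the statement, for $P\in\GL(n,\mathbb{F}_p)$, $Q\in\GL(m,\mathbb{F}_p)$, is precisely the coordinate expression that $b_G$ and $b_H$ are \emph{pseudo-isometric}, i.e.\ that there are linear isomorphisms $\phi:\bar G\to\bar H$ and $\psi:[G,G]\to[H,H]$ with $\psi\circ b_G=b_H\circ(\phi\times\phi)$; the whole proof thus reduces to showing $G\cong H$ iff $b_G$ and $b_H$ are pseudo-isometric.

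For the forward direction, let $\theta:G\to H$ be an isomorphism. Since $Z(\cdot)$ and $[\cdot,\cdot]$ are characteristic, $\theta$ restricts to a linear isomorphism $[G,G]\to[H,H]$ and induces a linear isomorphism $\bar G\to\bar H$, and the identity $\theta([x,y])=[\theta x,\theta y]$ shows these two maps form a pseudo-isometry from $b_G$ to $b_H$. Reading this off in the chosen bases, and if needed replacing $P,Q$ by their inverses to match the orientation in the statement, yields $G_i=\sum_j Q[i,j]\,PH_jP^T$. This step is routine linear algebra.

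The reverse direction is the substantive part, and it is where $p$ being odd is used (for $p=2$ there are no nonabelian exponent-$2$ groups, so the statement is vacuous). Given a pseudo-isometry $(\phi,\psi)$, I would first show that a class-$2$ exponent-$p$ group is reconstructed up to isomorphism from the triple $(\dim\bar G,\ \dim Z(G)/[G,G],\ b_G)$. Concretely, pick a vector-space complement $W_0$ of $[G,G]$ in $Z(G)$ and define $\Gamma_{b_G}$ on the set $\bar G\times[G,G]\times W_0$ by $(u,w,w_0)(u',w',w_0')=(u+u',\ w+w'+\tfrac12 b_G(u,u'),\ w_0+w_0')$, a well-defined class-$2$ exponent-$p$ group because $p$ is odd (so $\tfrac12$ makes sense and all $p$-th powers vanish). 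One proves $G\cong\Gamma_{b_G}$ by choosing a set-theoretic section $s:\bar G\to G$, forming its $2$-cocycle $c(u,u')=s(uu')^{-1}s(u)s(u')\in Z(G)$, and straightening $c$ modulo coboundaries into its alternating part $\tfrac12 b_G$: this is possible precisely because, for $p$ odd, the exponent-$p$ condition forces the symmetric (``squaring'') component of the extension class in $H^2(\bar G;Z(G))$ to vanish, leaving only the commutator part. Since $|G|=|H|$ and $\phi,\psi$ are isomorphisms, $\dim Z(G)/[G,G]=\dim Z(H)/[H,H]$, so $(\phi,\psi)$ together with any linear isomorphism of the complements induces an isomorphism $\Gamma_{b_G}\to\Gamma_{b_H}$, whence $G\cong H$. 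The main obstacle is exactly this straightening step: setting up the cocycle computation carefully and invoking (or reproving from scratch, via explicit $2$-cocycles and the fact that $x\mapsto x^{(p+1)/2}$ inverts squaring in an exponent-$p$ group with $p$ odd) the classification of exponent-$p$ central extensions by their commutator map alone.
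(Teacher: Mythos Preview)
The paper does not prove this theorem at all: it is stated with attribution to Baer's 1938 paper and used as a black box, so there is no ``paper's own proof'' to compare against. Your proposal is the standard argument and is essentially correct. The forward direction is exactly as you say, and for the converse your reconstruction $\Gamma_{b_G}$ via the group law $(u,w,w_0)(u',w',w_0')=(u+u',\,w+w'+\tfrac12 b_G(u,u'),\,w_0+w_0')$ together with the cocycle-straightening step is the right approach; the key input, which you correctly isolate, is that for odd $p$ the exponent-$p$ condition kills the symmetric part of the extension class so that the alternating commutator map determines the extension up to isomorphism. One small point worth making explicit when you write it out: the bilinear map $b_G:\bar G\times\bar G\to[G,G]$ is automatically nondegenerate in each slot (if $b_G(\bar x,\cdot)\equiv 0$ then $x\in Z(G)$, so $\bar x=0$), which is what guarantees that $Z(\Gamma_{b_G})=\{0\}\times[G,G]\times W_0$ and hence that $\Gamma_{b_G}$ really has the same center and commutator data as $G$.
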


Furthermore, we can also represent skew-symmetric matrix tuples of groups by skew-symmetric matrix spaces. 
Given an arbitrary skew-symmetric matrix tuple $\GG$ of group $G$, 
the skew-symmetric matrix space $\matrixspaceG$ of $G$ is the linear matrix space spanned by matrices in $\GG$.
Hence,  Baer's correspondence can be rephrased as follows. 

\begin{corollary}
Let $G$ and $H$ be two $p$-groups of class two and exponent $p$  for some prime number $p$ with the same order. 
Let $\matrixspaceG$ and $\matrixspaceH$ be the skew-symmetric matrix spaces of  $G$ and $H$, respectively. 
$G$ and $H$ are isomorphic if and only if $\matrixspaceG$ and $\matrixspaceH$ are isometric.

\end{corollary}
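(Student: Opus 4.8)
The plan is to obtain the corollary as a direct translation of Baer's correspondence (Theorem~\ref{thm:baer_correspondance}) from the tuple level to the span level: the tuple-level condition ``$G_i = \sum_{j=1}^m Q[i,j]\,(P H_j P^T)$ for some $P \in \GL(n, \mathbb{F}_p)$ and $Q \in \GL(m, \mathbb{F}_p)$'' should be shown equivalent to the span-level condition ``$P \matrixspaceH P^T = \matrixspaceG$ for some $P \in \GL(n, \mathbb{F}_p)$,'' which is exactly isometry of $\matrixspaceG$ and $\matrixspaceH$. Before doing so I would record one preliminary fact: the matrices $G_1, \dots, G_m$ forming the skew-symmetric matrix tuple $\GG$ of $G$ are linearly independent, so $\dim \matrixspaceG = m$, and similarly $\dim \matrixspaceH = m$. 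This holds because the commutator bracket generates $[G,G]$, which is elementary abelian of rank $m$, so the associated bilinear map $b_G : \mathbb{F}_p^n \times \mathbb{F}_p^n \to \mathbb{F}_p^m$ has image spanning $\mathbb{F}_p^m$; a linear dependence $\sum_i c_i G_i = 0$ would give a nonzero linear functional on $\mathbb{F}_p^m$ vanishing on the image of $b_G$, a contradiction. (The same observation shows $\matrixspaceG$ is independent, up to isometry, of the bases chosen in the construction, so the statement is well posed.)

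For the forward implication, suppose $G \cong H$. Theorem~\ref{thm:baer_correspondance} yields $P \in \GL(n, \mathbb{F}_p)$ and $Q \in \GL(m, \mathbb{F}_p)$ with $G_i = \sum_{j=1}^m Q[i,j]\,(P H_j P^T)$ for every $i$; in particular each $G_i$ lies in $P \matrixspaceH P^T$, so $\matrixspaceG \le P \matrixspaceH P^T$. Since $Q$ is invertible, inverting these $m$ linear relations gives $P H_j P^T = \sum_{i=1}^m Q^{-1}[j,i]\, G_i$ for every $j$, so each $P H_j P^T$ lies in $\matrixspaceG$ and $P \matrixspaceH P^T \le \matrixspaceG$. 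Hence $P \matrixspaceH P^T = \matrixspaceG$, i.e., $P$ is an isometry from $\matrixspaceH$ to $\matrixspaceG$, and $\matrixspaceG$ and $\matrixspaceH$ are isometric.

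For the reverse implication, suppose $\matrixspaceG$ and $\matrixspaceH$ are isometric and fix $S \in \GL(n, \mathbb{F}_p)$ with $S \matrixspaceH S^T = \matrixspaceG$. Then $S H_1 S^T, \dots, S H_m S^T$ span $\matrixspaceG$, and since $\dim \matrixspaceG = m$ by the preliminary fact, these $m$ matrices form a basis of $\matrixspaceG$; so do $G_1, \dots, G_m$. Writing the $G_i$ in this basis produces an invertible $Q \in \GL(m, \mathbb{F}_p)$ with $G_i = \sum_{j=1}^m Q[i,j]\,(S H_j S^T)$, and Theorem~\ref{thm:baer_correspondance} applied with $P := S$ and this $Q$ gives $G \cong H$. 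Combining the two implications proves the corollary.

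I do not expect a real obstacle; the only step needing a little care is the linear independence of $G_1, \dots, G_m$ (equivalently, that the construction yields an $m$-dimensional space), since that is precisely what allows the span-level condition to be lifted back to the tuple-level condition with an invertible change-of-basis matrix $Q$.
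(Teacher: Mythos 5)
Your proof is correct and takes essentially the only natural route: the paper itself gives no proof of the corollary, treating it as an immediate rephrasing of Theorem~\ref{thm:baer_correspondance}, and your argument makes that rephrasing rigorous. The one step that genuinely requires justification — and which you correctly isolate — is the linear independence of $G_1, \dots, G_m$ (equivalently $\dim \matrixspaceG = m$), which is what lets you lift the span-level inclusion $S\matrixspaceH S^T = \matrixspaceG$ back to a tuple-level relation with an \emph{invertible} change-of-basis matrix $Q$; your argument for this, via the image of $b_G$ spanning $[G,G]\cong\mathbb{F}_p^m$, is sound.
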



In this paper, we will use the following fact from Baer's correspondence. 

\begin{fact}\label{fact:baer_correspondance_tensor}
Let $\matrixspaceG \leq \SS(n, \mathbb{F}_p)$  be a skew-symmetric matrix space of a $p$-group of class two and exponent $p$ for some prime number $p$. 
Then for any non-zero row vector $v \in \mathbb{F}_p^n$, 
there is a matrix $A \in \matrixspaceG$ such that $v A$ is a non-zero vector. 
\end{fact}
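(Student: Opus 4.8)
The plan is to derive the statement directly from the group-theoretic meaning of the matrix tuple $\GG = (G_1,\dots,G_m)$ associated to $G$ via Baer's correspondence. Recall that $\GG$ represents the commutator bilinear map $b_G : \mathbb{F}_p^n \times \mathbb{F}_p^n \to \mathbb{F}_p^m$, where $\mathbb{F}_p^n \cong G/Z(G)$. Concretely, if $\bar x, \bar y \in \mathbb{F}_p^n$ are the images of $x, y \in G$ under the quotient map, then the $i$-th coordinate of $b_G(\bar x, \bar y)$ is $\bar x\, G_i\, \bar y^T$, and $b_G(\bar x,\bar y)$ records (in the chosen basis of $[G,G]$) the commutator $[x,y] = xyx^{-1}y^{-1}$. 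The matrix space $\matrixspaceG$ is the span of the $G_i$, so for a row vector $v = \bar x \in \mathbb{F}_p^n$ the row vector space $\langle v \matrixspaceG\rangle$ is spanned by the vectors $v G_i$, $i = 1,\dots,m$; in particular $v A = 0$ for all $A \in \matrixspaceG$ if and only if $\bar x\, G_i\, \bar y^T = 0$ for all $i$ and all $\bar y \in \mathbb{F}_p^n$, i.e.\ if and only if $b_G(\bar x, \bar y) = 0$ for all $\bar y$.

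The key step is to argue that $b_G(\bar x, \cdot) \equiv 0$ forces $\bar x = 0$. Suppose $\bar x \neq 0$, i.e.\ $x \notin Z(G)$. Since $b_G(\bar x, \bar y)$ encodes the commutator $[x,y]$ and the bilinear map $b_G$ is well-defined on $G/Z(G)$ (independent of coset representatives, using class $2$), the condition $b_G(\bar x, \bar y) = 0$ for all $\bar y$ says exactly that $[x,y] = e$ for every $y \in G$, i.e.\ $x$ commutes with every element of $G$, i.e.\ $x \in Z(G)$. This contradicts $\bar x \neq 0$. Hence for every non-zero $v \in \mathbb{F}_p^n$ there exists $\bar y$ with $b_G(v,\bar y)\neq 0$, which means there is some index $i$ with $v\, G_i\, \bar y^T \neq 0$, so $v G_i \neq 0$, and therefore the matrix $A = G_i \in \matrixspaceG$ satisfies $vA \neq 0$. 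Since the $G_i$ span $\matrixspaceG$, equivalently one can phrase this as: $\langle v\matrixspaceG\rangle \neq \{0\}$.

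The only mild subtlety — and the part that most needs care rather than being automatic — is making sure the correspondence between "$b_G(\bar x,\cdot)\equiv 0$" and "$x\in Z(G)$" is stated with the right quantifiers: $b_G$ is a map into $\mathbb{F}_p^m \cong [G,G]$, and we need that the chosen basis of $[G,G]$ is an honest basis (so that a commutator is trivial iff all its coordinates vanish), together with the fact that $[x,y]=e$ for all $y\in G$ is the definition of $x\in Z(G)$ in a group of class $2$ (where actually this holds in any group). Both of these are immediate from the setup of Baer's correspondence as recalled in the excerpt, so no real obstacle arises; the statement is essentially a reformulation of "$G/Z(G)$ has trivial center-of-the-commutator-form," i.e.\ the radical of the alternating form $b_G$ is zero, which holds because we quotiented by exactly that radical when we passed to $G/Z(G)$.
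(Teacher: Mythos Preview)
Your argument is correct. The paper does not actually supply a proof of this statement: it is recorded as a ``Fact'' immediately after the discussion of Baer's correspondence, with the implicit justification that it follows from that correspondence. Your write-up spells out exactly that justification --- the bilinear form $b_G$ is defined on $G/Z(G)$, and its radical is trivial precisely because any $x$ commuting with all of $G$ already lies in $Z(G)$ --- so you are providing the standard one-paragraph argument the paper leaves to the reader.
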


\section{Matrix space individualization-refinement}\label{sec:individualization}
In this paper, the individualization-refinement to a matrix space $\matrixspaceA \leq M(m, n, \mathbb{F}_p)$
is defined by a left individualization matrix $L \in M(\alpha, m, \mathbb{F}_p)$
and a right individualization matrix $R \in M(n, \beta, \mathbb{F}_p)$ for some positive integers $\alpha$ and $\beta$.
We aim to distinguish matrices in $\matrixspaceA$ by comparing $LAR$ 
for matrices $A\in \matrixspaceA$.

Ideally, if for any two different matrices $A, A' \in \matrixspaceA$, $LAR \neq LA'R$, then each matrix $A \in \matrixspaceA$ has its unique $LAR$. 
And thus, all the matrices in $\matrixspaceA$ are distinguished. 
But if there is a matrix $A \in \matrixspaceA$ such that $LAR$ is a zero matrix, then for each $A'  \in \matrixspaceA$, $LA' R = L (A' + A) R$. 
Let $\zero_{L, R}(\matrixspaceA)$ be the space spanned by matrices $A \in \matrixspaceA$ such that $LAR$ is a zero matrix. 

We show that 
in order to distinguish matrices in a matrix space,
the dimensions of the left and right individualization matrices are related to the rank of the matrices in the matrix space (Lemma~\ref{lem:individualzation_main}).



\begin{lemma}\label{lem:individualization_rank}
Let $A$ be a matrix in $M(m, n, \mathbb{F}_p)$ of rank at least $k$ for a prime $p$ and some positive integers $m, n, k$. 
Given an integer $1 \leq k' \leq k / 2$ and a parameter $0 < \delta < 1$ satisfying $\log(1 / \delta) \geq k$,
let $Q$ be a matrix in \[M\left(\left\lceil \frac{8k' \log(1 / \delta) }{ k}\right\rceil, m, \mathbb{F}_p\right)\] such that the entries of $Q$ are independently and uniformly sampled from $\{0, 1, \dots, p-1\}$. 
With probability $1 - \delta$, $QA$ is a matrix of rank at least $k'$. 
\end{lemma}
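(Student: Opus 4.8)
The plan is to reduce the statement to a question about independent uniform vectors lying in the row span of $A$, and then to apply a union bound over low-dimensional subspaces. Write $t=\left\lceil 8k'\log(1/\delta)/k\right\rceil$ for the number of rows of $Q$, let $q_1,\dots,q_t\in\mathbb{F}_p^m$ denote its rows (which are independent and uniformly random), let $W=\rowspan(A)\subseteq\mathbb{F}_p^n$, and set $d=\rank{A}\geq k$, so $W$ has dimension $d$. Since the $i$-th row of $QA$ equals $q_iA$, the rank of $QA$ equals the dimension of $\langle q_1A,\dots,q_tA\rangle$. First I would observe that the linear map $q\mapsto qA$ from $\mathbb{F}_p^m$ onto $W$ has kernel of fixed size $p^{m-d}$, so each $q_iA$ is uniformly distributed on $W$ and the vectors $q_1A,\dots,q_tA$ are mutually independent. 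Thus the lemma is equivalent to the statement that $t$ independent uniform samples from a space of dimension at least $k$ span a subspace of dimension at least $k'$ with probability at least $1-\delta$.

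Next I would set up the union bound. The event $\rank{QA}<k'$ occurs if and only if $q_1A,\dots,q_tA$ all lie in a common subspace $U\subseteq W$ of dimension exactly $k'-1$: if they lie in a subspace of smaller dimension, one may extend it inside $W$, which is possible because $d\geq k\geq 2k'>k'-1$. For a fixed such $U$, uniformity of each $q_iA$ on $W$ together with independence gives $\prob{}{q_iA\in U\ \text{for all}\ i}=\left(p^{k'-1}/p^{d}\right)^{t}=p^{((k'-1)-d)t}$. Summing over the $\binom{d}{k'-1}_p$ choices of $U$ and using the standard estimate $\binom{d}{j}_p\leq 4\,p^{j(d-j)}$ for the Gaussian binomial coefficient (valid for every prime $p$), I obtain
\[
\prob{}{\rank{QA}<k'}\ \leq\ 4\,p^{(k'-1)(d-k'+1)}\cdot p^{((k'-1)-d)t}\ =\ 4\,p^{-(d-k'+1)(t-k'+1)}.
\]

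It remains to check that the hypotheses force the right-hand side below $\delta$. Since $d\geq k$ and $k'\leq k/2$ we get $d-k'+1>k/2$. Since $t\geq 8k'\log(1/\delta)/k$ and $\log(1/\delta)\geq k$, a short computation gives $t-k'+1>7k'\log(1/\delta)/k\geq 7\log(1/\delta)/k$, using $k'\geq 1$ in the last step. Multiplying, the exponent satisfies $(d-k'+1)(t-k'+1)>\tfrac{7}{2}\log(1/\delta)$, hence $\prob{}{\rank{QA}<k'}\leq 4\cdot 2^{-(7/2)\log(1/\delta)}\leq\delta$, where the final inequality uses $\log(1/\delta)\geq k\geq 2k'\geq 2$ to absorb the constant factor $4$. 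This gives $\prob{}{\rank{QA}\geq k'}\geq 1-\delta$, as required.

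The routine parts above --- the precise constant in the Gaussian binomial bound and the arithmetic with the parameters --- are where care is needed, but they have generous slack, so I expect no real obstacle; the one genuine design choice is to run the union bound over subspaces of dimension exactly $k'-1$ inside $W$ (rather than over all of $\mathbb{F}_p^n$), which keeps the count at $\binom{d}{k'-1}_p$ and makes the probabilities line up. As an alternative that sidesteps Gaussian binomials, one can use the incremental description: the waiting times needed to raise the dimension of the span from $j$ to $j+1$ are independent geometric variables with success probability $1-p^{j-d}\geq 1-p^{-k/2}$ while $j<k'$, so $\rank{QA}$ stochastically dominates the number of successes in $t$ Bernoulli trials of that probability, and a Chernoff bound finishes the job; the union-bound route, however, seems cleanest.
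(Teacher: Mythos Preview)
Your proof is correct, but it takes a different route from the paper's. The paper argues sequentially: it samples the rows $q_1,q_2,\dots$ one at a time and observes that while the span $\langle q_1A,\dots,q_\alpha A\rangle$ has dimension $\beta<k'$, the next row fails to increase the dimension with probability $p^{-(k-\beta)}\leq p^{-(k-k')}$; hence each block of $\lceil \log(k'/\delta)/(k-k')\rceil$ rows raises the rank with probability at least $1-\delta/k'$, and a union bound over the $k'$ increments finishes. This is exactly the incremental alternative you sketched at the end. Your main argument instead reduces to independent uniform samples in the $d$-dimensional row space and runs a union bound over the $\binom{d}{k'-1}_p$ subspaces of dimension $k'-1$, using the estimate $\binom{d}{j}_p\leq 4\,p^{j(d-j)}$.

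Both approaches are short and have ample slack in the constants. Your version is a clean one-shot calculation but requires the Gaussian-binomial bound; the paper's version is more elementary (no subspace counting) and makes the ``rank climbs row by row'' mechanism explicit. Either is fine here.
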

\begin{proof}
Since the entries of $Q$ are independently and uniformly sampled, without loss of generality, we assume the matrix $Q$ is sampled sequentially by rows $q_1, \dots, q_\alpha$, where $q_i$ is the $i$-th row of $Q$.

Suppose that after sampling the first $\alpha$ rows for some positive integer $\alpha$, $\langle q_1 A, \dots, q_\alpha A \rangle$ is a $\beta$ dimensional space for some $\beta < k'$. 
The probability that $q_{\alpha+1} A$ is a vector in $\langle q_1 A, \dots, q_\alpha A \rangle$ is $(1 / p)^{k - \beta}$. 
Thus, by sampling $\lceil \log(k' / \delta) / (k - k')\rceil $ new rows of $Q$, the rank of $Q$ increases by at least one with probability 
\[1 - \left(\frac{1}{p}\right)^{(k - \beta)\cdot \lceil \log(k' / \delta) / (k - k')\rceil} \geq 1 - \left(\frac{1}{p} \right )^{ \log(k' / \delta) } \geq 1 - \left(\frac{1}{2}\right)^{\log(k' / \delta)} = 1 - \frac{\delta }{k'}.\]

On the other hand, 
observe that 
\[k' \left\lceil \frac{\log(k' / \delta) }{ ( k - k')}\right\rceil \leq \frac{2 k'  \log(k' / \delta)}{k - k'} \leq  \frac{4 k'  \log(1/ \delta)}{k - k'}
\leq \frac{8 k' \log(1/\delta) }{  k},\]
where the first inequality uses the condition that $\log(k' / \delta) \geq \log(1/\delta)\geq k \geq k - k'$, the second inequality uses the condition that $1/\delta \geq \log(1 / \delta) \geq  k > k'$,
and the third inequality uses the condition that $k - k'\geq k / 2$.

By union bound, with probability at least $1 - \delta$, if $Q$ contains at least $\lceil 8 k' \log(1/\delta) /  k\rceil$ rows, 
then 
the rank of $Q$ is at least $k'$. 
%
\end{proof}

\begin{lemma}\label{lem:individualzation_main}
Let $\matrixspaceA$ be a $d$-dimensional matrix subspace of $M(m, n, \mathbb{F}_p)$ for a prime $p$ and some positive integers $d, m, n$. 
For any $k \geq 4$, 
denote 
\[t \coloneqq \left\lceil 32\max\{d\log(p), k \} / \sqrt{k}\right\rceil.\]
There is a left individualization matrix $L \in M(t, m, \mathbb{F}_p)$
and a  right individualization matrix $R \in M(n, t, \mathbb{F}_p)$ such that for any $A \in \matrixspaceA$ of rank at least $k$, 
$LAR$ is a non-zero matrix. 
\end{lemma}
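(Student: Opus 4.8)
The plan is to prove Lemma~\ref{lem:individualzation_main} by the probabilistic method: sample $L \in M(t, m, \mathbb{F}_p)$ and $R \in M(n, t, \mathbb{F}_p)$ with independent, uniformly random entries in $\{0,\dots,p-1\}$, and show that with positive probability $LAR$ is nonzero \emph{simultaneously} for every $A \in \matrixspaceA$ with $\rank{A} \ge k$. Since $\matrixspaceA$ has at most $p^d$ elements, it suffices to bound, for a single fixed $A$ of rank at least $k$, the probability that $LAR = 0$ by a quantity noticeably smaller than $p^{-d}$, and then take a union bound over all such $A$.

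For a fixed $A$, I would split the bad event $LAR = 0$ into two stages connected by an intermediate target rank close to $\sqrt{k}$. Set $k' \coloneqq \lfloor \sqrt{k}\rfloor$ and $\delta \coloneqq 2^{-\max\{2d\log p,\, k\}}$; then $1 \le k' \le k/2$ (the upper bound uses $k \ge 4$) and $\log(1/\delta) \ge k$, so Lemma~\ref{lem:individualization_rank} is applicable. In the first stage, view the top $\lceil 8 k'\log(1/\delta)/k\rceil$ rows of $L$ as the sampling matrix $Q$ of Lemma~\ref{lem:individualization_rank} applied to $A$; since extra rows only increase rank, this gives $\Pr[\rank{LA} < k'] \le \delta$. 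In the second stage, condition on $\rank{LA}\ge k'$ and apply Lemma~\ref{lem:individualization_rank} to the matrix $(LA)^T$, which has rank at least $k' \ge 2$ (here again $k\ge 4$ is used), with target rank $1$ and sampling matrix the transpose of the leftmost $\lceil 8\log(1/\delta)/k'\rceil$ columns of $R$; because $R$ is independent of $L$, $R^T$ still has i.i.d.\ uniform entries, and we get $\Pr[(LAR)^T = R^T(LA)^T = 0 \mid \rank{LA}\ge k'] \le \delta$. Combining the two stages, $\Pr[LAR = 0] \le 2\delta$ for each fixed $A$. For the dimension bookkeeping, one checks that $\lceil 8k'\log(1/\delta)/k\rceil$ and $\lceil 8\log(1/\delta)/k'\rceil$ are both at most $t = \lceil 32\max\{d\log p, k\}/\sqrt{k}\rceil$, using $k' \le \sqrt{k}$, $k' \ge \sqrt{k}/2$ (valid for $k\ge 4$), and $\log(1/\delta) = \max\{2d\log p, k\}\le 2\max\{d\log p, k\}$; hence $L$ may indeed be taken with $t$ rows and $R$ with $t$ columns, and the relevant submatrix of $LAR$ being nonzero forces $LAR \ne 0$.

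Finally, the union bound over the at most $p^d$ matrices of rank at least $k$ in $\matrixspaceA$ gives total failure probability at most $p^d \cdot 2\delta \le 2p^d \cdot p^{-2d} = 2p^{-d} < 1$ (since $p \ge 3$), so a good pair $(L,R)$ exists. The one point that needs care — rather than a genuine obstacle — is the choice of the intermediate rank $k'$: it must sit near $\sqrt{k}$ so that the factor $1/k$ from the first application of Lemma~\ref{lem:individualization_rank} and the factor $1/k'$ from the second each contribute a $1/\sqrt{k}$ saving, while simultaneously keeping all hypotheses of that lemma ($1\le k'\le k/2$, $\log(1/\delta)\ge k \ge k'$, and $k'\ge 2$) satisfied throughout the range $k \ge 4$; the choice of $\delta$ is then forced up to constants by needing $\log(1/\delta)\ge 2d\log p$ for the union bound and $\log(1/\delta)\ge k$ for the lemma, which is exactly why the $\max$ appears.
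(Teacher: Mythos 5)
Your proof is correct and follows the same route as the paper: both arguments fix $k' = \lfloor\sqrt{k}\rfloor$, apply Lemma~\ref{lem:individualization_rank} once to $L$ to push $LA$ to rank about $\sqrt{k}$ and once more (in your phrasing, to $(LA)^T$ via $R$) to make $LAR$ nonzero, and then union bound over the at most $p^d$ matrices in $\matrixspaceA$ after choosing $\delta$ small enough that $p^d\cdot 2\delta < 1$; the only cosmetic difference is your explicit $\delta = 2^{-\max\{2d\log p,\,k\}}$ versus the paper's $\delta = \min\{1/(4p^d), 1/2^k\}$, which are interchangeable for this purpose. Your dimension bookkeeping (the bounds $\lfloor\sqrt{k}\rfloor \le \sqrt{k}$, $\lfloor\sqrt{k}\rfloor \ge \sqrt{k}/2$ for $k\ge 4$, and $\max\{2d\log p, k\}\le 2\max\{d\log p, k\}$) is slightly more explicit than the paper's and correctly verifies that both sampling matrices fit within $t$ rows/columns.
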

\begin{proof}
Let $A$ be an arbitrary matrix in $\matrixspaceA$ of rank at least $k$.
By Lemma~\ref{lem:individualization_rank} with 
$ k' = \lfloor \sqrt{k}\rfloor$ and $\delta = \min\{1 / (4p^d), 1 / 2^k\}$, if every entry of $L$ is independently and uniformly sampled from $\{0, \dots, p-1\}$,
then with probability at least $1 - 1 / (4p^{d})$, $LA$ is of rank at least $\lfloor \sqrt{k}\rfloor$.
If this case happens, then by Lemma~\ref{lem:individualization_rank} with  $k' = 1$ and $\delta = \min\{1 / (4p^d), 1 / 2^k\}$, if every entry of $R$ is independently and uniformly sampled from $\mathbb{F}_p$,
then with probability at least $1 - 1 / (4p^{d})$, $LA R$ is of rank at least $1$.
By union bound, for random $Q$ and $R$, with probability at least $1 - 1 / (2p^d)$, $LAR$ is a non-zero matrix. 

By union bound, with constant probability, for random $Q$ and $R$, $QAR$ is a non-zero matrix for all the $A\in \matrixspaceA$ of rank at least $k$. 
\end{proof}

We define the semi-canonical basis for a matrix space with respect to left and right individualization matrices.

\begin{definition}\label{def:semi_canonical_base}
Let $\matrixspaceA$ be a matrix space of dimension $d$ for some positive integer $d$.
Let $L$ be a left individualization matrix for $\matrixspaceA$ and $R$ be a right individualization matrix for $\matrixspaceA$. 
A matrix tuple $\AA = (A_1, \dots, A_d)$ is a semi-canonical basis of $\matrixspaceA$ with respect to $L$ and $R$ if the following two conditions hold:
\begin{enumerate}
\item $\langle A_1, \dots, A_d\rangle = \matrixspaceA$.
\item For each $1 \leq i \leq d$, $L A_i R \preceq L A R$ for all the $A$ in $\matrixspaceA$ but not in $ \langle A_1, \dots, A_{i - 1}\rangle $.
\end{enumerate}
\end{definition}

We prove some basic properties for a semi-canonical basis of a matrix space.
\begin{lemma}\label{lem:property_zero_basis}
Let $\matrixspaceA$ be a matrix space of dimension $d$ for some positive integer $d$.
Let $L$ be a left individualization matrix for $\matrixspaceA$ and $R$ be a right individualization matrix for $\matrixspaceA$. 
If $\zero_{L, R}(\matrixspaceA)$ only contains the zero matrix, then there is a unique semi-canonical basis of $\matrixspaceA$ with respect to $L$ and $R$.

If $\dim{\zero_{L, R}(\matrixspaceA)} > 0$, then 
for any semi-canonical basis $(A_1, \dots, A_d)$ of $\matrixspaceA$, 
$L A_i R$ is a zero matrix for all the $1 \leq i \leq \dim{\zero_{L, R}(\matrixspaceA)}$,
and $L A_i R$ is a non-zero matrix for all the $\dim{\zero_{L, R}(\matrixspaceA)} + 1 \leq i \leq d$. 
Furthermore, 
let $(A_1, \dots, A_d)$ and $(A_1', \dots, A_d')$ be
two semi-canonical bases of $\matrixspaceA$ with respect to $L$ and $R$.
For any $1 \leq i \leq d$,
$A_i = A_i' + A_i''$ for some $A_i'' \in \zero_{L, R}(\matrixspaceA)$. 
\end{lemma}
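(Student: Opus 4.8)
The plan is to exploit the fact that the map $A \mapsto LAR$ is linear, so its kernel restricted to $\matrixspaceA$ is exactly $\zero_{L,R}(\matrixspaceA)$, and the whole statement is really a statement about how a ``greedy lexically-minimal'' basis interacts with this kernel. Let me write $W \coloneqq \zero_{L,R}(\matrixspaceA)$ and $z \coloneqq \dim W$.

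First I would handle the case $z = 0$. Here $A \mapsto LAR$ is injective on $\matrixspaceA$, so distinct matrices have distinct images, and among any nonempty subset of $\matrixspaceA$ there is a \emph{unique} lexically-smallest image, hence a unique minimizer. Building the semi-canonical basis greedily: $A_1$ is the unique minimizer of $LAR$ over all nonzero $A \in \matrixspaceA$ (note $L \cdot 0 \cdot R = 0 \preceq$ everything, so we must exclude $0$ — but actually condition~2 only ranges over $A \notin \langle A_1,\dots,A_{i-1}\rangle$, and for $i=1$ that is $A \in \matrixspaceA \setminus \{0\}$ only if $0$ is excluded; I'd note that $A=0$ trivially satisfies the $\preceq$ inequality so it does not constrain the choice, and the minimizer among \emph{nonzero} elements is forced). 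Then inductively $A_{i}$ is the unique minimizer over $\matrixspaceA \setminus \langle A_1,\dots,A_{i-1}\rangle$; uniqueness at each step plus the fact that the span grows by exactly one dimension each time (the minimizer cannot lie in the previous span) gives existence and uniqueness of the whole tuple. The only subtlety is confirming the minimizer is never already in the span $\langle A_1,\dots,A_{i-1}\rangle$, which holds because that span is a proper subspace (we stop at $i=d$) and the set we minimize over is its nonempty complement.

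Next, the case $z > 0$. I would first argue $LA_iR = 0$ for $i \le z$: since $W$ is a $z$-dimensional subspace on which $LAR$ vanishes, at step $i \le z$ the complement $\matrixspaceA \setminus \langle A_1,\dots,A_{i-1}\rangle$ still meets $W$ (dimension count: $\dim\langle A_1,\dots,A_{i-1}\rangle = i-1 < z = \dim W$, so $W \not\subseteq \langle A_1,\dots,A_{i-1}\rangle$), hence some $A \in W$ with $LAR = 0$ is eligible, forcing $LA_iR \preceq 0$, i.e.\ $LA_iR = 0$ (the zero matrix is lexically smallest). Conversely, once $i > z$, I claim $\langle A_1,\dots,A_{z}\rangle = W$: these $z$ matrices all have zero image and are linearly independent (they're part of a basis), and $\dim W = z$, so they span $W$. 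Therefore for $i > z$, $A_i \notin W$... wait, that's not immediate — I need $A_i$ eligible to be chosen so that it has nonzero image. Here's the argument: for $i > z$, every eligible $A$ (i.e.\ $A \notin \langle A_1,\dots,A_{i-1}\rangle \supseteq W$) has $A \notin W$, hence $LAR \ne 0$; so the minimum $LA_iR$ over eligible $A$ is nonzero. That gives the second displayed claim.

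Finally, for two semi-canonical bases $(A_i)$ and $(A_i')$, I would prove by induction on $i$ that $\langle A_1,\dots,A_i\rangle = \langle A_1',\dots,A_i'\rangle$ and $LA_iR = LA_i'R$, which immediately yields $A_i - A_i' \in W$ (same image under the linear map). For the inductive step at $i$: by induction the prefix spans agree, call it $V_{i-1}$; then $A_i$ and $A_i'$ are each a lexically-minimal-image element of $\matrixspaceA \setminus V_{i-1}$, so $LA_iR = LA_i'R$ (both equal the common minimum value $\mu_i$). This gives $A_i - A_i' \in W$. To close the induction I also need $V_i := \langle V_{i-1}, A_i\rangle = \langle V_{i-1}, A_i'\rangle$; this follows since $A_i' = A_i - (A_i - A_i')$ and I'd want $A_i - A_i' \in V_i$ — but $W$ need not be inside $V_i$ in general. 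The clean fix: note $W = \langle A_1,\dots,A_z\rangle = V_z$ by the earlier paragraph (and likewise $= \langle A_1',\dots,A_z'\rangle$), so for $i \ge z$ we have $W \subseteq V_{i-1} \subseteq V_i$ and the span equality propagates; for $i < z$ both $A_i, A_i' \in W$ and in fact $LA_iR = LA_i'R = 0$, so the minimality conditions among elements of $\matrixspaceA \setminus V_{i-1}$ with zero image are what pin things down — here I'd argue the prefix span agreement directly from the $z=0$-style uniqueness applied \emph{within} $W$: restricted to $W$ the map is zero, so the lexical order on images doesn't distinguish, and actually \emph{any} basis-extension of $V_{i-1}$ inside $W$ is permissible, which is exactly why $A_i - A_i'$ can be an arbitrary element of $W$ and the statement only claims $A_i = A_i' + A_i''$ with $A_i'' \in W$ — consistent with $V_i$ possibly differing. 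So I'd restructure: prove span agreement only for $i \ge z$ (where it's forced), prove $LA_iR = LA_i'R$ for all $i$ (trivially $0$ for $i \le z$, and for $i > z$ by the minimality argument using $V_{i-1}$ which agrees for $i-1 \ge z$), and conclude $A_i - A_i' \in W$ everywhere.

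The main obstacle I anticipate is exactly this last point: the prefix spans of two semi-canonical bases need \emph{not} agree in the range $i \le z$, so the naive induction breaks, and one must separate the ``kernel part'' (indices $\le z$, where freedom is genuine and the conclusion $A_i'' \in W$ is all one can say) from the ``non-kernel part'' (indices $> z$, where injectivity of $LAR$ modulo $W$ restores uniqueness of spans and images). Getting the bookkeeping right around index $i = z$, $z+1$ is where care is needed.
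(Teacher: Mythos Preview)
Your proof is correct and follows essentially the same approach as the paper: exploit that the zero matrix is lexically smallest and that the linear map $A \mapsto LAR$ restricted to $\matrixspaceA$ has kernel $W = \zero_{L,R}(\matrixspaceA)$. Your treatment is in fact more careful than the paper's (which for the second part essentially just invokes Definition~\ref{def:semi_canonical_base}); in particular, you correctly isolate the one genuine subtlety---the prefix spans $\langle A_1,\dots,A_{i-1}\rangle$ and $\langle A_1',\dots,A_{i-1}'\rangle$ need not coincide for $i \le z$ but must coincide for $i \ge z$ (both equal $W$ at $i=z$, then propagate since $A_i - A_i' \in W \subseteq V_{i-1}$)---which is exactly what makes the induction for the ``furthermore'' clause go through.
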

\begin{proof}
If $\zero_{L, R}(\matrixspaceA)$ only contains the zero matrix, then for every non-zero matrix $A \in \matrixspaceA$, $L A R$ is a non-zero matrix. 
Let $\matrixspaceA'$ be a subspace of $\matrixspaceA$ such that $LA'R \prec LAR$ for any $A' \in \matrixspaceA'$ and $A \notin \matrixspaceA'$.
There is a unique $A \in \matrixspaceA$ such that $A \notin \matrixspaceA'$,
and $LAR \preceq LA''R$ for all the $A'' \notin \matrixspaceA'$. Hence, the semi-canonical base is unique. 

If $\dim{\zero_{L, R}(\matrixspaceA)} > 0$, 
since the zero matrix is lexically smallest among all the matrices, 
for any semi-canonical basis $(A_1, \dots, A_d)$ of $\matrixspaceA$, 
$LA_i R$ is a zero matrix for all the $1 \leq i \leq \dim{\zero_{L, R}(\matrixspaceA)}$.
By Definition~\ref{def:semi_canonical_base}, the lemma holds.
\end{proof}

The following lemma shows that for a subspace of $M(m, n, \mathbb{F}_p)$ with dimension $d$, given left and right individualization matrices,  
a semi-canonical basis can be constructed in $p^d \cdot \mathrm{poly}(n, m, p, d)$ time.

\begin{lemma}\label{lem:algo_semi_canonial_base}
Given an arbitrary basis of 
 a matrix space $\matrixspaceA \leq M(m, n, \mathbb{F}_p)$ of dimension $d$,  a left individualization matrix $L$, and a right individualization matrix $R$ for $\matrixspaceA$,
 there is an algorithm to compute a semi-canonical basis of $\matrixspaceA$ with respect to $L$ and $R$ in time $p^d \cdot \mathrm{poly}(n, m, p, d)$
 if both $L$ and $R$ contain at most $\mathrm{poly}(n, m)$ rows and columns.
\end{lemma}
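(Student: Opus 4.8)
The plan is to construct the semi-canonical basis greedily, directly following Definition~\ref{def:semi_canonical_base}, via a brute-force enumeration over the $p^d$ elements of $\matrixspaceA$. Fix the given basis $E_1,\dots,E_d$ of $\matrixspaceA$, so that every $A \in \matrixspaceA$ is uniquely of the form $A_c = \sum_{i=1}^d c_i E_i$ for a coefficient vector $c \in \mathbb{F}_p^d$, and enumerating $c$ enumerates $\matrixspaceA$ bijectively. First I would precompute the matrices $L E_i R$ for $i = 1,\dots,d$; since $L$ and $R$ have at most $\mathrm{poly}(n,m)$ rows and columns, each product takes $\mathrm{poly}(n,m)$ time, and then for any $c$ we have $L A_c R = \sum_i c_i (L E_i R)$, computable in $\mathrm{poly}(n,m,d)$ time.

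The construction proceeds in $d$ rounds. I maintain the coefficient subspace $\widehat W_{i-1} \le \mathbb{F}_p^d$ corresponding to the already-chosen span $\langle A_1,\dots,A_{i-1}\rangle$, stored in reduced row-echelon form so that membership of a vector $c$ can be tested in $\mathrm{poly}(d)$ time. In round $i$ I scan all $c \in \mathbb{F}_p^d$; for each $c \notin \widehat W_{i-1}$ I compute $L A_c R$ and keep the lexically smallest one seen so far, comparing two candidate products (each of size at most $\mathrm{poly}(n,m)\times\mathrm{poly}(n,m)$) in $\mathrm{poly}(n,m)$ time using the lexicographic order on matrices from Section~\ref{sec:prelim}. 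I set $A_i \coloneqq A_{c^\ast}$ for a minimizing $c^\ast$ and update $\widehat W_i \coloneqq \langle \widehat W_{i-1}, c^\ast \rangle$ with one Gaussian-elimination step. After $d$ rounds, since each $A_i$ is chosen outside $\langle A_1,\dots,A_{i-1}\rangle$, the $A_1,\dots,A_d$ are linearly independent and hence span $\matrixspaceA$; and by construction $L A_i R \preceq L A R$ for every $A \in \matrixspaceA$ not in $\langle A_1,\dots,A_{i-1}\rangle$, so $(A_1,\dots,A_d)$ is a semi-canonical basis. When $\zero_{L,R}(\matrixspaceA)$ is nontrivial the minimizer $c^\ast$ need not be unique; any choice is acceptable, consistent with the non-uniqueness described in Lemma~\ref{lem:property_zero_basis}.

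For the running time, the precomputation costs $\mathrm{poly}(n,m,d)$; each of the $d$ rounds scans $p^d$ coefficient vectors, spending $\mathrm{poly}(n,m,d)$ per vector on the membership test, the evaluation of $L A_c R$, and the lexicographic comparison, so the total is $p^d \cdot \mathrm{poly}(n,m,p,d)$, as claimed. There is no real mathematical obstacle here: correctness is immediate from Definition~\ref{def:semi_canonical_base}, and the only point needing a little care is the bookkeeping, namely enumerating $\matrixspaceA$ through coefficient vectors with respect to the input basis and maintaining the already-chosen span as a subspace of $\mathbb{F}_p^d$ so that the test ``$A_c \notin \langle A_1,\dots,A_{i-1}\rangle$'' stays cheap.
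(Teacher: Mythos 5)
Your proposal is correct and follows essentially the same greedy algorithm as the paper's own proof: $d$ rounds, each round enumerating all $p^d$ elements of $\matrixspaceA$ and selecting the one minimizing $LAR$ lexically outside the current span, giving the claimed $p^d\cdot\mathrm{poly}(n,m,p,d)$ bound. Your version merely spells out the bookkeeping (coefficient vectors with respect to the input basis, precomputing $LE_iR$, maintaining the span in reduced row-echelon form) that the paper leaves implicit.
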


\begin{proof}
Consider the following algorithm: 
\begin{enumerate}
\item For $i = 1 $ to $d$, find a non-zero matrix $A_i$ in $\matrixspaceA$ but not in $\langle A_1, \dots, A_{i - 1}\rangle $ such that  $L A_i R \preceq L A R$ for all the matrices  $A$ in $ \matrixspaceA$ but not in $\langle A_1, \dots, A_{i - 1}\rangle $. 
\item Return $(A_1, \dots, A_d)$.
\end{enumerate}

The correctness of the algorithm is by Definition~\ref{def:semi_canonical_base}. 
Since the algorithm has $d$ iterations, and in each iteration, the algorithm enumerates all the matrices in $\matrixspaceA$, the running time of the algorithm is $p^d \cdot \mathrm{poly}(n, m, d, p)$.
\end{proof}

\section{Low rank matrix space characterization}\label{sec:low_rank}
If all the matrices in a skew-symmetric matrix space, 
except the zero matrix, are of high rank, 
then with proper left and right individualization matrices, 
the semi-canonical basis is unique (Lemma~\ref{lem:property_zero_basis}). 
Furthermore, if two skew-symmetric  matrix spaces are isometric and the two matrix spaces only contain  high rank matrices (excluding the zero matrix in each space), then one can fix left and right individualization matrices for one space such that the semi-canonical basis is unique and enumerate all the possible images of the left and right individualization matrices for the other space. 
With left and right individualization matrices for both spaces, we compute the unique semi-canonical bases for both matrices spaces. 
Then the skew-symmetric matrix space isometry problem reduces to the skew-symmetric matrix tuple isometry problem, which can be solved efficiently (Theorem~\ref{thm:skew_tuple_isometry_intro}). 

So, the hard case for the skew-symmetric matrix space isometry problem is that the matrix space contains some matrices of low rank. 
After applying the left and right individualization matrices, 
the resulting zero matrices correspond to a subspace of the skew-symmetric matrix space
such that all the matrices in the subspace are of low rank. 

In this section, we investigate the structure of low rank matrix spaces, i.e., matrix spaces in which every matrix is of low rank, to characterize some useful properties. 
In Section~\ref{sec:semi_canonical_form_tensor}, we will use these properties to construct semi-canonical forms for tensors obtained from skew-symmetric matrix spaces so that the skew-symmetric matrix space isometry problem reduces to the skew-symmetric matrix tuple isometry problem.

In particular, we show that for  a low rank matrix space  $\matrixspaceA \leq M(m, n, \mathbb{F}_p)$ such that every matrix in the space is of rank at most $r$,
there are matrices $P \in \mathrm{GL}(m, \mathbb{F}_p)$ and $Q \in \mathrm{GL}(n, \mathbb{F}_p)$ such that 
for each $A \in \matrixspaceA$, $PAQ$ has non-zero entries only in the last $O(r^2)$ rows or in the last $O(r^2)$ columns. 
Furthermore, if $\matrixspaceA$ is a skew-symmetric matrix space, 
then $Q = P^T$. 
Similar characterizations were studied in~\cite{flanders1962spaces, atkinson1981primitive}. But to the author's knowledge, all the previous results require that  the underlying field has at least $r + 1$ elements. \\ 

We first define the attribute set for a matrix space, as well as the kernel, complementary matrix, and formatting matrix for a matrix space and an attribute set. 
Since we require the property of $Q = P^T$ for the skew-symmetric matrix space, 
we give the definition for skew-symmetric matrix space and the definition for general (non-square) matrix space separately. 

Take the general matrix space as an example. 
Suppose $\matrixspaceA\leq M(m, n, \mathbb{F}_p)$ is matrix space,
and there are matrices $P \in \mathrm{GL}(m, \mathbb{F}_p)$ and $Q \in \mathrm{GL}(n, \mathbb{F}_p)$ such that 
for each $A \in \matrixspaceA$, $PAQ$ has non-zero entries only in the last $\alpha$ rows or in the last $\alpha$ columns for some integer $\alpha$. 
Roughly speaking, $P$ and $Q$ are left and right formatting matrices of $\matrixspaceA$.
The row vector space $S$ spanned by the first $m - \alpha$ rows of $PA$ for all the $A \in \matrixspaceA$ is a space of dimension $\alpha$.
To define $P$ and $Q$, 
the attribute set corresponds to a linear basis of the row vector space $S$, 
and the complementary matrix corresponds to the submatrix of $P$ on the last $\alpha$ rows.

\begin{definition}[kernel, complementary matrix, and formatting matrix for skew-symmetric matrix spaces]\label{def:skew_space_complement}
Let $\matrixspaceA$ be an $n \times n$ skew-symmetric matrix space over $\mathbb{F}_p$ for some prime $p > 2$ and positive integer $n$. 
An attribute set $\Lambda$ for $\matrixspaceA$ is a set of linearly independent row vectors in $\mathbb{F}_p^n$.
The kernel 
for $\matrixspaceA$ and $\Lambda$, denoted as $\ker_{\mathrm{skew}}(\matrixspaceA, \Lambda)$, is the space  spanned by all row vectors $v \in \mathbb{F}_p^n$ satisfying the following two conditions:
\begin{enumerate}
\item $x \cdot v^T = 0$ for each $x \in \Lambda$.
\item $\langle v \matrixspaceA\rangle$ is a subspace of $\langle \Lambda \rangle$.
\end{enumerate}

A matrix $C_{\mathrm{skew}}$ is a complementary matrix for skew-symmetric matrix space  $\matrixspaceA$ and attribute set $\Lambda$ if
\begin{enumerate}
\item $C_\mathrm{skew}$ is a full rank matrix in $M(n - \dim{\ker_{\mathrm{skew}}(\matrixspaceA, \lambda)}, n, \mathbb{F}_p)$.
\item The intersection of $\ker_{\mathrm{skew}}(\matrixspaceA, \Lambda)$ and the row vector space spanned by the rows of $C_\mathrm{skew}$
only contains the zero vector. 
\item Let $c_i$ be the $i$-th row of $C_\mathrm{skew}$. 
$x \cdot c_i^T = 0$ for all the $x \in \Lambda$ and $1 \leq i \leq n - \dim{\ker_{\mathrm{skew}}(\matrixspaceA, \lambda)} - |\Lambda|$.
\end{enumerate}


A matrix $P_\mathrm{skew} \in \mathrm{GL}(n, \mathbb{F}_p)$ is called a formatting matrix for attribute set $\Lambda$ and complementary matrix $C_\mathrm{skew}$ with respect to skew-symmetric matrix space $\matrixspaceA$,
where $C_\mathrm{skew}$ is a complementary matrix for $\matrixspaceA$ and $\Lambda$, 
if the following conditions hold:
\begin{enumerate}
\item $P_\mathrm{skew}$ is a full rank matrix.
\item The first $\dim{\ker_\mathrm{skew}(\matrixspaceA, \Lambda)}$ rows of $P_\mathrm{skew}$ form a linear basis of $\ker_\mathrm{skew}(\matrixspaceA, \Lambda)$.
\item The submatrix of $P_\mathrm{skew}$ on the last $n - \dim{\ker_\mathrm{skew}(\matrixspaceA, \Lambda)}$ rows equals $C_\mathrm{skew}$.
\end{enumerate}
\end{definition}

\begin{lemma}
Let $\matrixspaceA$ be an $n\times n$ skew-symmetric matrix space over $\mathbb{F}_p$, 
$\Lambda$ be an attribute set for $\matrixspaceA$,
and $C_\mathrm{skew}$ be a complementary matrix for $\matrixspaceA$ and $\Lambda$.
If $P_\mathrm{skew}$ is a formatting matrix for $\Lambda$ and $C_\mathrm{skew}$ with respect to $\matrixspaceA$, 
then for any $A \in \matrixspaceA$, 
\[\left(P_\mathrm{skew} A P_\mathrm{skew}^T\right)\left[1,  \dim{\ker_{\mathrm{skew}}(\matrixspaceA, \Lambda)}; 1, \dim{\ker_{\mathrm{skew}}(\matrixspaceA, \Lambda)}\right]\] 
is a zero matrix.
\end{lemma}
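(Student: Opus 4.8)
The plan is to unwind the definitions and verify that the top-left block vanishes entry by entry. Write $d \coloneqq \dim{\ker_{\mathrm{skew}}(\matrixspaceA, \Lambda)}$, and let $p_1, \dots, p_d$ be the first $d$ rows of $P_\mathrm{skew}$, which by the definition of a formatting matrix form a linear basis of $\ker_{\mathrm{skew}}(\matrixspaceA, \Lambda)$. Fix an arbitrary $A \in \matrixspaceA$. The $(i, j)$-th entry of $P_\mathrm{skew} A P_\mathrm{skew}^T$ with $1 \leq i, j \leq d$ is exactly $p_i A p_j^T$, so it suffices to show $p_i A p_j^T = 0$ for all such $i, j$.

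First I would use condition~(2) in the definition of $\ker_{\mathrm{skew}}(\matrixspaceA, \Lambda)$: since $p_i \in \ker_{\mathrm{skew}}(\matrixspaceA, \Lambda)$, the vector $p_i A$ lies in $\langle p_i \matrixspaceA \rangle \leq \langle \Lambda \rangle$, so $p_i A = \sum_{x \in \Lambda} a_x \, x$ for some scalars $a_x \in \mathbb{F}_p$. Hence $p_i A p_j^T = \sum_{x \in \Lambda} a_x \, (x \cdot p_j^T)$. Now invoke condition~(1) of the kernel definition applied to $p_j$: because $p_j \in \ker_{\mathrm{skew}}(\matrixspaceA, \Lambda)$, we have $x \cdot p_j^T = 0$ for every $x \in \Lambda$. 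Therefore each term in the sum is zero, giving $p_i A p_j^T = 0$, and the claimed top-left submatrix is the zero matrix.

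I do not anticipate a genuine obstacle here; the statement is essentially a bookkeeping consequence of the kernel definition, and the only thing to be careful about is which of the two kernel conditions is applied to which of the two indices. One subtle point worth a sentence in the write-up: the argument uses only that the first $d$ rows of $P_\mathrm{skew}$ lie in (and span) the kernel — the full-rank property of $P_\mathrm{skew}$ and the complementary-matrix conditions are not needed for this particular lemma, though they matter for the surrounding construction. For completeness I would also note explicitly that $x \cdot p_j^T = p_j \cdot x^T$ as scalars, in case the reader prefers to think of $\langle v \matrixspaceA\rangle$ and the orthogonality condition with the roles of row/column transposed; the skew-symmetry of $A$ (i.e. $A^T = -A$) is not actually invoked in the proof of this block-vanishing statement, only the kernel membership of the relevant rows.
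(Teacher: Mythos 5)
Your proof is correct and follows exactly the same two-step argument as the paper's: apply kernel condition~(2) to the left row (so $p_i A$ lies in $\langle \Lambda \rangle$) and kernel condition~(1) to the right row (so $x \cdot p_j^T = 0$ for all $x \in \Lambda$), concluding $p_i A p_j^T = 0$. The only differences are notational ($p_i, p_j$ in place of the paper's $v, v'$) and your helpful side remarks about which hypotheses are actually used.
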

\begin{proof}
Let $v, v'$ be two arbitrary rows of the first $ \dim{\ker_{\mathrm{skew}}(\matrixspaceA, \Lambda)}$ rows of $P_\mathrm{skew}$. 
By Definition~\ref{def:skew_space_complement}, 
$v A$ is a linear combination of vectors in $\Lambda$.
Since $x \cdot v'^T = 0$ for each $x \in \Lambda$, $v A v'^T = 0$. 
\end{proof}

\begin{definition}[kernel, complementary matrix, and formatting matrix for general matrix spaces]\label{def:space_complement}
Let $\matrixspaceA$ be a matrix subspace of $M(m, n, \mathbb{F}_p)$. 
An attribute set $\Lambda$ for $\matrixspaceA$ is a set of linearly independent row vectors in $\mathbb{F}_p^n$.
The kernel for $\matrixspaceA$ and $\Lambda$, denoted as $\ker(\matrixspaceA, \Lambda)$, is  
the space  spanned by all row vectors $v \in \mathbb{F}_p^m$ such that
$\langle v \matrixspaceA\rangle$ is a subspace of $\langle \Lambda \rangle$.

A matrix $C$ is a complementary matrix for $\matrixspaceA$ and $\Lambda$ if the following conditions hold:
\begin{enumerate}
\item $C$ is a full rank matrix in $M(m - \dim{\ker(\matrixspaceA, \lambda)}, m, \mathbb{F}_p)$.
\item The intersection of 
$\ker(\matrixspaceA, \Lambda)$ and the row vector space spanned by all the rows of $C$
contains only the zero vector. 
\end{enumerate}


Given a complementary matrix $C$ of $\Lambda$ with respect to $\matrixspaceA$, 
a matrix $P$ is a left formatting matrix for $\Lambda$ and $C$ with respect to $\matrixspaceA$ if the following conditions hold:
\begin{enumerate}
\item $P$ is a matrix in $\mathrm{GL}(m, \mathbb{F}_p)$.
\item The first $ \dim{\ker(\matrixspaceA, \Lambda)}$ rows of $P$ form a linear basis of $\ker(\matrixspaceA, \Lambda)$.

\item The submatrix of $P$ on the last $ m - \dim{\ker(\matrixspaceA, \Lambda)}$ rows equal to $C$.
\end{enumerate}

A matrix $Q$ is a right formatting matrix for $\Lambda$ with respect to $\matrixspaceA$ if the following conditions hold:
\begin{enumerate}
\item $Q$ is a matrix in $\mathrm{GL}(n, \mathbb{F}_p)$.
\item Let $q_i$ be the $i$-th column vector of $Q$.
For any $1 \leq i \leq  n - |\Lambda|$, $x \cdot q_i = 0$ for any $x \in \Lambda$.
\end{enumerate}
\end{definition}

\begin{lemma}
Let $\matrixspaceA$ be an $m\times n$ matrix space, 
$\Lambda$ be an attribute set for $\matrixspaceA$, 
and $C$ be a complementary matrix for $\matrixspaceA$ and $\Lambda$.
If $P$ is a left formatting matrix for $\Lambda$ and $C$ with respect to $\matrixspaceA$, and 
$Q$ is a right formatting matrix for $\Lambda$ with respect to $\matrixspaceA$,
then for any $A \in \matrixspaceA$, \[(P A Q) [1, \dim{\ker(\matrixspaceA, \Lambda)}; 1, n - |\Lambda|]\] is a zero matrix. 

\end{lemma}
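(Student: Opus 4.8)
The plan is to unwind the definitions and show that each entry in the top-left block $(PAQ)[1,\dim\ker(\matrixspaceA,\Lambda);1,n-|\Lambda|]$ is a pairing of a kernel row with a column that is annihilated by $\Lambda$. First I would fix an arbitrary $A\in\matrixspaceA$ and let $v$ be any one of the first $\dim\ker(\matrixspaceA,\Lambda)$ rows of $P$; by condition 2 in the definition of a left formatting matrix, $v$ lies in $\ker(\matrixspaceA,\Lambda)$, so by the definition of the kernel, $\langle v\matrixspaceA\rangle\subseteq\langle\Lambda\rangle$, and in particular the row vector $vA$ is a linear combination of the vectors in $\Lambda$. Next, let $q$ be any one of the first $n-|\Lambda|$ columns of $Q$; by condition 2 in the definition of a right formatting matrix, $x\cdot q=0$ for every $x\in\Lambda$.

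The $(i,j)$-entry of $PAQ$ with $i\le\dim\ker(\matrixspaceA,\Lambda)$ and $j\le n-|\Lambda|$ equals $(PAQ)[i,j]=p_i A q_j$, where $p_i$ is the $i$-th row of $P$ and $q_j$ is the $j$-th column of $Q$. Writing $p_iA=\sum_{x\in\Lambda}c_x\,x$ for scalars $c_x$ (possible by the previous paragraph), we get $p_iAq_j=\sum_{x\in\Lambda}c_x\,(x\cdot q_j)=0$ since each $x\cdot q_j=0$. As $i$ and $j$ range over the stated index sets, this shows every entry of the claimed submatrix vanishes, which is exactly the statement. This mirrors the proof of the analogous lemma for skew-symmetric matrix spaces given just above, with $v'^T$ there replaced by the column $q_j$ here.

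There is essentially no obstacle: the argument is a one-line bilinear-pairing computation once the definitions are correctly matched up. The only point requiring a little care is bookkeeping — making sure that "the first $\dim\ker$ rows of $P$ span the kernel" is the hypothesis actually used on the left, and that "the first $n-|\Lambda|$ columns of $Q$ are orthogonal to $\Lambda$" is the hypothesis used on the right, so that the index ranges $[1,\dim\ker(\matrixspaceA,\Lambda)]$ and $[1,n-|\Lambda|]$ in the conclusion are precisely the ranges over which both hypotheses apply. I would also note in passing that $P$ and $Q$ being invertible (condition 1 in each case) is not needed for this particular lemma; it is only used later when $P,Q$ serve as genuine formatting (change-of-basis) matrices.
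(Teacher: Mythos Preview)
Your proof is correct and follows essentially the same approach as the paper's proof: pick a row $v$ of $P$ from the kernel block, a column $q$ of $Q$ from the $\Lambda$-orthogonal block, observe $vA\in\langle\Lambda\rangle$, and conclude $vAq=0$. Your additional remarks (the parallel with the skew-symmetric case and the observation that invertibility of $P,Q$ is unused here) are accurate and not in the paper, but the core argument is identical.
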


\begin{proof}
Let $v$ be an arbitrary row vector of the first $\dim{\ker(\matrixspaceA, \Lambda)}$ rows of $P$,
and $v'$ be an arbitrary column vector of the first $n - |\Lambda|$ columns of $Q$. 
For any $A \in \matrixspaceA$, 
by Definition~\ref{def:space_complement}, 
$v A$ is a linear combination of the row vectors in $\Lambda$.
Since $x \cdot v' = 0$ for each $x \in \Lambda$ by Definition~\ref{def:space_complement},  $v A v' = 0$. 
\end{proof}

As the main observation for the structure of low rank matrix spaces  (Lemma~\ref{lem:low_rank_main}), we show that for any low rank matrix space $\matrixspaceA$, there always exists a small attribute set such that the dimension of $\ker(\matrixspaceA, \Lambda)$ (or $\ker_\mathrm{skew}(\matrixspaceA, \Lambda)$ if $\matrixspaceA$ is a skew-symmetric matrix space) is large.

\begin{lemma}\label{lem:space_dim_prelim}
Let $A_1, A_2, \dots, A_k$ be $k$ matrices in $M(m, n, \mathbb{F}_p)$ for some prime $p$ and positive integers $k, m, n$. 
If there exist $d$ row vectors
 $x_1, x_2, \dots, x_d \in \mathbb{F}_p^m$ such that for every $1 \leq i < d$, 
the following condition holds
\begin{align}\label{equ:rank_bound}\left\langle \left\{x_j A_\ell : 1 \leq j \leq i, 1 \leq \ell \leq k\right\}\right\rangle \neq \left\langle \left\{x_j A_\ell : 1 \leq j \leq i + 1, 1 \leq \ell \leq k\right\} \right\rangle,
\end{align}
then there is a linear combination of $A_1, A_2, \dots, A_k$ with rank at least $(1 - 1 / p)d$.
\end{lemma}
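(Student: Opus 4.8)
The plan is to prove the statement by the probabilistic method applied to a uniformly random linear combination. Take $c_1,\dots,c_k$ independent and uniform in $\mathbb{F}_p$, set $A=\sum_{\ell=1}^k c_\ell A_\ell$, and aim to show $\mathbb{E}\!\left[\rank{A}\right]\ge (1-1/p)d$; this immediately produces a fixed combination attaining the bound, since $\rank{A}$ is a nonnegative integer.

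First I would record the fixed (non-random) flag of subspaces $V_i:=\langle\{x_jA_\ell:1\le j\le i,\ 1\le\ell\le k\}\rangle\subseteq\mathbb{F}_p^n$ for $0\le i\le d$, where $V_0=\{0\}$. Hypothesis~\eqref{equ:rank_bound} says precisely that $V_i\subsetneq V_{i+1}$ for $1\le i<d$, which together with $V_0\subsetneq V_1$ (the base vector $x_1$ being chosen so that $x_1A_\ell\neq 0$ for some $\ell$) yields a strict flag $\{0\}=V_0\subsetneq V_1\subsetneq\cdots\subsetneq V_d$. The reason to work with the $V_i$ rather than with the $c$-dependent row spans $\langle x_1A,\dots,x_iA\rangle$ is that the $V_i$ do not change when $c$ is randomized, which keeps the coming estimate free of circularity.

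Next, for each $1\le i\le d$ let $Y_i$ be the indicator of the event $x_iA\notin V_{i-1}$. Because $V_{i-1}\subsetneq V_i=V_{i-1}+\langle x_iA_\ell:1\le\ell\le k\rangle$, some index $\ell^\ast$ has $x_iA_{\ell^\ast}\notin V_{i-1}$. Consider the linear map $\phi_i:\mathbb{F}_p^k\to\mathbb{F}_p^n/V_{i-1}$ given by $\phi_i(c)=x_iA+V_{i-1}=\sum_\ell c_\ell\,(x_iA_\ell+V_{i-1})$; it sends the standard basis vector $e_{\ell^\ast}$ to a nonzero element, so $\phi_i$ is not identically zero, its kernel is a proper subspace of $\mathbb{F}_p^k$, and hence $\Pr_c[x_iA\in V_{i-1}]=\Pr_c[\phi_i(c)=0]\le 1/p$, i.e. $\Pr[Y_i=1]\ge 1-1/p$. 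Finally, each $x_iA$ lies in the row space of $A$, and whenever $Y_i=1$ we have $x_iA\notin V_{i-1}\supseteq\langle x_1A,\dots,x_{i-1}A\rangle$, so $x_iA$ is linearly independent of $x_1A,\dots,x_{i-1}A$; therefore $\rank{A}\ge\dim\langle x_1A,\dots,x_dA\rangle\ge\sum_{i=1}^d Y_i$. Linearity of expectation gives $\mathbb{E}[\rank{A}]\ge\sum_{i=1}^d\Pr[Y_i=1]\ge (1-1/p)d$, completing the argument.

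The proof is short and I do not expect a genuine obstacle; the only point that needs care is the decoupling highlighted above — one must bound the probability that $x_iA$ escapes the \emph{fixed} subspace $V_{i-1}$, where a single nonzero quotient-map argument applies cleanly, and then observe that escaping $V_{i-1}$ already suffices to increase $\dim\langle x_1A,\dots,x_dA\rangle$ precisely because $\langle x_1A,\dots,x_{i-1}A\rangle\subseteq V_{i-1}$.
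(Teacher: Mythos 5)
Your proof is correct and follows essentially the same route as the paper: both arguments take a uniformly random linear combination $A=\sum_\ell c_\ell A_\ell$, bound $\Pr[x_iA\in V_{i-1}]\le 1/p$ by observing that the bad set of coefficient vectors is a proper linear subspace of $\mathbb{F}_p^k$ (your quotient map $\phi_i$ is precisely the paper's $S_i=\ker\phi_i$), and conclude by linearity of expectation that $\mathbb{E}[\rank A]\ge(1-1/p)d$. The one thing you made explicit that the paper leaves implicit is the need for $V_0\subsetneq V_1$ (the lemma's hypothesis is stated only for $1\le i<d$, so strictly speaking it does not supply this base case), which is a minor statement issue present in both proofs and immaterial in the lemma's actual use.
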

\begin{proof}
Let $X$ be the $d \times n$ matrix such that $x_i$ is the $i$-th row of $X$.
To prove the lemma, it is sufficient to show that if
$\alpha_1, \dots, \alpha_k $ are uniformly and independently  sampled from $\{0, \dots, p-1\}$, then
we have the following expectation estimation. 
\[\Exp_{\alpha_1, \dots, \alpha_k}\left[\rank {X \left(\sum_{\ell=1}^k \alpha_\ell A_\ell\right) }\right] \geq \left(1 - \frac{1}{p}\right)d.\]
If this is the case, then there exist $\alpha_1, \dots, \alpha_k \in \mathbb{F}_p$ such that 
\[\rank{\sum_{\ell=1}^k \alpha_\ell A_\ell} \geq \rank {X \left(\sum_{\ell=1}^k \alpha_\ell A_\ell\right) } \geq \left(1 - \frac{1}{p}\right)d, \]
and then the lemma follows. 

For any $1 \leq i \leq d$, 
let $S_i$ be the set of row vectors
\[\left\{(\alpha_1, \dots, \alpha_k) \in \mathbb{F}_p^k : x_i \left(\sum_{\ell = 1}^k \alpha_\ell A_\ell \right) \in \left\langle \left\{x_j A_\ell : 1 \leq j \leq i - 1, 1 \leq \ell \leq k\right\} \right\rangle\right\}.\]
Since $\left\langle \left\{x_j A_\ell : 1 \leq j \leq i - 1, 1 \leq \ell \leq k\right\} \right\rangle$ is a row vector space,
if both $(\alpha_1, \dots, \alpha_k)$ and $(\beta_1, \dots, \beta_k)$ are in $S_i$, then $(\alpha_1 + \beta_1, \dots, \alpha_k + \beta_k)$ is also in $S_i$. 
Hence, the row vectors in $S_i$ form a subspace of $\mathbb{F}_p^k$.
By Inequality~(\ref{equ:rank_bound}), $S_i$ does not contain all the vectors in $\mathbb{F}_p^k$.
Hence $|S_i| \leq p^{k-1}$.
We have for each $1 \leq i \leq d$, 
\[\mathrm{Pr}_{\alpha_1, \dots, \alpha_k}\left[x_i \left(\sum_{\ell=1}^k \alpha_\ell A_\ell\right) \notin  \langle \{x_j A_\ell : 1 \leq j \leq i-1, 1 \leq \ell \leq k\} \rangle\right] = \frac{p^k - |S_i|}{p^k} \geq 1 - \frac{1}{p}.\]
If $x_i \left(\sum_{\ell=1}^k \alpha_\ell A_\ell\right)$ is not in $\left \langle \{x_j A_\ell : 1 \leq j < i, 1 \leq \ell \leq k\} \right\rangle$, 
then $x_i \left(\sum_{\ell=1}^k \alpha_\ell A_\ell\right)$ is not a linear combination of $x_1 \left(\sum_{\ell=1}^k \alpha_\ell A_\ell\right), \dots,  x_{i-1} \left(\sum_{\ell=1}^k \alpha_\ell A_\ell\right)$. 
Thus, we have \begin{align*}& \Exp_{\alpha_1, \dots, \alpha_k}\left[\rank {X \left(\sum_{\ell=1}^k \alpha_\ell A_\ell\right) }\right]  \\ \geq & 
\sum_{i=1}^d \mathrm{Pr}_{\alpha_1, \dots, \alpha_k}\left[x_i \left(\sum_{\ell=1}^k \alpha_\ell A_\ell\right) \notin  \langle \{x_j A_\ell : 1 \leq j < i, 1 \leq \ell \leq k\} \rangle\right]  \\
\geq & \left(1 - \frac{1}{p}\right)d.\end{align*}
\end{proof}

\begin{lemma}\label{lem:low_rank_main}
Let $\matrixspaceA$ be a matrix subspace of $M(m, n, \mathbb{F}_p)$ 
or a skew-symmetric matrix subspace of $\SS(n, \mathbb{F}_p)$
such that for each $A \in \matrixspaceA$, $\rank{A} \leq r$ for some positive integer $r$. 
There is an attribute set $\Lambda$ of size $O(r^2)$ for $\matrixspaceA$ such that
\begin{enumerate}
\item 
$\dim{\ker(\matrixspaceA, \Lambda)} \geq m - O(r)$ if $\matrixspaceA$ is a matrix subspace of $M(m, n, \mathbb{F}_p)$, or
\item 
$\dim{\ker_\mathrm{skew}(\matrixspaceA, \Lambda)} \geq n - O(r^2)$ if $\matrixspaceA$ is a skew-symmetric matrix subspace of $\SS(n, \mathbb{F}_p)$.
\end{enumerate}
\end{lemma}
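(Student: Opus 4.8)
The plan is to build the attribute set $\Lambda$ greedily, using Lemma~\ref{lem:space_dim_prelim} as the obstruction that forces the process to terminate early. Fix a basis $A_1,\dots,A_d$ of $\matrixspaceA$. I would maintain a set $\Lambda$ of linearly independent row vectors (in $\mathbb{F}_p^n$, or $\mathbb{F}_p^n$ for the skew case) together with the space $\langle \Lambda\rangle$ it spans, and repeatedly ask: is every matrix of $\matrixspaceA$ ``cleared'' by $\Lambda$ in the appropriate sense? Concretely, for the general case, look at the kernel $\ker(\matrixspaceA,\Lambda)$, the set of $v\in\mathbb{F}_p^m$ with $\langle v\matrixspaceA\rangle\le\langle\Lambda\rangle$; if $\dim\ker(\matrixspaceA,\Lambda)< m - O(r)$, I claim one can find a new row vector to append to $\Lambda$ so that the span $\langle\Lambda\rangle$ strictly grows, and moreover grows in a way that can be charged against the hypothesis of Lemma~\ref{lem:space_dim_prelim}. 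The point is that if the kernel is too small, there is a vector $v$ with $\langle v\matrixspaceA\rangle\not\le\langle\Lambda\rangle$, i.e. some $vA_\ell$ lies outside $\langle\Lambda\rangle$; adding (a representative of) such a vector to $\Lambda$ enlarges $\langle\Lambda\rangle$. After $t$ steps we have produced row vectors $x_1,\dots,x_t$ (the $v$'s chosen at each step) for which the chain condition~\eqref{equ:rank_bound} holds with $k=d$ and $i$ up to $t-1$, because at step $i+1$ the newly added direction was by construction not already in $\langle\{x_jA_\ell : j\le i,\ \ell\le d\}\rangle = \langle\Lambda_i\rangle$. Hence by Lemma~\ref{lem:space_dim_prelim} there is a linear combination of the $A_\ell$ of rank at least $(1-1/p)t$; since every matrix in $\matrixspaceA$ has rank $\le r$ and $p>2$ gives $1-1/p\ge 1/2$, we get $t \le 2r$, so the process halts after $O(r)$ steps. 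At termination $\Lambda$ has $|\Lambda| = O(r)$ vectors actually — wait, but each step may add several vectors to $\Lambda$, since $\langle\Lambda\rangle$ must absorb all of $\langle x_i\matrixspaceA\rangle$, which can be up to $r$-dimensional; so $|\Lambda| = O(r)\cdot r = O(r^2)$, matching the claim, and at termination $\dim\ker(\matrixspaceA,\Lambda)\ge m - |\Lambda| \ge m - O(r^2)$. For the general-matrix conclusion the bound $m-O(r)$ rather than $m-O(r^2)$ requires a sharper accounting: one should count the number of \emph{steps} $t=O(r)$ and observe that the kernel's codimension is governed by the number of rows $x_i$ we fixed rather than by $|\Lambda|$ — so I would track separately that $\dim\ker(\matrixspaceA,\Lambda) \ge m - t = m - O(r)$, using that $\ker(\matrixspaceA,\Lambda)$ contains the annihilator of $\{x_1,\dots,x_t\}$ appropriately. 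Let me reconsider which bound attaches to which statement: the $O(r^2)$ kernel codimension is needed for the skew case (conclusion 2: $\dim\ker_\mathrm{skew}\ge n-O(r^2)$), and the cleaner $O(r)$ for the general case (conclusion 1).

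For the \textbf{general case}, I would run the greedy process on the family $\{A_\ell\}$ to get $x_1,\dots,x_t$ with $t=O(r)$ and set $\Lambda$ to be a basis of $\langle\{x_jA_\ell : j\le t,\ \ell\le d\}\rangle$, which has dimension $O(r^2)$ (each $x_jA_\ell$ contributes within the rank-$r$ image); then every $v$ in the common left-kernel of the $x_i$'s... no — the right statement is: once the chain~\eqref{equ:rank_bound} cannot be extended, \emph{every} $v\in\mathbb{F}_p^m$ has $\langle v\matrixspaceA\rangle\le\langle\Lambda\rangle$, which would give $\ker(\matrixspaceA,\Lambda)=\mathbb{F}_p^m$, too strong. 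So the correct reading is: I extend the chain \emph{until} $\ker(\matrixspaceA,\Lambda)$ has the desired dimension, and Lemma~\ref{lem:space_dim_prelim} guarantees this happens within $t=O(r)$ chain-extension steps, because otherwise a too-long chain produces a too-high-rank matrix. Each step, choosing $x_i$ with $x_i A_\ell\notin\langle\Lambda_{i-1}\rangle$ for some $\ell$, and resetting $\Lambda_i$ to span $\langle\Lambda_{i-1}\rangle + \sum_\ell\langle x_iA_\ell\rangle$; this keeps $|\Lambda_i|\le |\Lambda_{i-1}| + r$. Terminate when no such $x_i$ exists, i.e. when $\ker(\matrixspaceA,\Lambda)=\mathbb{F}_p^m$ — but then codimension $0$, and $|\Lambda|=O(r^2)$; conclusion 1 would read $\dim\ker\ge m$, which is fine but I suspect the intended statement keeps $\Lambda$ smaller. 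I would present the honest version: terminate the chain as soon as it stops being extendable, giving $|\Lambda|=O(r^2)$ and $\ker(\matrixspaceA,\Lambda)$ of codimension at most... Here I will defer to the more careful bookkeeping in the actual proof; the skeleton is: greedy chain $\to$ termination via Lemma~\ref{lem:space_dim_prelim} and $p>2$ $\to$ size and dimension bounds.

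For the \textbf{skew-symmetric case}, the extra subtlety is that $\ker_\mathrm{skew}(\matrixspaceA,\Lambda)$ demands two conditions — $v$ annihilates $\Lambda$ under the bilinear form \emph{and} $\langle v\matrixspaceA\rangle\le\langle\Lambda\rangle$ — so I must ensure both are met by a space of codimension $O(r^2)$. I would first apply the general-case construction to get $\Lambda_0$ with $\langle v\matrixspaceA\rangle\le\langle\Lambda_0\rangle$ for all $v$ outside a codimension-$O(r)$ space, then enlarge $\Lambda_0$ to $\Lambda$ by adding the (at most $|\Lambda_0| = O(r^2)$ many) vectors needed so that $\langle\Lambda\rangle$ also contains $\Lambda_0^{\perp}$-complementary directions — i.e. close $\Lambda$ under the orthogonality requirement; since $\langle\Lambda\rangle$ grows by at most $|\Lambda_0|=O(r^2)$ and the annihilator of an $O(r^2)$-dimensional space has codimension $O(r^2)$, the resulting kernel has codimension $O(r^2)$ in $\mathbb{F}_p^n$, giving conclusion 2. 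The main obstacle, and the step I would spend the most care on, is exactly this coordination in the skew case: making the two defining conditions of $\ker_\mathrm{skew}$ compatible while keeping $|\Lambda|$ quadratic in $r$ and not blowing the codimension beyond $O(r^2)$ — in particular verifying that enforcing $\langle v\matrixspaceA\rangle\le\langle\Lambda\rangle$ on $v$ and enforcing $\Lambda\cdot v^T=0$ are not in tension, which should follow because both are controlled by the same $O(r^2)$-dimensional subspace $\langle\Lambda\rangle$ via the bound $\rank A\le r$. Everything else is the greedy argument plus Lemma~\ref{lem:space_dim_prelim}, with the $p>2$ hypothesis entering only to turn $(1-1/p)t\le r$ into $t=O(r)$.
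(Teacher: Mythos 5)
Your plan has a gap that is fatal, and it is precisely the point the paper's two-stage construction is designed to overcome: you repeatedly assume, both explicitly (``$\langle x_i\matrixspaceA\rangle$, which can be up to $r$-dimensional'') and implicitly in the greedy bookkeeping, that $\dim\langle v\matrixspaceA\rangle\le r$ for the vectors $v$ you greedily pick. This is false. The low-rank hypothesis bounds $\rank A$ for each individual $A\in\matrixspaceA$; it does \emph{not} bound the dimension of $\langle v\matrixspaceA\rangle$, which is the image of the linear map $\alpha\mapsto v\bigl(\sum_\ell\alpha_\ell A_\ell\bigr)$ and can be as large as $\min(\dim\matrixspaceA,n)$. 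Concretely, take $\matrixspaceA=\langle e_1e_j^T-e_je_1^T : 2\le j\le n\rangle\le\SS(n,\mathbb{F}_p)$: every matrix has rank $\le 2$, yet for any $v$ with $v[1]\neq 0$ one has $\dim\langle v\matrixspaceA\rangle=n-1$. Your greedy loop, offered free choice of $v$, can therefore force $|\Lambda|$ to blow up linearly in $n$ in a single step, and the $O(r^2)$ bound on $|\Lambda|$ does not follow. (The secondary fix you float for the general case, that $\dim\ker(\matrixspaceA,\Lambda)\ge m-t$ because $\ker$ ``contains the annihilator of $\{x_1,\dots,x_t\}$'', is also unsupported: orthogonality to the $x_i$'s has no a priori relation to the condition $\langle v\matrixspaceA\rangle\le\langle\Lambda\rangle$ defining the kernel.)

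What the paper actually does, and what your outline misses, is a \emph{two-stage} argument. Stage one runs the greedy chain on arbitrary $v$'s and uses Lemma~\ref{lem:space_dim_prelim} only to bound the number $d=O(r)$ of directions one can add; crucially, these $x_1,\dots,x_d$ are \emph{not} used to build $\Lambda$. The payoff is the termination condition: because the chain cannot be extended, for every $v\notin\langle x_1,\dots,x_d\rangle$ there is a correction $\sum_j\beta_j x_j$ making the set $\{\alpha : (v-\sum\beta_jx_j)(\sum\alpha_\ell A_\ell)=0\}$ a subspace of $\mathbb{F}_p^k$ of dimension $\ge k-d$, hence $\dim\langle(v-\sum\beta_jx_j)\matrixspaceA\rangle\le d=O(r)$. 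This lets one complete $x_1,\dots,x_d$ to a basis $P$ of $\mathbb{F}_p^m$ whose remaining rows $p_{d+1},\dots,p_m$ \emph{all} have small image $\langle p_i\matrixspaceA\rangle$ of dimension $O(r)$. Stage two is a \emph{second} application of Lemma~\ref{lem:space_dim_prelim}, now restricted to those small-image rows, which shows at most $O(r)$ of them contribute new directions to $\mathfrak{S}=\bigl\langle\bigcup_{i>d}\langle p_i\matrixspaceA\rangle\bigr\rangle$; since each contribution is $O(r)$-dimensional, $\dim\mathfrak{S}=O(r^2)$, and taking $\Lambda$ to be a basis of $\mathfrak{S}$ gives $|\Lambda|=O(r^2)$ while $p_{d+1},\dots,p_m\in\ker(\matrixspaceA,\Lambda)$, i.e.\ codimension $\le d=O(r)$. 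Your single-stage greedy collapses these two roles into one and cannot control both $|\Lambda|$ and the kernel codimension simultaneously. For the skew-symmetric case, once the general case is established, the paper simply takes the same $\Lambda$ and intersects the codimension-$O(r)$ space $\mathfrak{S}$ with the codimension-$|\Lambda|=O(r^2)$ orthogonal space $\mathfrak{T}=\{v : x\cdot v^T=0\ \forall x\in\Lambda\}$, which immediately yields $\dim\ker_\mathrm{skew}\ge n-O(r^2)$; no enlargement of $\Lambda$ is needed, so the ``closing under orthogonality'' step in your skew-case sketch is also off-target.
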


\begin{proof}
We first consider the case that $\matrixspaceA$ is a matrix subspace of $M(m, n, \mathbb{F}_p)$.
Let $A_1, \dots, A_k$ be a linear basis for $\matrixspaceA$.
Let $x_1, x_2, \dots, x_d \in \mathbb{F}_p^m$ be $d$ row vectors for some positive integer $d$ 
 such that 
for any $1 \leq i \leq d$, there are at least $p^{k} / 2$ different $\alpha_1, \alpha_2, \dots, \alpha_k \in \mathbb{F}_p$ satisfying
\[\rank{X_i \left(\sum_{j =1}^k \alpha_j A_j\right)} > \rank{X_{i-1} \left(\sum_{j =1}^k \alpha_j A_j\right)},\] where for each $1 \leq i \leq d$,
$X_i$ is the $i \times m$ matrix with $x_j$ as the $j$-th row of $X_i$ for all $1 \leq j \leq i$.
Since every linear combination of $A_1, \dots, A_k$ is of rank at most $r$, we have 
\[d \cdot \frac{p^k}{2} \leq r \cdot p^k,\]
which implies $d \leq 2r$. Suppose there does not exist a vector $x_{d+1} \in \mathbb{F}_p^m$ such that there are at least $p^{k} / 2$ different $\alpha_1, \alpha_2, \dots, \alpha_k\in\mathbb{F}_p$ satisfying
\[\rank{X_{d+1} \left(\sum_{j =1}^k \alpha_j A_j\right)} > \rank{X_d \left(\sum_{j =1}^k \alpha_j A_j\right)}.\] 
For each row vector $v \in \mathbb{F}_p^m$ such that $v\notin \langle x_1, \dots, x_d\rangle$, 
there exist $\beta_1, \dots, \beta_d \in \mathbb{F}_p$ such that the following condition holds for at least $\frac{p^k}{ 2p^d}$ different $\alpha_1, \alpha_2, \dots, \alpha_k \in \mathbb{F}_p$
\[v \cdot \left(\sum_{i = 1}^k \alpha_k A_k\right) = \left(\sum_{i=1}^d \beta_i x_i\right)\left(\sum_{i = 1}^k \alpha_k A_k\right).\]
Thus, $\langle (v -  \sum_{i=1}^d \beta_i x_i) \matrixspaceA\rangle$ is a space of dimension at most $d + 1 = O(r)$. 
Hence, one can construct a matrix $P \in \mathrm{GL}(m, \mathbb{F}_p)$ satisfying the following properties: Let $p_i$ denote the row vector of the $i$-th row of $P$,
\begin{enumerate}
\item $p_i = x_i$  for any $1 \leq i \leq d$.
\item $\langle p_i \matrixspaceA\rangle$ is of dimension at most $O(r)$ for any $d + 1 \leq i \leq m$. 

\end{enumerate} 

Let $i_1, \dots, i_t \in \{d+1, \dots, m\}$ be a sequence of integers such that for each $1 \leq j \leq t$ 
\[\left\langle \bigcup_{\ell = 1}^{j - 1} \langle p_{i_\ell} \matrixspaceA \rangle \right\rangle \neq \left\langle \bigcup_{\ell = 1}^{j} \langle p_{i_\ell}\matrixspaceA \rangle \right\rangle. \]
By Lemma~\ref{lem:space_dim_prelim}, $t \leq r / (1 - 1/ p) \leq 2r$.
Hence, the dimension of
\[\mathfrak{S} = \left\langle \bigcup_{i = d+1}^{n} \langle p_i \matrixspaceA\rangle \right\rangle\]
is at most $2r \cdot O(r) = O(r^2)$.
Let $\Lambda$ be an arbitrary linear basis of $\mathfrak{S}$. 
We have $|\Lambda| = O(r^2)$. 
Since $\langle p_i \matrixspaceA \rangle$ is a subspace of $\langle \Lambda\rangle$ for each $d + 1 \leq i \leq m$, 
$\langle p_{d+1}, \dots, p_m \rangle$ is a subspace of $\ker(\matrixspaceA, \Lambda)$. 
Hence, $\dim{\ker(\matrixspaceA, \Lambda)} \geq m - d = m - O(r)$.

Now we consider the case that $\matrixspaceA$ is a skew-symmetric matrix subspace of $\SS(n, \mathbb{F}_p)$.
With a similar argument above, 
there is a set  $\Lambda$ of linearly independent row vectors over $\mathbb{F}_p^n$ satisfying the following conditions:
\begin{enumerate}
\item $|\Lambda| = O(r^2)$.
\item 
Let $\mathfrak{S}$ be the space spanned by row vectors $v \in \mathbb{F}_p^n$ such that $\langle v \matrixspaceA \rangle$ is a subspace of $\langle \Lambda \rangle$. 
$\dim{\mathfrak{S}} \geq n - O(r)$.
\end{enumerate}
Let $\mathfrak{T}$ be the space spanned by row vectors $v \in \mathbb{F}_p^n$ such that $x \cdot v^T = 0$ for all the $x\in \Lambda$. 
We have $\dim{\mathfrak{T}} = n - |\Lambda|$.
Hence, 
\begin{align*}\dim{\ker_{\mathrm{skew}}(\matrixspaceA, \Lambda)} = & \dim{\mathfrak{S} \cap \mathfrak{T}}  \\ \geq &  \dim{\mathfrak{S}} + \dim{\mathfrak{T}} - n \\ = & n - O(r) + n - |\Lambda| - n \\
= & n - O(r^2).
\end{align*}
\end{proof}

We also prove some useful properties for kernels with respect to a matrix space and an attribute set.
\begin{lemma}\label{lem:kernel_size_larger_lambda}
Let $\matrixspaceA$ be a matrix space. 
Let $\Lambda$ and $\Lambda'$ be two attribute sets of $\matrixspaceA$ such that $\Lambda$ is a subset of $\Lambda'$. 
Then $\ker(\matrixspaceA, \Lambda)$ is a subspace of $\ker(\matrixspaceA, \Lambda')$. 

Let $\matrixspaceB$ be a skew-symmetric subspace such that each matrix in $\matrixspaceB$ is of dimension $n \times n$. 
Let $\Delta$ and $\Delta'$ be two attribute sets of $\matrixspaceB$
 such that $\Delta$ is a subset of $\Delta'$. 
Then $\dim{\ker_\mathrm{skew}(\matrixspaceA, \Delta')}  \geq \dim{\ker_\mathrm{skew}(\matrixspaceA, \Delta)} - |\Delta'| + |\Delta|$.
\end{lemma}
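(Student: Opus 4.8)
The first statement follows directly from the definition of $\ker(\matrixspaceA,\Lambda)$. If $v\in\ker(\matrixspaceA,\Lambda)$, then $\langle v\matrixspaceA\rangle\subseteq\langle\Lambda\rangle$; since $\Lambda\subseteq\Lambda'$ implies $\langle\Lambda\rangle\subseteq\langle\Lambda'\rangle$, we immediately get $\langle v\matrixspaceA\rangle\subseteq\langle\Lambda'\rangle$, i.e.\ $v\in\ker(\matrixspaceA,\Lambda')$. So the plan for part one is essentially a one-line unwinding of Definition~\ref{def:space_complement}.

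For the skew-symmetric statement, the extra complication is the first condition in $\ker_\mathrm{skew}$: a vector $v$ must also satisfy $x\cdot v^T=0$ for every $x\in\Delta$ (resp.\ $\Delta'$), and enlarging $\Delta$ to $\Delta'$ imposes \emph{more} orthogonality constraints, which can shrink the kernel. So I cannot hope for containment; I want the dimension bound. The plan is to write $\ker_\mathrm{skew}(\matrixspaceB,\Delta) = \mathfrak{S}_\Delta\cap\mathfrak{T}_\Delta$ where $\mathfrak{S}_\Delta=\{v:\langle v\matrixspaceB\rangle\subseteq\langle\Delta\rangle\}$ and $\mathfrak{T}_\Delta=\{v: x\cdot v^T=0\ \forall x\in\Delta\}$ (exactly as in the proof of Lemma~\ref{lem:low_rank_main}), and similarly for $\Delta'$. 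By the first part applied to $\mathfrak{S}$ (the same monotonicity argument), $\mathfrak{S}_\Delta\subseteq\mathfrak{S}_{\Delta'}$. For the orthogonality parts, $\mathfrak{T}_{\Delta'}\subseteq\mathfrak{T}_\Delta$, but since the $\Delta'$-part adds only $|\Delta'|-|\Delta|$ new independent linear constraints (the vectors in $\Delta'\setminus\Delta$, which are linearly independent since $\Delta'$ is an attribute set), we have $\dim\mathfrak{T}_{\Delta'}\geq\dim\mathfrak{T}_\Delta-(|\Delta'|-|\Delta|)$; in fact $\dim\mathfrak{T}_\Delta = n-|\Delta|$ and $\dim\mathfrak{T}_{\Delta'}=n-|\Delta'|$ directly.

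Now the final step: since $\ker_\mathrm{skew}(\matrixspaceB,\Delta)=\mathfrak{S}_\Delta\cap\mathfrak{T}_\Delta\subseteq\mathfrak{S}_{\Delta'}\cap\mathfrak{T}_\Delta$, and $\mathfrak{T}_{\Delta'}$ is obtained from $\mathfrak{T}_\Delta$ by at most $|\Delta'|-|\Delta|$ extra linear constraints, I intersect with one more constraint at a time: each additional linear constraint drops the dimension of a subspace by at most one, so
\[
\dim\bigl(\mathfrak{S}_{\Delta'}\cap\mathfrak{T}_{\Delta'}\bigr)\ \geq\ \dim\bigl(\mathfrak{S}_{\Delta'}\cap\mathfrak{T}_\Delta\bigr)-\bigl(|\Delta'|-|\Delta|\bigr)\ \geq\ \dim\bigl(\mathfrak{S}_\Delta\cap\mathfrak{T}_\Delta\bigr)-|\Delta'|+|\Delta|,
\]
which is exactly $\dim\ker_\mathrm{skew}(\matrixspaceB,\Delta')\geq\dim\ker_\mathrm{skew}(\matrixspaceB,\Delta)-|\Delta'|+|\Delta|$. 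The only mild subtlety to get right is that $\mathfrak{S}_{\Delta'}\cap\mathfrak{T}_{\Delta'}$ really is the intersection of $\mathfrak{S}_{\Delta'}\cap\mathfrak{T}_\Delta$ with the span of the $\Delta'\setminus\Delta$-orthogonality hyperplanes, so that the ``one constraint at a time'' bookkeeping is valid; this is immediate once everything is written in terms of $\mathfrak{S}$ and $\mathfrak{T}$, so I expect no real obstacle here — the lemma is a routine dimension count.
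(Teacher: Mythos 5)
Your proposal is correct and matches the paper's proof in substance: both identify the key point that intersecting $\ker_\mathrm{skew}(\matrixspaceB,\Delta)$ with the orthogonality space cut out by $\Delta'\setminus\Delta$ lands inside $\ker_\mathrm{skew}(\matrixspaceB,\Delta')$, and then count. The paper applies the inclusion-exclusion bound $\dim(A\cap S)\geq\dim A+\dim S-n$ in one step where you add the hyperplane constraints one at a time, but this is a cosmetic difference.
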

\begin{proof}
By Definition~\ref{def:space_complement}, 
every row vector $x \in \ker(\matrixspaceA, \Lambda)$ is also a vector in $\ker(\matrixspaceA, \Lambda')$. Hence, $\ker(\matrixspaceA, \Lambda)$ is a subspace of $\ker(\matrixspaceA, \Lambda')$. 

For $\matrixspaceB$, 
every row vector $v \in \ker_\mathrm{skew}(\matrixspaceB, \Delta)$ satisfies the condition that $\langle v \matrixspaceB\rangle$ is a subspace of $\langle \Delta' \rangle$.
Let $\Delta'' = \Delta' \setminus \Delta$, and $S$ denote the space 
\[\langle \{v \in \mathbb{F}_p^n : x \cdot v^T = 0 \text{ for all } x \in \Delta''\}\rangle.\]
Hence, $\ker_\mathrm{skew}(\matrixspaceB, \Delta) \cap S$ is a subspace of $\ker_\mathrm{skew}(\matrixspaceB, \Delta')$.
Since $\dim{S} = n - |\Delta''|$,
\begin{align*}\dim{\ker_\mathrm{skew}(\matrixspaceB, \Delta')} \geq & \dim{\ker_\mathrm{skew}(\matrixspaceB, \Delta)} + \dim{S} - n \\
= &  \dim{\ker_\mathrm{skew}(\matrixspaceB, \Delta)} + n - |\Delta''| - n\\
= &  \dim{\ker_\mathrm{skew}(\matrixspaceB, \Delta)} - |\Delta'| + |\Delta|.
\end{align*}
\end{proof}


\begin{lemma}
Let $\matrixspaceA$ be a matrix subspace of $M(m, n, \mathbb{F}_p)$, 
$X$ be a matrix in $\mathrm{GL}(m, \mathbb{F}_p)$, and $Y$ be a matrix in $\mathrm{GL}(n, \mathbb{F}_p)$. 
For any attribute set $\Lambda$ of $\matrixspaceA$, 
let $\Lambda' = \{x Y : x \in \Lambda\}$. 
Then $\ker(\matrixspaceA, \Lambda) X^{-1} = \ker(X\matrixspaceB Y, \Lambda')$.  

Let $\matrixspaceB$ be a skew-symmetric matrix subspace such that each matrix in $\matrixspaceB$ is of dimension $n \times n$, 
and $S$ be a matrix in $\mathrm{GL}(n, \mathbb{F}_p)$. 
For any attribute set $\Delta$ of $\matrixspaceB$, 
let $\Delta' = \{x S^T: x \in \Delta\}$.
Then $\ker_\mathrm{skew}(\matrixspaceA, \Delta) S^{-1} = \ker_\mathrm{skew}(S\matrixspaceB S^T, \Delta')$.  

\end{lemma}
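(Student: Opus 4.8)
The plan is to prove each of the two claimed equalities by double inclusion, getting the reverse inclusion for free from the forward one by applying the same statement with the roles of the space and the transformation matrices inverted. First I would record that the ``space spanned by'' in Definitions~\ref{def:space_complement} and~\ref{def:skew_space_complement} is redundant: the set $\{v\in\mathbb{F}_p^m:\langle v\matrixspaceA\rangle\subseteq\langle\Lambda\rangle\}$ is already closed under linear combinations (if $\langle v\matrixspaceA\rangle,\langle w\matrixspaceA\rangle\subseteq\langle\Lambda\rangle$ then $\langle(v+w)\matrixspaceA\rangle\subseteq\langle v\matrixspaceA\rangle+\langle w\matrixspaceA\rangle\subseteq\langle\Lambda\rangle$), and likewise in the skew case, so it is legitimate to argue at the level of membership of individual vectors. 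I would also note the elementary identities $\langle\Lambda\rangle Y=\langle\{xY:x\in\Lambda\}\rangle=\langle\Lambda'\rangle$ and, for $w=vX^{-1}$, $\langle w(X\matrixspaceA Y)\rangle=\langle\{vAY:A\in\matrixspaceA\}\rangle=\langle v\matrixspaceA\rangle Y$, both immediate from linearity and invertibility of $X,Y$.

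For the general case the forward inclusion is then a one-line check: if $v\in\ker(\matrixspaceA,\Lambda)$ then $\langle vX^{-1}(X\matrixspaceA Y)\rangle=\langle v\matrixspaceA\rangle Y\subseteq\langle\Lambda\rangle Y=\langle\Lambda'\rangle$, so $vX^{-1}\in\ker(X\matrixspaceA Y,\Lambda')$, giving $\ker(\matrixspaceA,\Lambda)X^{-1}\subseteq\ker(X\matrixspaceA Y,\Lambda')$. Applying this very inclusion with $\matrixspaceA,\Lambda,X,Y$ replaced by $X\matrixspaceA Y,\Lambda',X^{-1},Y^{-1}$ — using $\Lambda' Y^{-1}=\Lambda$ and $X^{-1}(X\matrixspaceA Y)Y^{-1}=\matrixspaceA$ — yields $\ker(X\matrixspaceA Y,\Lambda')X\subseteq\ker(\matrixspaceA,\Lambda)$, which is the reverse inclusion, hence equality. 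The skew-symmetric case runs along the same lines with $S^T$ playing the role of $Y$, but condition~1 of Definition~\ref{def:skew_space_complement} must also be transported: if $x\cdot v^T=0$ for all $x\in\Delta$, then for $x'=xS^T\in\Delta'$ and $w=vS^{-1}$ one computes $x'\cdot w^T=xS^T(S^{-1})^T v^T=xS^T(S^T)^{-1}v^T=x\cdot v^T=0$, using $(S^{-1})^T=(S^T)^{-1}$; condition~2 transports exactly as above, and the symmetry step now uses $S^{-1}(S\matrixspaceB S^T)(S^{-1})^T=\matrixspaceB$ together with $\Delta'(S^T)^{-1}=\Delta$.

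I do not expect a genuine obstacle here; the only points calling for care are (i) confirming that the defined kernels coincide with the raw solution sets, so that the element-wise reasoning is valid, and (ii) the transpose bookkeeping in the skew case, where one must check that it is precisely $S^T$ — consistent with the isometry convention $S\matrixspaceB S^T$ of Definition~\ref{def:skew_space_isometry} — that acts on the attribute vectors. I would also note and silently correct the evident typos in the statement (the first displayed equality should read $\ker(\matrixspaceA,\Lambda)X^{-1}=\ker(X\matrixspaceA Y,\Lambda')$, and the skew equality should involve $\matrixspaceB$ throughout), proving the lemma under that consistent reading.
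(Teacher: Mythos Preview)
Your proposal is correct and follows essentially the same approach as the paper's proof: both verify the forward inclusion by unwinding the definitions and then establish the reverse inclusion. The only cosmetic difference is that the paper checks the reverse inclusion by a second explicit element-chase, whereas you obtain it by reapplying the forward inclusion with $(X,Y)$ (respectively $S$) replaced by their inverses; your observation that the kernels are already linear subspaces and your flagging of the typos in the statement are also accurate.
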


\begin{proof}
Let $v$ be an arbitrary row vector in $\ker(\matrixspaceA, \Lambda)$. By Definition~\ref{def:space_complement}, $\langle v A\rangle$ is a subspace of $\langle \Lambda \rangle$ for any $A \in \matrixspaceA$.
Hence, $\langle v X^{-1} X A\rangle$ is a subspace of $\langle \Lambda \rangle$, and thus 
$\langle v X^{-1} X A Y\rangle$ is a subspace of $\langle \Lambda' \rangle$.
By Definition~\ref{def:space_complement}, $v X^{-1}$ is a row vector in $\ker(X\matrixspaceB Y, \Lambda')$. 

Similarly, let $v'$ be an arbitrary row vector in $\ker(X\matrixspaceA Y, \Lambda')$. By Definition~\ref{def:space_complement}, $\langle v' X A Y\rangle$ is a subspace of $\langle \Lambda' \rangle$ for any $A \in \matrixspaceA$.
Hence, $\langle v' X A \rangle$ is a subspace of $\langle \Lambda \rangle$.
By Definition~\ref{def:space_complement}, $v' X$ is a row vector in $\ker(\matrixspaceB, \Lambda)$. 
Hence,  $\ker(\matrixspaceA, \Lambda) X^{-1} =  \ker(X\matrixspaceB Y, \Lambda')$.

Now we consider $\matrixspaceB$.
Let $v$ be an arbitrary row vector in $\ker_\mathrm{skew}(\matrixspaceB, \Delta)$. By Definition~\ref{def:skew_space_complement}, we have 
\begin{enumerate}
\item $\langle v B\rangle$ is a subspace of $\langle \Delta \rangle$ for any $B \in \matrixspaceA$.
\item $x \cdot v^T = 0$ for any $x\in \Delta$.
\end{enumerate}
Hence, $\langle v S^{-1} S B\rangle$ is a subspace of $\langle \Delta \rangle$, and thus 
$\langle v S^{-1} S B S^T\rangle$ is a subspace of $\langle \Delta' \rangle$.
In addition, for any $x' \in \Delta'$, $x' (S^{T})^{-1}$ is a vector in $\Delta$, and thus we have
\[x' \cdot (v S^{-1})^T = x' (S^{-1})^T \cdot v^T = x' (S^T)^{-1} \cdot v^T = 0.\]
Hence $v S^{-1}$ is a vector in $\ker_\mathrm{skew}(S\matrixspaceB S^T, \Delta')$.

Similarly, let $v'$ be an arbitrary row vector in $\ker_\mathrm{skew}(S\matrixspaceB S^T, \Delta')$. By Definition~\ref{def:skew_space_complement}, we have 
\begin{enumerate}
\item $\langle v' S BS^T\rangle$ is a subspace of $\langle \Delta' \rangle$ for any $B \in \matrixspaceA$.
\item $x' \cdot v'^T = 0$ for any $x'\in \Delta'$.
\end{enumerate}
Hence, $\langle v'  S B\rangle$ is a subspace of $\langle \Delta \rangle$.
In addition, for any $x \in \Delta$, $x S^T$ is a vector in $\Delta'$, and thus we have
\[x \cdot (v' S)^T = x \cdot S^T v'^T = x S^T \cdot v'^T = 0.\]
Hence $v' S$ is a vector in $\ker_\mathrm{skew}(\matrixspaceB, \Delta)$.
Hence,  $\ker_\mathrm{skew}(\matrixspaceB, \Delta) S^{-1} =  \ker_\mathrm{skew}(S\matrixspaceB S^T, \Delta')$.
\end{proof}

\begin{lemma}\label{lem:formatting_matrix_skew_non}
Let $\matrixspaceA$ be a matrix subspace of $M(m, n, \mathbb{F}_p)$ and $\matrixspaceB$ be a skew-symmetric matrix subspace of $\SS(n, \mathbb{F}_p)$. 
Let $\Lambda$ be a set of linearly independent row vectors in $\mathbb{F}_p^n$, and $C_\mathrm{skew}$ be a complementary matrix for $\matrixspaceB$ and $\Lambda$.
Then for any formatting matrix $P_\mathrm{skew}$ for $\Lambda$ and $C_\mathrm{skew}$ with respect to $\matrixspaceB$,
$P_\mathrm{skew}^T$ is a right formatting matrix for $\matrixspaceA$ and $\Lambda$.
\end{lemma}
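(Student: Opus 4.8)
The plan is simply to unwind Definition~\ref{def:skew_space_complement} and Definition~\ref{def:space_complement} and verify the two defining conditions of a right formatting matrix directly. Write $d := \dim{\ker_{\mathrm{skew}}(\matrixspaceB, \Lambda)}$. Since $P_{\mathrm{skew}}$ is a formatting matrix for $\Lambda$ and $C_{\mathrm{skew}}$ with respect to $\matrixspaceB$, it lies in $\GL(n, \mathbb{F}_p)$, hence so does $P_{\mathrm{skew}}^T$; this gives the first condition for a right formatting matrix (condition 1 of Definition~\ref{def:space_complement}). Note that $\matrixspaceA$ itself plays no role beyond guaranteeing that $\Lambda$ is a legitimate attribute set for it, i.e.\ a set of linearly independent row vectors in $\mathbb{F}_p^n$, which is exactly the hypothesis of the lemma.

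For the second condition, I would observe that the $i$-th column of $P_{\mathrm{skew}}^T$ is the transpose of the $i$-th row $v_i$ of $P_{\mathrm{skew}}$, so it suffices to show that $x \cdot v_i^T = 0$ for every $x \in \Lambda$ and every $1 \leq i \leq n - |\Lambda|$. I would split this range as $1 \leq i \leq d$ and $d + 1 \leq i \leq n - |\Lambda|$. In the first range, part 2 of the formatting-matrix definition says $v_i$ belongs to a basis of $\ker_{\mathrm{skew}}(\matrixspaceB, \Lambda)$, and every vector $v$ of that kernel satisfies $x \cdot v^T = 0$ for all $x \in \Lambda$ by condition 1 in the definition of $\ker_{\mathrm{skew}}$. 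In the second range, part 3 of the formatting-matrix definition says $v_i$ is the $(i-d)$-th row $c_{i-d}$ of $C_{\mathrm{skew}}$; since $1 \leq i - d \leq n - d - |\Lambda|$, condition 3 of the complementary-matrix definition gives $x \cdot c_{i-d}^T = 0$ for all $x \in \Lambda$. Combining the two ranges establishes the orthogonality for all columns $i \leq n - |\Lambda|$, which is precisely condition 2 of Definition~\ref{def:space_complement}; together with invertibility, $P_{\mathrm{skew}}^T$ is a right formatting matrix for $\matrixspaceA$ and $\Lambda$.

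There is essentially no real obstacle here — the argument is a definition chase — but the one place where care is needed, and where a slip is possible, is the index bookkeeping. The last $n - d$ rows of $P_{\mathrm{skew}}$ form all of $C_{\mathrm{skew}}$, yet only the first $n - d - |\Lambda|$ rows of $C_{\mathrm{skew}}$ are guaranteed to annihilate $\Lambda$; this is exactly why the right-formatting condition is required only for the first $n - |\Lambda|$ columns and not for the last $|\Lambda|$, and one should make sure the off-by-$d$ shift between the row index of $P_{\mathrm{skew}}$ and the row index of $C_{\mathrm{skew}}$ is handled correctly.
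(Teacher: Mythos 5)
Your proof is correct and takes essentially the same route as the paper: both are definition chases verifying that the first $n - |\Lambda|$ rows of $P_{\mathrm{skew}}$ (equivalently, the first $n - |\Lambda|$ columns of $P_{\mathrm{skew}}^T$) annihilate $\Lambda$. You spell out the two-range split ($i \leq d$ via condition 1 of the kernel definition, $d < i \leq n - |\Lambda|$ via condition 3 of the complementary-matrix definition) that the paper compresses into a single sentence, and you correctly handle the off-by-$d$ index shift; no gaps.
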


\begin{proof}
By Definition~\ref{def:skew_space_complement}, for any row vector $v$ that corresponds to one of the first $n - |\Lambda|$ rows of $P_\mathrm{skew}$, 
$x \cdot v^T = 0$ for any $x \in \Lambda$.
By Definition~\ref{def:space_complement},  $P_\mathrm{skew}^T$ is a right formatting matrix for $\matrixspaceA$ and $\Lambda$.
\end{proof}
\section{Semi-canonical form of skew-symmetric matrix space tensors}\label{sec:semi_canonical_form_tensor}

In this section, we combine the matrix space individualization-refinement developed in Section~\ref{sec:individualization} and the low rank matrix space characterization developed in Section~\ref{sec:low_rank} 
to analyze the skew-symmetric matrix spaces.

The main result of this section is a structure that accommodates the matrix space  individualization-refinement and the low rank matrix space characterization. 
The purpose of such a structure is to establish a partial correspondence between matrices from two skew-symmetric matrix spaces after applying the two techniques to the skew-symmetric matrix spaces.


For convenience, we  use a 3-tensor representation of skew-symmetric matrix spaces and define the semi-canonical form of skew-symmetric matrix space tensors.
\subsection{Skew-symmetric matrix space tensors}
Following~\cite{li2017linear}, 
we define the tensor representation of a skew-symmetric matrix space. 

\begin{definition}\label{def:tensor_from_space}

Let $\matrixspaceG$ be a skew-symmetric matrix subspace of $\mathrm{SS}(n, \mathbb{F}_p)$ with dimension $m$.
A 3-tensor $\mathbf{G} \in \mathbb{F}_p^{m \times n \times n}$ is a tensor of $\matrixspaceG$ if 
$\matrixspaceG$ is equal to the space spanned by $A_1, \dots, A_m$, where 
$A_i[j, k] = \mathbf{G}[i, j, k]$ for all the $1 \leq i \leq m$ and $1 \leq j, k \leq n$, and $\mathbf{G}[i, j, k]$ is the $(i, j, k)$-th entry of $\mathbf{G}$.


\end{definition}




Given a skew-symmetric matrix space tensor $\mathbf{G}$, 
we use $\matrixspaceX_{\mathbf{G}, i}$ to denote the $n \times n$ matrix such that $\matrixspaceX_{\mathbf{G}, i}[j, k] = \mathbf{G}[i, j, k]$, 
use $\matrixspaceY_{\mathbf{G}, j}$ to denote the $m \times n$ matrix such that $\matrixspaceY_{\mathbf{G}, j}[i, k] = \mathbf{G}[i, j, k]$, and 
use $\matrixspaceZ_{\mathbf{G}, k}$ to denote the $m \times n$ matrix such that $\matrixspaceZ_{\mathbf{G}, k}[i, j] =  \mathbf{G}[i, j, k]$ for all the $1 \leq i \leq m, 1 \leq j, k \leq n$.
We also use $\matrixspaceX_\mathbf{G}$ to denote the space $\langle \matrixspaceX_{\mathbf{G}, 1}, \dots \matrixspaceX_{\mathbf{G}, m}\rangle $,
use $\matrixspaceY_\mathbf{G}$ to denote the space $\langle \matrixspaceY_{\mathbf{G}, 1}, \dots \matrixspaceY_{\mathbf{G}, n}\rangle $,
and use $\matrixspaceZ_\mathbf{G}$ to denote the space $\langle \matrixspaceZ_{\mathbf{G}, 1}, \dots \matrixspaceZ_{\mathbf{G}, n}\rangle $.


\begin{fact}\label{fact:basic_tensor}
Let $\mathbf{G}\in \mathbb{F}_p^{m \times n \times n}$ be the tensor for a skew-symmetric matrix space.
Then the following properties hold:
\begin{enumerate}
\item $\matrixspaceX_{\mathbf{G}}$ is a skew-symmetric matrix space of dimension $m$.
\item $\matrixspaceY_{\mathbf{G}}$ and $\matrixspaceZ_{\mathbf{G}}$ are matrix spaces of dimension $n$.
\item $\matrixspaceY_{\mathbf{G}, j} = - \matrixspaceZ_{\mathbf{G}, j}$ for all the $1 \leq j \leq n$.
\end{enumerate}
\end{fact}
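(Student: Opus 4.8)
The plan is to verify the three claims by unwinding the definitions of the slices $\matrixspaceX_{\mathbf{G}, i}$, $\matrixspaceY_{\mathbf{G}, j}$, $\matrixspaceZ_{\mathbf{G}, k}$ and exploiting the skew-symmetry of $\mathbf{G}$ in its last two coordinates. Fix a basis $A_1, \dots, A_m$ of $\matrixspaceG$ with $A_i[j, k] = \mathbf{G}[i, j, k]$; since $\dim{\matrixspaceG} = m$ these matrices are linearly independent, and since each $A_i$ lies in $\SS(n, \mathbb{F}_p)$ we have $A_i[j, k] = -A_i[k, j]$, i.e.\ $\mathbf{G}[i, j, k] = -\mathbf{G}[i, k, j]$ for all $i, j, k$. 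This identity is the only structural input beyond the definitions.

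Claim~1 is essentially a tautology: by definition $\matrixspaceX_{\mathbf{G}, i}[j, k] = \mathbf{G}[i, j, k] = A_i[j, k]$, so $\matrixspaceX_{\mathbf{G}, i} = A_i$, hence $\matrixspaceX_{\mathbf{G}} = \langle A_1, \dots, A_m \rangle = \matrixspaceG$, which is a skew-symmetric matrix space of dimension $m$. For Claim~3 I would compute the $(i, k)$-entry of $\matrixspaceZ_{\mathbf{G}, j}$ directly from the defining relation $\matrixspaceZ_{\mathbf{G}, k}[i, j] = \mathbf{G}[i, j, k]$, which gives $\matrixspaceZ_{\mathbf{G}, j}[i, k] = \mathbf{G}[i, k, j]$; applying the skew-symmetry identity yields $\matrixspaceZ_{\mathbf{G}, j}[i, k] = -\mathbf{G}[i, j, k] = -\matrixspaceY_{\mathbf{G}, j}[i, k]$, so $\matrixspaceY_{\mathbf{G}, j} = -\matrixspaceZ_{\mathbf{G}, j}$ for every $j$.

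Claim~2 is the only part with real content, and here the plan is to show that the $n$ slices $\matrixspaceY_{\mathbf{G}, 1}, \dots, \matrixspaceY_{\mathbf{G}, n}$ are linearly independent (the bound $\dim{\matrixspaceY_{\mathbf{G}}} \leq n$ being trivial, as $\matrixspaceY_{\mathbf{G}}$ is spanned by $n$ matrices). Suppose $\sum_{j = 1}^n c_j \matrixspaceY_{\mathbf{G}, j} = 0$. Reading off the $(i, k)$-entry gives $\sum_j c_j \mathbf{G}[i, j, k] = 0$ for all $i, k$, i.e.\ $c A_i = 0$ for the row vector $c = (c_1, \dots, c_n)$ and all $i$, and therefore $c A = 0$ for every $A \in \matrixspaceG$. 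By Fact~\ref{fact:baer_correspondance_tensor}, since $\matrixspaceG$ is the skew-symmetric matrix space of a $p$-group of class two and exponent $p$, this forces $c = 0$; hence $\dim{\matrixspaceY_{\mathbf{G}}} = n$. Finally, Claim~3 shows $\matrixspaceZ_{\mathbf{G}, j} = -\matrixspaceY_{\mathbf{G}, j}$, so $\matrixspaceZ_{\mathbf{G}}$ and $\matrixspaceY_{\mathbf{G}}$ are spanned by the same set of vectors up to sign and $\dim{\matrixspaceZ_{\mathbf{G}}} = \dim{\matrixspaceY_{\mathbf{G}}} = n$.

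The only obstacle is the non-degeneracy used in Claim~2: without the assumption that $\matrixspaceG$ has no nonzero common left-kernel vector, the slices $\matrixspaceY_{\mathbf{G}, j}$ need not be independent and $\dim{\matrixspaceY_{\mathbf{G}}}$ can be strictly smaller than $n$, so the step that invokes Fact~\ref{fact:baer_correspondance_tensor} is essential. Everything else is index bookkeeping.
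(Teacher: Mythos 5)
Your proof is correct and follows the same route as the paper: Claims 1 and 3 are unwound directly from Definition~\ref{def:tensor_from_space} and the skew-symmetry identity $\mathbf{G}[i,j,k]=-\mathbf{G}[i,k,j]$, while Claim~2 reduces linear independence of the $\matrixspaceY_{\mathbf{G},j}$ to the non-degeneracy supplied by Fact~\ref{fact:baer_correspondance_tensor}, exactly the dependency the paper cites. You merely spell out the index bookkeeping that the paper leaves implicit, and your closing remark correctly identifies the non-degeneracy hypothesis as the only non-trivial ingredient.
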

\begin{proof}
The first and second properties are obtained by the definition of $\mathbf{G}$ and Fact~\ref{fact:baer_correspondance_tensor}. 
For the third property, since $\mathbf{G}[i, j, k] = - \mathbf{G}[i, k, j]$ holds for any $1 \leq i \leq m, 1 \leq j, k \leq n$,
we have \[\matrixspaceY_{\mathbf{G}, j}[i, k] = \mathbf{G}[i, j, k] = - \mathbf{G}[i, k, j]  = - \matrixspaceZ_{\mathbf{G}, j}[i, k]\]
for any $1 \leq i \leq m, 1 \leq j, k \leq n$.
\end{proof}

Let $N$ be a matrix in $\mathrm{GL}(n, \mathbb{F}_p)$ and  $M$ be a matrix in $\mathrm{GL}(m, \mathbb{F}_p)$.
The transform of $\mathbf{G}$ by $N$ and $M$, denoted as $\trans_{N, M}(\mathbf{G})$, is the tensor $\mathbf{H} \in \mathbb{F}_p^{m \times n \times n}$ 
such that 
\[\matrixspaceX_{\mathbf{H}, i} = \sum_{i' = 1}^m  M[i, i'] \cdot \left( N  \cdot \matrixspaceX_{\mathbf{G}, i'} \cdot N^T\right).\]
We define the isometry of two tensors. 

\begin{definition}\label{def:tensor_iso}
Let $\mathbf{G}, \mathbf{H} \in \mathbb{F}_p^{m \times n \times n}$ be tensors of two skew-symmetric matrix spaces. 
$\mathbf{G}$ and $\mathbf{H}$ are isometric if there are two matrices $N \in \mathrm{GL}(n, \mathbb{F}_p)$ and  $M \in \mathrm{GL}(m, \mathbb{F}_p)$ 
such that $\trans_{N, M}(\mathbf{G}) = \mathbf{H}$.
\end{definition}

\begin{lemma}\label{lem:matrixspace_tensor_iso_equivalent}
Let $\matrixspaceG$ and $\matrixspaceH$ be two skew-symmetric matrix spaces.  
$\matrixspaceG$ and $\matrixspaceH$ are isometric if and only if their tensors are isometric.
\end{lemma}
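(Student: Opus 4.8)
The plan is to translate both notions of isometry into statements about bases of the two matrix spaces and then match them up. The starting observation is that if $\mathbf{G}$ is the tensor of $\matrixspaceG$ built from a basis $(A_1,\dots,A_m)$, then by Definition~\ref{def:tensor_from_space} we have $\matrixspaceX_{\mathbf{G},i}=A_i$ for every $i$, so $\matrixspaceG=\matrixspaceX_{\mathbf{G}}=\langle A_1,\dots,A_m\rangle$; similarly $\matrixspaceH=\langle B_1,\dots,B_m\rangle$ where $(B_1,\dots,B_m)$ is the basis defining $\mathbf{H}$. Under this identification, the defining equation of $\trans_{N,M}$ says exactly that $\trans_{N,M}(\mathbf{G})=\mathbf{H}$ iff $B_i=\sum_{i'=1}^m M[i,i']\,(N A_{i'} N^T)$ for all $i$.

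For the forward direction, suppose $P\in\GL(n,\mathbb{F}_p)$ satisfies $P\matrixspaceG P^T=\matrixspaceH$. Then $(P A_1 P^T,\dots,P A_m P^T)$ spans $\matrixspaceH$, and since $\dim{\matrixspaceH}=m$ it is in fact a basis of $\matrixspaceH$. Hence there is a change-of-basis matrix $M\in\GL(m,\mathbb{F}_p)$ (invertible precisely because both $(B_i)$ and $(P A_i P^T)$ are bases of $\matrixspaceH$) with $B_i=\sum_{i'}M[i,i']\,(P A_{i'} P^T)$ for all $i$. Taking $N=P$, the observation above gives $\trans_{N,M}(\mathbf{G})=\mathbf{H}$, so the tensors are isometric.

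For the reverse direction, suppose $N\in\GL(n,\mathbb{F}_p)$ and $M\in\GL(m,\mathbb{F}_p)$ satisfy $\trans_{N,M}(\mathbf{G})=\mathbf{H}$, i.e.\ $B_i=\sum_{i'}M[i,i']\,(N A_{i'} N^T)$. Since $M$ is invertible, the families $\{\sum_{i'}M[i,i']\,(N A_{i'} N^T)\}_{i=1}^m$ and $\{N A_{i'} N^T\}_{i'=1}^m$ span the same subspace of $\SS(n,\mathbb{F}_p)$, namely $N\matrixspaceG N^T$. Therefore $\matrixspaceH=\langle B_1,\dots,B_m\rangle=N\matrixspaceG N^T$, which is exactly the statement that $\matrixspaceG$ and $\matrixspaceH$ are isometric, with $N$ as an isometry.

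I would close with a remark that the statement is independent of the chosen tensors: two tensors of the same skew-symmetric matrix space differ by a change of basis, i.e.\ by a transform $\trans_{I_n,M'}$ with $M'\in\GL(m,\mathbb{F}_p)$, so tensor isometry is a well-defined equivalence relation on matrix spaces and the ``if and only if'' makes sense for any fixed choice of $\mathbf{G}$ and $\mathbf{H}$. I do not expect a genuine obstacle here; the only point requiring care is the bookkeeping of the auxiliary index-matrix $M$ and the verification that it is invertible, which follows because in each direction it is a transition matrix between two bases of the same $m$-dimensional space.
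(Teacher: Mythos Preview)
Your proof is correct and follows essentially the same approach as the paper: both directions are obtained by unwinding the defining relation $B_i=\sum_{i'}M[i,i']\,N A_{i'} N^T$ for $\trans_{N,M}(\mathbf{G})=\mathbf{H}$ and using that $M$ is a change-of-basis matrix between two bases of the same $m$-dimensional space. Your version is in fact slightly more careful in justifying the invertibility of $M$ and in noting the independence of the statement from the chosen tensors, but the argument is the same.
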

\begin{proof}
Let $\mathbf{G}$ and $\mathbf{H}$ be the tensors of $\matrixspaceG$ and $\matrixspaceH$, respectively.
Let $(G_1, \dots, G_m)$ be the linear basis of $\matrixspaceG$ such that 
$\mathbf{G}[i, j, k] = G_i[j, k]$ for all the $1 \leq i \leq m$ and $1 \leq j, k\leq n$.
Let $(H_1, \dots, H_m)$ be the linear basis of $\matrixspaceH$ such that 
 $\mathbf{H}[i, j, k] = H_i[j, k]$ for all the $1 \leq i \leq m$ and $1 \leq j, k\leq n$.

If $\matrixspaceG$ and $\matrixspaceH$ are isometric, 
then by Definition~\ref{def:skew_space_isometry}, 
there is a matrix $N \in \mathrm{GL}(n, \mathbb{F}_p)$ such that $N\matrixspaceG N^T = \matrixspaceH$.
Hence, there is a matrix $M \in \mathrm{GL}(m, \mathbb{F}_p)$ such that 

\[H_i = \sum_{i' = 1}^m M[i, i']\cdot\left( N \cdot G_i \cdot N^T\right)\] for all the $1 \leq i \leq m$. 
By the definition of tensor transform, we have $\trans_{N, M}(\mathbf{G}) = \mathbf{H}$.

If $\mathbf{G}$ and $\mathbf{H}$ are isometric, 
then by Definition~\ref{def:tensor_iso}, 
there are matrices $N \in \mathrm{GL}(n, \mathbb{F}_p)$ and $M \in \mathrm{GL}(m, \mathbb{F}_p)$ such that $\trans_{N, M}(\mathbf{G}) = \mathbf{H}$.
Hence, we have $H_i = \sum_{i' = 1}^m M[i, i']\cdot N \cdot G_i \cdot N^T$ for all the $1 \leq i \leq m$.
Thus, $\matrixspaceG$ and $\matrixspaceH$ are isometric. 
\end{proof}

\begin{fact}\label{fact:transform_product}
Let $\mathbf{G}$ be a skew symmetric matrix space tensor in $\mathbb{F}_p^{m \times n \times n}$.
Let $L_1$ and $L_2$ be two matrices in $\mathrm{GL}(n, \mathbb{F}_p)$, and $R_1$ and $R_2$ be two matrices in $\mathrm{GL}(m, \mathbb{F}_p)$.
Then 
\[\trans_{L_1 \cdot L_2, R_1\cdot R_2}(\mathbf{G}) = \trans_{L_1, R_1} \left(\trans_{L_2, R_2} (\mathbf{G})\right).\]
\end{fact}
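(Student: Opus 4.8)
The plan is to unfold both sides of the claimed identity using the definition of $\trans$ and to match them slice by slice. A tensor $\mathbf{K} \in \mathbb{F}_p^{m \times n \times n}$ is completely determined by its family of $\matrixspaceX$-slices $\matrixspaceX_{\mathbf{K}, 1}, \dots, \matrixspaceX_{\mathbf{K}, m}$, since $\mathbf{K}[i, j, k] = \matrixspaceX_{\mathbf{K}, i}[j, k]$; hence it suffices to prove that the $i$-th $\matrixspaceX$-slice of $\trans_{L_1 L_2, R_1 R_2}(\mathbf{G})$ equals that of $\trans_{L_1, R_1}(\trans_{L_2, R_2}(\mathbf{G}))$ for every $1 \le i \le m$.

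First I would set $\mathbf{H} = \trans_{L_2, R_2}(\mathbf{G})$, so that by definition $\matrixspaceX_{\mathbf{H}, i'} = \sum_{i'' = 1}^m R_2[i', i''] \cdot (L_2 \cdot \matrixspaceX_{\mathbf{G}, i''} \cdot L_2^T)$. Each such slice is again skew-symmetric, since conjugation by $L_2$ preserves skew-symmetry and skew-symmetric matrices are closed under linear combination, so $\trans_{L_1, R_1}$ may legitimately be applied to $\mathbf{H}$. Then I would expand $\matrixspaceX_{\trans_{L_1, R_1}(\mathbf{H}), i} = \sum_{i' = 1}^m R_1[i, i'] \cdot (L_1 \cdot \matrixspaceX_{\mathbf{H}, i'} \cdot L_1^T)$, substitute the formula for $\matrixspaceX_{\mathbf{H}, i'}$, use bilinearity of matrix multiplication to merge the two sums, apply the identity $L_1 (L_2 X L_2^T) L_1^T = (L_1 L_2) X (L_1 L_2)^T$ to each term, and swap the order of summation. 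This rewrites the $i$-th slice as $\sum_{i'' = 1}^m \left( \sum_{i' = 1}^m R_1[i, i'] R_2[i', i''] \right) \cdot (L_1 L_2) \cdot \matrixspaceX_{\mathbf{G}, i''} \cdot (L_1 L_2)^T$.

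Finally I would recognize the inner sum $\sum_{i' = 1}^m R_1[i, i'] R_2[i', i'']$ as the $(i, i'')$-entry of the product $R_1 R_2$, which turns the slice into $\sum_{i'' = 1}^m (R_1 R_2)[i, i''] \cdot (L_1 L_2) \cdot \matrixspaceX_{\mathbf{G}, i''} \cdot (L_1 L_2)^T$, i.e.\ exactly $\matrixspaceX_{\trans_{L_1 L_2, R_1 R_2}(\mathbf{G}), i}$ by definition. Since the two tensors agree on every $\matrixspaceX$-slice, they are equal. There is no genuine obstacle: the statement merely records that $\trans$ is functorial with respect to composition (a group action), and the only point requiring attention is the order of composition --- the outer transform $\trans_{L_1, R_1}$ supplies the leftmost conjugating factor $L_1$ and the leftmost index-mixing factor $R_1$, which is consistent with the products $L_1 L_2$ and $R_1 R_2$ appearing in the statement rather than $L_2 L_1$ and $R_2 R_1$.
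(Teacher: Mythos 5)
Your proposal is correct and is essentially the paper's argument, just read in the opposite direction: the paper starts from the slice of $\trans_{L_1 L_2, R_1 R_2}(\mathbf{G})$, expands $(R_1 R_2)[i,i']$ and $(L_1 L_2)\cdot X\cdot(L_1 L_2)^T$, and regroups to reveal the nested transform, whereas you unfold the nested transform and recombine into the single one. Same slice-by-slice computation, same use of the matrix-product formula for $(R_1R_2)[i,i'']$, bilinearity, and the conjugation identity.
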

\begin{proof}
By the definition of $\trans_{L_1 \cdot L_2, R_1\cdot R_2}(\mathbf{G}) $, 
we have 
\begin{align*} 
\matrixspaceX_{\trans_{L_1 \cdot L_2, R_1\cdot R_2}(\mathbf{G}) , i} = & \sum_{i' = 1}^m  (R_1\cdot R_2)[i, i'] \cdot \left( (L_1\cdot L_2)  \cdot \matrixspaceX_{\mathbf{G}, i'} \cdot (L_1\cdot L_2)^T  \right)\\
= & \sum_{i' = 1}^m \sum_{i'' = 1}^m R_1[i, i''] \cdot R_2[i'', i'] \cdot \left(L_1 \left( L_2  \cdot \matrixspaceX_{\mathbf{G}, i'} \cdot L_2^T \right) L_1^T \right) \\
= &   \sum_{i'' = 1}^m R_1[i, i'']  \left(L_1 \left( \sum_{i' = 1}^m \cdot R_2[i'', i'] \cdot  L_2  \cdot \matrixspaceX_{\mathbf{G}, i'} \cdot L_2^T \right) L_1^T \right)  \\
= &   \sum_{i'' = 1}^m R_1[i, i'']  \left(L_1 \cdot \matrixspaceX_{\trans_{L_2, R_2}(\mathbf{G}), i''}\cdot L_1^T \right) \\
= &   \matrixspaceX_{\trans_{L_1, R_1}\left(\trans_{L_2, R_2}(\mathbf{G})\right), i}.
\end{align*}

\end{proof}
\subsection{Semi-canonical form of skew-symmetric matrix space tensors}


We first give the intuition behind the semi-canonical form of a skew-symmetric tensor. 
Consider a skew-symmetric matrix space tensor $\mathbf{G}$. 
Suppose we apply left and right individualization matrices $L_\mathrm{skew}$ and $R_\mathrm{skew}$ for space $\matrixspaceX_\mathbf{G}$, and reorder matrices of $\matrixspaceX_{\mathbf{G}, 1}, \dots, \matrixspaceX_{\mathbf{G}, m}$ such that $(\matrixspaceX_{\mathbf{G}, 1}, \dots, \matrixspaceX_{\mathbf{G}, m})$ becomes a semi-canonical basis of $\matrixspaceX_{\mathbf{G}}$  with respect to $L_\mathrm{skew}$ and $R_\mathrm{skew}$.
Let $P_\mathrm{skew}$ be a formatting matrix for $\zero_{L_\mathrm{skew}, R_\mathrm{skew}}(\matrixspaceX_\mathbf{G})$ and an attribute set. 
If we apply the formatting matrix $P_\mathrm{skew}$ 
to each of $\matrixspaceX_{\mathbf{G}, 1}, \dots, \matrixspaceX_{\mathbf{G}, m}$ in a way that 
$\matrixspaceX_{\mathbf{G}, i}$ becomes $P_\mathrm{skew} \matrixspaceX_{\mathbf{G}, i} P_\mathrm{skew}^T$, then the first $\dim{\zero_{L_\mathrm{skew}, R_\mathrm{skew}}(\matrixspaceX_\mathbf{G})}$ matrices of $\matrixspaceX_{\mathbf{G}, 1}, \dots, \matrixspaceX_{\mathbf{G}, m}$ have non-zero entries only in the last few rows or columns.
See Figure~\ref{fig:ind_ref_low_rank}(a) for an illustration: The black layers correspond to the first  $\dim{\zero_{L_\mathrm{skew}, R_\mathrm{skew}}(\matrixspaceX_\mathbf{G})}$ matrices of $\matrixspaceX_{\mathbf{G}, 1}, \dots, \matrixspaceX_{\mathbf{G}, m}$, and the red layers correspond to the remaining matrices of $\matrixspaceX_{\mathbf{G}, 1}, \dots, \matrixspaceX_{\mathbf{G}, m}$. 
The rectangles enclosed by black dashed lines are zero submatrices in the first  $\dim{\zero_{L_\mathrm{skew}, R_\mathrm{skew}}(\matrixspaceX_\mathbf{G})}$ matrices of $\matrixspaceX_{\mathbf{G}, 1}, \dots, \matrixspaceX_{\mathbf{G}, m}$.
\begin{figure}[h]
\includegraphics[width=\textwidth]{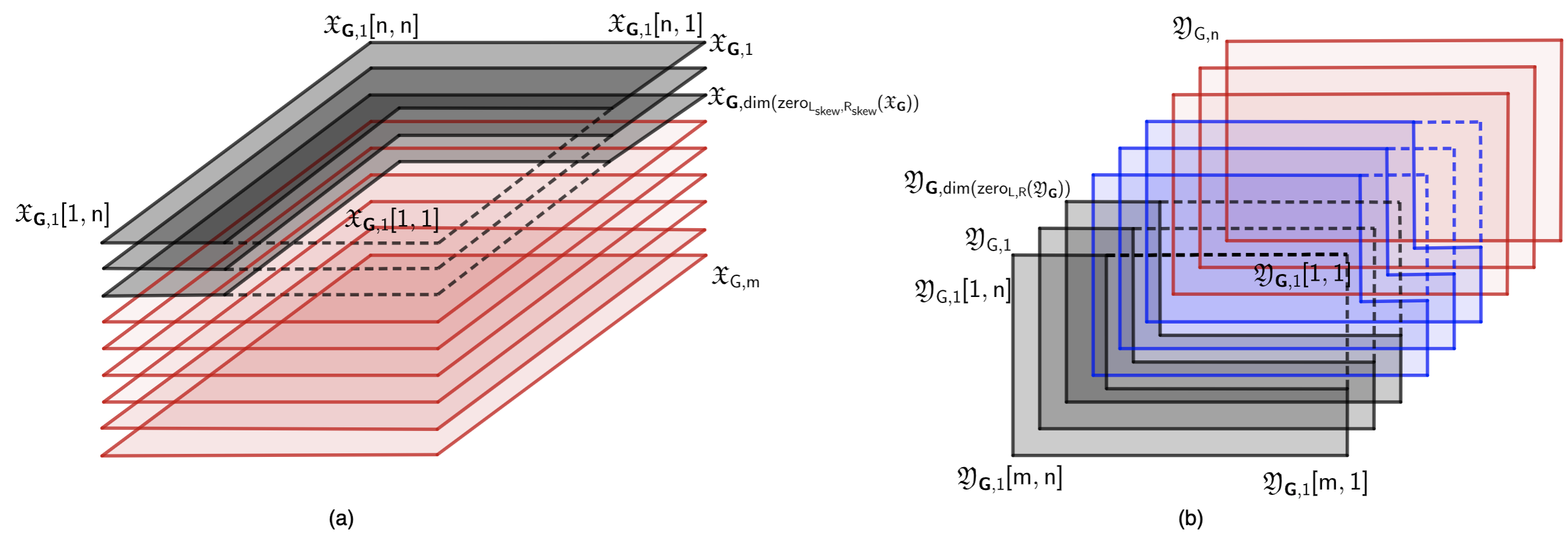}
\caption{(a) The matrix space individualization-refinement and the low rank matrix characterization of $\matrixspaceX_\mathbf{G}$, and (b) the matrix space individualization-refinement and the low rank matrix characterization of $\matrixspaceY_\mathbf{G}$}\label{fig:ind_ref_low_rank}
\end{figure}

We apply the same operation for the matrix space $\matrixspaceY_{\mathbf{G}}$.
Suppose we apply left and right individualization matrices $L$ and $R$ for space $\matrixspaceY_G$, and reorder matrices of $\matrixspaceY_{\mathbf{G}, 1}, \dots, \matrixspaceY_{\mathbf{G}, n}$ such that $(\matrixspaceY_{\mathbf{G}, 1}, \dots, \matrixspaceY_{\mathbf{G}, n})$ becomes a semi-canonical basis of $\matrixspaceY_{\mathbf{G}}$  with respect to $L$ and $R$.
If we further apply left formatting matrix $P$  and right formatting matrix $Q$ to each of $\matrixspaceY_{\mathbf{G}, 1}, \dots, \matrixspaceY_{\mathbf{G}, n}$ in a way that 
$\matrixspaceY_{\mathbf{G}, i}$ becomes $P \cdot \matrixspaceY_{\mathbf{G}, i} \cdot Q$, then the first $\dim{\zero_{L, R}(\matrixspaceY_\mathbf{G})}$ matrices of $\matrixspaceY_{\mathbf{G}, 1}, \dots, \matrixspaceY_{\mathbf{G}, n}$ have non-zero entries only in the last few rows or columns. 

To maintain the skew-symmetric property, we also apply the same operation on $\matrixspaceZ_\mathbf{G}$ using $\matrixspaceZ_{\mathbf{G}, j} = -\matrixspaceY_{\mathbf{G}, j}$ for any $1 \leq j \leq n$. Then the first $\dim{\zero_{L, R}(\matrixspaceZ_\mathbf{G})}$ matrices of $\matrixspaceZ_{\mathbf{G}, 1}, \dots, \matrixspaceZ_{\mathbf{G}, n}$ have non-zero entries only in the last few rows or columns. 
See Figure~\ref{fig:ind_ref_low_rank}(b) for an illustration: The black layers correspond to the first  $\dim{\zero_{L, R}(\matrixspaceY_\mathbf{G})}$ matrices of $\matrixspaceY_{\mathbf{G}, 1}, \dots, \matrixspaceY_{\mathbf{G}, n}$. 
The blue layers correspond to the matrices affected by the operation on $\matrixspaceZ_\mathbf{G}$ so that the last few columns have zero entries in the first few rows. 
The red layers are the remaining matrices of $\matrixspaceY_{\mathbf{G}, 1}, \dots, \matrixspaceY_{\mathbf{G}, n}$. 


We show that if the attribute sets used for $\matrixspaceX_{\mathbf{G}}$ and $\matrixspaceY_{\mathbf{H}}$ are the same, 
then 
with fixed individualization matrices and complementary matrices,
by carefully combining the matrix space individualization-refinement and low rank space characterization for $\matrixspaceX_\mathbf{G}$, $\matrixspaceY_\mathbf{G}$, and $\matrixspaceZ_\mathbf{G}$ together,  we obtain the semi-canonical form of the tensor as shown in Figure~\ref{fig:semi_tensor_intro}(a). 
The blue region is called the surface of the semi-canonical form of the tensor. 
The other region (the union of the red cube and the transparent region) is called the kernel of the semi-canonical form of the tensor. 
This is also the underlying reason for the term ``semi-canonical form'': The semi-canonical forms can be different with respect to the fixed attribute set,
individualization matrices, and complementary matrices, as the formatting matrices used can be different. But the kernels of different semi-canonical forms are the same 
(Lemma~\ref{lem:property_semi_canonical_form_tensor_iso}). 



If two tensors are isometric, and semi-canonical forms of the two tensors are obtained by 
the same individualization matrices, complementary matrices, and attribute sets (up to an isometry between the two tensors), 
then the kernels of the two semi-canonical forms are the same. 
So, to determine whether the two tensors are isometric, one only needs to check further if there are formatting matrices 
to make the surfaces of the two tensors to be identical.  \\








Before formally defining the semi-canonical form of a tensor, we first define the characterization tuple, which consists of the 
individualization matrices, the attribute set, and the complementary matrices used to define the semi-canonical form. 

\begin{definition}
Let $\mathbf{G} \in \mathbb{F}_{p}^{n\times n\times m}$ be a tensor for a skew-symmetric matrix space. 
A characterization tuple for a skew-symmetric matrix space tensor $\mathbf{G}$ is a 5-tuple 
\[(L_\mathrm{skew}, L, \Lambda, C_\mathrm{skew}, C)\] satisfying the following conditions:
\begin{enumerate}
\item $L_\mathrm{skew}$ is a matrix with $n$ columns such that $L_\mathrm{skew}$ and $L_\mathrm{skew}^T$ are left and right individualization matrices for the skew-symmetric matrix space $\matrixspaceX_{\mathbf{G}}$.
\item $L$ is a matrix with $m$ columns such that 
$L$ and $L_\mathrm{skew}^T$ are left and right individualization matrices for the matrix space $\matrixspaceY_{\mathbf{G}}$. 
\item $\Lambda$ is 
is an attribute set for both 
$\zero_{L_\mathrm{skew}, L_\mathrm{skew}^T}(\matrixspaceX_\mathbf{G})$ and $\zero_{L, L_\mathrm{skew}^T}(\matrixspaceY_\mathbf{G})$.
\item $C_\mathrm{skew}$ is a complementary matrix for $\zero_{L_\mathrm{skew}, L_\mathrm{skew}^T}(\matrixspaceX_\mathbf{G})$ and $\Lambda$.
\item 
$C$ is a complementary matrix for $\zero_{L, L_\mathrm{skew}^T}(\matrixspaceY_\mathbf{G})$ and $\Lambda$.
\end{enumerate}
\end{definition}

\begin{remark}
We remark that in the 5-tuple $(L_\mathrm{skew}, L, \Lambda, C_\mathrm{skew}, C)$, $L_\mathrm{skew}, L$ and $\Lambda$ can be arbitrary, but 
$C_\mathrm{skew}$ and $C$ are not arbitrary. 
$C_\mathrm{skew}$ needs to be a complementary matrix for  $\zero_{L_\mathrm{skew}, L_\mathrm{skew}^T}(\matrixspaceX_\mathbf{G})$ and $\Lambda$, and 
$C$ needs to be a complementary matrix for  $\zero_{L, L_\mathrm{skew}^T}(\matrixspaceY_\mathbf{G})$ and $\Lambda$.
\end{remark}



We further define a few notations for convenience.
Let \[\matrixspaceX_{\mathbf{G}, \ker(\matrixspaceY_{\mathbf{G}}, \Lambda)}\] be the space spanned by $\sum_{i = 1}^m v[i]\cdot \matrixspaceX_{\mathbf{G}, i}$
for all the $v \in \ker(\matrixspaceY_{\mathbf{G}}, \Lambda)$, and 
\[\matrixspaceY_{\mathbf{G}, \ker_\mathrm{skew}(\matrixspaceX_{\mathbf{G}}, \Lambda)}\] be the space spanned by $\sum_{j = 1}^n v[j]\cdot \matrixspaceY_{\mathbf{G}, j}$
for all the $v \in \ker_\mathrm{skew}(\matrixspaceX_{\mathbf{G}}, \Lambda)$.
Let 
\[\alpha_{\matrixspaceX, \mathbf{G}, L_\mathrm{skew}} \coloneqq \mathrm{dim}\left(\zero_{L_\mathrm{skew}, L_\mathrm{skew}^T}\left(\matrixspaceX_{\mathbf{G}, \ker(\matrixspaceY_{\mathbf{G}}, \Lambda)}\right)\right),\]  
\[\beta_{\matrixspaceX,  \mathbf{G}, L_\mathrm{skew}} \coloneqq \mathrm{dim}\left(L_\mathrm{skew}\cdot\matrixspaceX_{\mathbf{G}, \ker(\matrixspaceY_{\mathbf{G}}, \Lambda)}\cdot L_\mathrm{skew}^T\right),\]
\[\alpha_{\matrixspaceY,  \mathbf{G}, L_\mathrm{skew}, L} \coloneqq \mathrm{dim}\left( \zero_{L, L_\mathrm{skew}^T}\left(\matrixspaceY_{\mathbf{G}, \ker_\mathrm{skew}(\matrixspaceX_{\mathbf{G}}, \Lambda)}\right)\right),\] 
and 
\[\beta_{\matrixspaceY,  \mathbf{G}, L_\mathrm{skew}, L}  \coloneqq L\cdot\matrixspaceY_{\mathbf{G}, \ker_\mathrm{skew}(\matrixspaceX_{\mathbf{G}}, \Lambda)}\cdot L_\mathrm{skew}^T.\]
If $\mathbf{G}, L_\mathrm{skew}$ and $L$ are fixed and there is no confusion, 
we use $\alpha_{\matrixspaceX}, \beta_{\matrixspaceX}, \alpha_{\matrixspaceY}$, and $\beta_{\matrixspaceY}$ to denote
$\alpha_{\matrixspaceX, \mathbf{G}, L_\mathrm{skew}}$, $\beta_{\matrixspaceX,  \mathbf{G}, L_\mathrm{skew}}$, $\alpha_{\matrixspaceY,  \mathbf{G}, L_\mathrm{skew}, L}$ and 
$\beta_{\matrixspaceY,  \mathbf{G}, L_\mathrm{skew}, L}$, respectively.
We define the semi-canonical form for a skew-symmetric tensor space tensor as follows.

\begin{definition}\label{def:semi_canonial_tensor}
A semi-canonical form of $\mathbf{G} \in \mathbb{F}_p^{m\times n\times n}$ with respect to $(L_\mathrm{skew}, L, \Lambda, C_\mathrm{skew}, C)$, denoted as $\semic_{L_\mathrm{skew}, L, \Lambda, C_\mathrm{skew}, C}(\mathbf{G})$ (or $\semic(\mathbf{G})$ if there is no confusion), is a tensor with the same dimension as $\mathbf{G}$
such that $\semic_{L_\mathrm{skew}, L, \Lambda, C_\mathrm{skew}, C}(\mathbf{G}) = \trans_{N, M}(\mathbf{G})$ for some $N \in \mathrm{GL}(n, \mathbb{F}_p)$ and $M \in \mathrm{GL}(m, \mathbb{F}_p)$ satisfying the following two conditions:
\begin{enumerate}
\item $N$ is a formatting matrix for $\Lambda$ and $C_\mathrm{skew}$ with respect to the skew-symmetric matrix space $\zero_{L_\mathrm{skew}, L_\mathrm{skew}^T}(\matrixspaceX_\mathbf{G})$ such that for $\trans_{N, I_m}(\mathbf{G})$, 
\[\left(\matrixspaceY_{\trans_{N, I_m}(\mathbf{G}), 1} \cdot (N^{T})^{-1}, \dots, \matrixspaceY_{\trans_{N, I_m}(\mathbf{G}), \alpha_\matrixspaceY+\beta_\matrixspaceY}\cdot (N^{T})^{-1}\right)\] is a semi-canonical basis of the matrix space  $\matrixspaceY_{\mathbf{G}, \ker_\mathrm{skew}(\matrixspaceX_{\mathbf{G}}, \Lambda)}$ with respect to $L$ and $L_\mathrm{skew}^T$. 
\item $M$ is a left formatting matrix for $\Lambda$ and $C$ with respect to matrix space $\zero_{L, L_\mathrm{skew}^T}(\matrixspaceY_\mathbf{G})$ such that for $\trans_{I_n, M}(\mathbf{G})$, 
\[(\matrixspaceX_{\trans_{I_n, M}(\mathbf{G}), 1}, \dots, \matrixspaceX_{\trans_{I_n, M}(\mathbf{G}), \alpha_\matrixspaceX + \beta_\matrixspaceX})\] is a semi-canonical basis of matrix space $\matrixspaceX_{\mathbf{G}, \ker(\matrixspaceY_{\mathbf{G}}, \Lambda)}$ with respect to $L_\mathrm{skew}$ and $L_\mathrm{skew}^T$. 

\end{enumerate}
In addition, $\alpha_\matrixspaceX, \beta_\matrixspaceX, \alpha_\matrixspaceY$ and  $\beta_\matrixspaceY$ are the parameters of the semi-canonical form. 
\end{definition}


We remark that the semi-canonical form, with respect to fixed $(L_\mathrm{skew}, L, \Lambda, C_\mathrm{skew}, C)$, might not be unique using different transforming matrices $M$ and $N$.
But we show that the kernels of the semi-canonical forms with a fixed characterization tuple are identical (Lemma~\ref{lem:property_semi_canonical_form_tensor_iso}). 

\begin{lemma}\label{lem:property_semi_canonical_form_tensor}
Let $\semic(\mathbf{G})$ be a semi-canonical form of $\mathbf{G}$ with parameters $\alpha_\matrixspaceX, \beta_\matrixspaceX, \alpha_\matrixspaceY$, and $ \beta_\matrixspaceY$. Then $\semic(\mathbf{G})[i, j, k] = 0 $ if at least one of the following conditions holds:
\begin{enumerate}
\item $1 \leq i \leq \alpha_{\matrixspaceX}$, $1 \leq j, k \leq  \alpha_{\matrixspaceY} + \beta_\matrixspaceY$;
\item $1 \leq i \leq \alpha_{\matrixspaceX} + \beta_{\matrixspaceX}$, 
$1 \leq j \leq  \alpha_{\matrixspaceY}$, $1 \leq k \leq  \alpha_{\matrixspaceY} + \beta_\matrixspaceY$;
\item $1 \leq i \leq \alpha_{\matrixspaceX} + \beta_{\matrixspaceX}$, 
$1 \leq j \leq  \alpha_{\matrixspaceY} + \beta_\matrixspaceY $, $1 \leq k \leq  \alpha_{\matrixspaceY}$.
\end{enumerate}

\end{lemma}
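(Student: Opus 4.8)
The plan is to fix a semi-canonical form $\semic(\mathbf{G})=\trans_{N,M}(\mathbf{G})$ with $N$ and $M$ as guaranteed by Definition~\ref{def:semi_canonial_tensor}, and to use Fact~\ref{fact:transform_product} to rewrite it in the two factored shapes $\trans_{N,I_m}(\trans_{I_n,M}(\mathbf{G}))$ and $\trans_{I_n,M}(\trans_{N,I_m}(\mathbf{G}))$, so that the $\matrixspaceX$-slices and the $\matrixspaceY$-slices of $\semic(\mathbf{G})$ can be analyzed separately. First I would record the dimension bookkeeping. Since $\zero_{L,R}(\matrixspaceA)$ is precisely the kernel of the linear map $A\mapsto LAR$ on $\matrixspaceA$, rank--nullity gives $\alpha_\matrixspaceX+\beta_\matrixspaceX=\dim{\matrixspaceX_{\mathbf{G},\ker(\matrixspaceY_\mathbf{G},\Lambda)}}$ and $\alpha_\matrixspaceY+\beta_\matrixspaceY=\dim{\matrixspaceY_{\mathbf{G},\ker_\mathrm{skew}(\matrixspaceX_\mathbf{G},\Lambda)}}$. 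Combining this with the trivial fact that the contraction map $v\mapsto\sum_i v[i]\matrixspaceX_{\mathbf{G},i}$ (resp.\ $v\mapsto\sum_j v[j]\matrixspaceY_{\mathbf{G},j}$) has image of dimension at most $\dim{\ker(\matrixspaceY_\mathbf{G},\Lambda)}$ (resp.\ $\dim{\ker_\mathrm{skew}(\matrixspaceX_\mathbf{G},\Lambda)}$), with the anti-monotonicity of $\ker$ and $\ker_\mathrm{skew}$ under passing to a subspace (immediate from Definitions~\ref{def:skew_space_complement} and~\ref{def:space_complement}), and with $\dim{\ker_\mathrm{skew}(\matrixspaceX_\mathbf{G},\Lambda)}\le n-|\Lambda|$, I obtain $\alpha_\matrixspaceX+\beta_\matrixspaceX\le \dim{\ker(\zero_{L,L_\mathrm{skew}^T}(\matrixspaceY_\mathbf{G}),\Lambda)}$ and $\alpha_\matrixspaceY+\beta_\matrixspaceY\le \min\{\dim{\ker_\mathrm{skew}(\zero_{L_\mathrm{skew},L_\mathrm{skew}^T}(\matrixspaceX_\mathbf{G}),\Lambda)},\, n-|\Lambda|\}$. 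These two inequalities are exactly what guarantee that the zero blocks produced by the formatting matrices reach the index ranges claimed in the lemma.

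For the first listed zero region I would use $\semic(\mathbf{G})=\trans_{N,I_m}(\trans_{I_n,M}(\mathbf{G}))$, so that $\matrixspaceX_{\semic(\mathbf{G}),i}=N\cdot \matrixspaceX_{\trans_{I_n,M}(\mathbf{G}),i}\cdot N^T$. By the second condition of Definition~\ref{def:semi_canonial_tensor}, $(\matrixspaceX_{\trans_{I_n,M}(\mathbf{G}),1},\dots,\matrixspaceX_{\trans_{I_n,M}(\mathbf{G}),\alpha_\matrixspaceX+\beta_\matrixspaceX})$ is a semi-canonical basis of $\matrixspaceX_{\mathbf{G},\ker(\matrixspaceY_\mathbf{G},\Lambda)}$ with respect to $L_\mathrm{skew},L_\mathrm{skew}^T$, so Lemma~\ref{lem:property_zero_basis} forces the first $\alpha_\matrixspaceX$ of these slices to lie in $\zero_{L_\mathrm{skew},L_\mathrm{skew}^T}(\matrixspaceX_{\mathbf{G},\ker(\matrixspaceY_\mathbf{G},\Lambda)})\leq \zero_{L_\mathrm{skew},L_\mathrm{skew}^T}(\matrixspaceX_\mathbf{G})$. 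Since $N$ is a formatting matrix for $\zero_{L_\mathrm{skew},L_\mathrm{skew}^T}(\matrixspaceX_\mathbf{G})$, $\Lambda$ and $C_\mathrm{skew}$, the unnamed lemma following Definition~\ref{def:skew_space_complement} shows that for each $i\le\alpha_\matrixspaceX$ the matrix $\matrixspaceX_{\semic(\mathbf{G}),i}$ vanishes on its leading $\dim{\ker_\mathrm{skew}(\zero_{L_\mathrm{skew},L_\mathrm{skew}^T}(\matrixspaceX_\mathbf{G}),\Lambda)}$-by-$\dim{\ker_\mathrm{skew}(\zero_{L_\mathrm{skew},L_\mathrm{skew}^T}(\matrixspaceX_\mathbf{G}),\Lambda)}$ block, which by the second inequality above contains the $(\alpha_\matrixspaceY+\beta_\matrixspaceY)$-by-$(\alpha_\matrixspaceY+\beta_\matrixspaceY)$ block; this is the first case of the lemma.

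For the second listed region I would instead use $\semic(\mathbf{G})=\trans_{I_n,M}(\trans_{N,I_m}(\mathbf{G}))$ and examine $\matrixspaceY$-slices; a direct index computation gives $\matrixspaceY_{\semic(\mathbf{G}),j}=M\cdot\bigl(\matrixspaceY_{\trans_{N,I_m}(\mathbf{G}),j}(N^T)^{-1}\bigr)\cdot N^T$. The first condition of Definition~\ref{def:semi_canonial_tensor} says $\bigl(\matrixspaceY_{\trans_{N,I_m}(\mathbf{G}),1}(N^T)^{-1},\dots,\matrixspaceY_{\trans_{N,I_m}(\mathbf{G}),\alpha_\matrixspaceY+\beta_\matrixspaceY}(N^T)^{-1}\bigr)$ is a semi-canonical basis of $\matrixspaceY_{\mathbf{G},\ker_\mathrm{skew}(\matrixspaceX_\mathbf{G},\Lambda)}$ with respect to $L,L_\mathrm{skew}^T$, so by Lemma~\ref{lem:property_zero_basis} the parenthesized factor lies in $\zero_{L,L_\mathrm{skew}^T}(\matrixspaceY_\mathbf{G})$ for $j\le\alpha_\matrixspaceY$. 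Now $M$ is a left formatting matrix for $\zero_{L,L_\mathrm{skew}^T}(\matrixspaceY_\mathbf{G})$, $\Lambda$ and $C$, and $N^T$ is a right formatting matrix for $\zero_{L,L_\mathrm{skew}^T}(\matrixspaceY_\mathbf{G})$ and $\Lambda$ by Lemma~\ref{lem:formatting_matrix_skew_non}, so the unnamed lemma following Definition~\ref{def:space_complement} shows that for $j\le\alpha_\matrixspaceY$ the matrix $\matrixspaceY_{\semic(\mathbf{G}),j}$ vanishes on its $[1,\dim{\ker(\zero_{L,L_\mathrm{skew}^T}(\matrixspaceY_\mathbf{G}),\Lambda)};\,1,n-|\Lambda|]$ block, and the first-paragraph inequalities cover the ranges $i\le\alpha_\matrixspaceX+\beta_\matrixspaceX$ and $k\le\alpha_\matrixspaceY+\beta_\matrixspaceY$; this is the second case. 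Finally the third case is the second case with the last two indices interchanged, which holds because $\semic(\mathbf{G})$ is again a skew-symmetric matrix space tensor, hence $\semic(\mathbf{G})[i,j,k]=-\semic(\mathbf{G})[i,k,j]$ by Fact~\ref{fact:basic_tensor}. The step I expect to require the most care is establishing the dimension inequalities of the first paragraph, since they are what tie the $\matrixspaceX$-side parameters to the $\matrixspaceY$-side index bounds; once those are in place the rest is a routine chaining of the already-proved statements about formatting matrices and semi-canonical bases, keeping careful track of which $\zero_{\cdot,\cdot}$ or $\ker$/$\ker_\mathrm{skew}$ space each slice is sent into.
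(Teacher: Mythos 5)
Your proof is correct and follows essentially the same route as the paper: apply Lemma~\ref{lem:property_zero_basis} to identify which slices of $\trans_{I_n,M}(\mathbf{G})$ (resp.\ $\trans_{N,I_m}(\mathbf{G})(N^T)^{-1}$) land in the relevant $\zero_{\cdot,\cdot}$ space, then invoke the unnamed lemmas following Definitions~\ref{def:skew_space_complement} and~\ref{def:space_complement} to see where the formatting matrices force zeros, and finally close the third case by skew-symmetry. The one genuine value you add is making explicit the dimension inequalities $\alpha_\matrixspaceX+\beta_\matrixspaceX\le\dim{\ker(\zero_{L,L_\mathrm{skew}^T}(\matrixspaceY_\mathbf{G}),\Lambda)}$ and $\alpha_\matrixspaceY+\beta_\matrixspaceY\le\min\{\dim{\ker_\mathrm{skew}(\zero_{L_\mathrm{skew},L_\mathrm{skew}^T}(\matrixspaceX_\mathbf{G}),\Lambda)},\,n-|\Lambda|\}$: the paper asserts the zero blocks reach the stated index ranges without recording these, so you have correctly identified the only non-routine bookkeeping in the argument (the chain: total dimension of the contracted slice space bounded by the domain of the contraction, anti-monotonicity of $\ker$ and $\ker_\mathrm{skew}$ under passing to a subspace, and the $n-|\Lambda|$ bound from the orthogonality condition in Definition~\ref{def:skew_space_complement}).
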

\begin{proof}
By Definition~\ref{def:semi_canonial_tensor}, there are two matrices $M$ and $N$ such that $\semic(\mathbf{G}) = \trans_{N, M}(\mathbf{G})$.
Recall that $N$ is a formatting matrix for $\Lambda$ and $C_\mathrm{skew}$ with respect to \[\zero_{L_\mathrm{skew}, L_\mathrm{skew}^T}(\matrixspaceX_\mathbf{G}).\]
By Definition~\ref{def:semi_canonial_tensor},
for any $1 \leq i \leq \alpha_\matrixspaceX$, 
$\matrixspaceX_{\trans_{I_n, M}(\mathbf{G}), i} $ is a matrix in 
\[\zero_{L_\mathrm{skew}, L_\mathrm{skew}^T}\left(\matrixspaceX_{\mathbf{G}, \ker(\matrixspaceY_{\mathbf{G}}, \Lambda)}\right),\]
which is a subspace of $\zero_{L_\mathrm{skew}, L_\mathrm{skew}^T}\left(\matrixspaceX_{\mathbf{G}}\right)$.
Since 
\[\matrixspaceX_{\semic(\mathbf{G}), i} = N \cdot \matrixspaceX_{\trans_{I_n, M}(\mathbf{G}), i} \cdot N^T,\]
$\matrixspaceX_{\semic(\mathbf{G}), i}[1,  \alpha_\matrixspaceY + \beta_\matrixspaceY; 1,  \alpha_\matrixspaceY + \beta_\matrixspaceY]$
is a zero matrix.
Hence, if the first condition holds, then $\semic(\mathbf{G})[i, j, k] = 0 $.


Notice that $M$ is a left formatting matrix for $\Lambda$ and $C$ with respect to \[\zero_{L, L_\mathrm{skew}^T}(\matrixspaceY_\mathbf{G}),\]
and $N^T$ is a right formatting matrix for $\Lambda$ and $C$ with respect to \[\zero_{L, L_\mathrm{skew}^T}(\matrixspaceY_\mathbf{G})\] by Lemma~\ref{lem:formatting_matrix_skew_non}.
Since for any $1 \leq j \leq \alpha_\matrixspaceY$
\[\matrixspaceY_{\semic(\mathbf{G}), j} = M \cdot \matrixspaceY_{\trans_{N, I_m}(\mathbf{G}), i} = M \cdot Y_i \cdot N^T\]
for some $Y_i \in \matrixspaceY_{\mathbf{G}, \ker_\mathrm{skew}(\matrixspaceX_{\mathbf{G}}, \Lambda)}$ by Definition~\ref{def:semi_canonial_tensor},
$\matrixspaceY_{\semic(\mathbf{G}), j}[1,  \alpha_\matrixspaceX + \beta_\matrixspaceX; 1,  \alpha_\matrixspaceY + \beta_\matrixspaceY]$
is a zero matrix.
Hence, if the second condition holds, then $\semic(\mathbf{G})[i, j, k] = 0 $.


Since $\matrixspaceZ_{\semic(\mathbf{G}), j} = - \matrixspaceY_{\semic(\mathbf{G}), j}$ 
for all the $1 \leq j \leq n$, if the third condition holds, then $\semic(\mathbf{G})[i, j, k] = 0 $.
\end{proof}

\begin{lemma}\label{lem:property_semi_canonical_form_tensor_iso}
Let $\mathbf{G} \in \mathbb{F}_p^{n \times n \times m}$ be a tensor for a skew-symmetric matrix space.
If $\semic(\mathbf{G})$ and $\semic(\mathbf{G})'$ are two semi-canonical forms of $\mathbf{G}$ with respect to the same characterization tuple with parameters $\alpha_\matrixspaceX, \beta_\matrixspaceX, \alpha_\matrixspaceY$,  and $ \beta_\matrixspaceY$, 
then we have the following properties. 
\begin{enumerate}
\item There are two matrices 
\begin{equation}\label{equ:transform}
M^\dagger = \left(\begin{array}{ccc} X & 0 & 0 \\ Y & I_{\beta_\matrixspaceX} & 0 \\ 0 & 0 & I_{m - \alpha_\matrixspaceX - \beta_\matrixspaceX } \end{array} \right) \text{ and }
N^\dagger = \left(\begin{array}{ccc} X' & 0 & 0 \\ Y' & I_{\beta_\matrixspaceY} & 0 \\ 0 & 0 & I_{n -  \alpha_\matrixspaceY -  \beta_\matrixspaceY} \end{array} \right)
\end{equation}
for some $X \in \mathrm{GL}(\alpha_\matrixspaceX, \mathbb{F}_p)$,  $X' \in \mathrm{GL}(\alpha_\matrixspaceY, \mathbb{F}_p)$,
$Y \in M(\beta_\matrixspaceX, \alpha_\matrixspaceX, \mathbb{F}_p)$, and $Y' \in M(\beta_\matrixspaceY, \alpha_\matrixspaceY, \mathbb{F}_p)$
such that $\trans_{N^\dagger, M^\dagger}(\semic(\mathbf{G})) = \semic(\mathbf{G})'$.

\item 
$\semic(\mathbf{G})[i, j, k] = \semic(\mathbf{G})'[i, j, k]$ for any $1 \leq i \leq \alpha_{\matrixspaceX} + \beta_{\matrixspaceX}, 1 \leq j, k \leq \alpha_{\matrixspaceY} + \beta_{\matrixspaceY}$.
\end{enumerate}
\end{lemma}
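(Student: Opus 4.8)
The statement asserts that two semi-canonical forms $\semic(\mathbf{G})$ and $\semic(\mathbf{G})'$ built from the \emph{same} characterization tuple are related by a transform of a very rigid block-triangular shape, and that consequently their ``kernel'' entries (the indices in condition 2) agree. My strategy is to first exhibit matrices $M^\dagger, N^\dagger$ realizing $\trans_{N^\dagger, M^\dagger}(\semic(\mathbf{G})) = \semic(\mathbf{G})'$, then argue that these matrices must have the claimed block structure, and finally deduce property 2 as a direct corollary of property 1 using Lemma~\ref{lem:property_semi_canonical_form_tensor}.

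\textbf{Step 1: produce the transform.} By Definition~\ref{def:semi_canonial_tensor}, $\semic(\mathbf{G}) = \trans_{N,M}(\mathbf{G})$ and $\semic(\mathbf{G})' = \trans_{N', M'}(\mathbf{G})$ for formatting matrices $N, N'$ and left formatting matrices $M, M'$ all built from the \emph{same} tuple $(L_\mathrm{skew}, L, \Lambda, C_\mathrm{skew}, C)$. Using Fact~\ref{fact:transform_product}, $\semic(\mathbf{G})' = \trans_{N'N^{-1}, M'M^{-1}}(\semic(\mathbf{G}))$, so I set $N^\dagger = N' N^{-1}$ and $M^\dagger = M' M^{-1}$. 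The work is to show these have the stated shape.

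\textbf{Step 2: block structure of $M^\dagger$.} Since $M$ and $M'$ are both left formatting matrices for $\Lambda$ and $C$ with respect to $\zero_{L, L_\mathrm{skew}^T}(\matrixspaceY_\mathbf{G})$, their last $m - \dim(\zero_{L,L_\mathrm{skew}^T}(\matrixspaceY_\mathbf{G}))$ rows both equal $C$, and their first $\dim(\zero_{L,L_\mathrm{skew}^T}(\matrixspaceY_\mathbf{G}))$ rows are each a basis of the same space $\ker(\matrixspaceY_\mathbf{G}, \Lambda)$ — wait, more precisely the kernel is $\dim(\zero_{L,L_\mathrm{skew}^T}(\matrixspaceY_\mathbf{G}))$-dimensional. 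Hence $M' M^{-1}$ fixes the quotient by that kernel space and acts invertibly on it: it is block lower-triangular with an identity block on the ``complementary'' rows. I then split the kernel rows further: the first $\alpha_\matrixspaceX$ of them correspond (after the $\trans_{I_n,M}$ normalization in Definition~\ref{def:semi_canonial_tensor}, condition 2) to a semi-canonical basis of $\matrixspaceX_{\mathbf{G}, \ker(\matrixspaceY_\mathbf{G}, \Lambda)}$; by Lemma~\ref{lem:property_zero_basis} two semi-canonical bases of the same space differ by adding elements of $\zero_{L_\mathrm{skew}, L_\mathrm{skew}^T}(\cdot)$, which accounts for the block $X$ on the first $\alpha_\matrixspaceX$ coordinates and the block $Y$ mapping them into the next $\beta_\matrixspaceX$ coordinates, with an identity on the remaining $m - \alpha_\matrixspaceX - \beta_\matrixspaceX$ rows (these are unaffected because they lie outside the zero-space whose basis can vary). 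This gives exactly the claimed form of $M^\dagger$.

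\textbf{Step 3: block structure of $N^\dagger$, and property 2.} The argument for $N^\dagger = N'N^{-1}$ is parallel, using that $N, N'$ are formatting matrices for $\Lambda$ and $C_\mathrm{skew}$ with respect to $\zero_{L_\mathrm{skew}, L_\mathrm{skew}^T}(\matrixspaceX_\mathbf{G})$ and that condition 1 of Definition~\ref{def:semi_canonial_tensor} pins down a semi-canonical basis of $\matrixspaceY_{\mathbf{G}, \ker_\mathrm{skew}(\matrixspaceX_\mathbf{G}, \Lambda)}$ up to the zero-space, invoking Lemma~\ref{lem:property_zero_basis} again — with $\alpha_\matrixspaceY, \beta_\matrixspaceY$ playing the roles of $\alpha_\matrixspaceX, \beta_\matrixspaceX$. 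Once property 1 holds, property 2 follows: write out $\semic(\mathbf{G})'[i,j,k]$ for $i \le \alpha_\matrixspaceX + \beta_\matrixspaceX$ and $j,k \le \alpha_\matrixspaceY + \beta_\matrixspaceY$ using the transform formula $\matrixspaceX_{\semic(\mathbf{G})', i} = \sum_{i'} M^\dagger[i,i'] (N^\dagger \matrixspaceX_{\semic(\mathbf{G}), i'} (N^\dagger)^T)$. Because $M^\dagger$ is block lower-triangular and $N^\dagger$ too, the indices $i', j', k'$ that contribute either stay within $\alpha_\matrixspaceX + \beta_\matrixspaceX$ resp.\ $\alpha_\matrixspaceY + \beta_\matrixspaceY$, or land on indices where Lemma~\ref{lem:property_semi_canonical_form_tensor} forces $\semic(\mathbf{G})$ to vanish; carefully checking the three vanishing patterns of that lemma against the three triangular-block regions shows every cross term is zero, leaving $\semic(\mathbf{G})'[i,j,k] = \semic(\mathbf{G})[i,j,k]$ on the stated range.

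\textbf{Main obstacle.} The delicate point is Step 2–3: extracting the \emph{precise} block sizes $\alpha_\matrixspaceX, \beta_\matrixspaceX$ (not just ``triangular modulo the full zero-space'') from the semi-canonical-basis conditions in Definition~\ref{def:semi_canonial_tensor}. One must track exactly which rows of $M, N$ are forced equal, which are forced to span a fixed subspace, and which are genuinely free, and confirm that the freedom is exactly captured by $X, Y, X', Y'$ and nothing leaks into the identity blocks; this requires a careful bookkeeping of how the $\trans_{I_n, M}$ and $\trans_{N, I_m}$ normalizations interact with the definitions of $\alpha_\matrixspaceX, \beta_\matrixspaceX, \alpha_\matrixspaceY, \beta_\matrixspaceY$ as dimensions of the relevant zero-spaces and images. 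The final cross-term vanishing check in Step 3, while routine, also needs the three cases of Lemma~\ref{lem:property_semi_canonical_form_tensor} matched correctly to the triangular structure, which is where a sign or index slip would be easy.
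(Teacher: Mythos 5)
Your proposal takes essentially the same route as the paper's proof: set $N^\dagger = N'N^{-1}$, $M^\dagger = M'M^{-1}$ via Fact~\ref{fact:transform_product}, extract the block shape from the semi-canonical-basis conditions in Definition~\ref{def:semi_canonial_tensor} together with Lemma~\ref{lem:property_zero_basis}, and then deduce property~2 from property~1 and the vanishing patterns of Lemma~\ref{lem:property_semi_canonical_form_tensor}. Your Step~3 spells out the cross-term vanishing check that the paper leaves implicit (the paper only says property~2 ``is obtained by'' property~1 and Lemma~\ref{lem:property_semi_canonical_form_tensor}), and your ``main obstacle'' paragraph correctly identifies the delicate dimension bookkeeping — that $\alpha_\matrixspaceX = \dim\zero_{L_\mathrm{skew}, L_\mathrm{skew}^T}(\matrixspaceX_{\mathbf{G},\ker(\matrixspaceY_\mathbf{G},\Lambda)})$ and that the complementary-matrix rows of $M, M'$ being equal to the \emph{fixed} $C$ is what pins down the identity block on the last $m - \alpha_\matrixspaceX - \beta_\matrixspaceX$ coordinates — even though your wording in Step~2 briefly conflates $\dim\zero_{L,L_\mathrm{skew}^T}(\matrixspaceY_\mathbf{G})$ with the kernel dimension $\dim\ker(\zero_{L,L_\mathrm{skew}^T}(\matrixspaceY_\mathbf{G}), \Lambda)$ before self-correcting.
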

\begin{proof}
Let $N$ and $M$ be two matrices such that $\semic(\mathbf{G}) = \trans_{N, M}(\mathbf{G})$. 
Let $N'$ and $M'$ be two matrices such that $\semic(\mathbf{G})' = \trans_{N', M'}(\mathbf{G})$.
We prove the first property with $N^\dagger = N' \cdot N^{-1}$ and  $M^\dagger = M' \cdot M^{-1}$.
By Fact~\ref{fact:transform_product}, we have $\trans_{N^\dagger, M^\dagger}(\semic(\mathbf{G})) = \semic(\mathbf{G})'$. So we only need to prove that $M^\dagger$ and $N^\dagger$ satisfy Equation~(\ref{equ:transform}). 

By Definition~\ref{def:semi_canonial_tensor} and Lemma~\ref{lem:property_zero_basis}, we have the following properties for the matrix space $\matrixspaceX_{\mathbf{G}}$:
\begin{enumerate}
\item[(a).] For the $\matrixspaceX_{\mathbf{G}, \ker(\matrixspaceY, \Lambda)}$ we have
\begin{align*}\zero_{L_\mathrm{skew}, L_\mathrm{skew}^T}\left(\matrixspaceX_{\mathbf{G}, \ker(\matrixspaceY, \Lambda)}\right) = & \left \langle \matrixspaceX_{\trans_{I_n, M}(\mathbf{G}), 1}, \dots, \matrixspaceX_{\trans_{I_n, M}(\mathbf{G}), \alpha_\matrixspaceX}\right\rangle \\ 
= &\left\langle \matrixspaceX_{\trans_{I_n, M'}(\mathbf{G}), 1}, \dots, \matrixspaceX_{\trans_{I_n, M'}(\mathbf{G}), \alpha_\matrixspaceX}\right\rangle.
\end{align*}
\item[(b).] For all the $\alpha_\matrixspaceX + 1 \leq i \leq \alpha_\matrixspaceX + \beta_\matrixspaceX$,
\[\matrixspaceX_{\trans_{I_n, M'}(\mathbf{G}), i} = \matrixspaceX_{\trans_{I_n, M}(\mathbf{G}), i} + X_i\] for some $X_i \in \zero_{L_\mathrm{skew}, L_\mathrm{skew}^T}\left(\matrixspaceX_{\mathbf{G}, \ker(\matrixspaceY, \Lambda)}\right)$
\item[(c).] For all the  $ \alpha_\matrixspaceX + \beta_\matrixspaceX + 1 \leq i \leq m$,
\[\matrixspaceX_{\trans_{I_n, M}(\mathbf{G}), i} = \matrixspaceX_{\trans_{I_n, M'}(\mathbf{G}), i} \]
\end{enumerate}
The property (a) implies that each of the first $\alpha_{\matrixspaceX}$ rows of $M'$ is a linear combination of the  first $\alpha_{\matrixspaceX}$ rows of $M$. 
The property (b) implies that 
for all the $\alpha_{\matrixspaceX}+ 1\leq i \leq \alpha_{\matrixspaceX}+\beta_{\matrixspaceX}$,
the $i$-th row of $M'$ is the $i$-th row of  $M$ plus a linear combination of the  first $\alpha_{\matrixspaceX}$ rows of $M$. 
The property (c) implies that 
for all the $\alpha_{\matrixspaceX}+\beta_{\matrixspaceX} \leq i \leq m$,
$i$-th row of $M'$ is the same as the $i$-th row of  $M$. 
Hence, $M^\dagger$ satisfies Equation~(\ref{equ:transform}). 

Also, by Definition~\ref{def:semi_canonial_tensor} and Lemma~\ref{lem:property_zero_basis}, we have the following properties for  the matrix space $\matrixspaceY$:
\begin{enumerate}
\item[(d).] For the $\matrixspaceY_{\mathbf{G}, \ker_\mathrm{skew}(\matrixspaceX, \Lambda)}$ we have
\begin{align*} & \zero_{L, L_\mathrm{skew}^T}\left(\matrixspaceY_{\mathbf{G}, \ker_\mathrm{skew}(\matrixspaceX, \Lambda)}\right) \\ = & \left\langle \matrixspaceY_{\trans_{N, I_m}(\mathbf{G}), 1} \cdot \left(N^T\right)^{-1}, \dots, \matrixspaceY_{\trans_{N, I_m}(\mathbf{G}), \alpha_\matrixspaceY } \cdot \left(N^T\right)^{-1}\right\rangle \\ 
= &\left\langle \matrixspaceY_{\trans_{N', I_m}(\mathbf{G}), 1} \cdot \left(N'^T\right)^{-1}, \dots, \matrixspaceY_{\trans_{N', I_m}(\mathbf{G}), \alpha_\matrixspaceY} \cdot \left(N'^T\right)^{-1}\right\rangle.
\end{align*}
\item[(e).] For all the $\alpha_\matrixspaceY + 1 \leq j \leq \alpha_\matrixspaceY + \beta_\matrixspaceY$,
\[\matrixspaceY_{\trans_{N', I_m}(\mathbf{G}), i}  \cdot \left(N'^T\right)^{-1}= (\matrixspaceY_{\trans_{ N, I_m}(\mathbf{G}), i} + Y_i) \cdot \left(N^T\right)^{-1}\] for some $Y_i \in \left\langle \matrixspaceY_{\trans_{N, I_m}(\mathbf{G}), 1} \dots, \matrixspaceY_{\trans_{N, I_m}(\mathbf{G}), \alpha_{\matrixspaceY}}\right\rangle $
\item[(d).] For all the  $ \alpha_\matrixspaceY + \beta_\matrixspaceY  + 1\leq j \leq n$,
\[\matrixspaceY_{\trans_{N', I_m}(\mathbf{G}), j} \cdot \left(N'^{T}\right)^{-1} = \matrixspaceY_{\trans_{N, I_m}\mathbf{G}), j} \cdot\left (N^T\right)^{-1}\]
\end{enumerate}

The property (d) implies that each of the first $\alpha_{\matrixspaceY}$ rows of $N'$ is a linear combination of the  first $\alpha_{\matrixspaceY}$ rows of $N$. 
The property (e) implies that 
for all $\alpha_{\matrixspaceY}+ 1\leq j \leq \alpha_{\matrixspaceY}+\beta_{\matrixspaceY}$,
the $j$-th row of $N'$ is the $j$-th row of  $N$ plus a linear combination of the  first $\alpha_{\matrixspaceY}$ rows of $N$. 
The property (f) implies that 
for all $\alpha_{\matrixspaceY}+\beta_{\matrixspaceY} + 1 \leq j \leq n$,
the $j$-th row of $N'$ is the same as the $j$-th row of  $N$. 
Hence, $N^\dagger$ satisfies Equation~(\ref{equ:transform}). Hence, the first property of the current lemma holds. 

The second property of the lemma is obtained by the first property and Lemma~\ref{lem:property_semi_canonical_form_tensor}.
\end{proof}

By the results from Section~\ref{sec:individualization} and Section~\ref{sec:low_rank}, we show that there is always a characterization tuple such that 
the attribute set contains a small number of row vectors,
and each of the individualization and complementary matrices contains a small number of rows. 
This means that for the isometry testing of two skew-symmetric matrix space tensors,
one can enumerate the characterization tuples so that the isometry testing of two tensors 
reduces to isometry testing of the semi-canonical forms of the two tensors.

\begin{lemma}\label{lem:small_individualization_semi_canonical_tensor}
For every skew-symmetric matrix space tensor $\mathbf{G} \in \mathbb{F}_p^{m \times n \times n}$, there exists a characterization tuple $(L_\mathrm{skew}, L, \Lambda, B_\mathrm{skew}, B)$ satisfying the following conditions: 
\begin{enumerate}
\item $L_\mathrm{skew}$ is a matrix in $M(O(\max\{m, n\} \log(p)/ n^{0.2}), n, \mathbb{F}_p)$.
\item $L$ is a matrix in $M(O(n \log(p) / n^{0.2}), m, \mathbb{F}_p)$.
\item $|\Lambda| = O(n^{0.4})$.
\item $C_\mathrm{skew}$ is a matrix in $M(O(n^{0.8}), n, \mathbb{F}_p)$.
\item $C$ is a matrix in $M(O(n^{0.8}), m, \mathbb{F}_p)$.
\end{enumerate}

\end{lemma}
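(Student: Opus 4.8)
The plan is to build the five components of the characterization tuple by running the matrix-space individualization-refinement of Section~\ref{sec:individualization} on $\matrixspaceX_\mathbf{G}$ and $\matrixspaceY_\mathbf{G}$ in a coordinated fashion, then applying the low-rank characterization of Section~\ref{sec:low_rank} to the two resulting zero-spaces, and finally merging the attribute sets obtained. First I would fix a rank threshold $k=\Theta(n^{0.4})$ and apply Lemma~\ref{lem:individualzation_main} to the $m$-dimensional skew-symmetric space $\matrixspaceX_\mathbf{G}$, producing a left individualization matrix $L'$ and a right individualization matrix $R'$, each with $O(\max\{m\log p,k\}/\sqrt k)$ rows (respectively columns), such that $L'AR'\neq 0$ whenever $A\in\matrixspaceX_\mathbf{G}$ has rank at least $k$. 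Taking $L_{\mathrm{skew}}$ to be the vertical concatenation of $L'$ and $(R')^T$, the product $L_{\mathrm{skew}}AL_{\mathrm{skew}}^T$ contains $L'AR'$ as a submatrix; hence $L_{\mathrm{skew}}$ and $L_{\mathrm{skew}}^T$ serve as left and right individualization matrices for $\matrixspaceX_\mathbf{G}$, $L_{\mathrm{skew}}$ has $O(\max\{m,n\}\log p/n^{0.2})$ rows, and — what matters — every matrix of $\zero_{L_{\mathrm{skew}},L_{\mathrm{skew}}^T}(\matrixspaceX_\mathbf{G})$ has rank less than $k$.

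The delicate point is that the characterization tuple also requires $L_{\mathrm{skew}}^T$ to be a \emph{right} individualization matrix for $\matrixspaceY_\mathbf{G}$, so Lemma~\ref{lem:individualzation_main} cannot simply be invoked a second time for $\matrixspaceY_\mathbf{G}$. Instead I would sample $L_{\mathrm{skew}}$ together with a matrix $L$ (with $O(n\log p/n^{0.2})$ rows) jointly, all entries i.i.d.\ uniform over $\mathbb{F}_p$. Splitting $L_{\mathrm{skew}}$ into two independent row-blocks — so the two occurrences of $L_{\mathrm{skew}}$ in $L_{\mathrm{skew}}(\cdot)L_{\mathrm{skew}}^T$ can be treated independently, exactly as in the proof of Lemma~\ref{lem:individualzation_main} — and re-running the two-step rank-amplification argument of Lemma~\ref{lem:individualization_rank} in parallel, one shows that with positive probability \emph{both} $L_{\mathrm{skew}}AL_{\mathrm{skew}}^T\neq 0$ for every rank-$\ge k$ matrix $A\in\matrixspaceX_\mathbf{G}$ \emph{and} $LBL_{\mathrm{skew}}^T\neq 0$ for every rank-$\ge k$ matrix $B\in\matrixspaceY_\mathbf{G}$; the relevant failure probabilities are the $\delta$-bounds of Lemma~\ref{lem:individualization_rank} with $\log(1/\delta)$ of order $m\log p$ and $n\log p$, which suffice for a union bound over the $p^{\dim}$ matrices in each space (recall $\dim\matrixspaceY_\mathbf{G}=n$ by Fact~\ref{fact:basic_tensor}). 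Consequently $L_{\mathrm{skew}},L_{\mathrm{skew}}^T$ individualize $\matrixspaceX_\mathbf{G}$ while $L,L_{\mathrm{skew}}^T$ individualize $\matrixspaceY_\mathbf{G}$, and every matrix of $\zero_{L,L_{\mathrm{skew}}^T}(\matrixspaceY_\mathbf{G})$ has rank less than $k$.

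With the individualization matrices in hand, $\zero_{L_{\mathrm{skew}},L_{\mathrm{skew}}^T}(\matrixspaceX_\mathbf{G})$ is a skew-symmetric matrix space of rank bound $<k$ and $\zero_{L,L_{\mathrm{skew}}^T}(\matrixspaceY_\mathbf{G})$ is a matrix space of rank bound $<k$, so Lemma~\ref{lem:low_rank_main} supplies attribute sets $\Lambda_\matrixspaceX,\Lambda_\matrixspaceY\subseteq\mathbb{F}_p^n$ (each of size $O(k^2)$) with $\dim{\ker_{\mathrm{skew}}(\zero_{L_{\mathrm{skew}},L_{\mathrm{skew}}^T}(\matrixspaceX_\mathbf{G}),\Lambda_\matrixspaceX)}\ge n-O(k^2)$ and $\dim{\ker(\zero_{L,L_{\mathrm{skew}}^T}(\matrixspaceY_\mathbf{G}),\Lambda_\matrixspaceY)}\ge m-O(k)$. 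I would then let $\Lambda$ be a basis of $\langle\Lambda_\matrixspaceX\cup\Lambda_\matrixspaceY\rangle$: it is automatically an attribute set for both zero-spaces, and by Lemma~\ref{lem:kernel_size_larger_lambda} passing from $\Lambda_\matrixspaceX$ to $\Lambda$ shrinks $\ker_{\mathrm{skew}}$ by at most $|\Lambda|-|\Lambda_\matrixspaceX|$ while passing from $\Lambda_\matrixspaceY$ to $\Lambda$ does not shrink $\ker$ at all. Hence Definition~\ref{def:skew_space_complement} still yields a complementary matrix $C_{\mathrm{skew}}$ for $\zero_{L_{\mathrm{skew}},L_{\mathrm{skew}}^T}(\matrixspaceX_\mathbf{G})$ and $\Lambda$ with $O(k^2)$ rows, and Definition~\ref{def:space_complement} a complementary matrix $C$ for $\zero_{L,L_{\mathrm{skew}}^T}(\matrixspaceY_\mathbf{G})$ and $\Lambda$ with $O(k)$ rows. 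Collecting $(L_{\mathrm{skew}},L,\Lambda,C_{\mathrm{skew}},C)$ and substituting $k=\Theta(n^{0.4})$, and tracking the constants through the size estimates above, gives the five bounds claimed in the statement.

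The main obstacle is the coordination step: because $L_{\mathrm{skew}}$ occurs on \emph{both} sides of $L_{\mathrm{skew}}(\cdot)L_{\mathrm{skew}}^T$ and simultaneously on the right of $L(\cdot)L_{\mathrm{skew}}^T$, one cannot treat the individualizations of $\matrixspaceX_\mathbf{G}$ and $\matrixspaceY_\mathbf{G}$ as independent black boxes; one has to open up the probabilistic argument of Lemma~\ref{lem:individualzation_main}, carve $L_{\mathrm{skew}}$ into independent pieces, and control the two rank-amplification events jointly. A secondary nuisance is that the single attribute set $\Lambda$ must be simultaneously useful for a skew-symmetric space and a general matrix space, which is exactly why we take the union of the two bases and then lean on the kernel-degradation bound of Lemma~\ref{lem:kernel_size_larger_lambda} to keep $C_{\mathrm{skew}}$ and $C$ small.
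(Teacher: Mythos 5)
Your proof reaches the right conclusion and the overall architecture (individualize $\matrixspaceX_\mathbf{G}$ and $\matrixspaceY_\mathbf{G}$, then apply the low-rank structure theorem, then merge the attribute sets) is the same as the paper's. The attribute-set merging step is also handled in exactly the paper's way, via Lemma~\ref{lem:kernel_size_larger_lambda} (you both implicitly use the fact that the kernels depend only on $\langle\Lambda\rangle$, which is why passing to a basis of the span is permissible). Your final sizes agree with what the proof actually delivers, namely $|\Lambda|=O(r^2)=O(n^{0.8})$.

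Where you diverge, though, is in how the coordination of $L_{\mathrm{skew}}$ across the two spaces is achieved, and here you misdiagnose the difficulty. You assert that ``Lemma~\ref{lem:individualzation_main} cannot simply be invoked a second time for $\matrixspaceY_\mathbf{G}$'' and therefore propose to reopen the probabilistic argument of Lemma~\ref{lem:individualization_rank} and run the two rank-amplification chains jointly over shared randomness. That route is workable but needlessly heavy. The paper \emph{does} invoke Lemma~\ref{lem:individualzation_main} a second time as a black box: it produces $(L_1,R_1)$ for $\matrixspaceX_\mathbf{G}$ and, completely independently, $(L_2,R_2)$ for $\matrixspaceY_\mathbf{G}$, and only then defines $L_{\mathrm{skew}}$ to be any matrix whose rows span the rows of $L_1$, of $R_1^T$, \emph{and} of $R_2^T$, with $L=L_2$. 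The point you are missing is a monotonicity observation that makes this a purely deterministic merge: if $L_1 X R_1 \neq 0$ and the rows of $L_1$ and $R_1^T$ lie in the row span of $L_{\mathrm{skew}}$, then $L_1 = A L_{\mathrm{skew}}$ and $R_1 = L_{\mathrm{skew}}^T B^T$ for suitable $A,B$, so $L_1 X R_1 = A\,(L_{\mathrm{skew}}X L_{\mathrm{skew}}^T)\,B^T$ forces $L_{\mathrm{skew}}X L_{\mathrm{skew}}^T\neq 0$; and likewise $L_2 Y R_2\neq 0$ forces $L Y L_{\mathrm{skew}}^T\neq 0$ because $R_2$ is in the column span of $L_{\mathrm{skew}}^T$. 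Enlarging the row span of $L_{\mathrm{skew}}$ can only preserve, never destroy, the non-vanishing property, so the three ingredient matrices can be computed separately and merged afterward without any interaction between the random choices. Your joint-sampling strategy buys nothing and incurs the cost of re-deriving the tail bounds of Lemma~\ref{lem:individualization_rank} under shared randomness; the paper's span-merging trick sidesteps that entirely.
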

\begin{proof}
Let $r = n^{0.4}$.
By Lemma~\ref{lem:individualzation_main}, there exist two matrices 
\[L_1 \in M(O(\max\{m \log(p), r\} / \sqrt{r}), n, \mathbb{F}_p) \text{ and }R_1 \in M(n, O(\max\{m \log(p), r\} / \sqrt{r}), \mathbb{F}_p)\] such that
$L_1 X R_1$ is a non-zero matrix for each $X \in \matrixspaceX_{\mathbf{G}}$ of rank at least $r$.
By Lemma~\ref{lem:individualzation_main}, there exist two matrices 
\[L_2 \in M(O(\max\{n \log(p), r\} / \sqrt{r}), m, \mathbb{F}_p) \text{ and } R_2 \in M(n, O(\max\{n \log(p), r\} / \sqrt{r}), \mathbb{F}_p)\] such that
$L_2 Y R_2$ is a non-zero matrix for each $Y \in \matrixspaceY_{\mathbf{G}}$ of rank at least $r$.
Let $L_\mathrm{skew}$ be a matrix such that every row vector of $L_1$, $R_1^T$, and $R_2^T$ is a linear combination of the row vectors of $L_\mathrm{skew}$, and $L$ be $L_2$.
Thus $L_\mathrm{skew}$ has \[O(\max\{m \log(p), n \log(p), r\} / \sqrt{r}) = O(\max\{m, n\}\log (p) / \sqrt{r}) = O(\max\{m, n\}\log (p) / n^{0.2})\] rows,
and $L$ has $O(n \log (p) / n^{0.2})$ rows.
We have that $L_\mathrm{skew} X L_\mathrm{skew}^T$ is a non-zero matrix for each $X \in \matrixspaceX_{\mathbf{G}}$ of rank at least $r$, and $L Y L_\mathrm{skew}^T$
is a non-zero matrix for each $Y \in \matrixspaceY_{\mathbf{G}}$ of rank at least $r$.

By Lemma~\ref{lem:low_rank_main}, there is a set of linearly independent row vectors in $\mathbb{F}_p^n$, denoted as $\Lambda'$, such that the following two conditions hold
\begin{enumerate}
\item $|\Lambda'| \leq O(r^2)$. 
\item $\dim{\ker_{\mathrm{skew}}(\zero_{L_\mathrm{skew}, L_\mathrm{skew}^T}(\matrixspaceX_{\mathbf{G}}}), \Lambda') \geq n - O(r^2)$.
\end{enumerate}

Also, by Lemma~\ref{lem:low_rank_main}, there is a set of linearly independent row vectors in $\mathbb{F}_p^n$, denoted as $\Lambda''$,  such that  the following two conditions hold
\begin{enumerate}
\item $|\Lambda''| \leq O(r^2)$. 
\item $\dim{\ker(\zero_{L, L_\mathrm{skew}^T}(\matrixspaceY_{\mathbf{G}}), \Lambda'')} \geq m - O(r)$.
\end{enumerate}

Let $\Lambda$ be a set of linear independent row vectors of size at most $|\Lambda' | + |\Lambda''| = O(r^2)$ such that $\langle \Lambda \rangle = \langle \Lambda' \cup \Lambda'' \rangle$.
By Lemma~\ref{lem:kernel_size_larger_lambda}, 
we have \[\ker\left(\zero_{L, L_\mathrm{skew}^T}(\matrixspaceY_{\mathbf{G}}), \Lambda''\right)  \leq \ker\left(\zero_{L, L_\mathrm{skew}^T}(\matrixspaceY_{\mathbf{G}}), \Lambda\right)\] and 
\[\mathrm{dim}\left(\ker_{\mathrm{skew}}\left(\zero_{L_\mathrm{skew}, L_\mathrm{skew}^T}(\matrixspaceX_{\mathbf{G}}), \Lambda\right)\right) \geq \mathrm{dim}\left(\ker_{\mathrm{skew}}\left(\zero_{L_\mathrm{skew}, L_\mathrm{skew}^T}(\matrixspaceX_{\mathbf{G}}), \Lambda'\right)\right) - |\Lambda|.\]
Thus, we have 
\[\mathrm{dim}\left(\ker_{\mathrm{skew}}\left(\zero_{L_\mathrm{skew}, L_\mathrm{skew}^T}(\matrixspaceX_{\mathbf{G}}), \Lambda\right)\right) \geq n - O(r^2) - |\Lambda| \geq n - O(n^2)\]
and 
\[\mathrm{dim}\left(\ker\left(\zero_{L, L_\mathrm{skew}^T}(\matrixspaceY_{\mathbf{G}}), \Lambda\right)\right) \geq m - O(r) \geq m - O(r^2).\]
Finally, any complementary matrix for the matrix space $\zero_{L_\mathrm{skew}, L_\mathrm{skew}^T}(\matrixspaceX_\mathbf{G})$ and $\Lambda$ has 
\[n - \mathrm{dim}\left(\ker_{\mathrm{skew}}\left(\zero_{L_\mathrm{skew}, L_\mathrm{skew}^T}(\matrixspaceX_{\mathbf{G}}), \Lambda\right)\right) = O(r^2)\] rows, and any complementary matrix for the matrix space $\zero_{L, L_\mathrm{skew}^T}(\matrixspaceY_\mathbf{G})$ and $\Lambda$ has 
\[m - \mathrm{dim}\left(\ker(\zero_{L, L_\mathrm{skew}^T}\left(\matrixspaceY_{\mathbf{G}}), \Lambda\right)\right) = O(r^2)\] rows.
\end{proof}

We present an algorithm to compute a semi-canonical form of a given skew-symmetric matrix space tensor based on a given characterization tuple. 

\begin{framed}
\noindent \textbf{Tensor Semi-Canonical Form Construction Algorithm}

\noindent \textbf{Input:} A skew-symmetric tensor $\mathbf{G}$ and a characterization tuple $(L_\mathrm{skew}, L, \Lambda, C_\mathrm{skew}, C)$

\noindent \textbf{Output:} $\semic_{L_\mathrm{skew}, L, \Lambda, C_\mathrm{skew}, C}(\mathbf{G})$

\begin{enumerate}
\item Compute a linear basis of $\ker_\mathrm{skew}(\matrixspaceX_\mathbf{G}, \Lambda)$ and a linear basis of $\ker(\matrixspaceY_\mathbf{G}, \Lambda)$. 
\item Compute a semi-canonical basis $(X_1, \dots, X_{\alpha_\matrixspaceX + \beta_\matrixspaceX})$ of $\matrixspaceX_{\mathbf{G}, \ker(\matrixspaceY_{\mathbf{G}}, \Lambda)}$ 
with respect to $L_\mathrm{skew}$ and $L_\mathrm{skew}^T$,  
and a semi-canonical basis $(Y_1, \dots, Y_{\alpha_\matrixspaceY + \beta_\matrixspaceY})$ of $\matrixspaceY_{\mathbf{G}, \ker_\mathrm{skew}(\matrixspaceX_{\mathbf{G}}, \Lambda)}$ 
with respect to $L$ and $L_\mathrm{skew}^T$.
\item Compute a matrix $M \in \mathrm{GL}(m, \mathbb{F}_p)$ satisfying the following two conditions
\begin{itemize}
\item  Let $m_i$ be the $i$-th row of $M$.
$\sum_{i'=1}^m m_i[i'] \cdot \matrixspaceX_{\mathbf{G}, i'}= X_i$ for all the $1 \leq i \leq \alpha_\matrixspaceX + \beta_\matrixspaceX$.
\item The $i$-th row $M$ is the same as the $(i - \alpha_\matrixspaceX - \beta_\matrixspaceX)$-th row of $C$ for all the $\alpha_\matrixspaceX + \beta_\matrixspaceX  + 1\leq i \leq m$.
\end{itemize}
\item Compute a matrix $N \in \mathrm{GL}(n, \mathbb{F}_p)$ satisfying the following two conditions
\begin{itemize}
\item  Let $n_i$ be the $i$-th row of $N$.
$\sum_{i'=1}^n n_i[i'] \cdot \matrixspaceY_{\mathbf{G}, i'}= Y_i$ for all the $1 \leq i \leq \alpha_\matrixspaceY + \beta_\matrixspaceY$.
\item The $i$-th row $N$ is the same as the $(i - \alpha_\matrixspaceY - \beta_\matrixspaceY)$-th row of $C_\mathrm{skew}$ for all the $\alpha_\matrixspaceY + \beta_\matrixspaceY + 1 \leq i \leq n$.
\end{itemize}
\item Return $\trans_{N, M}(\mathbf{G})$.
\end{enumerate}
\end{framed}

\begin{lemma}\label{lem:algo_semi_canonical_form_tensor_construction}
Given a skew-symmetric matrix space tensor $\mathbf{G} \in \mathbb{F}_p^{m \times n \times n}$ and a characterization tuple $(L_\mathrm{skew}, L, \Lambda, C_\mathrm{skew}, C)$, there is an algorithm to construct a semi-canonical form of $\mathbf{G}$ with respect to $(L_\mathrm{skew}, L, \Lambda, C_\mathrm{skew}, C)$ in time $p^{O(n + m)}\cdot \mathrm{poly}(n, m, p)$. 
\end{lemma}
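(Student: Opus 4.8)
The plan is to prove the lemma by analyzing the \textbf{Tensor Semi-Canonical Form Construction Algorithm} displayed above: first establish that its output $\trans_{N,M}(\mathbf{G})$ satisfies Definition~\ref{def:semi_canonial_tensor}, then bound the running time of the five steps. For correctness I would begin by recording two bookkeeping identities obtained directly from the definition of $\trans$: for any $M\in\mathrm{GL}(m,\mathbb{F}_p)$ one has $\matrixspaceX_{\trans_{I_n,M}(\mathbf{G}),i}=\sum_{i'=1}^m M[i,i']\cdot\matrixspaceX_{\mathbf{G},i'}$, and for any $N\in\mathrm{GL}(n,\mathbb{F}_p)$ one has $\matrixspaceY_{\trans_{N,I_m}(\mathbf{G}),j}\cdot(N^T)^{-1}=\sum_{j'=1}^n N[j,j']\cdot\matrixspaceY_{\mathbf{G},j'}$; the second follows because the $i$-th row of $\sum_j v[j]\matrixspaceY_{\mathbf{G},j}$ equals $v\,\matrixspaceX_{\mathbf{G},i}$. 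With these, Steps~2--4 arrange exactly that $\matrixspaceX_{\trans_{I_n,M}(\mathbf{G}),i}=X_i$ for $1\le i\le\alpha_{\matrixspaceX}+\beta_{\matrixspaceX}$ and $\matrixspaceY_{\trans_{N,I_m}(\mathbf{G}),j}(N^T)^{-1}=Y_j$ for $1\le j\le\alpha_{\matrixspaceY}+\beta_{\matrixspaceY}$, where $(X_1,\dots)$ and $(Y_1,\dots)$ are the semi-canonical bases of $\matrixspaceX_{\mathbf{G},\ker(\matrixspaceY_{\mathbf{G}},\Lambda)}$ and $\matrixspaceY_{\mathbf{G},\ker_\mathrm{skew}(\matrixspaceX_{\mathbf{G}},\Lambda)}$ computed in Step~2; hence the two ``semi-canonical basis'' clauses of Definition~\ref{def:semi_canonial_tensor} hold by construction.

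It then remains to verify that the matrices $M$ and $N$ produced in Steps~3--4 are genuine (left) formatting matrices for the prescribed attribute set and complementary matrices, and this is the step I expect to require the most care. Invertibility of $M$ follows because its last rows form $C$, whose row span meets the relevant kernel trivially by the complementary-matrix property of Definition~\ref{def:space_complement}, while its first $\alpha_{\matrixspaceX}+\beta_{\matrixspaceX}$ rows are the coordinate vectors of the basis $X_1,\dots,X_{\alpha_{\matrixspaceX}+\beta_{\matrixspaceX}}$ of $\matrixspaceX_{\mathbf{G},\ker(\matrixspaceY_{\mathbf{G}},\Lambda)}$ and hence are linearly independent; invertibility of $N$ is analogous with $C_\mathrm{skew}$ and Lemma~\ref{lem:formatting_matrix_skew_non}. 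The delicate part is the clause that these first rows form a \emph{basis of the kernel}: one must match the parameters, i.e.\ verify $\alpha_{\matrixspaceX}+\beta_{\matrixspaceX}=\dim{\ker(\zero_{L,L_\mathrm{skew}^T}(\matrixspaceY_{\mathbf{G}}),\Lambda)}$ and $\alpha_{\matrixspaceY}+\beta_{\matrixspaceY}=\dim{\ker_\mathrm{skew}(\zero_{L_\mathrm{skew},L_\mathrm{skew}^T}(\matrixspaceX_{\mathbf{G}}),\Lambda)}$, so that appending the rows of $C$ (resp.\ $C_\mathrm{skew}$) yields a square matrix of exactly the block shape of a formatting matrix, and that the coordinate span of $\matrixspaceX_{\mathbf{G},\ker(\matrixspaceY_{\mathbf{G}},\Lambda)}$ coincides with the relevant kernel. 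Here I would invoke Lemma~\ref{lem:property_zero_basis}, which places the $\zero$-part of a semi-canonical basis at the front so that $\alpha_{\matrixspaceX}$ and $\alpha_{\matrixspaceY}$ are the dimensions of the corresponding $\zero$-subspaces, together with the relation between $\ker$ (resp.\ $\ker_\mathrm{skew}$) of the ambient spaces $\matrixspaceY_{\mathbf{G}}$, $\matrixspaceX_{\mathbf{G}}$ and of their degenerate subspaces $\zero_{L,L_\mathrm{skew}^T}(\matrixspaceY_{\mathbf{G}})$, $\zero_{L_\mathrm{skew},L_\mathrm{skew}^T}(\matrixspaceX_{\mathbf{G}})$ that is forced by the characterization-tuple conventions; this is exactly the bookkeeping that ties $\alpha_{\matrixspaceX},\beta_{\matrixspaceX},\alpha_{\matrixspaceY},\beta_{\matrixspaceY}$ to the block widths of $M$ and $N$ and is the main obstacle of the proof.

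For the running time: Step~1 computes $\ker_\mathrm{skew}(\matrixspaceX_{\mathbf{G}},\Lambda)$ and $\ker(\matrixspaceY_{\mathbf{G}},\Lambda)$ as solution spaces of the explicit linear systems $x\cdot v^T=0$ ($x\in\Lambda$) and $v\,\matrixspaceX_{\mathbf{G},i}\in\langle\Lambda\rangle$, resp.\ $v\,\matrixspaceY_{\mathbf{G},j}\in\langle\Lambda\rangle$, in $\mathrm{poly}(n,m,p)$ time by Gaussian elimination (assuming, as for the tuples supplied by Lemma~\ref{lem:small_individualization_semi_canonical_tensor}, that $L_\mathrm{skew}$ and $L$ have $\mathrm{poly}(n,m)$ rows). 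Step~2 is the dominating cost: it applies Lemma~\ref{lem:algo_semi_canonial_base} to $\matrixspaceX_{\mathbf{G},\ker(\matrixspaceY_{\mathbf{G}},\Lambda)}$ and to $\matrixspaceY_{\mathbf{G},\ker_\mathrm{skew}(\matrixspaceX_{\mathbf{G}},\Lambda)}$, which are subspaces of $\matrixspaceX_{\mathbf{G}}$ and $\matrixspaceY_{\mathbf{G}}$ and so have dimension at most $m$ and $n$ respectively, costing $p^{O(m)}\cdot\mathrm{poly}(n,m,p)+p^{O(n)}\cdot\mathrm{poly}(n,m,p)=p^{O(n+m)}\cdot\mathrm{poly}(n,m,p)$. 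Steps~3--4 again solve polynomial-size linear systems to realize the prescribed rows of $M$ and $N$, and Step~5 carries out a single transform $\trans_{N,M}$, i.e.\ $O(nm)$ products of $n\times n$ matrices; both are $\mathrm{poly}(n,m,p)$. Summing the contributions gives the claimed $p^{O(n+m)}\cdot\mathrm{poly}(n,m,p)$ bound; once the parameter-matching of the previous paragraph is in hand, everything else here is routine linear algebra.
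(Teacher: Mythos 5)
Your proposal takes the same overall route as the paper — it analyzes the \textbf{Tensor Semi-Canonical Form Construction Algorithm}, arguing that its output satisfies Definition~\ref{def:semi_canonial_tensor} and then bounding the cost of each step — but it goes into considerably more depth on correctness than the paper does, whose entire treatment of correctness is the sentence ``The correctness of the algorithm is obtained by Definition~\ref{def:semi_canonial_tensor}.'' Two remarks. First, on the running time: you compute $\ker_\mathrm{skew}(\matrixspaceX_\mathbf{G},\Lambda)$ and $\ker(\matrixspaceY_\mathbf{G},\Lambda)$ in Step~1 and solve for the rows of $M$ and $N$ in Steps~3--4 by Gaussian elimination in $\mathrm{poly}(n,m,p)$ time, whereas the paper brute-force enumerates all of $\mathbb{F}_p^n$ or $\mathbb{F}_p^m$ at a cost of $p^{O(n+m)}$ for those steps. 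Both stay within the claimed $p^{O(n+m)}\cdot\mathrm{poly}(n,m,p)$ bound, but your observation is correct and actually sharper: the only step that really needs the exponential factor is Step~2 via Lemma~\ref{lem:algo_semi_canonial_base}. (Incidentally, the parenthetical about $L_\mathrm{skew},L$ having polynomially many rows is not needed for Step~1, since the kernels in Definition~\ref{def:space_complement} and~\ref{def:skew_space_complement} depend only on the matrix space and $\Lambda$, not on the individualization matrices.) Second, on correctness: you correctly isolate the real work, namely verifying that the constructed $M$ and $N$ are \emph{bona fide} formatting matrices for $\Lambda$ and the prescribed complementary matrices, which in particular requires the parameter matching $\alpha_\matrixspaceX+\beta_\matrixspaceX=\dim\left(\ker\!\left(\zero_{L,L_\mathrm{skew}^T}(\matrixspaceY_\mathbf{G}),\Lambda\right)\right)$ (and its $\matrixspaceX$ counterpart), so that the basis rows and the rows of $C$ together fill out exactly $m$ rows. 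You flag this as ``the main obstacle'' and gesture at Lemma~\ref{lem:property_zero_basis} plus ``characterization-tuple conventions,'' but you do not actually carry it through: Lemma~\ref{lem:property_zero_basis} only explains the internal ordering of a semi-canonical basis, and $\ker(\matrixspaceY_\mathbf{G},\Lambda)\subseteq\ker(\zero_{L,L_\mathrm{skew}^T}(\matrixspaceY_\mathbf{G}),\Lambda)$ is a containment that is not obviously an equality, so some further argument tying the two dimensions together is needed before Step~3 can even produce a square matrix. To be fair, the paper's own proof is silent on exactly this point, so you are not missing anything the source supplies; but if you intend this to be a complete proof you should close that step explicitly rather than defer it.
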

\begin{proof}
The correctness of the algorithm is obtained by Definition~\ref{def:semi_canonial_tensor}. 
Now we bound the running time. 
For the first step of the algorithm, to compute a linear basis of $\ker_\mathrm{skew}(\matrixspaceX_\mathbf{G}, \Lambda)$, 
one can compute all the vectors in $\ker_\mathrm{skew}(\matrixspaceX_\mathbf{G}, \Lambda)$ by enumerating all the possible vectors and then compute a linear basis.
$\ker(\matrixspaceY_\mathbf{G}, \Lambda)$ can be computed similarly.
Hence, the first step takes $p^{O(n + m)}\cdot \mathrm{poly}(n, m, p)$ time. 
By Lemma~\ref{lem:algo_semi_canonial_base}, the second step of the algorithm takes $p^{O(n + m)}\cdot \mathrm{poly}(n, m, p)$ time. 
For the third and fourth steps, computing a row of $N$ (or $M$) can be done by enumerating all the row vectors of dimension $n$ (or $m$). Hence, the third and fourth steps  take $p^{O(n+ m)}\cdot \mathrm{poly}(n, m, p)$ time. 
\end{proof}

\subsection{Isometry of skew-symmetric matrix space tensor semi-canonical forms}
We define the isometry between semi-canonical forms of two tensors and give an algorithm for the isometry testing of two skew-symmetric matrix space tensors 
assuming there is an algorithm for the isometry testing of tensor semi-canonical forms.


\begin{definition}[Tensor semi-canonical form isometry]\label{def:skew_tensor_semi-canonical_form_iso}
Two skew-symmetric matrix space tensor semi-canonical forms $\semic(\mathbf{G})$ and $\semic(\mathbf{H})$ in $\mathbb{F}_p^{m \times n \times n}$ are isometric if 
the following two conditions hold:
\begin{enumerate}
\item The parameters $\alpha_\matrixspaceX, \beta_\matrixspaceX, \alpha_\matrixspaceY$ and $\beta_\matrixspaceY$ of the two semi-canonical forms are the same.
\item There exist matrices
\begin{equation}\label{equ:skew_tensor_semi-canonical_form_iso}
M = \left(\begin{array}{ccc} X & 0 & 0 \\ Y & I_{\beta_\matrixspaceX} & 0 \\ 0 & 0 & I_{m - \alpha_\matrixspaceX - \beta_\matrixspaceX } \end{array} \right) \text{ and }
N = \left(\begin{array}{ccc} X' & 0 & 0 \\ Y' & I_{\beta_\matrixspaceY} & 0 \\ 0 & 0 & I_{n -  \alpha_\matrixspaceY -  \beta_\matrixspaceY} \end{array} \right)
\end{equation}
for some $X \in \mathrm{GL}(\alpha_\matrixspaceX, \mathbb{F}_p)$,  $X' \in \mathrm{GL}(\alpha_\matrixspaceY, \mathbb{F}_p)$,
$Y \in M(\beta_\matrixspaceX, \alpha_\matrixspaceX, \mathbb{F}_p)$, and $Y' \in M(\beta_\matrixspaceY, \alpha_\matrixspaceY, \mathbb{F}_p)$
such that $\trans_{N, M}(\semic(\mathbf{G})) = \semic(\mathbf{H})$.
\end{enumerate}
\end{definition}

We give an algorithm for the isometry testing of two skew-symmetric matrix space tensors 
assuming there is an algorithm for the isomorphism testing of tensor semi-canonical forms.

\begin{framed}
\noindent \textbf{Skew-Symmetric Matrix Space Tensor Isometry Testing Algorithm} 

\noindent \textbf{Input:} Two skew-symmetric matrix space tensors $\mathbf{G}, \mathbf{H} \in \mathbb{F}_p^{m \times n \times n}$ for some prime $p > 2$ and positive integers $n, m$.


\noindent \textbf{Output:} Yes or no.

\begin{enumerate}
\item Let $\ell_1 = O((m+n) \log(p) / n^{0.2})$, $\ell_2 = O(n \log(p) / n^{0.2})$, and $\ell_3 = O(n^{0.4})$ and $\ell_4 = O(n^{0.8})$.
\item For each $(L_{\mathbf{G}, \mathrm{skew}}, L_{\mathbf{G}}, \Lambda_{\mathbf{G}}, C_{\mathbf{G}, \mathrm{skew}}, C_{\mathbf{G}})$
and $(L_{\mathbf{H}, \mathrm{skew}}, L_{\mathbf{H}}, \Lambda_{\mathbf{H}}, C_{\mathbf{H}, \mathrm{skew}}, C_{\mathbf{H}})$ satisfying the following conditions:

\begin{itemize}
\item $L_{\mathbf{G}, \mathrm{skew}}, L_{\mathbf{H}, \mathrm{skew}} \in M(\ell_1, n, \mathbb{F}_p),  L_{\mathbf{G}} , L_{\mathbf{H}} \in M(\ell_2, m, \mathbb{F}_p)$;
\item $|\Lambda_\mathbf{G}| = |\Lambda_\mathbf{H}| = \ell_3$;
\item $\dim{\ker(\zero_{L_{\mathbf{G}, \mathrm{skew}}, L_{\mathbf{G}, \mathrm{skew}}^T}(\matrixspaceX_\mathbf{G}), \Lambda_\mathbf{G})} \geq n - \ell_4$;
\item $\dim{\ker(\zero_{L_\mathbf{G}, L_{\mathbf{G}, \mathrm{skew}}^T}(\matrixspaceY_\mathbf{G}), \Lambda_\mathbf{G})} \geq m - \ell_4$;
\item $\dim{\ker(\zero_{L_{\mathbf{H}, \mathrm{skew}}, L_{\mathbf{H}, \mathrm{skew}}^T}(\matrixspaceX_\mathbf{H}), \Lambda_\mathbf{H})} \geq n - \ell_4$;
\item $\dim{\ker(\zero_{L_\mathbf{H}, L_{\mathbf{H}, \mathrm{skew}}^T}(\matrixspaceY_\mathbf{H}), \Lambda_\mathbf{H})} \geq m - \ell_4$,
\end{itemize}
run the following algorithm

\begin{enumerate}
\item Construct $\semic_{L_{\mathbf{G}, \mathrm{skew}}, L_{\mathbf{G}}, \Lambda_{\mathbf{G}}, C_{\mathbf{G}, \mathrm{skew}}, C_{\mathbf{G}}}(\mathbf{G})$ and $\semic_{L_{\mathbf{H}, \mathrm{skew}}, L_{\mathbf{H}}, \Lambda_{\mathbf{H}}, C_{\mathbf{H}, \mathrm{skew}}, C_{\mathbf{H}}}(\mathbf{H})$, and denote the resulting semi-canonical forms as $\semic_1$ and $\semic_2$, respectively.
\item If  $\semic_1$ and $\semic_2$
have different parameters, then continue. 
\item Run the algorithm for the isometry testing of semi-canonical forms for two skew-symmetric matrix tensors with $\semic_1$ and 
$\semic_2$. If the algorithm returns yes, then return yes. 
\end{enumerate}
\item Return no.
\end{enumerate}

\end{framed}

\begin{lemma}\label{lem:tensor_iso_algo}
If there is an algorithm to determine whether the semi-canonical forms of two skew-symmetric matrix space tensors in $\mathbb{F}_p^{m \times n \times n}$ are isometric with running time $T(p, n, m)$,
then there is an algorithm for isometry testing of two skew-symmetric matrix space tensors in time 
\[p^{O((m+ n) n^{0.8} \log(p))}\cdot T(p, n, m) \cdot \mathrm{poly}(p, n, m).\]
\end{lemma}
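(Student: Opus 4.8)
The plan is to establish soundness and completeness of the Skew-Symmetric Matrix Space Tensor Isometry Testing Algorithm and then bound its running time. For soundness, suppose the algorithm outputs yes. Then for some pair of enumerated characterization tuples it produced semi-canonical forms $\semic_1 = \trans_{N_1, M_1}(\mathbf{G})$ and $\semic_2 = \trans_{N_2, M_2}(\mathbf{H})$ --- where $N_1, N_2 \in \GL(n, \mathbb{F}_p)$ and $M_1, M_2 \in \GL(m, \mathbb{F}_p)$ by Definition~\ref{def:semi_canonial_tensor} --- that the sub-routine declared isometric, i.e.\ $\trans_{N, M}(\semic_1) = \semic_2$ for matrices $N, M$ of the block form of Equation~(\ref{equ:skew_tensor_semi-canonical_form_iso}), which are in particular invertible. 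Composing these transforms via Fact~\ref{fact:transform_product} yields $\mathbf{H} = \trans_{N_2^{-1} N N_1,\, M_2^{-1} M M_1}(\mathbf{G})$ with invertible transforming matrices, so $\mathbf{G}$ and $\mathbf{H}$ are isometric by Definition~\ref{def:tensor_iso}.

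For completeness, assume $\trans_{N_0, M_0}(\mathbf{G}) = \mathbf{H}$. By Lemma~\ref{lem:small_individualization_semi_canonical_tensor} there is a characterization tuple $T_{\mathbf{G}} = (L_\mathrm{skew}, L, \Lambda, C_\mathrm{skew}, C)$ for $\mathbf{G}$ with all five components of the sizes required by the algorithm and satisfying the stated kernel-dimension lower bounds (if the lemma's tuple has fewer rows one pads the individualization matrices with zero rows and $\Lambda$ with extra independent vectors; by Lemma~\ref{lem:kernel_size_larger_lambda} the kernel bounds then degrade by at most $O(n^{0.8})$, which is absorbed by taking the constants in $\ell_1, \dots, \ell_4$ large enough). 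Unfolding $\trans_{N_0, M_0}$ gives $\matrixspaceX_{\mathbf{H}} = N_0 \matrixspaceX_{\mathbf{G}} N_0^T$ and $\matrixspaceY_{\mathbf{H}, j} = \sum_{j'} N_0[j,j'] \, M_0 \matrixspaceY_{\mathbf{G}, j'} N_0^T$; combining this with the transform lemmas of Section~\ref{sec:low_rank} describing how $\ker(\cdot, \cdot)$, $\ker_\mathrm{skew}(\cdot, \cdot)$, complementary matrices and formatting matrices behave under invertible changes of basis, and with the identity $(L_\mathrm{skew} N_0^{-1})(N_0 A N_0^T)((N_0^T)^{-1} L_\mathrm{skew}^T) = L_\mathrm{skew} A L_\mathrm{skew}^T$ together with its $\matrixspaceY$-analogue (which show that individualization matrices, and hence semi-canonical bases, transform covariantly), I would check that the transported tuple $T_{\mathbf{H}} := (L_\mathrm{skew} N_0^{-1},\, L M_0^{-1},\, \{x N_0^T : x \in \Lambda\},\, C_\mathrm{skew} N_0^{-1},\, C M_0^{-1})$ is a valid characterization tuple for $\mathbf{H}$, satisfies the algorithm's size and kernel-dimension conditions (invertible transforms preserve the relevant dimensions), and has the same parameters $\alpha_\matrixspaceX, \beta_\matrixspaceX, \alpha_\matrixspaceY, \beta_\matrixspaceY$ as $T_{\mathbf{G}}$. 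Hence the algorithm's enumeration includes the pair $(T_{\mathbf{G}}, T_{\mathbf{H}})$.

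It then remains to show that the semi-canonical form $\semic_1 = \trans_{N_1, M_1}(\mathbf{G})$ of $\mathbf{G}$ with respect to $T_{\mathbf{G}}$ computed by the algorithm, and the semi-canonical form $\semic_2$ of $\mathbf{H}$ with respect to $T_{\mathbf{H}}$ it computes, are isometric as semi-canonical forms. Using the same transform lemmas I would verify that $(N_1 N_0^{-1},\, M_1 M_0^{-1})$ satisfies the two conditions of Definition~\ref{def:semi_canonial_tensor} for $\mathbf{H}$ and $T_{\mathbf{H}}$, so that $\trans_{N_1 N_0^{-1},\, M_1 M_0^{-1}}(\mathbf{H})$ is a semi-canonical form of $\mathbf{H}$ with respect to $T_{\mathbf{H}}$; by Fact~\ref{fact:transform_product} this tensor equals $\trans_{N_1, M_1}(\mathbf{G}) = \semic_1$. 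Thus $\semic_1$ and $\semic_2$ are both semi-canonical forms of $\mathbf{H}$ with respect to the \emph{same} characterization tuple, so by Lemma~\ref{lem:property_semi_canonical_form_tensor_iso} they have identical parameters and are related by a transform of the block form of Equation~(\ref{equ:transform}) --- exactly the definition of being isometric as semi-canonical forms (Definition~\ref{def:skew_tensor_semi-canonical_form_iso}). Therefore the sub-routine answers yes on this pair, and so does the overall algorithm.

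Finally, for the running time I would count the enumeration: $L_\mathrm{skew}$ contributes $p^{O(\ell_1 n)}$ choices, $L$ contributes $p^{O(\ell_2 m)}$, $\Lambda$ at most $p^{O(\ell_3 n)}$, and each complementary matrix at most $p^{O(\ell_4 n)}$ or $p^{O(\ell_4 m)}$; with $\ell_1 = O((m+n)\log p / n^{0.2})$, $\ell_2 = O(n\log p / n^{0.2})$, $\ell_3 = O(n^{0.4})$, $\ell_4 = O(n^{0.8})$, each of these exponents is $O((m+n) n^{0.8}\log p)$, so there are $p^{O((m+n) n^{0.8}\log p)}$ pairs of tuples. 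For each pair, checking the enumeration conditions and constructing the two semi-canonical forms costs $p^{O(n+m)}\cdot\mathrm{poly}(n, m, p)$ by Lemma~\ref{lem:algo_semi_canonical_form_tensor_construction}, and one call to the assumed sub-routine costs $T(p, n, m)$; since $p^{O(n+m)}$ is absorbed into $p^{O((m+n) n^{0.8}\log p)}$, the total is $p^{O((m+n) n^{0.8}\log p)}\cdot T(p, n, m)\cdot\mathrm{poly}(p, n, m)$, as claimed. I expect the main obstacle to be the detailed bookkeeping in the completeness argument: verifying that transporting a characterization tuple along the isometry $\trans_{N_0, M_0}$ again yields a valid characterization tuple with unchanged parameters, and that $(N_1 N_0^{-1}, M_1 M_0^{-1})$ meets the intricate semi-canonical-basis conditions of Definition~\ref{def:semi_canonial_tensor} --- conceptually routine given the Section~\ref{sec:low_rank} transform lemmas, but requiring care to track the distinct left and right roles of $N_0$ and $M_0$ across $\matrixspaceX_{\mathbf{G}}$, $\matrixspaceY_{\mathbf{G}}$ and $\matrixspaceZ_{\mathbf{G}}$.
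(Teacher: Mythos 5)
Your approach mirrors the paper's: enumerate characterization tuples via Lemma~\ref{lem:small_individualization_semi_canonical_tensor}, transport the tuple for $\mathbf{G}$ along the isometry $\trans_{N_0,M_0}$ to obtain one for $\mathbf{H}$, and invoke the sub-routine, with the same running-time count. You also usefully make explicit a step the paper's proof leaves terse --- that $\semic_1$ is itself a semi-canonical form of $\mathbf{H}$ with respect to the transported tuple, so Lemma~\ref{lem:property_semi_canonical_form_tensor_iso} yields the block-form transform required by Definition~\ref{def:skew_tensor_semi-canonical_form_iso}, which is exactly why the sub-routine must answer yes.
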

\begin{proof}
Let $\mathbf{G}$ and $\mathbf{H}$ be the two input tensors. 
We first prove the correctness of the algorithm. 
If $\mathbf{G}$ and $\mathbf{H}$ are isometric, then there are $M\in\mathrm{GL}(m, \mathbb{F}_p)$ and  $N\in\mathrm{GL}(n, \mathbb{F}_p)$ such that 
$\trans_{N, M}(\mathbf{G}) = \mathbf{H}$.
By Lemma~\ref{lem:small_individualization_semi_canonical_tensor}, there is a characterization tuple \[(L_{\mathbf{G}, \mathrm{skew}}, L_{\mathbf{G}}, \Lambda_{\mathbf{G}}, C_{\mathbf{G}, \mathrm{skew}}, C_{\mathbf{G}})\] for tensor $\mathbf{G}$ enumerated by the algorithm  satisfying the conditions of Lemma~\ref{lem:small_individualization_semi_canonical_tensor}.
Let  $ \Lambda_{\mathbf{H}} = \{x\cdot N^T : x \in \Lambda_\mathbf{G}\}$.
Then the tuple
\[(L_{\mathbf{G}, \mathrm{skew}}  \cdot N^{-1}, L_{\mathbf{G}} \cdot M^{-1}, \Lambda_{\mathbf{H}}, C_{\mathbf{G}, \mathrm{skew}} \cdot N^{-1}, C_{\mathbf{G}} \cdot M^{-1})\] is a characterization tuple for  $\mathbf{H}$, and is enumerated in the algorithm.
Hence, the algorithm returns yes after running the algorithm for skew-symmetric tensor semi-canonical form isometry testing on the semi-canonical forms of $\mathbf{G}$ and $\mathbf{H}$ with respect to the above two characterization tuples, respectively.

On the other hand, 
if the algorithm for skew-symmetric tensor semi-canonical form isometry testing returns yes on two semi-canonical forms, then there is a transform to make the two tensors equal.
Hence, the algorithm returns yes if and only 
if $\mathbf{G}$ and $\mathbf{H}$ are isometric. 

Now we bound the running time of the algorithm.
Since $L_{\mathbf{G}, \mathrm{skew}}$ and $L_{\mathbf{H}, \mathrm{skew}}$ are of dimension $O(\max\{m, n\} \log(p)/ n^{0.2}) \times n$, 
$L_{\mathbf{G}}$ and $L_{\mathbf{H}}$ are of dimension $O(n \log(p) / n^{0.2}) \times m$, 
$\Lambda_{\mathbf{G}}$ and  $\Lambda_{\mathbf{H}}$ contain at most $O(n^{0.4})$ vectors, 
each of $C_{\mathbf{G}, \mathrm{skew}}$  
$C_{\mathbf{H}, \mathrm{skew}}$, $C_{\mathbf{G}}$ and 
$C_{\mathbf{H}}$ contains at most $O(n^{0.8})$ rows, 
there are at most 
\[p^{O((m+n) n^{0.8}\log (p))} \cdot p^{O(n^{0.4} \cdot n)} \cdot p^{O(n^{0.8} (n + m))} = p^{O((n + m)n^{0.8}\log(p))}\] 
different pairs of $(L_{\mathbf{G}, \mathrm{skew}}, L_{\mathbf{G}}, \Lambda_{\mathbf{G}}, C_{\mathbf{G}, \mathrm{skew}}, C_{\mathbf{G}})$
and $(L_{\mathbf{H}, \mathrm{skew}}, L_{\mathbf{H}}, \Lambda_{\mathbf{H}}, C_{\mathbf{H}, \mathrm{skew}}, C_{\mathbf{H}})$ enumerated in step 2 of the algorithm.
By Lemma~\ref{lem:algo_semi_canonical_form_tensor_construction}, for each enumerated pair of 
\[(L_{\mathbf{G}, \mathrm{skew}}, L_{\mathbf{G}}, \Lambda_{\mathbf{G}}, C_{\mathbf{G}, \mathrm{skew}}, C_{\mathbf{G}}) \text{ and }
(L_{\mathbf{H}, \mathrm{skew}}, L_{\mathbf{H}}, \Lambda_{\mathbf{H}}, C_{\mathbf{H}, \mathrm{skew}}, C_{\mathbf{H}}),\] the running time of step 2(a) to step 2(c) is \[p^{O(n+m)}\cdot \mathrm{poly}(n, m, p) + T(p, n, m).\]
Then we obtain the desired overall running time. 
\end{proof}

\section{Isometry testing for skew-symmetric matrix space tensor semi-canonical forms}\label{sec:reduction}

In this section, we present an algorithm to determine whether the semi-canonical forms of two skew-symmetric tensors in $\mathbb{F}_p^{m \times n \times n}$ are isometric in $\mathrm{poly}(p, n, m)$ time for any prime $p > 2$.

Our approach constructs a skew matrix tuple for each semi-canonical form so that the isometry testing of tensor semi-canonical forms 
reduces to deciding whether the two skew-symmetric matrix tuples have a block diagonal isometry (Lemma~\ref{lem:restricted_tuple_isometry}). 
Making use of the properties of the skew matrix tuples constructed, 
we further show that the problem of deciding whether the two skew-symmetric tuples have a block diagonal isometry reduces to the skew-symmetric tuple isometry problem and the matrix tuple equivalence problem (Lemma~\ref{lem:test_semi_canonical_form}).




\subsection{Matrix tuple for skew-symmetric matrix space tensor semi-canonical forms}
We present our skew-symmetric matrix tuple construction for a skew-symmetric matrix space tensor semi-canonical form in this subsection.

Before formally defining our matrix tuple, we first give a high level overview of our construction.
Given a semi-canonical form $\semic(\mathbf{G})$ of a skew-symmetric matrix space tensor  $\mathbf{G}$ with parameters $\alpha_\matrixspaceX, \beta_\matrixspaceX, \alpha_\matrixspaceY$, and $\beta_\matrixspaceY$,
we let \[n'\coloneqq\alpha_\matrixspaceY+ \beta_\matrixspaceY \text{ and }m'\coloneqq\alpha_\matrixspaceX+ \beta_\matrixspaceX.\]

In the skew matrix tuple of $\semic(\mathbf{G})$, denoted as $\FF_{\semic(\mathbf{G})}$, all the matrices are in $\SS(3 + n + m', \mathbb{F}_p)$. 
The fourth row to the $(3+n)$-th row of matrices in $\FF_{\semic(\mathbf{G})}$ correspond to the rows of matrices in the matrix space $\matrixspaceX_{\mathbf{G}}$.
The last $m'$ rows of matrices in $\FF_{\semic(\mathbf{G})}$ correspond to the first $m'$ rows of matrices in $\matrixspaceY_\mathbf{G}$ (or equivalently $\matrixspaceZ_\mathbf{G}$). 
The first three rows of matrices in $\FF_{\semic(\mathbf{G})}$ are auxiliary rows used to fix the correspondence between the $(4 + \alpha_\matrixspaceY)$ to $(3 +n)$-th rows of matrices in $\FF_{\semic(\mathbf{G})}$ and the $(\alpha_\matrixspaceY + 1)$-th row to $n$-th row of matrices in $\matrixspaceX_{\mathbf{G}}$, as well as the
correspondence between the $(4 + n + \alpha_\matrixspaceX)$-th row to the $(3 + n + m')$-th row of matrices in $\FF_{\semic(\mathbf{G})}$ and the $(\alpha_\matrixspaceX + 1)$-th row to $m'$-th row of matrices in $\matrixspaceY_{\mathbf{G}}$. 
\begin{figure}[h]
\begin{center}
\includegraphics[width = \textwidth]{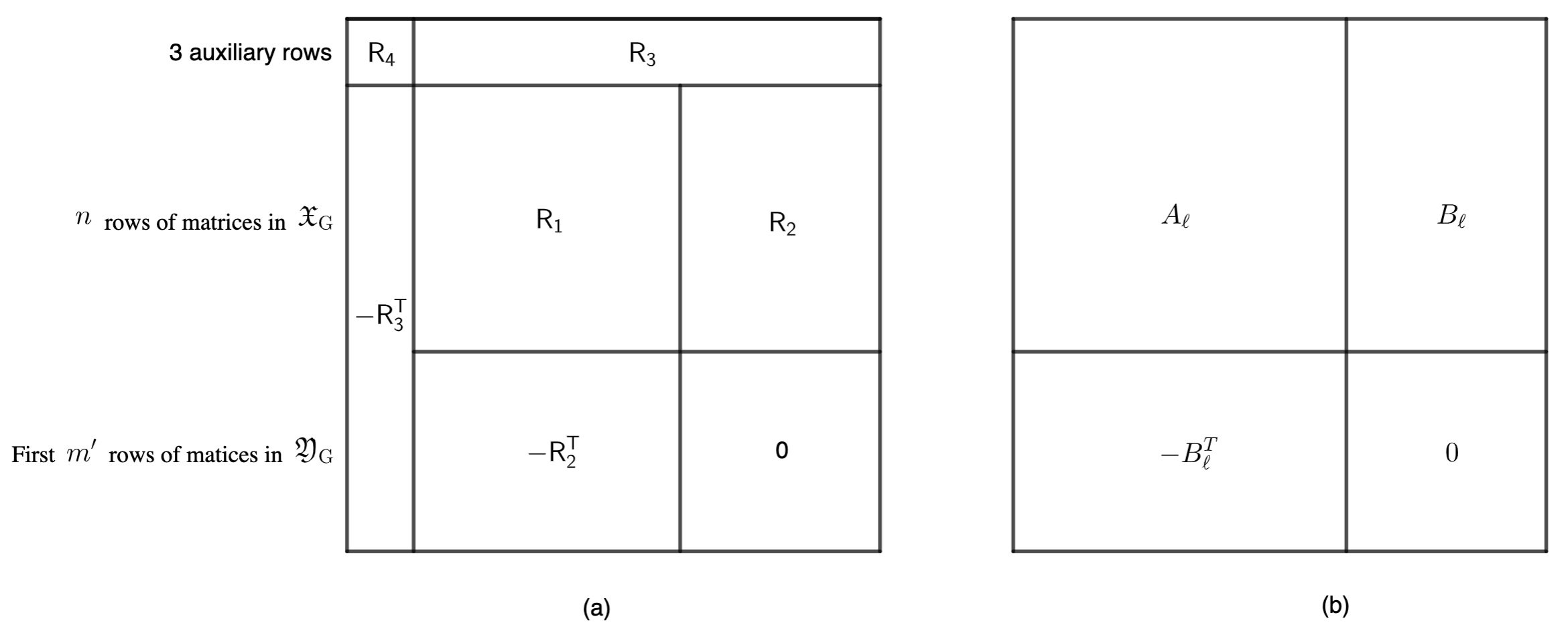}
\end{center}
\vspace{-.4cm}\caption{(a) The matrices in $\FF_{\semic(\mathbf{G})}$. (b) $A_\ell$ and $B_\ell$.}\label{fig:fig_semi_canonical_iso}
\end{figure}

See Figure~\ref{fig:fig_semi_canonical_iso}(a) for an illustration. 
The submatrices on $R_1$ for all the matrices in $\FF_{\semic(\mathbf{G})}$ are used to encode the matrices of the kernel (the second step in the construction of $\FF_{\semic(\mathbf{G})}$), as well as the skew-symmetric matrices in the surface for $\matrixspaceX_{\mathbf{G}}$, i.e., $\matrixspaceX_{\mathbf{G}, m' + 1}, \dots, \matrixspaceX_{\mathbf{G}, m}$ (the second, third, and fourth step in the construction of $\FF_{\semic(\mathbf{G})}$).
The submatrices on $R_2$ for all the matrices in $\FF_{\semic(\mathbf{G})}$ are used to encode the matrices in the surface of $\matrixspaceY_{\mathbf{G}}$, excluding the intersection with the surface of $\matrixspaceX_{\mathbf{G}}$ (the sixth and seventh step in the construction of $\FF_{\semic(\mathbf{G})}$).
Consequently, 
submatrices on the last $m'$ rows and columns from the $4$-th to the $(3 + n)$-th ($-R_2^T$ in Figure~\ref{fig:fig_semi_canonical_iso}(a)) for all the matrices in $\FF_{\semic(\mathbf{G})}$ 
encode the matrices in the surface of $\matrixspaceZ_{\mathbf{G}}$, excluding the intersection with the surface of $\matrixspaceX_{\mathbf{G}}$ . 
The submatrices on $R_3$ for all the matrices in $\FF_{\semic(\mathbf{G})}$ are used to fix the correspondence between some rows of matrices in $\FF_{\semic(\mathbf{G})}$ and some rows in the matrices of $\matrixspaceX_\mathbf{G}$ and $\matrixspaceY_\mathbf{G}$ (the fifth and eighth steps in the construction of $\FF_{\semic(\mathbf{G})}$). 
The submatrices on $R_4$ for all the matrices in $\FF_{\semic(\mathbf{G})}$ are used to fix the first three rows (the first step in the construction of $\FF_{\semic(\mathbf{G})}$). 
The submatrices on the last $m'$ rows and the last $m'$ columns for all the matrices in $\FF_{\semic(\mathbf{G})}$ are always zero matrices. 

The construction of the skew-symmetric matrix tuple for $\semic(\mathbf{G})$ is defined in Figure~\ref{fig:ff_construction}. The matrix tuple contains $t$ matrices 
for some $t = \mathrm{poly}(n, m)$. 

\begin{figure}[h!]
\begin{framed} 
\noindent{} \textbf{$\FF_{\semic(\mathbf{G})} = (F_1, \dots, F_t)$ Construction} 

\vspace{.2cm}\noindent (For each matrix, the undefined entries are zeros.)


\begin{enumerate}
\item Let $t_1 = 3$. $F_1(1, 2) = 1, F_1(2, 1) = -1$, $F_2(1, 3) = 1, F_2(3, 1) = -1$, $F_3(2, 3) = 1$,  and $F_3(3, 2) = -1$. 
\item Let $t_2 = t_1 + m - \alpha_\matrixspaceX$. For each $t_1 + 1 \leq \ell \leq t_2$, the submatrix $F_\ell[4, n' + 3;4, n'+3]$ equals $\matrixspaceX_{\semic(\mathbf{G}), \alpha_\matrixspaceX+ (\ell -t_1)}[1, n';1,n']$. 
\item Let $t_3 = t_2 + m - m'$. For each $t_2 + 1 \leq \ell \leq t_3$, the submatrix $F_\ell[n'+4, n +3; n' + 4, n + 3]$ equals $\matrixspaceX_{\semic(\mathbf{G}), m' + (\ell -t_2)}[n' + 1, n;n' + 1, n]$. 

\item Let $t_4 = t_3 + m - m'$. For each $t_3 + 1 \leq \ell \leq t_4$, the submatrix $F_\ell[4, n' + 3; n' + 4, n + 3]$ equals $\matrixspaceX_{\semic(\mathbf{G}), m' + (\ell -t_3)}[1, n';n' + 1, n]$, and the submatrix of $F_\ell[n' + 4, n + 3; 4, n' + 3]$ equals $\matrixspaceX_{\semic(\mathbf{G}), m' + (\ell -t_3)}[n' + 1, n;1, n']$. 


\item Let $t_5 = t_4 + 2(n - \alpha_\matrixspaceY)$. For each $1\leq \ell \leq (n - \alpha_\matrixspaceY)$, $F_{t_4 + 2\ell - 1}(1, 3 + \alpha_\matrixspaceY + \ell) = 1, F_{t_4 + 2\ell - 1}(3 + \alpha_\matrixspaceY + \ell, 1) = -1$, 
$F_{t_4 + 2\ell}(2, 3 + \alpha_\matrixspaceY + \ell) = 1$, and $F_{t_4 + 2\ell}(3 + \alpha_\matrixspaceY + \ell, 2) = -1$. 


\item Let $t_6 = t_5 + n - \alpha_\matrixspaceY$. For each $t_5 + 1 \leq \ell \leq t_6$, 
the submatrix $F_\ell[3 + n + 1, 3 + n + m';4, n' + 3]$ equals $\matrixspaceY_{\semic(\mathbf{G}), \ell - t_5 + \alpha_\matrixspaceY}[1, m'; 1, n']$, and
the submatrix $F_\ell[4, n' + 3; 3 + n + 1, 3 + n + m']$ equals $-(\matrixspaceY_{\semic(\mathbf{G}), \ell - t_5 + \alpha_\matrixspaceY}[1, m'; 1, n'])^T$.

\item Let $t_7 = t_6 + n - n'$. For each $t_6 + 1 \leq \ell \leq t_7$, 
$F_\ell[3 + n + 1, 3 + n + m'; n' + 4, n + 3]$ equals $\matrixspaceY_{\semic(\mathbf{G}), \ell - t_6 + n'}[1, m'; n' + 1, n]$. 
$F_\ell[n' + 4, n + 3; 3 + n + 1, 3 + n + m']$ equals $-(\matrixspaceY_{\semic(\mathbf{G}), \ell - t_6 + n'}[n'+1, n; 1, m'])^T$.

\item Let $t = t_7 + 2 \beta_{\matrixspaceX}$. For each $1\leq \ell \leq \beta_{\matrixspaceX}$, $F_{t_7 + 2\ell - 1}(1, 3 + n + \alpha_{\matrixspaceX} + \ell) = 1$,  $F_{t_7 + \ell}(3 + n + \alpha_{\matrixspaceX} + \ell, 1) = -1$, $F_{t_7 + 2\ell}(2, 3 + n + \alpha_{\matrixspaceX} + \ell) = 1$,  and $F_{t_7 + 2\ell}(3 + n + \alpha_{\matrixspaceX} + \ell, 2) = -1$. 

\end{enumerate}
\end{framed}
\caption{$\FF_{\semic(\mathbf{G})}$ construction}\label{fig:ff_construction}
\end{figure}

As illustrated in Figure~\ref{fig:fig_semi_canonical_iso}(b), 
for a matrix $F_\ell \in \FF_{\semic(\mathbf{G})}$, 
we use $A_\ell$ to denote the submatrix on the first $3 + n$ rows and the first $3 + n$ columns (i.e., $F_\ell[1, 3 + n; 1, 3+n]$),
and use $B_\ell$ to denote the submatrix on the first $3 + n$ rows and the last $m'$ columns (i.e., $F_\ell[1, 3 + n; 4 + n, 3 + n + m']$).
By the skew-symmetric condition, the submatrix on the last $m'$ rows and the first $3 + n$ columns is $-B_\ell^T$.

All the matrices in $\FF_{\semic(\mathbf{G})}$ have two types. 
If $B_\ell$ is a zero matrix, then 
$F_\ell$ is a type 1 matrix. 
If $A_\ell$ is a zero matrix, then 
$F_\ell$ is a type 2 matrix. 
By the construction of $\FF_{\semic(\mathbf{G})}$, $F_\ell$ is either a type 1 matrix or a type 2 matrix.




We prove some useful properties for our construction of $\FF_{\semic(\mathbf{G})}$.

\begin{lemma}\label{lem:property_F_single}
We have the following properties for $\FF_{\semic(\mathbf{G})} = (F_1, \dots, F_t)$:
\begin{enumerate}
\item $t$ is upper bounded by a polynomial of $n$ and $m$.
\item For $1 \leq \ell \leq t_5$, $F_\ell$ is a type 1 matrix, where $t_5$ is defined in the construction of $\FF_{\semic(\mathbf{G})}$.
\item For $t_5 + 1 \leq \ell \leq t$, $F_\ell$ is a type 2 matrix.
\item For every  non-zero row vector $v \in \mathbb{F}_p^{3 + n + m'}$ such that $v[k] = 0$ for $k = 1, 2, 3$, 
there is an $1 \leq \ell\leq t$ such that $v F_{\ell}$ is a non-zero vector, and the first three rows of $F_{\ell}$ are all zero.
\item The linear span of the rows of $B_\ell$ for all the $1 \leq \ell \leq t$ is a row vector space of dimension $m'$.
\end{enumerate}
\end{lemma}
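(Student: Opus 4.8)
The plan is to verify the five properties essentially by unpacking the construction of $\FF_{\semic(\mathbf{G})}$ in Figure~\ref{fig:ff_construction}, step by step. For property~1, I would simply observe that each of the eight construction blocks contributes a number of matrices bounded by one of $m$, $n$, $m-\alpha_\matrixspaceX$, $n-\alpha_\matrixspaceY$, or $\beta_\matrixspaceX$, all of which are $\mathrm{poly}(n,m)$; summing gives $t = \mathrm{poly}(n,m)$. For properties~2 and~3, I would go block by block and check where the nonzero entries sit: blocks $1$ through $5$ (indices $1$ through $t_5$) only place entries within the first $3+n$ rows and columns, i.e.\ inside $A_\ell$, with $B_\ell = 0$, so those are type~1; blocks $6$, $7$, $8$ (indices $t_5+1$ through $t$) only place entries either in $B_\ell$ (the rows/columns indexed $3+n+1,\dots,3+n+m'$ against $4,\dots,3+n$) or, for block 8, as $\pm 1$ entries linking row $1$ or $2$ to a column in the last $m'$ block — in all these cases $A_\ell = 0$, so those are type~2. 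The only slightly delicate point is block~5, which puts $\pm1$ entries at positions $(1, 3+\alpha_\matrixspaceY+\ell)$ and $(2, 3+\alpha_\matrixspaceY+\ell)$ — these lie strictly within the first $3+n$ coordinates, hence in $A_\ell$, confirming type~1.

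For property~4, let $v \in \mathbb{F}_p^{3+n+m'}$ be nonzero with $v[1]=v[2]=v[3]=0$. Write $v = (0,0,0, w, u)$ with $w \in \mathbb{F}_p^n$ (the coordinates $4,\dots,3+n$) and $u \in \mathbb{F}_p^{m'}$ (the last $m'$ coordinates). I would split into two cases. If $w \ne 0$: the coordinate block $4,\dots,3+n$ of $F_\ell$ for $\ell$ in blocks $2$--$4$ realizes, as $\ell$ ranges, the matrices $\matrixspaceX_{\semic(\mathbf{G}),i}$ restricted appropriately; since $\{\matrixspaceX_{\semic(\mathbf{G}),i}\}$ together with the kernel part spans $\matrixspaceX_\mathbf{G}$, and by Fact~\ref{fact:baer_correspondance_tensor} (applied to $\matrixspaceX_\mathbf{G}$, which is the skew-symmetric matrix space of a $p$-group) there is a matrix in $\matrixspaceX_\mathbf{G}$ on which $w$ acts nontrivially — I need to argue this nontriviality is witnessed by one of the $F_\ell$ whose first three rows vanish, which holds because the $F_\ell$ in blocks $2$--$4$ have zero first three rows by construction. (One must be a little careful that the $\matrixspaceX_{\semic(\mathbf{G}),i}$ for $i \le m'$, i.e.\ the kernel layers, also need to be represented; these are not directly among blocks $2$--$4$, but $w$ acting nontrivially on \emph{some} basis matrix of $\matrixspaceX_\mathbf{G}$ suffices once we know blocks $2$--$4$ plus the auxiliary encodings realize a spanning set — this is the point to handle carefully, possibly invoking that $\semic(\mathbf{G})$ is a transform of $\mathbf{G}$ and hence $\matrixspaceX_{\semic(\mathbf{G})}$ is still the skew-symmetric space of a $p$-group.) If $w = 0$ but $u \ne 0$: then I use block~8, whose matrices $F_{t_7 + 2\ell}$ have a $\pm1$ at $(3+n+\alpha_\matrixspaceX+\ell, 2)$, so $v F_{t_7+2\ell}$ has a nonzero entry in coordinate $2$ whenever $u[\alpha_\matrixspaceX + \ell] \ne 0$; combined with blocks $6$--$7$ for the coordinates $u[1],\dots,u[\alpha_\matrixspaceX]$ via the $\matrixspaceY_{\semic(\mathbf{G}),j}$ encodings, and using Fact~\ref{fact:baer_correspondance_tensor}-type nonvanishing for $\matrixspaceY$ — but here I must be cautious, since $\matrixspaceY_\mathbf{G}$ is \emph{not} the matrix space of a $p$-group, so I would instead rely on block~8 (and block~5 symmetrically) to guarantee every last-$m'$ coordinate is separately detectable. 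The cleanest route: blocks $5$ and $8$ alone already ensure that each coordinate $4,\dots,3+n$ and each coordinate $3+n+1,\dots,3+n+m'$ is individually witnessed by some $F_\ell$ with zero first three rows — but block 5's $F_\ell$ are type 1 with entries only in rows $1,2$, so $v F_\ell$ picks up $w$-coordinates. I expect the honest proof here to combine: block~5 handles all of $w$ (each $w[i]$, $i > \alpha_\matrixspaceY$, detected), blocks $2$--$4$ handle $w$ on the first $\alpha_\matrixspaceY$ coordinates through $\matrixspaceX$-structure, and block~8 handles all of $u$.

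For property~5, the linear span of the rows of $B_\ell$ over all $\ell$: only type~2 matrices (blocks $6$, $7$, $8$) have nonzero $B_\ell$. Blocks $6$ and $7$ put the rows of $\matrixspaceY_{\semic(\mathbf{G}),j}$ (for $j$ ranging so that $\matrixspaceY_{\semic(\mathbf{G}),j}[1,m';\cdot]$ covers $\matrixspaceY_\mathbf{G}$ restricted to its first $m'$ rows) into the $B_\ell$ block, and block~8 puts standard basis vectors $e_{\alpha_\matrixspaceX+\ell}$ (of $\mathbb{F}_p^{m'}$) as rows of $B_\ell$. The span therefore contains all of $\mathbb{F}_p^{m'}$ in the relevant row slots — more precisely, block~8 alone contributes the standard basis vectors $e_{\alpha_\matrixspaceX+1},\dots,e_{m'}$ as rows (from the $(3+n+\alpha_\matrixspaceX+\ell)$-column against rows $1,2$), and blocks $6$--$7$ together with the individualization/semi-canonical structure contribute the remaining $\alpha_\matrixspaceX$ independent directions; since $B_\ell$ has exactly $m'$ columns, the span is exactly $m'$-dimensional. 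I would make this precise by exhibiting $m'$ specific rows among the $B_\ell$ that are linearly independent (the $\beta_\matrixspaceX$ standard basis vectors from block~8 plus $\alpha_\matrixspaceX$ rows from blocks $6$--$7$ chosen via Fact~\ref{fact:baer_correspondance_tensor} applied to the first-$m'$-row restriction of $\matrixspaceY_\mathbf{G}$, whose row space is full because $\matrixspaceY_\mathbf{G}$ has dimension $n$ and is obtained from a nondegenerate tensor).

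The main obstacle I anticipate is property~4: making rigorous that the nonvanishing guaranteed abstractly by Fact~\ref{fact:baer_correspondance_tensor} (for $\matrixspaceX_\mathbf{G}$) and by the explicit standard-basis encodings (for the $m'$-block) is actually witnessed by one of the $F_\ell$ \emph{whose first three rows are zero} — one has to track which construction blocks produce zero first rows and confirm they collectively cover all of $w$ and all of $u$. Everything else is bookkeeping over the eight blocks of Figure~\ref{fig:ff_construction}.
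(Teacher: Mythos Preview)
Your bookkeeping for properties~1--3 is fine and matches the paper. Property~5 also works, though the paper argues it more cleanly by contradiction: if the row span were $< m'$, some nonzero $t \in \mathbb{F}_p^{m'}$ would satisfy $B_\ell t^T = 0$ for all $\ell$, which unwinds (via blocks~6--8) to $\sum_{i=1}^{m'} t_i \matrixspaceX_{\semic(\mathbf{G}),i} = 0$, contradicting Fact~\ref{fact:basic_tensor}.

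The genuine gap is in property~4. Your proposed witnesses from blocks~5 and~8 do \emph{not} have vanishing first three rows: by construction $F_{t_4+2\ell-1}[1,\cdot]$ and $F_{t_7+2\ell-1}[1,\cdot]$ each contain a $\pm 1$, so those $F_\ell$ are disqualified by the very requirement you flagged as the ``main obstacle.'' The matrices you actually need are those in blocks~2--4 and~6--7, which \emph{do} have zero first three rows. For the last-$m'$ part $u$, blocks~6--7 (not block~8) are the right witnesses: the bottom-left blocks of those $F_\ell$ encode $\matrixspaceY_{\semic(\mathbf{G}),j}[1,m';\cdot]$, so the vectors $(0,\dots,0,u)F_\ell$ collectively read off the entries of $\sum_{i=1}^{m'} u[i]\,\matrixspaceX_{\semic(\mathbf{G}),i}$, which is nonzero by the linear-independence part of Fact~\ref{fact:basic_tensor} (not Fact~\ref{fact:baer_correspondance_tensor}; you don't need nondegeneracy of a bilinear form here, just that $\matrixspaceX_{\semic(\mathbf{G}),1},\dots,\matrixspaceX_{\semic(\mathbf{G}),m}$ are independent). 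Symmetrically, for the middle-$n$ part $w$, the paper views things through $\matrixspaceY$ rather than $\matrixspaceX$: the vectors $(0,0,0,w,0)F_\ell$ over $\ell$ in blocks~2--4 and~6--7 encode the entries of $\sum_{j=1}^{n} w[j]\,\matrixspaceY_{\semic(\mathbf{G}),j}$, nonzero by Fact~\ref{fact:basic_tensor}. This sidesteps entirely your worry about ``the $\matrixspaceX_{\semic(\mathbf{G}),i}$ for $i \le m'$ not being directly among blocks~2--4.''

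Finally, your case split ($w \ne 0$ versus $w = 0$, $u \ne 0$) leaves open cancellation when both $w$ and $u$ are nonzero. The paper closes this with a one-line non-cancellation argument: for any $\ell$ and any coordinate $k$, at most one of $((0,0,0,w,0)F_\ell)[k]$ and $((0,\dots,0,u)F_\ell)[k]$ is nonzero, because type~1 matrices kill the $u$-part while in type~2 matrices the two contributions land in disjoint coordinate blocks (the first $3+n$ versus the last $m'$).
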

\begin{proof}
The first three properties are by the construction of $\FF_{\semic(\mathbf{G})}$.
For the fourth property, let $x \in \mathbb{F}_p^{3 + n + m'}$ be a non-zero row vector such that 
$x[k] = 0$ for all the $k = 1, 2, 3$, and $k > 3 + n$.
By the construction of $\FF_{\semic(\mathbf{G})}$, the row vectors of $x F_{\ell}$ for all the $4 \leq \ell \leq t_4$ and $t_5 + 1 \leq \ell \leq t_7$ encode the matrix $\sum_{j = 1}^{n}x[j + 3]\cdot \matrixspaceY_{\mathbf{G}, j}$.
By Fact~\ref{fact:basic_tensor},
the matrix is a non-zero matrix. 
Hence, $x F_{\ell}$ is a non-zero vector for some $4 \leq \ell \leq t_4$ or $t_5 + 1 \leq \ell \leq t_7$.

Similarly, let $y$ be a row vector such that 
$y[k] = 0$ for all the $k \leq 3 + n$.
The row vectors of $y F_{\ell}$ for all the $t_5 + 1 \leq \ell \leq t_7$ encode the matrix $\sum_{i = 1}^{m'} y[i + 3 + n] \cdot \matrixspaceX_{\mathbf{G}, i}$.
By Fact~\ref{fact:basic_tensor}, $\sum_{i = 1}^{m'} y[i + 3 + n] \matrixspaceX_{\mathbf{G}, i}$ is a non-zero matrix,  
$y F_{\ell}$ is a non-zero vector for some $t_5 + 1 \leq \ell \leq t_7$.

For arbitrary row vectors $x$ and $y$ defined above, 
by the second and third properties,
at most one of $(xF_\ell)[k]$ and $(yF_\ell)[k]$ is non-zero for any $1 \leq \ell \leq t$ and $1 \leq k \leq 3 + n +m'$.
Hence, for any row vector $v \in \mathbb{F}_p^{3 + n + m'}$ such that $v[k] = 0$ for $k = 1, 2, 3$, 
there is a non-zero vector $v F_\ell$ for some $4 \leq \ell \leq t_4$ or $t_5 + 1 \leq \ell \leq t_7$. 
Since the first three rows of $F_\ell$ are zero rows  for all the $4 \leq \ell \leq t_4$ and $t_5 + 1 \leq \ell \leq t_7$, the fourth property holds.

For the last property, if linear span by the rows of $B_\ell$ for all the $1 \leq \ell \leq t$ is of dimension smaller than $m'$, then there is a non-zero vector 
$(t_1, \dots, t_{m'}) \in \mathbb{F}_p^{m'}$ such that 
$\sum_{i=1}^{m'} t_i \cdot \matrixspaceX_{\semic(\mathbf{G}), i}$ is a zero matrix, which contradicts to Fact~\ref{fact:basic_tensor}. 
Hence, the last property holds.
\end{proof}

\subsection{Reduction to the restricted skew-symmetric matrix tuple isometry}

In this section, we show that two semi-canonical forms $\semic(\mathbf{G})$ and $\semic(\mathbf{H})$ are isometric if and only if there is a block diagonal matrix $S$ 
such that $S\FF_{\semic(\mathbf{G})} S^T = \FF_{\semic(\mathbf{H})}$. 

\begin{lemma}\label{lem:restricted_tuple_isometry}
Let $\semic(\mathbf{G})$ and $\semic(\mathbf{H})$ be the semi-canonical forms of two tensors with the same parameters $\alpha_\matrixspaceX, \beta_\matrixspaceX, \alpha_\matrixspaceY$, and $\beta_\matrixspaceY$.
Then $\semic(\mathbf{G})$ and $\semic(\mathbf{H})$  are isometric if and only if 
there is a matrix $S$ of form 
\begin{equation}\label{equ:restricted_tuple_isometry}S = \left(\begin{array}{cc} Q & 0 \\ 0 & W\end{array}\right)\end{equation} 
such that $S \cdot \FF_{\semic(\mathbf{G})} \cdot S^T = \FF_{\semic(\mathbf{H})}$, where $Q$ is a $(3+n) \times (3+n)$ matrix and $W$ is an $m' \times m'$ matrix.
\end{lemma}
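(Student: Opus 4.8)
The plan is to prove both directions separately. The ``if'' direction is the easy one: suppose there is a block-diagonal $S = \left(\begin{smallmatrix} Q & 0 \\ 0 & W\end{smallmatrix}\right)$ with $S \FF_{\semic(\mathbf{G})} S^T = \FF_{\semic(\mathbf{H})}$. I would first argue that the auxiliary structure forces $Q$ itself to be block-diagonal of the form $\left(\begin{smallmatrix} I_3 & 0 \\ 0 & N\end{smallmatrix}\right)$ with $N \in \mathrm{GL}(n,\mathbb{F}_p)$: the first three rows/columns are pinned down because the matrices $F_1,F_2,F_3$ of step 1 form (together with steps 5 and 8) a rigid ``coordinate frame'' — $F_1,F_2,F_3$ have a $1$ in positions $(1,2),(1,3),(2,3)$ only, so any isometry preserving the whole tuple must send $\langle e_1,e_2,e_3\rangle$ to itself and act on it in a way compatible with these three alternating forms; since the three forms $e_1\wedge e_2, e_1\wedge e_3, e_2\wedge e_3$ span all alternating forms on a $3$-space, the only possibility is $Q[1,3;1,3]=I_3$ (here I use that the surface matrices via $R_4$ and steps 5,8 also touch these rows). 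Then, reading off the submatrices on $R_1$ (steps 2--4), the action of $N$ on coordinates $4,\dots,3+n$ recovers a transform $\matrixspaceX_{\semic(\mathbf{G})} \mapsto \matrixspaceX_{\semic(\mathbf{H})}$; reading off $R_2$ (steps 6--7) and using the block $W$ on coordinates $4+n,\dots,3+n+m'$ recovers the matching action on $\matrixspaceY$. Finally I check that $N$ and $M := \left(\begin{smallmatrix} W & 0 \\ 0 & I_{m-m'}\end{smallmatrix}\right)$ have exactly the shape required by Equation~(\ref{equ:skew_tensor_semi-canonical_form_iso}): the constraints in steps 5 and 8 (the auxiliary $1$'s attached to rows $1$ and $2$ in coordinates $3+\alpha_\matrixspaceY+\ell$ and $3+n+\alpha_\matrixspaceX+\ell$) force $N$ to fix those rows up to adding combinations of the first $\alpha_\matrixspaceY$ rows, and $W$ to fix the corresponding rows similarly — which is precisely the block-lower-triangular-plus-identity shape of $M,N$ in Definition~\ref{def:skew_tensor_semi-canonical_form_iso}. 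Hence $\trans_{N,M}(\semic(\mathbf{G})) = \semic(\mathbf{H})$, so the two semi-canonical forms are isometric.

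For the ``only if'' direction, suppose $\trans_{N,M}(\semic(\mathbf{G})) = \semic(\mathbf{H})$ with $M,N$ of the form in Equation~(\ref{equ:skew_tensor_semi-canonical_form_iso}). I would explicitly construct $S$: set $Q = \left(\begin{smallmatrix} I_3 & 0 \\ 0 & N\end{smallmatrix}\right)$ and $W = M[1,m';1,m']$ (the top-left $m'\times m'$ block of $M$, which by the shape of $M$ is $\left(\begin{smallmatrix} X & 0 \\ Y & I_{\beta_\matrixspaceX}\end{smallmatrix}\right) \in \mathrm{GL}(m',\mathbb{F}_p)$), and $S = \left(\begin{smallmatrix} Q & 0 \\ 0 & W\end{smallmatrix}\right)$. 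Then I verify $S \FF_{\semic(\mathbf{G})} S^T = \FF_{\semic(\mathbf{H})}$ block by block, going through the eight construction steps of Figure~\ref{fig:ff_construction}. Steps 1 and the auxiliary rows (steps 5, 8) are preserved because $S$ fixes $e_1,e_2,e_3$ and, on coordinates $3+\alpha_\matrixspaceY+1,\dots,3+n$ and $3+n+\alpha_\matrixspaceX+1,\dots,3+n+m'$, the transform $N$ (resp.\ $W$) only adds earlier-indexed rows, which cancels against the $\pm 1$ auxiliary entries exactly because those auxiliary entries are attached to the distinguished rows $1$ and $2$ — this is the point of putting those rows there. Steps 2--4 (the $R_1$ region encoding $\matrixspaceX_{\semic(\mathbf{G})}$): conjugating by $N$ in the $(4,\dots,3+n)$ block realizes $X \mapsto N X N^T$, and since $M$ acts trivially on the last $m-m'$ coordinates (identity block), summing over $M$ in the tensor transform matches the permutation/combination of the $F_\ell$'s used to encode $\matrixspaceX_{\semic(\mathbf{G}),i}$ for $i > m'$ versus $i \le m'$. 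Steps 6--7 (the $R_2$/$-R_2^T$ region encoding $\matrixspaceY$): here the left multiplication by $W$ on the last $m'$ coordinates and the right multiplication by $N^T$ on the $(4,\dots,3+n)$ coordinates realize $Y \mapsto W Y N^T = M[1,m';1,m'] \cdot Y \cdot N^T$, which is exactly the $\matrixspaceY$-part of $\trans_{N,M}$ restricted to its first $m'$ rows (the rows beyond $m'$ don't appear in $\FF$ at all). I would spell out that the surface matrices in $\matrixspaceX_\mathbf{G}$, $\matrixspaceY_\mathbf{G}$, $\matrixspaceZ_\mathbf{G}$ are exactly the data encoded, and that the kernel part (which by Lemma~\ref{lem:property_semi_canonical_form_tensor_iso} is identical for $\semic(\mathbf{G})$ and $\semic(\mathbf{H})$) is encoded consistently in the $R_1$ region via the same $F_\ell$'s.

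The main obstacle I expect is the bookkeeping in the ``if'' direction: showing that an \emph{arbitrary} block-diagonal isometry $S$ must have $Q$ with the $I_3$ top-left block and must respect the partial-fixing constraints, rather than just some isometry of the ambient skew-symmetric tuple. The key lever is property 4 of Lemma~\ref{lem:property_F_single} (for every nonzero $v$ with $v[1]=v[2]=v[3]=0$ some $vF_\ell$ is nonzero and hits rows outside the first three) together with property 5 (the $B_\ell$'s span a full $m'$-dimensional space), which together rigidify the coordinate frame and prevent the bottom-left block of $S$ from being nonzero or the first three coordinates from mixing with the rest. I would also need to invoke Fact~\ref{fact:basic_tensor} / Fact~\ref{fact:baer_correspondance_tensor} to guarantee that no nonzero row vector of $\matrixspaceX_\mathbf{G}$ or $\matrixspaceY_\mathbf{G}$ is annihilated by the whole space, so that the encoding map from surface matrices to $\FF$ is injective and the read-off of $N$, $W$ from $S$ is well-defined. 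The translation between ``$S$ block diagonal'' and ``$M,N$ of the prescribed shape'' should then be a direct comparison of the two descriptions, step by step through Figure~\ref{fig:ff_construction}.
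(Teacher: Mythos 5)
Your ``only if'' direction (from isometric semi-canonical forms to a block-diagonal $S$) matches the paper's argument: take $S' = \bigl(\begin{smallmatrix}I_3 & 0 & 0\\ 0 & N & 0\\ 0 & 0 & M'\end{smallmatrix}\bigr)$ with $M'$ the top-left $m'\times m'$ block of $M$, and check the eight construction steps of Figure~\ref{fig:ff_construction}.

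The ``if'' direction, however, has a real gap. You claim that the auxiliary rows force $Q$ to have the shape $\bigl(\begin{smallmatrix}I_3 & 0\\ 0 & N\end{smallmatrix}\bigr)$ with $N$ satisfying Equation~(\ref{equ:skew_tensor_semi-canonical_form_iso}), but this is not what the construction enforces. What the auxiliary rows actually give (paper's Lemma~\ref{lem:property_restricted_isometry}, parts 3 and 6) is only that $Q[4,3+n;4,3+n]$ has the form
\[
\gamma\cdot\left(\begin{array}{ccc} A & 0 & 0\\ B & I_{\beta_\matrixspaceY} & 0\\ C & 0 & I_{n-n'} \end{array}\right)
\]
for some scalar $\gamma$ with $\gamma^2 = 1$ and some extra block $C \in M(n-n', \alpha_\matrixspaceY, \mathbb{F}_p)$. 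Two problems follow. First, $\gamma$ could be $-1$, not $1$; this one is minor, since $\gamma^2=1$ lets you replace $S$ by $\gamma S$. Second, and more seriously, the lower-left block $C$ is \emph{not} allowed by Equation~(\ref{equ:skew_tensor_semi-canonical_form_iso}): that equation requires rows $n'+1,\dots,n$ of $N$ to be exactly standard basis vectors, whereas the constraints from steps 5 and 8 only pin the entries in the coordinate positions $\Phi$ and permit arbitrary entries in columns $1,\dots,\alpha_\matrixspaceY$. Your phrase ``force $N$ to fix those rows up to adding combinations of the first $\alpha_\matrixspaceY$ rows'' is exactly the wrong constraint for those rows. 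The heart of the paper's argument is precisely the handling of this $C$ block: it establishes algebraic orthogonality conditions on $C$ (parts 6(a), 6(b), 6(c) of Lemma~\ref{lem:property_restricted_isometry}) and then uses them to show that replacing $N' = \bigl(\begin{smallmatrix}A & 0 & 0\\ B & I_{\beta_\matrixspaceY} & 0\\ C & 0 & I_{n-n'}\end{smallmatrix}\bigr)$ with the legal $N = \bigl(\begin{smallmatrix}A & 0 & 0\\ B & I_{\beta_\matrixspaceY} & 0\\ 0 & 0 & I_{n-n'}\end{smallmatrix}\bigr)$ still satisfies $\trans_{N,M}(\semic(\mathbf{G})) = \semic(\mathbf{H})$, via the cancellation
\[
N' \matrixspaceX_{\semic(\mathbf{G}),i} (N-N')^T + (N-N') \matrixspaceX_{\semic(\mathbf{G}),i} N'^T + (N-N') \matrixspaceX_{\semic(\mathbf{G}),i} (N-N')^T = 0.
\]
Your proposal never recognizes the $C$ block exists and therefore skips the entire nontrivial step of eliminating it. Without that step, the deduction ``block-diagonal $S$ implies semi-canonical forms isometric'' does not go through.
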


We first prove some useful properties for the case that there is a matrix $S$ such that $S \cdot \FF_{\semic(\mathbf{G})} \cdot S^T = \FF_{\semic(\mathbf{H})}$.

\begin{lemma}\label{lem:property_restricted_isometry}
Let $\semic(\mathbf{G})$ and $\semic(\mathbf{H})$ be semi-canonical forms for two tensors with the same parameters $\alpha_\matrixspaceX, \beta_\matrixspaceX, \alpha_\matrixspaceY$, and $\beta_\matrixspaceY$. 
Let $\FF_{\semic(\mathbf{G})} = (F_1, \dots, F_t)$ and $\FF_{\semic(\mathbf{H})} = (F_1', \dots, F_t')$ be the skew-symmetric matrix tuples for $\semic(\mathbf{G})$ and $\semic(\mathbf{H})$ respectively. 
If there is a matrix $S \in M(3 + n + m', \mathbb{F}_p)$ such that $S \cdot \FF_{\semic(\mathbf{G})} \cdot S^T = \FF_{\semic(\mathbf{H})}$, then 
$S$ satisfies the following properties:
\begin{enumerate}
\item $S$ is a full rank matrix.
\item For any $1 \leq \ell \leq t$, if $F_\ell$ is a type 1 matrix, then $F_\ell'$ is a type 1 matrix. 
If $F_\ell$ is a type 2 matrix, then $F_\ell'$ is a type 2 matrix. 
\item 
Let $\Phi$ be the set 
\[\{1, 2, 3\} \cup \{3 + \alpha_\matrixspaceY + 1, \dots, 3 + n\} \cup \{4 + n + \alpha_\matrixspaceX, \dots, 3 + n + m'\}.\]

There is a $\gamma \in \mathbb{F}_p$ satisfying $\gamma^2 = 1$ such that the following conditions hold:
\begin{enumerate}
\item $S[k, k] = \gamma$
for any $k \in \Phi$.
\item $S[i, k ] = 0$ for any $k \in \Phi$ and $i \neq k$.
\end{enumerate}
\item $S[1, 3;4, 3 + n + m']$ is a zero matrix.
\item For each row $v$ of $S[3 + n + 1, 3 + n + m';1, 3 + n]$, 
$v A_\ell$ is a zero row vector for all the $1 \leq \ell \leq t$. 

\item $S[4, 3 + n, 4, 3+n]$ is of form 
\begin{equation}\label{equ:s_form_pre}S[4, 3 + n, 4, 3+n] =  \left(\begin{array}{ccc} 
A & 0 & 0 \\ 
B & \gamma\cdot I_{\beta_{\matrixspaceY}}& 0 \\ 
C & 0 & \gamma \cdot I_{n - n'}\end{array}\right)\end{equation}
for some $A \in M(\alpha_{\matrixspaceY}, \alpha_{\matrixspaceY}, \mathbb{F}_p)$, 
$B \in M(\beta_{\matrixspaceY}, \alpha_{\matrixspaceY}, \mathbb{F}_p)$, 
and $C \in M(n - n', \alpha_{\matrixspaceY}, \mathbb{F}_p)$ satisfying the following conditions
\begin{enumerate}
\item  $C \cdot (\matrixspaceX_{\semic(\mathbf{G}), i}[1, \alpha_{\matrixspaceY}; 1, n'])$ is a zero matrix for each $1 \leq i \leq m$.
\item  $C \cdot (\matrixspaceX_{\semic(\mathbf{G}), i}[1, \alpha_{\matrixspaceY}; n' + 1, n])$ is a zero matrix for each $1 \leq i \leq m'$.
\item  $C \cdot (\matrixspaceX_{\semic(\mathbf{G}), i}[1, \alpha_{\matrixspaceY}; n' + 1, n]) + 
(\matrixspaceX_{\semic(\mathbf{G}), i}[1, \alpha_{\matrixspaceY}; n' + 1, n])^T \cdot C^T$ is a zero matrix for each $m'+1 \leq i \leq m$.
\end{enumerate}
\end{enumerate}
\end{lemma}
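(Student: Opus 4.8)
The plan is to establish the six items roughly in the listed order, exploiting two features of the construction in Figure~\ref{fig:ff_construction}. First, the matrices produced in steps~1, 5 and~8 do not depend on the tensor, so $F_\ell = F_\ell'$ for those indices; writing $E_{ij}$ for the elementary skew matrix with $+1$ at $(i,j)$ and $-1$ at $(j,i)$, these are exactly $E_{12},E_{13},E_{23}$ and $E_{1,k},E_{2,k}$ for $k$ ranging over $\Phi\setminus\{1,2,3\}=\{3+\alpha_\matrixspaceY+1,\dots,3+n\}\cup\{4+n+\alpha_\matrixspaceX,\dots,3+n+m'\}$, and the hypothesis gives $SE_{ij}S^\top=E_{ij}$ for each of them. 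Second, by Lemma~\ref{lem:property_F_single}(4) together with $F_1,F_2,F_3$, the matrices $F_1,\dots,F_t$ have trivial common left kernel, hence (by skew-symmetry) trivial common right kernel, and the same holds for $\FF_{\semic(\mathbf{H})}$. Item~1 then follows: if $S^\top v=0$ then $F_\ell' v=SF_\ell(S^\top v)=0$ for all $\ell$, forcing $v=0$. Item~2 follows too: by Lemma~\ref{lem:property_F_single}(2)--(3), $F_\ell$ (resp.\ $F_\ell'$) is type~1 iff $\ell\le t_5$ and type~2 otherwise, and since $\semic(\mathbf{G})$ and $\semic(\mathbf{H})$ have the same parameters the threshold $t_5$ agrees for the two tuples.

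For items~3 and~4, write $s_k$ for the $k$-th column of $S$ and $\epsilon_k$ for the standard column vectors. The relation $SE_{ij}S^\top=E_{ij}$ reads $s_is_j^\top-s_js_i^\top=\epsilon_i\epsilon_j^\top-\epsilon_j\epsilon_i^\top$, whose two sides have column spaces $\langle s_i,s_j\rangle$ and $\langle\epsilon_i,\epsilon_j\rangle$ (using that $S$ is invertible, so $s_i,s_j$ are independent). Intersecting the relations for the pairs $(1,2)$ and $(1,3)$ forces $s_1=\gamma\epsilon_1$ for some nonzero $\gamma$; likewise $s_2,s_3$ are scalar multiples of $\epsilon_2,\epsilon_3$, and substituting back into the three relations among $\{1,2,3\}$ gives a common scalar $\gamma$ with $\gamma^2=1$; the relations for $(1,k),(2,k)$ then give $s_k=\gamma\epsilon_k$ for every $k\in\Phi$, which is item~3. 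For item~4, the column structure already makes each of the first three rows of $S$ equal $\gamma\epsilon_i+\tau_i$ with $\tau_i$ supported on the ``content'' coordinates $\{4,\dots,3+\alpha_\matrixspaceY\}\cup\{3+n+1,\dots,3+n+\alpha_\matrixspaceX\}$. For every content matrix $F_\ell$ (steps~2--4 and 6--7), whose first three rows \emph{and} columns vanish, the right-hand side $F_\ell'$ also has vanishing first three rows, so $\tau_iF_\ell=0$; since $\tau_i$ also vanishes on coordinates $1,2,3$, Lemma~\ref{lem:property_F_single}(4) --- whose proof exhibits such an $\ell$ among the content indices --- forces $\tau_i=0$. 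Hence rows $1,2,3$ of $S$ equal $\gamma\epsilon_1,\gamma\epsilon_2,\gamma\epsilon_3$, which is item~4; in particular, splitting coordinates at $3+n$, we get $S=\left(\begin{smallmatrix}Q_{11}&Q_{12}\\Q_{21}&Q_{22}\end{smallmatrix}\right)$ with $Q_{11}=\left(\begin{smallmatrix}\gamma I_3&0\\0&T\end{smallmatrix}\right)$, $T=S[4,3+n;4,3+n]$, with $Q_{12}$ having zero first three rows and $Q_{21}$ zero first three columns.

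For item~5, a type~1 matrix has the form $F_\ell=\left(\begin{smallmatrix}A_\ell&0\\0&0\end{smallmatrix}\right)$ and, by item~2, so does $F_\ell'$; comparing the lower-left and lower-right blocks of $SF_\ell S^\top=F_\ell'$ gives $Q_{21}A_\ell\,(Q_{11}^\top\mid Q_{21}^\top)=0$, and $(Q_{11}^\top\mid Q_{21}^\top)$ consists of the first $3+n$ rows of the invertible matrix $S^\top$, hence has full row rank, so $Q_{21}A_\ell=0$; since $A_\ell=0$ for type~2 matrices, every row of $Q_{21}=S[3+n+1,3+n+m';1,3+n]$ annihilates every $A_\ell$. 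The block form of $T=S[4,3+n;4,3+n]$ asserted in item~6 is a restatement of item~3: decomposing its rows and columns into blocks of sizes $\alpha_\matrixspaceY,\beta_\matrixspaceY,n-n'$, the last two column-blocks are $\gamma$ times standard basis vectors (those coordinates lie in $\Phi$), so $T=\left(\begin{smallmatrix}A&0&0\\B&\gamma I_{\beta_\matrixspaceY}&0\\C&0&\gamma I_{n-n'}\end{smallmatrix}\right)$ with $A=S[4,3+\alpha_\matrixspaceY;4,3+\alpha_\matrixspaceY]$.

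It remains to derive conditions (a)--(c) on $C$, which is where I expect the real work. The idea is to feed the content matrices into $SF_\ell S^\top=F_\ell'$: suitable linear combinations of the steps~2--4 matrices yield, for $i>m'$, the identity $T\cdot\matrixspaceX_{\semic(\mathbf{G}),i}\cdot T^\top=\matrixspaceX_{\semic(\mathbf{H}),i}$ (and a restriction of it to the leading $n'\times n'$ block for $\alpha_\matrixspaceX<i\le m'$), while the steps~6--7 matrices yield analogous identities for the $\matrixspaceY_{\semic(\mathbf{G}),j}$. Substituting the block form of $T$, expanding in the coordinate groups $\{1,2,3\}$, $\{4,\dots,3+\alpha_\matrixspaceY\}$, $\{3+\alpha_\matrixspaceY+1,\dots,3+n'\}$, $\{3+n'+1,\dots,3+n\}$, $\{3+n+1,\dots,3+n+\alpha_\matrixspaceX\}$, $\{3+n+\alpha_\matrixspaceX+1,\dots,3+n+m'\}$, and comparing against the entries forced to vanish in $\matrixspaceX_{\semic(\mathbf{H}),i}$ and $\matrixspaceY_{\semic(\mathbf{H}),j}$ by Lemma~\ref{lem:property_semi_canonical_form_tensor} --- in particular the vanishing of the $[1,\alpha_\matrixspaceY;1,n']$ block of $\matrixspaceX_{\semic(\cdot),i}$ for $i\le m'$ --- together with skew-symmetry of the $\matrixspaceX$-matrices, each block identity that involves $C$ should collapse to one of the three asserted relations. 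The delicate points I anticipate are: tracking exactly which block of which content matrix produces which of (a), (b), (c); handling the split between $i\le m'$ and $i>m'$ correctly; and, where an identity comes out in the form $C(\cdot)\,X^\top=0$ rather than $C(\cdot)=0$, establishing the auxiliary rank statement about $S$ (e.g.\ full column rank of an appropriate submatrix built from the first $\alpha_\matrixspaceY$ columns, using items~3--5 and the invertibility of $S$) that lets one cancel $X^\top$. This bookkeeping, rather than any single conceptual step, is the main obstacle.
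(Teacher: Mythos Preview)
Your plan is correct and follows essentially the same route as the paper. Items~1, 2, 4, 5, and the block shape in item~6 match the paper's arguments almost verbatim. For item~3 your column-space argument (comparing $\langle s_i,s_j\rangle=\langle\epsilon_i,\epsilon_j\rangle$ and intersecting) is in fact a bit cleaner than the paper's entry-by-entry computation, though equivalent.

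The one place worth a remark is 6(a)--(c). You anticipate reaching identities of the shape $C\cdot(\text{block})\cdot X^{\top}=0$ and then needing an ``auxiliary rank statement'' to cancel $X^{\top}$. This can be made to work (e.g.\ for step-2 matrices one also gets $(C\,|\,0)M(C\,|\,0)^{\top}=0$ and $(C\,|\,0)M V_1^{\top}=0$, and the rows of $T'$, $(C\,|\,0)$, $V_1$ together span $\mathbb{F}_p^{n'}$ because columns $4,\dots,3+n'$ of the invertible matrix $S$ are independent), but the paper sidesteps the whole issue with a small trick you may prefer: since $S^{\top}$ is invertible, the vanishing of rows $4+n',\dots,3+n$ of $F_\ell'=SF_\ell S^{\top}$ forces those same rows of $SF_\ell$ to vanish. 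Reading those rows of $SF_\ell$ directly from the block form of $S$ gives $C\cdot F_\ell[4,3+\alpha_{\matrixspaceY};\,\cdot\,]=0$ with no cancellation needed. Applying this to the step-2 matrices yields 6(a), to the step-6 matrices (type~2) yields the $\matrixspaceY$-identity that converts via $\matrixspaceY_{j}[i,k]=\matrixspaceX_{i}[j,k]$ and skew-symmetry into 6(b), and to the step-4 matrices yields 6(c). So your ``delicate points'' about tracking which content matrix gives which relation are exactly right, but the feared rank side-argument is avoidable.
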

\begin{proof}
To prove the first property, we show that for each row vector $v \in \mathbb{F}_p^{3 + n + m'}$, there is a matrix $F_\ell \in \FF_{\semic(\mathbf{G})}$ such that
$v F_\ell$ is a non-zero vector.
By the construction of $F_1, F_2$ and $F_3$, if at least one of $v[1]$, $v[2]$ and $v[3]$ is not equal to zero, 	then at least one of $vF_1$, $vF_2$ and $vF_3$ is a non-zero vector. 
For the case of $v[1] = v[2] = v[3] = 0$, by the third property of Lemma~\ref{lem:property_F_single}, if $v$ is a non-zero vector, 
$v F_\ell$ is a non-zero vector for some $F_\ell \in \FF_{\semic(\mathbf{G})}$. 
Hence, the first property of the lemma holds.

The second property is obtained by  $S \cdot \FF_{\semic(\mathbf{G})} \cdot S^T = \FF_{\semic(\mathbf{H})}$.

For the third property of the lemma,
by the definition of $\FF_{\semic(\mathbf{G})}$, if $S[1, 2] \neq 0$, then the first row of $S F_3$ is not a zero matrix. Since $S$ is invertible, the first row of $S F_3 S^T$ is a non-zero row.
However, the first row of $F_3'$ is a zero row, contradiction. Hence, $S[1, 2] = 0$. Similarly, we have \[S[1, 3] = S[2, 1] = S[2, 3] = S[3,1] = S[3,2] = 0.\]
Since $F_1$ are all zero from the $4$-th row to the $(3 + n + m')$-th row, we have
\begin{align*} \left(SF_1S^T\right)[1, 3; 1,3] = & \left(\begin{array}{ccc} S[1, 1] & 0 & 0 \\ 0 & S[2,2] & 0 \\ 0 & 0 & S[3,3] \end{array}\right) \left(\begin{array}{ccc} 0 & 1 & 0 \\ -1 & 0 & 0 \\ 0 & 0 & 0 \end{array} \right)  \left(\begin{array}{ccc} S[1, 1] & 0 & 0 \\ 0 & S[2,2] & 0 \\ 0 & 0 & S[3,3] \end{array}\right)\\
= & \left(\begin{array}{ccc} 0 & S[1,1] \cdot S[2,2]& 0 \\ - S[1,1]\cdot S[2,2] & 0 & 0 \\ 0 & 0 & 0 \end{array}\right) \\
= & F_1'[1, 3;1, 3]. \end{align*}
Hence, $S[1,1] \cdot S[2,2] = 1$.
Similarly, we have $S[1, 1] \cdot S[3, 3] = S[2,2] \cdot S[3, 3] = 1$. 
Because $S[1, 1], S[2, 2], S[3,3] \in \mathbb{F}_p$ for some prime $p$, 
$S[1, 1] = S[2, 2] = S[3, 3]$, and thus 
$S[1, 1]^2 = 1$. 
So we have $S[1, 1] = \gamma$ for some $\gamma^2 = 1$, and consequently
\[S[1, 3;1,3] = \left(\begin{array}{ccc} \gamma & 0 & 0 \\ 0 & \gamma & 0 \\ 0 & 0 & \gamma \end{array}\right).\]

We prove $S[j, k] = 0$ for all the $j > 3$ and $k \in \{1, 2, 3\}$ by contraction. 
If $S[j, k] \neq 0$ for some $j > 3$ and $k \in \{1, 2, 3\}$, 
the $j$-th row of $S F_{i} S^T$ is not a zero row for some $i \in \{1, 2, 3\}$, which contradicts to the fact that the $j$-th row of $F_i'$ is a zero row for all the $1 \leq i \leq 3$ and $j > 3$.
Hence, $S[j, k] = 0$ for all the $j > 3$ and $k \in \{1, 2, 3\}$.

Now we prove the third property for $\Phi \setminus \{1, 2, 3\}$.
Let $k$ be an arbitrary number in $\Phi \setminus \{1, 2, 3\}$. 
Since there is an $\ell \in \{1, \dots, t\}$ such that $F_\ell[1, k] = F_\ell'[1, k] = 1$, $F_\ell[k, 1] = F_\ell'[k, 1] = -1$, and all the other entries of $F_\ell$ and $F_\ell'$ are zero.
Hence, for all the $j \neq 1$ and $j \neq k$, $S[j, k] = 0$.
Since $F_{\ell + 1}[2, k] = F_{\ell + 1}'[2, k] = 1$, $F_{\ell + 1}[k, 2] = F_{\ell + 1}'[k, 2] = -1$, and all the other entries of $F_{\ell + 1}$ and $F_{\ell + 1}'$ are zero,
$S[1, k] = 0$.
Thus, we have 
\begin{align*}(S\cdot F_\ell \cdot S^T)[1, k] = & \sum_{i = 1}^{3+n+m'}  \sum_{j = 1}^{3+n+m'}  S[1, i]\cdot F_\ell [i, j] \cdot S^T[j, k] \\
= & S[1, 1] \cdot S^T[k, k] - S[1, k] \cdot S^T[1, k] \\ = & S[1, 1] \cdot S^T[k, k]\\ = & 1,\end{align*}
where the third inequality uses the fact that $S[1,k] = 0$ for all the $k > 3$.
Thus, $S[k, k] = \gamma$.
Then the third property of the lemma holds.

For the fourth property, 
if $S[1, 3; 4, 3 + n + m']$ is a non-zero matrix, then 
by Lemma~\ref{lem:property_F_single}, 
there exists an $\ell \in \{1, \dots, t\}$ such that the first three rows of  $F_\ell$ and $F_\ell'$ are zero rows,
but $S F_\ell S^T$ has a non-zero row in one of the first three rows. This contradicts to $S F_\ell S^T = F_\ell'$.
Hence, the fourth property holds.

For the fifth property, since $S \cdot F_\ell \cdot S^T = F_\ell'$ for each type 1 matrix $F_\ell$ and $F_\ell'$, and $S^T$ is an invertible matrix, 
the last $m'$ rows of $S\cdot F_\ell$  are zero rows.
Since the last $m'$ rows of $F_\ell$ are zero rows, for each row $v$ of $S[3 + n + 1, 3 + n + m';1, 3 + n]$, 
$v A_\ell$ is a zero row vector.

Now we prove the last property of the lemma. Equation~(\ref{equ:s_form_pre}) is obtained by the third property of the current lemma.
For each $t_1 + 1 \leq \ell \leq t_2$,
$F_\ell$ and $F_\ell'$ have non-zero entries only in the submatrices $F_\ell[4, 3 +n'; 4, 3 +n']$
and $F_\ell'[4, 3 +n'; 4, 3 +n']$, respectively.
Hence, $C \cdot F_\ell [4, 3 +\alpha_\matrixspaceY; 4, 3 + n']$ must be a zero matrix because $S F_\ell S^T[4 + n', 3 + n ; 1, 3 + n + m']$ is a zero matrix. 
Since $F_\ell[4, 3 + \alpha_\matrixspaceY;4, n'+3]$ equals \[\matrixspaceX_{\semic(\mathbf{G}), \alpha_\matrixspaceX+ (\ell -t_1)}[1, \alpha_\matrixspaceY;1,n']\] for each $t_1 + 1 \leq \ell \leq t_2$,
the property 6(a) holds. 

For each $t_5 + 1 \leq \ell \leq t_6$,
the rows between the $4$-th row and the $(3 + n')$-th row of $F_\ell$ and $F_\ell'$ are non-zero only in the last $m'$ columns, 
and the rows between the $(4+n')$-th row and the $(3 + n)$-th row of $F_\ell$ and $F_\ell'$ are zero rows.
Since $S F_\ell S^T = F_\ell'$,
$C \cdot S[4, 3 + \alpha_\matrixspaceY; 4+n, 3 + n + m']$ is a zero matrix. 
Since 
$F_\ell[4, n' + 3; 4 + n, 3 + n + m']$ equals $-(\matrixspaceY_{\semic(\mathbf{G}), \ell - t_5 + \alpha_\matrixspaceY}[1, m'; 1, n'])^T$, 
we have that 
$C \cdot (\matrixspaceY_{\semic(\mathbf{G}), q}[1, m'; 1, \alpha_\matrixspaceX])^T$ is a zero matrix for all the $n' + 1\leq q \leq n$, and thus 
$C \cdot (\matrixspaceY_{\semic(\mathbf{G}), q}[r, r; 1, \alpha_\matrixspaceX])^T$ is a zero column vector for each $n' + 1 \leq q \leq n$ and $1 \leq r \leq m'$.
Since $\matrixspaceX_{\semic(\mathbf{G}), i}[j, k] = \matrixspaceY_{\semic(\mathbf{G}), j}[i, k]$, 
$C \cdot (\matrixspaceX_{\semic(\mathbf{G}), r}[q, q; 1, \alpha_\matrixspaceY])^T$ is a zero column vector for each $n' + 1 \leq q \leq n$ and $1 \leq r \leq m'$.
Using the fact that $\matrixspaceX_{\semic(\mathbf{G})}$ is a skew-symmetric matrix space,
$C \cdot \matrixspaceX_{\semic(\mathbf{G}), r}[1, \alpha_\matrixspaceY; q, q]$ is a zero column vector for each $n' + 1 \leq q \leq n$ and $1 \leq r \leq m'$.
Then the property 6(b) holds.

To prove the property 6(c), we consider $F_\ell$ and $F_\ell'$ for $t_3 + 1 \leq \ell \leq t_4$. 
Since $F_\ell$ is non-zero only in the submatrices $F_\ell[4, n' + 3; n' + 4, n + 3]$ and  $F_\ell[n' + 4, n + 3; 4, n' + 3]$  for $t_3 + 1 \leq \ell \leq t_4$, 
by Equation~(\ref{equ:s_form_pre}),
we have 
\[(S F_\ell)[4 + n', 3 + n; 4 + n', 3 + n] = C \cdot F_\ell[4, 3 + \alpha_\matrixspaceY;4 + n', n + 3]\]
and 
\[(S F_\ell)[4 + n', 3 + n; 4, 3 + n'] = F_\ell[4 + n', 3 + n; 4, 3 + n'].\]
All the other entries of $(S F_\ell)[4 + n', 3 + n; 1, 3 + n + m'] $ are zero.
Hence, by Equation~(\ref{equ:s_form_pre}),
\begin{align*}& (S F_\ell S^T)[4 + n', 3 + n; 4 + n', 3 + n] \\ = & C \cdot F_\ell[4, 3 + \alpha_\matrixspaceY;4 + n', n + 3] + (F_\ell[4, 3 + \alpha_\matrixspaceY;4 + n', n + 3])^T \cdot C^T \\
=  & F_\ell'[4 + n', 3 + n; 4 + n', 3 + n]
\end{align*}
is a zero matrix by the construction of $\FF_{\semic(\mathbf{G})}$.
Since for each $t_3 + 1 \leq \ell \leq t_4$, $F_\ell[4, 3 + \alpha_\matrixspaceY;4 + n', n + 3]$ 
corresponds to $\matrixspaceX_{\semic(\mathbf{G}), \ell - t_3 + m'}[1 + \alpha_\matrixspaceY; n' + 1, n]$.
The property 4(c) holds.
\end{proof}

\begin{proof}[Proof of Lemma~\ref{lem:restricted_tuple_isometry}]
If two tensor semi-canonical forms are isometric, then by Definition~\ref{def:skew_tensor_semi-canonical_form_iso},
there exist two matrices $M$ and $N$ satisfying Equation~(\ref{equ:skew_tensor_semi-canonical_form_iso})
such that $\trans_{N, M}(\semic(\mathbf{G})) = \semic(\mathbf{H})$.
By Lemma~\ref{lem:property_semi_canonical_form_tensor} and Equation~(\ref{equ:skew_tensor_semi-canonical_form_iso}), 
$\semic(\mathbf{G})[i, j, k] = \semic(\mathbf{H})[i, j, k]$  for any $1 \leq i \leq m'$ and $1 \leq j, k \leq n'$. 
In addition, let
\[S' = \left( \begin{array}{ccc} I_3 & 0 & 0 \\ 0 & N & 0 \\ 0 & 0 & M' \end{array}\right),\]
where $M'$ is obtained by removing the last $m - m'$ rows and the last $m - m'$ columns of $M$.
By our construction of $\FF_{\semic(\mathbf{G})}$ and  $\FF_{\semic(\mathbf{H})}$, we have $S' \cdot \FF_{\semic(\mathbf{G})} \cdot S'^T = \FF_{\semic(\mathbf{H})}$.

Now we show that $\semic(\mathbf{G})$ and $\semic(\mathbf{H})$ are isometric if there is a matrix $S$ satisfying the form of Equation~(\ref{equ:restricted_tuple_isometry})
such that $S \cdot \FF_{\semic(\mathbf{G})} \cdot S^T = \FF_{\semic(\mathbf{H})}$.
By Lemma~\ref{lem:property_restricted_isometry}, we have 
\[Q = \gamma \cdot \left(\begin{array}{cccc} I_3 & 0 & 0 & 0 \\ 0 & A & 0 & 0 \\ 0 &  B & I_{\beta_{\matrixspaceY}}& 0 \\ 0 & C & 0 & I_{n - n'}\end{array}\right)\]
for some some $\gamma \in \mathbb{F}_p$ satisfying $\gamma^2 = 1$, $A \in M(\alpha_{\matrixspaceY}, \alpha_{\matrixspaceY}, \mathbb{F}_p)$, 
$B \in M(\beta_{\matrixspaceY}, \alpha_{\matrixspaceY}, \mathbb{F}_p)$,  $C \in M(n - n', \alpha_{\matrixspaceY}, \mathbb{F}_p)$,
and 
\[W = \gamma \cdot \left(\begin{array}{cc} D & 0 \\ E & I_{\beta_{\matrixspaceX}} \end{array}\right)\]
for some $D \in M(\alpha_{\matrixspaceX}, \alpha_{\matrixspaceX}, \mathbb{F}_p)$, 
$E \in M(\beta_{\matrixspaceX}, \alpha_{\matrixspaceX}, \mathbb{F}_p)$.
Let 
\[N =  \left(\begin{array}{ccc} A & 0 & 0 \\ B & I_{\beta_{\matrixspaceY}}& 0 \\ 0 & 0 & I_{n - n'}\end{array}\right)\text{ and }
M = \left(\begin{array}{ccc} D & 0  & 0 \\ E & I_{\beta_{\matrixspaceX}} &  0 \\ 0 & 0 & I_{m - m'}\end{array}\right).
\]
In the rest of this proof, we show that $\trans_{N, M}(\semic(\mathbf{G})) = \semic(\mathbf{H})$.
Let \[N' =  \left(\begin{array}{ccc} A & 0 & 0 \\ B & I_{\beta_{\matrixspaceX}}& 0 \\ C & 0 & I_{n - n'}\end{array}\right).\]
Since $N' - N$ has non-zero entries only in the submatrix of $(N' - N)[n' + 1, n ; 1, \alpha_\matrixspaceY]$, 
by the last property of Lemma~\ref{lem:property_restricted_isometry}, the construction of $\FF_{\semic(\mathbf{G})}$ and  $\FF_{\semic(\mathbf{H})}$, and the condition that $S \cdot \FF_{\semic(\mathbf{G})} \cdot S^T = \FF_{\semic(\mathbf{H})}$,
we have 
\begin{align*}N \matrixspaceX_{\semic(\mathbf{G}), i} N^T = & N' \matrixspaceX_{\semic(\mathbf{G}), i} N'^T 
+ N' \matrixspaceX_{\semic(\mathbf{G}), i} (N - N')^T  \\
& + (N - N') \matrixspaceX_{\semic(\mathbf{G}), i} N'^T 
+ (N - N') \matrixspaceX_{\semic(\mathbf{G}), i} (N - N')^T  \\
= & N' \matrixspaceX_{\semic(\mathbf{G}), i} N'^T  \\
= & \matrixspaceX_{\semic(\mathbf{H}), i}
\end{align*}
for all the $m' \leq i \leq m$.
By the definition of $M$, we have
$\trans_{N, M}(\semic(\mathbf{G}))[i, j, k] = \semic(\mathbf{H})[i, j, k]$ for all the $m' + 1 \leq i \leq m, 1 \leq j, k \leq n$.

By Lemma~\ref{lem:property_semi_canonical_form_tensor} and the construction of $\FF_{\semic(\mathbf{G})}$ and  $\FF_{\semic(\mathbf{H})}$,  for all the $1 \leq i \leq m'$ and $1 \leq j, k \leq n'$,
\[\trans_{N, M}(\semic(\mathbf{G}))[i, j, k] = \trans_{N', M}(\semic(\mathbf{G}))[i, j, k] = \semic(\mathbf{H})[i, j, k].\] 
Furthermore, by the construction of $\FF_{\semic(\mathbf{G})}$, we have \[\matrixspaceX_{\trans_{N', M}(\semic(\mathbf{G})), i}[j, k] = \matrixspaceX_{\semic(\mathbf{H}), i}[j, k] \] for all the $1 \leq i \leq m'$, $n' + 1 \leq j \leq n$ and $1 \leq k \leq n$.
By the last property of Lemma~\ref{lem:property_restricted_isometry}, 
we have 
\[\matrixspaceX_{\trans_{N, M}(\semic(\mathbf{G})), i}[j, k] = \matrixspaceX_{\trans_{N', M}(\semic(\mathbf{G})), i}[j, k]\] for all the $1 \leq i \leq m'$, $n' + 1 \leq j \leq n$ and $1 \leq k \leq n$.
Together with the skew-symmetric condition of matrices in $\matrixspaceX_{\semic(\mathbf{H})}$, 
we have $\trans_{N, M}(\semic(\mathbf{G}))[i, j, k] = \semic(\mathbf{H})[i, j, k]$ for all the $1 \leq i \leq m', 1 \leq j, k \leq n$.
Hence, $\trans_{N, M}(\semic(\mathbf{G})) = \semic(\mathbf{H})$. 
\end{proof}

\subsection{Isometry testing of tensor semi-canonical forms}
We present the algorithm for deciding whether the semi-canonical forms of two skew-symmetric matrix spaces are isometric.

Suppose we run the algorithm for skew-symmetric matrix tuple isometry on $\FF_{\semic(\mathbf{G})}$ and $\FF_{\semic(\mathbf{H})} $.
If the algorithm returns no, then by Lemma~\ref{lem:restricted_tuple_isometry}, the two semi-canonical forms are not isometric. 
If the algorithm returns yes and a block diagonal $S$, then by Lemma~\ref{lem:restricted_tuple_isometry}, 
the two semi-canonical forms are isometric. 
The difficult case is that 
the algorithm returns yes and a matrix $S$ not satisfying Equation~(\ref{equ:restricted_tuple_isometry}). 
For this case, we neither certify that the two semi-canonical forms are isometric by Lemma~\ref{lem:restricted_tuple_isometry} nor 
 rule out the possibility that the two semi-canonical forms are not isometric.

We characterize the matrix $S$ in the difficult case by Lemma~\ref{lem:reduction_easy_case} and Lemma~\ref{lem:s_transform_final}. We further show that the isometry between the two semi-canonical forms can be determined by running the matrix tuple equivalence algorithm for two matrix tuples constructed based on $\semic(\mathbf{G}), \semic(\mathbf{H})$, and $S$.

\begin{lemma}\label{lem:reduction_easy_case}
Let $\semic(\mathbf{G})$ and $\semic(\mathbf{H})$ be semi-canonical forms of two tensors in $\mathbb{F}_p^{m\times n\times n}$ with the same parameters for some prime $p > 2$ and integers $n, m$. 
Let $\FF_{\semic(\mathbf{G})} = (F_1, \dots, F_t)$ and $\FF_{\semic(\mathbf{H})} = (F_1', \dots, F_t')$ be the skew-symmetric matrix tuples for $\semic(\mathbf{G})$ and $\semic(\mathbf{H})$ respectively. 
Suppose there is a matrix $S$ such that $S\cdot \FF_{\semic(\mathbf{G})} \cdot S^T =  \FF_{\semic(\mathbf{H})}$. 
Denote $S$ as 
\[S = \left( \begin{array}{cc} Q & R \\ V & W \end{array} \right),\]
where $Q, R, V$, and $W$ are matrices of dimensions $(n + 3)\times (n + 3)$, $(n + 3) \times m'$, $m' \times (n + 3)$, and $m' \times m'$, respectively.
If at least one of $Q$ and $W$ is full rank, then there are
$(n + 3)\times (n + 3)$ matrix $Q'$ and $m'\times m'$ matrix $W'$ such that \[\left( \begin{array}{cc} Q' & 0 \\ 0 & W' \end{array} \right) \FF_{\semic(\mathbf{G})}  \left( \begin{array}{cc} Q'^T & 0 \\ 0 & W'^T \end{array} \right) =  \FF_{\semic(\mathbf{H})}. \] 
Furthermore, $Q'$ and $W'$ can be computed in time  $\mathrm{poly}(n, m, p)$.
\end{lemma}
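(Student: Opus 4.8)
The plan is to clear the off-diagonal blocks of $S$ by post-composing it with a suitable block-triangular \emph{automorphism} of $\FF_{\semic(\mathbf{G})}$. First I would record what Lemma~\ref{lem:property_restricted_isometry} already gives for our $S$: it is full rank, it maps each type~$1$ matrix of $\FF_{\semic(\mathbf{G})}$ to a type~$1$ matrix of $\FF_{\semic(\mathbf{H})}$ and each type~$2$ to a type~$2$, and every row of $V$ annihilates every $A_\ell$ on the left, i.e.\ $V A_\ell = 0$ for all $\ell$; since each $A_\ell$ is a principal submatrix of a skew-symmetric matrix it is itself skew-symmetric, so also $A_\ell V^T = 0$. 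Writing a type~$1$ matrix as $F_\ell = \left(\begin{smallmatrix} A_\ell & 0 \\ 0 & 0 \end{smallmatrix}\right)$ and a type~$2$ matrix as $F_\ell = \left(\begin{smallmatrix} 0 & B_\ell \\ -B_\ell^T & 0 \end{smallmatrix}\right)$, with the analogous names $A_\ell', B_\ell'$ for the matrices of $\FF_{\semic(\mathbf{H})}$, and expanding $S F_\ell S^T = F_\ell'$ block by block, the type~$1$ equation reduces (the cross terms die by $V A_\ell = 0 = A_\ell V^T$) to $Q A_\ell Q^T = A_\ell'$, and the type~$2$ equation splits into the three relations $Q B_\ell R^T = R B_\ell^T Q^T$, $\;Q B_\ell W^T - R B_\ell^T V^T = B_\ell'$, and $V B_\ell W^T = W B_\ell^T V^T$.

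Next I would treat the two cases. Suppose $W$ is invertible. Put $J \coloneqq -W^{-1} V$ and $T \coloneqq \left(\begin{smallmatrix} I & 0 \\ J & I \end{smallmatrix}\right)$. The claim is that $T F_\ell T^T = F_\ell$ for every $\ell$, so that $T$ is an automorphism of $\FF_{\semic(\mathbf{G})}$: on a type~$1$ matrix this is immediate from $J A_\ell = -W^{-1}(V A_\ell) = 0$ and $A_\ell J^T = 0$; on a type~$2$ matrix it comes down to $J B_\ell$ being symmetric, which follows from the relation $V B_\ell W^T = W B_\ell^T V^T$ after multiplying by $W^{-1}$ on the left and $(W^T)^{-1}$ on the right. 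Hence $S T = \left(\begin{smallmatrix} Q + R J & R \\ 0 & W \end{smallmatrix}\right)$ is again an isometry from $\FF_{\semic(\mathbf{G})}$ to $\FF_{\semic(\mathbf{H})}$, and being invertible and block upper-triangular it has both diagonal blocks invertible. Re-expanding $(S T) F_\ell (S T)^T = F_\ell'$ with $Q'' \coloneqq Q + R J$ then yields $Q'' A_\ell (Q'')^T = A_\ell'$ on type~$1$ and $Q'' B_\ell W^T = B_\ell'$ on type~$2$, which are exactly the identities making $\left(\begin{smallmatrix} Q'' & 0 \\ 0 & W \end{smallmatrix}\right) \FF_{\semic(\mathbf{G})} \left(\begin{smallmatrix} (Q'')^T & 0 \\ 0 & W^T \end{smallmatrix}\right) = \FF_{\semic(\mathbf{H})}$. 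So I take $Q' = Q''$ and $W' = W$.

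If instead $Q$ is invertible, I argue symmetrically with $K \coloneqq -Q^{-1} R$ and $T \coloneqq \left(\begin{smallmatrix} I & K \\ 0 & I \end{smallmatrix}\right)$. This $T$ fixes every type~$1$ matrix for any $K$, and it fixes every type~$2$ matrix precisely because $K B_\ell^T$ is symmetric, which follows from $Q B_\ell R^T = R B_\ell^T Q^T$. Thus $S T = \left(\begin{smallmatrix} Q & 0 \\ V & W + V K \end{smallmatrix}\right)$ is an isometry (block lower-triangular and invertible, so both diagonal blocks are invertible), and expanding $(S T) F_\ell (S T)^T = F_\ell'$ gives $Q A_\ell Q^T = A_\ell'$ (cross terms again killed by $V A_\ell = 0 = A_\ell V^T$) and $Q B_\ell (W + V K)^T = B_\ell'$; hence $\left(\begin{smallmatrix} Q & 0 \\ 0 & W + V K \end{smallmatrix}\right)$ is the required block-diagonal isometry, with $Q' = Q$ and $W' = W + V K$.

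For the running time, in either case $Q'$ and $W'$ are the closed-form matrices $Q - R W^{-1} V$ and $W$, respectively $Q$ and $W - V Q^{-1} R$, each obtained from $S$ by one matrix inversion and a constant number of multiplications, so in $\mathrm{poly}(n, m, p)$ time. The one genuine idea is to post-compose $S$ with the right block-triangular automorphism; after that the work is the verification that $T$ is an automorphism of $\FF_{\semic(\mathbf{G})}$ and the re-expansion of the defining identity for $S T$, which is routine but must be carried out with care about the skew-symmetry of the $A_\ell$ and about which cross terms vanish. The hypothesis that one of $Q, W$ is invertible enters only in forming $W^{-1}$ or $Q^{-1}$; when neither holds this argument no longer applies, which is why the remaining case is handled separately via a reduction to matrix tuple equivalence.
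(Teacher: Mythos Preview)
Your proof is correct and yields exactly the same block-diagonal matrices as the paper: $Q' = Q - RW^{-1}V,\ W' = W$ when $W$ is invertible, and $Q' = Q,\ W' = W - VQ^{-1}R$ when $Q$ is invertible. The route differs mainly in presentation. The paper writes down these closed-form $S'$ directly and verifies $S'F_\ell S'^T = SF_\ell S^T$ by block expansion, using only the identities $QA_\ell V^T = VA_\ell Q^T = VA_\ell V^T = 0$ that fall out of $F_\ell'$ being type~1. You instead factor through an explicit automorphism $T$ of $\FF_{\semic(\mathbf{G})}$ to make $ST$ block-triangular and then read off the diagonal; this is conceptually cleaner but leans on the stronger fact $VA_\ell = 0$ from Lemma~\ref{lem:property_restricted_isometry}(5), which the paper's direct computation does not need. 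Either way the remaining verification is the same short calculation, and the running-time claim is immediate from the closed-form expressions.
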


\begin{proof}
If $Q$ is full rank, let \begin{equation}\label{equ:case1_s_1}S' = \left(\begin{array}{cc} Q & 0 \\ 0 & W - V Q^{-1}R\end{array}\right).\end{equation}
Let $F_\ell$ be a matrix in $\FF_{\semic(\mathbf{G})}$ for some $1 \leq \ell \leq t$.
If $F_\ell$ is a type 1 matrix, 
we have 
\begin{equation}\label{equ:case1}\begin{split}S F_\ell S^T =  &  \left( \begin{array}{cc}  Q & R \\ V &  W \end{array}\right) \left(\begin{array}{cc}A_\ell & 0 \\ 0 & 0 \end{array} \right) \left( \begin{array}{cc}  Q^T & V^T \\ R^T &  W^T \end{array}\right) \\
= & \left(\begin{array}{cc}Q A_\ell Q^T & Q A_\ell V^T \\ V A_\ell Q^T & VA_\ell V^T \end{array} \right)  \\
= & F_\ell' \\ = & \left(\begin{array}{cc} Q A_\ell Q^T & 0 \\ 0 & 0\end{array}\right), \end{split}\end{equation}
where the last equality uses the fact that $F_\ell'$ is a type 1 matrix by Lemma~\ref{lem:property_restricted_isometry}.
We also have 
\begin{align*}S' F_\ell S'^T =  & \left( \begin{array}{cc}  Q & 0 \\ 0 &  W - V Q^{-1}R\end{array}\right) \left(\begin{array}{cc}A_\ell & 0 \\ 0 & 0 \end{array} \right) \left( \begin{array}{cc}  Q^T & 0 \\ 0 &  W^T - R^T (Q^{-1})^T V^T\end{array}\right) \\
=  & \left( \begin{array}{cc}  QA_\ell Q^T & 0 \\ 0 &  0\end{array}\right) \\ = & F_\ell'.\end{align*}
Hence, $S'F_\ell S'^T = S F_\ell S^T$ for all the type 1 $F_\ell$.

If $F_\ell$ is a type 2 matrix, 
since 
$S F_\ell S^T = F_\ell'$ is also a type 2 matrix,
we have 
\begin{equation}\label{equ:case1_1}\begin{split}S F_\ell S^T = & S \left( \begin{array}{cc}  0 & B_\ell \\ -B_\ell^T & 0  \end{array}\right) S^T\\
= &  \left( \begin{array}{cc}  -R B_\ell^T Q^T + QB_\ell R^T = 0 & -RB_\ell^T V^T + QB_\ell W^T  \\ -WB_\ell^TQ^T + VB_\ell R^T & -WB_\ell^T V^T + VB_\ell W^T = 0 \end{array} \right).\end{split}\end{equation}
Since
$RB_\ell^T = R B_\ell^T Q^T (Q^T)^{-1} = QB_\ell R^T (Q^T)^{-1}$, we have 
\[S F_\ell S^T = \left( \begin{array}{cc}  0 & QB_\ell(W^T - R^T (Q^T)^{-1} V^T)  \\ (-W + VQ^{-1}R )^T B_\ell^T Q^T  & 0 \end{array} \right).\]
Hence,
\begin{align*}S' F_\ell S'^T =& \left(  \begin{array}{cc}   Q & 0 \\ 0 &  W - V Q^{-1}R\end{array}\right) \left( \begin{array}{cc} 0 & B_\ell \\ -B_\ell^T & 0 \end{array}\right) \left( \begin{array}{cc}  Q^T & 0 \\ 0 &  W^T - R^T (Q^{-1})^T V^T\end{array}\right) \\ = & S F_\ell S^T \\ = & F_\ell'. \end{align*}
Then the lemma holds if $Q$ is full rank. 

Now we consider the case that $W$ is full rank. Let 
\begin{equation}\label{equ:case1_s_2}S' = \left( \begin{array}{cc}  Q - RW^{-1}V & 0 \\ 0 &  W\end{array}\right) .\end{equation}
We have for each type 1 matrix $F_\ell \in \FF_{\semic(\mathbf{G})}$, 
\begin{align*}S' F_\ell S'^T = & \left( \begin{array}{cc}  Q - RW^{-1}V & 0 \\ 0 &  W\end{array}\right) \left( \begin{array}{cc}  A_\ell & 0 \\ 0 & 0 \end{array} \right)\left( \begin{array}{cc}  Q^T - V^T(W^{-1})^TR^T & 0 \\ 0 &  W^T\end{array}\right)  \\ 
= & \left( \begin{array}{cc} (Q - RW^{-1}V) A_\ell (Q - RW^{-1}V)^T & 0 \\ 0 & 0 \end{array}\right) \\
= & \left( \begin{array}{cc} Q A_\ell Q^T & 0 \\ 0 & 0 \end{array}\right) \\
= & S F_\ell S^T \\ = & F_\ell', \end{align*}
where the third equality uses the fact that $V A_\ell Q^T = 0$, $Q A_\ell V^T = 0$ and $V A_\ell V^T = 0$ by Equation~(\ref{equ:case1}). 

For each type 2 matrix $F_\ell$,
by Equation~(\ref{equ:case1_1}), we have 
\[VB_\ell = V B_\ell W^T (W^T)^{-1} = W B_\ell^T V^T(W^T)^{-1},\] and thus 
\[S F_\ell S^T = \left( \begin{array}{cc}  0 & (R W^{-1} V- Q )B_\ell W^T \\  W B_\ell^T (-V^T(W^T)^{-1}R^T+Q^T) & 0 \end{array} \right).\]
Hence,
\begin{align*}S' F_\ell S'^T = & \left(  \begin{array}{cc}   Q - RW^{-1}V  & 0 \\ 0 &  W\end{array}\right) \left( \begin{array}{cc} 0 & B_\ell \\ -B_\ell^T & 0 \end{array}\right) \left( \begin{array}{cc}  Q - RW^{-1}V & 0 \\ 0 &  W\end{array}\right)^T \\ = & S F_\ell S^T \\ = & F_\ell'. \end{align*}
Then the lemma holds if $W$ is full rank. 
By Equation~(\ref{equ:case1_s_1}) and Equation~(\ref{equ:case1_s_2}), $Q'$ and $W'$ can be computed in $\mathrm{poly}(n, m, p)$ time.
\end{proof}

\begin{lemma}\label{lem:upper_triangle}
Let $\semic(\mathbf{G})$ and $\semic(\mathbf{H})$ be semi-canonical forms of two tensors with the same parameters. 
Let $\FF_{\semic(\mathbf{G})} = (F_1, \dots, F_t)$ and $\FF_{\semic(\mathbf{H})} = (F_1', \dots, F_t')$ be the skew-symmetric matrix tuples for $\semic(\mathbf{G})$ and $\semic(\mathbf{H})$ respectively. 
If the following two conditions hold:
\begin{enumerate}
\item There is an invertible matrix $S$ such that $S\cdot \FF_{\semic(\mathbf{G})} \cdot S^T =  \FF_{\semic(\mathbf{H})}$.
\item There is a matrix 
\[P = \left(\begin{array}{cc} I_{3+n} & U \\ 0 & I_{m'} \end{array}\right)\]
for some $U \in M(3+n, m', \mathbb{F}_p)$
such that for each type 2 matrix $F_\ell \in \FF_{\semic(\mathbf{G})} $,
$PS F_\ell S^T P^T [1, 3+n;1, 3+n]$ is a zero matrix. 
\end{enumerate}
Then $PS \cdot \FF_{\semic(\mathbf{G})} \cdot S^T P^T = \FF_{\semic(\mathbf{H})}$.
\end{lemma}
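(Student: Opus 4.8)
The plan is to verify that conjugating by $P$ fixes the image $\FF_{\semic(\mathbf{H})}$ by checking, matrix by matrix, that $PS F_\ell S^T P^T = F_\ell'$ for every $1\le\ell\le t$, splitting into the two types of matrices in $\FF_{\semic(\mathbf{G})}$ guaranteed by Lemma~\ref{lem:property_F_single} (and transported to $\FF_{\semic(\mathbf{H})}$ by Lemma~\ref{lem:property_restricted_isometry}). Write $SF_\ell S^T = F_\ell'$ and recall that each $F_\ell'$ is either type~1 (nonzero only in the $(3+n)\times(3+n)$ top-left block $A_\ell'$, with the last $m'$ rows/columns zero) or type~2 (top-left $(3+n)\times(3+n)$ block zero). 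With $P=\left(\begin{smallmatrix} I_{3+n} & U \\ 0 & I_{m'}\end{smallmatrix}\right)$ I would just expand $P F_\ell' P^T$ in $2\times 2$ block form, so the whole argument is two short block computations.

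First I would handle the type~1 matrices. If $F_\ell'$ is type~1, then $F_\ell' = \left(\begin{smallmatrix} A_\ell' & 0 \\ 0 & 0\end{smallmatrix}\right)$, and a direct expansion gives
\[
P F_\ell' P^T = \left(\begin{array}{cc} I_{3+n} & U \\ 0 & I_{m'}\end{array}\right)\left(\begin{array}{cc} A_\ell' & 0 \\ 0 & 0\end{array}\right)\left(\begin{array}{cc} I_{3+n} & 0 \\ U^T & I_{m'}\end{array}\right) = \left(\begin{array}{cc} A_\ell' & 0 \\ 0 & 0\end{array}\right) = F_\ell',
\]
since the only $U$-dependent terms are $U\cdot 0$ and $0\cdot U^T$. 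So conjugation by $P$ does nothing to type~1 matrices, and $PS F_\ell S^T P^T = P F_\ell' P^T = F_\ell'$ for all type~1 $F_\ell$.

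Next the type~2 matrices, which is where hypothesis~(2) is used. For a type~2 matrix, $SF_\ell S^T = F_\ell'$ has top-left $(3+n)\times(3+n)$ block zero; write $SF_\ell S^T = \left(\begin{smallmatrix} 0 & B_\ell' \\ -B_\ell'^T & 0\end{smallmatrix}\right)$ in block form (using skew-symmetry). Expanding $PSF_\ell S^T P^T$ gives
\[
PSF_\ell S^T P^T = \left(\begin{array}{cc} UB_\ell'^{\,T} \cdot(-1)+B_\ell' U^{T}\cdot 0\ \text{-type terms} & B_\ell' \\ -B_\ell'^{\,T} & 0\end{array}\right),
\]
and more precisely the top-left block is $-UB_\ell'^{\,T} + B_\ell' U^T$ while the off-diagonal blocks are unchanged at $B_\ell'$ and $-B_\ell'^{\,T}$ and the bottom-right block stays $0$. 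By hypothesis~(2), $PSF_\ell S^T P^T[1,3+n;1,3+n]$ is the zero matrix, i.e. $-UB_\ell'^{\,T}+B_\ell' U^T = 0$, so in fact $PSF_\ell S^T P^T = \left(\begin{smallmatrix} 0 & B_\ell' \\ -B_\ell'^{\,T} & 0\end{smallmatrix}\right) = F_\ell'$. Combining the two cases, $PSF_\ell S^T P^T = F_\ell'$ for every $\ell$, i.e. $PS\cdot\FF_{\semic(\mathbf{G})}\cdot S^T P^T = \FF_{\semic(\mathbf{H})}$.

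I do not anticipate a serious obstacle here: the statement is essentially bookkeeping, and the only substantive input is that $P$ is unipotent upper-triangular so it acts trivially on type~1 matrices, while hypothesis~(2) is exactly the condition needed to kill the single new block that appears for type~2 matrices. The one point to be careful about is the sign/placement of $U^T$ in $P^T$ and making sure the off-diagonal blocks of $PSF_\ell S^T P^T$ genuinely coincide with those of $SF_\ell S^T$ (they do, because the bottom-left block of $P$ is zero), so that fixing the top-left block suffices to conclude equality of the full matrices.
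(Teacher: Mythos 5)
Your proof is correct and takes essentially the same route as the paper's: split into type~1 and type~2 matrices via Lemma~\ref{lem:property_restricted_isometry}, note that conjugation by the unipotent $P$ is the identity on the type~1 block structure, and use hypothesis~(2) to kill the only new $U$-dependent block that appears for type~2. Your direct $2\times 2$ block expansion is a slightly cleaner way of organizing what the paper does through a chain of matrix-product equalities (and its ``$V$ is a zero matrix'' bookkeeping), but the underlying content is identical.
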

\begin{proof}
For each type 1 matrix $F_\ell \in \mathbf{F}_{\semic(\mathbf{G})} $, since $SF_\ell S^T = F_\ell'$, the submatrix
$(SF_\ell S^T)[4 + n, 3 + n + m';1, 3 + n + m']$ is a zero matrix because $F_\ell'$ is also a type 1 matrix, we have 
\begin{align*} & PS F_\ell S^T P^T \\= &(PS F_\ell S^T) P^T \\ = &
\left( \left(\begin{array}{cc} I_{3+n} & U \\ 0 & I_{m'} \end{array}\right) SF_\ell S^T\right) P^T  \\
= & \left( \left(\begin{array}{cc} I_{3+n} & 0 \\ 0 & I_{m'} \end{array}\right) SF_\ell S^T\right) P^T  \\
= & S F_\ell S^T \left(\begin{array}{cc} I_{3+n} & 0 \\ U^T & I_{m'} \end{array}\right) \\
= & S F_\ell S^T \\ 
= & F_\ell',\end{align*}
where the fifth equality is obtained by the skew symmetric condition of $S F_\ell S^T $.
For each type 2 matrix $F_\ell \in \mathbf{F}_{\semic(\mathbf{G})} $, we have 
\begin{align*}  & PS F_\ell S^T P^T \\ = & \left(\begin{array}{cc} I_{3 + n} & U \\ 0 & I_{m'}\end{array}\right) S F_\ell S^T \left( \begin{array}{cc} I_{3 + n} & U\\ 0 & I_{m'}\end{array}\right)^T \\
= & S F_\ell S^T + \left(\begin{array}{cc} 0 & U \\ 0 & 0\end{array}\right) S F_\ell S^T + S F_\ell S^T\left(\begin{array}{cc} 0 & 0 \\ U^T & 0\end{array}\right)  +   \left(\begin{array}{cc} 0 & U \\ 0 & 0\end{array}\right) S F_\ell S^T  \left(\begin{array}{cc} 0 & 0 \\ U^T & 0\end{array}\right).
\end{align*}
Let $V$ be the matrix
\[\left(\begin{array}{cc} 0 & U \\ 0 & 0\end{array}\right) S F_\ell S^T + S F_\ell S^T\left(\begin{array}{cc} 0 & 0 \\ U^T & 0\end{array}\right)
+  \left(\begin{array}{cc} 0 & U \\ 0 & 0\end{array}\right) S F_\ell S^T  \left(\begin{array}{cc} 0 & 0 \\ U^T & 0\end{array}\right). \]
Since the submatrix $F_\ell[4 + n, 3 + n + m'; 4 + n, 3 + n + m']$  is  a zero matrix, $V$
is a matrix such that the last $m'$ rows are all zero, and the last $m'$ columns are all zero.
On the other hand, 
since both $SF_\ell S^T[1, 3 + n; 1, 3 + n]$ and 
$(PS F_\ell SP^T) [1, 3 + n; 1, 3 + n]$ are a zero matrices,
$V[1, 3+n;1, 3+n]$ is a zero matrix. Hence, $V$ is a zero matrix. 
So we have 
$PS F_\ell S^T P^T = S F_\ell S^T = F_\ell'$. 
\end{proof}

\begin{lemma}\label{lem:s_transform_final}
Let $\semic(\mathbf{G})$ and $\semic(\mathbf{H})$ be semi-canonical forms of two skew-symmetric matrix space tensors with the same parameters $\alpha_\matrixspaceX, \beta_\matrixspaceX, \alpha_\matrixspaceY$, and $\beta_\matrixspaceY$.
If there is a matrix $S$ such that $S\cdot \FF_{\semic(\mathbf{G})} \cdot S^T =  \FF_{\semic(\mathbf{H})}$. 
Denote $S$ as 
\[S = \left( \begin{array}{cc} Q & R \\ V & W \end{array} \right),\]
where $Q, R, V$, and $W$ are of dimensions $(n + 3)\times (n + 3)$, $(n + 3) \times m'$, $m' \times (n + 3)$, and $m' \times m'$, respectively.
If both $Q$ and $W$ are not full rank, then 
there is a matrix $J \in \GL(3+n, \mathbb{F}_p)$, a matrix $K\in \GL(m', \mathbb{F}_p)$, a positive integer $q$, and a matrix $S'$ satisfying the following conditions:
\begin{enumerate}
\item $S'$ can be represented as 
\[\left(\begin{array}{cc} Q' & 0 \\ 0 & R' \\ 0 & W' \\ V' & 0 \end{array} \right),\]
where $Q'$ is of dimension $q \times (3 + n)$, $R'$ is of dimension $(3 + n - q) \times m'$, $W'$ is of dimension $(m' - (3 + n - q)) \times m'$, and $V'$ is of dimension $(3 + n - q) \times (3 + n)$.
\item For each type 1 matrix $F_\ell$ in $\FF_{\semic(\mathbf{G})}$,
\[ S' F_\ell S'^T =  \left( \begin{array}{cc} Q' A_\ell Q'^T & 0 \\ 0 & 0 \end{array} \right) = \left( \begin{array}{cc} J & 0 \\ 0 & K \end{array} \right) F_\ell' \left( \begin{array}{cc} J^T& 0 \\ 0 & K^T \end{array} \right).\]
\item For each type 2 matrix $F_\ell$ in $\FF_{\semic(\mathbf{G})}$,
\[ S' F_\ell S'^T   = 
 \left( \begin{array}{cc} 0 & D_\ell \\ -D_\ell^T & 0 \end{array} \right) = \left( \begin{array}{cc} J & 0 \\ 0 & K \end{array} \right) F_\ell' \left( \begin{array}{cc} J^T& 0 \\ 0 & K^T \end{array} \right)\]
 for some $D_\ell = \left( \begin{array}{cc} D_\ell' & 0 \\ 0 & D_\ell''\end{array} \right)$
such that $D_\ell'$ is of dimension $q \times (m' - (3 + n - q))$ and  $D_\ell''$ is of dimension $(3 + n - q) \times (3 + n -q)$.
\end{enumerate}

\end{lemma}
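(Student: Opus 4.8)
The plan is to build $S'$ by transporting $S$ through a block-diagonal change of basis $\mathrm{diag}(J,K)$ on the target tuple and then deleting the ``off-support'' part of certain rows, and to check that the resulting (generally non-invertible) matrix still realizes the block-structured congruences $F_\ell \mapsto \mathrm{diag}(J,K)\,F_\ell'\,\mathrm{diag}(J,K)^T$ in the two shapes demanded for type-$1$ and type-$2$ matrices. Throughout I lean on the rigidity of $S$ from Lemma~\ref{lem:property_restricted_isometry}: $S$ is invertible and type-preserving; its first three rows equal $[\gamma I_3 \mid 0]$ and its first three columns are $\gamma e_1,\gamma e_2,\gamma e_3$; every row of the lower-left block $V$ annihilates all $A_\ell$; for every type-$1$ matrix $Q A_\ell Q^T = A_\ell'$; and for every type-$2$ matrix $Q B_\ell R^T$ and $V B_\ell W^T$ are symmetric with $Q B_\ell W^T - R B_\ell^T V^T = B_\ell'$.

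First I would set $q := \rank{Q}$ and pick $J \in \GL(3+n,\mathbb{F}_p)$ with $J Q = \left(\begin{smallmatrix} Q' \\ 0 \end{smallmatrix}\right)$, $Q' \in M(q,3+n,\mathbb{F}_p)$ of full row rank. Invertibility of $S$ makes $[Q\mid R]$ of full row rank $3+n$, so the bottom $3+n-q$ rows of $J[Q\mid R]$ have the form $[0 \mid R']$ with $R' \in M(3+n-q,m',\mathbb{F}_p)$ of full row rank, and $3+n-q \le m'$; symmetrically, the left $3+n$ columns $\left(\begin{smallmatrix}Q\\V\end{smallmatrix}\right)$ of $S$ have full column rank $3+n$, whence $\rank{V}\ge 3+n-q$. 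Dually I would establish $\rank{W} = m'-(3+n-q)$ (using invertibility of $S$ and parts 4--5 of Lemma~\ref{lem:property_F_single}) and pick $K \in \GL(m',\mathbb{F}_p)$ with $K W = \left(\begin{smallmatrix}W'\\0\end{smallmatrix}\right)$, so that $K[V\mid W] = \left(\begin{smallmatrix} V_0 & W' \\ V' & 0 \end{smallmatrix}\right)$ with $V' \in M(3+n-q,3+n,\mathbb{F}_p)$; by Lemma~\ref{lem:property_restricted_isometry}(5) every row of $V'$ still annihilates every $A_\ell$. Then $S'$ is the matrix whose row-blocks, in order, are $[Q'\mid 0]$, $[0\mid R']$, $[0\mid W']$, $[V'\mid 0]$; its row count is $q + (3+n-q) + (m'-(3+n-q)) + (3+n-q) = 3+n+m'$.

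Next I would verify the two identities by expanding $S' F_\ell (S')^T$ block by block against the row-partition $(Q',R',W',V')$, with the grouping ``first $3+n$ rows'' $=(Q',R')$ and ``last $m'$ rows'' $=(W',V')$. For type-$1$ $F_\ell = \left(\begin{smallmatrix}A_\ell & 0\\0 & 0\end{smallmatrix}\right)$, every block meeting an $R'$- or $W'$-row vanishes (those rows are supported on the last $m'$ coordinates, where $F_\ell$ is zero), every block meeting a $V'$-row vanishes since $V'A_\ell = 0$ and $A_\ell^T = -A_\ell$, and the one surviving block is $Q'A_\ell (Q')^T = (JQ)A_\ell(JQ)^T = J(QA_\ell Q^T)J^T = J A_\ell' J^T$, which is exactly $\mathrm{diag}(J,K)F_\ell'\mathrm{diag}(J,K)^T$. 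For type-$2$ $F_\ell = \left(\begin{smallmatrix}0 & B_\ell\\-B_\ell^T & 0\end{smallmatrix}\right)$, the expansion gives $S' F_\ell (S')^T = \left(\begin{smallmatrix}0 & D_\ell\\-D_\ell^T & 0\end{smallmatrix}\right)$ with top-left block $\left(\begin{smallmatrix}0 & Q'B_\ell (R')^T\\ -R'B_\ell^T(Q')^T & 0\end{smallmatrix}\right)$, bottom-right block $\left(\begin{smallmatrix}0 & -W'B_\ell^T(V')^T\\ V'B_\ell (W')^T & 0\end{smallmatrix}\right)$ and $D_\ell = \left(\begin{smallmatrix}Q'B_\ell (W')^T & 0\\0 & -R'B_\ell^T(V')^T\end{smallmatrix}\right)$; the top-left and bottom-right blocks vanish because $(JQ)B_\ell(JR)^T$ and $(KV)B_\ell(KW)^T$ are symmetric while $JQ$ and $KW$ have zero trailing rows (a symmetric matrix with zero trailing rows has zero trailing columns), leaving precisely the block-diagonal $D_\ell$ required by conclusion 3 --- so all that remains is the identity $D_\ell = JB_\ell'K^T$.

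The hard part will be that last identity. Writing $JB_\ell'K^T = (JQ)B_\ell(KW)^T - (JR)B_\ell^T(KV)^T$ and expanding in the same block structure produces $D_\ell$ together with the extra terms $-R_1B_\ell^TV_0^T$, $-R_1B_\ell^T(V')^T$ and $-R'B_\ell^TV_0^T$, where $R_1$ is the $R$-part of the $Q'$-rows of $\mathrm{diag}(J,I)S$ and $V_0$ the $V$-part of the $W'$-rows of $\mathrm{diag}(I,K)S$; I must therefore choose $J$ and $K$ --- within the freedom remaining after row-reducing $Q$ and $W$ --- so that the rows of $R_1$ lie in the row span of $R'$ and the rows of $V_0$ in the row span of $V'$, which lets me clear those terms without disturbing $Q'$, $V'$, the identities already verified, or the block sizes. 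This clearing step, together with the rank identity $\rank{W} = m'-(3+n-q)$, is where I expect the fine structure of $\FF_{\semic(\mathbf{G})}$ from Lemma~\ref{lem:property_F_single} (the rows of the $B_\ell$'s span an $m'$-dimensional space; $F_1,F_2,F_3$ together with the remaining $F_\ell$ separate all coordinates past the third) and the rigid $\Phi$-coordinate structure and triangular $[4,3+n;4,3+n]$-block of $S$ from Lemma~\ref{lem:property_restricted_isometry} to be indispensable; with those in hand, the remainder is block-matrix bookkeeping.
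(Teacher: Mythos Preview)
Your outline is essentially the paper's Case~1, and for that case it is correct: when $\tau_S:=\dim\langle\{vQB_\ell R^T:v,\ell\}\rangle=0$ one has $\rank{R}=3+n-q$ (with $q=\rank{Q}$), your rank identity $\rank{W}=m'-(3+n-q)$ then follows, the row space of $R'$ coincides with that of $R$, and the ``clearing'' you describe goes through. The genuine gap is that you have not handled $\tau_S>0$, and your two hand-waved steps both fail exactly there. First, the identity $\rank{W}=m'-(3+n-q)$ is \emph{not} a consequence of invertibility of $S$ together with Lemma~\ref{lem:property_F_single}: from the paper's Equation~(\ref{equ:s_form}) one gets $\rank{R}=(3+n-q)+\tau_S$, and one only knows $\rank{W}\ge m'-\rank{R}=m'-(3+n-q)-\tau_S$; there is no reason the extra $\tau_S$ is recovered in $\rank{W}$. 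Second, and more fatally, the row space of $R'=(\text{left kernel of }Q)\cdot R$ has dimension $3+n-q$, strictly smaller than $\rank{R}=(3+n-q)+\tau_S$; since the rows of $R_1$ together with those of $R'$ span the full row space of $R$, the rows of $R_1$ \emph{cannot} all lie in the row span of $R'$ once $\tau_S>0$, no matter how you choose $J$. The analogous obstruction blocks placing the rows of $V_0$ in the row span of $V'$. So ``choosing $J,K$ within the remaining freedom'' is simply not enough.

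What the paper does instead is modify $S$ itself before choosing $J,K$. Using that $QB_\ell R^T$ and $VB_\ell W^T$ are symmetric, it writes the top block of $S$ (after row operations) as $\left(\begin{smallmatrix}Q_1&R_1\\Q_0&0\\0&R_0\end{smallmatrix}\right)$ with $Q_1,R_1$ of height $\tau_S$, and the bottom block as a combination of $Q_1,Q_0,R_1,R_0,V_0,W_0$ through coefficient matrices $Z_{Q_1,V},Z_{R_1,W}$, etc. If either $Z_{Q_1,V}$ or $Z_{R_1,W}$ is invertible (Case~2), one replaces the top block by a new $[Q^\dagger\mid R^\dagger]$ with $Q^\dagger B_\ell(R^\dagger)^T=0$ for all $\ell$; if neither is (Case~3), one replaces $Q_1,R_1$ by $Z_{R_1,W}Q_1,(2Z_{R_1,W}-Z_{Q_1,V})R_1$ and checks via the symmetry of $VB_\ell W^T$ that the new $\tau$ strictly drops. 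In both cases the replacement is of the form $PS$ with $P=\left(\begin{smallmatrix}*&U\\0&I\end{smallmatrix}\right)$, and Lemma~\ref{lem:upper_triangle} guarantees $PS\cdot\FF_{\semic(\mathbf{G})}\cdot(PS)^T=\FF_{\semic(\mathbf{H})}$ is preserved. Iterating at most $3+n$ times drives $\tau_S$ to $0$, at which point your Case~1 argument (the paper's Case~1) finishes. The missing idea in your plan is precisely this upper-triangular perturbation of $S$; it cannot be replaced by a clever block-diagonal $\mathrm{diag}(J,K)$ on the target alone.
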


\begin{proof}
Denote 
\[\tau_S \coloneqq \mathrm{dim}\left(\langle\{v QB_\ell R^T : v \in \mathbb{F}_p^{3+n}, 1 \leq \ell \leq t\} \rangle \right).\]
By Equation~(\ref{equ:case1_1}), $QB_\ell R^T$ is a skew-symmetric matrix for all the $1 \leq \ell \leq t$.
Hence, there is a matrix $J_0' \in \mathrm{GL}(3 + n, \mathbb{F}_p)$ such that 
for any $\tau_S + 1 \leq 1 \leq i \leq 3 + n$, 
the $i$-th row of $J_0 QB_\ell R^T$ is a zero row for all the $1 \leq \ell \leq t$.
Thus, 
the multiplication of $J_0$ and the submatrix on the first $3 + n$ rows of $S$ can be represented as 
\begin{equation}\label{equ:Q_R}
J_0' \cdot \left( \begin{array}{cc} Q & R \end{array} \right) = \left( \begin{array}{cc} Q_1 & R_1 \\ Q_0 & R_0 \end{array} \right),
\end{equation}
where $Q_1$ is of dimension $\tau_S \times (3+n)$, $R_1$ is of dimension $\tau_S \times m'$, $Q_0$ is of dimension $(3 + n - \tau_S) \times (3+n)$, and $R_0$ is of dimension $(3 + n - \tau_S) \times m'$, such that 
$J_0' Q B_\ell R^T$
is non-zero only in the first $\tau_S$ rows for all the $1 \leq \ell \leq t$.
By Equation~(\ref{equ:case1_1}), $J_0' Q B_\ell R^T J_0'^T$
is non-zero only in the submatrix on the first $\tau_S$ rows and the first $\tau_S$ columns.
Hence, $Q_0 B_\ell R^T$ and $Q B_\ell R_0^T$ are zero matrices for all the $B_\ell$.
Furthermore, each non-zero linear combination of the rows of $R_0$ is not a linear combination of the rows of $R_1$. Otherwise, it contradicts the definition of $\tau_S$.
Thus \begin{equation}\label{equ:case2_t1}\rank{R} = \rank{R_0} + \rank{R_1}.\end{equation}
On the other hand, by Lemma~\ref{lem:property_restricted_isometry}, $S$ is an invertible matrix. So we have \begin{equation}\label{equ:r_0}\rank{R_0} \geq n + 3 - \rank{Q}.\end{equation}

Furthermore, we can decompose $V$ and $W$ as follows:
\begin{enumerate}
\item 
$V = V_0 + Z_{Q, V} Q$ for some $Z_{Q, V} \in M(m', 3 + n, \mathbb{F}_p)$ such that $\rank{V_0} =  3+ n - \rank{Q}$.
\item 
$W = W_0 + Z_{R, W} R$ for some $Z_{R, W} \in M(m', m', \mathbb{F}_p)$ such that $\rank{W_0} = m' - \rank{R}$.
\end{enumerate}
Consider all the type 2 matrices $F_\ell$. 
For each type 2 matrix $F_\ell$, we have 
\begin{align*}
-RB_\ell^T V^T + QB_\ell W^T 
= & -RB_\ell ^T(V_0 + Z_{Q, V} Q)^T +  QB_\ell  (W_0 + Z_{R, W} R)^T \\ 
=  & QB_\ell  R^T ( - Z_{Q, V}^T + Z_{R, W}^T)  - RB_\ell ^T V_0^T + Q B_\ell  W_0^T \\
=  & QB_\ell  R_1^T ( - Z_{Q, V}^T + Z_{R, W}^T)  - RB_\ell ^T V_0^T + Q B_\ell  W_0^T, 
\end{align*}
where the second equality is by Equation~(\ref{equ:case1_1}), and
the third equality uses the fact that  $Q B_\ell R_0$ is a zero matrix for all the $B_\ell $. 

Since $S$ is an invertible matrix, the span of row vectors of $-RB_\ell ^T V^T + QB_\ell W^T$ for all the $B_\ell $ is of dimension $m'$ by Lemma~\ref{lem:property_F_single}.
Hence, we have 
\begin{align*} & \rank{R_1} + \rank{V_0} +\rank{W_0} \\ = & (\rank{R} - \rank{R_0}) + (3 + n - \rank{Q}) + (m' - \rank{R}) \\ \geq & m'\end{align*}
which implies $3 + n - \rank{Q} \geq \rank{R_0}$.
By Equation~(\ref{equ:r_0}), we have $\rank{R_0} = 3 + n - \rank{Q}$.
By Equation~(\ref{equ:Q_R}) and the fact that $S$ is invertible, 
there is a matrix $J_0 \in \mathrm{GL}(n + 3, \mathbb{F}_p)$ such that 
\begin{equation}\label{equ:s_form} J_0 \cdot\left( \begin{array}{cc} Q & R \end{array} \right)  
= \left(\begin{array}{cc} Q_1 & R_1 \\ Q_0 & 0 \\ 0 & R_0 \end{array}\right)\end{equation}
where $Q_1$ is of dimension $\tau_S \times (3 + n)$, $R_1$ is of dimension $\tau_S \times m'$, $Q_0$ is of dimension $(\rank{Q} - \tau_S) \times (3+n)$, and $R_0$ is of dimension $(\rank{R} - \tau_S) \times m'$.




Notice that the intersection of any two spaces among the space spanned by the row vectors of $V_0$, the space spanned by the row vectors of $W_0$, and the space spanned by the row vectors of $R_1^T ( - Z_{Q, V}^T + Z_{R, W}^T)$ only contains the zero row vector.
There is a matrix $K_0 \in \mathrm{GL}(m', \mathbb{F}_p)$ such that 
the multiplication of $K_0$ and the submatrix on $V$ and $W$ (i.e., $S[4 + n, 3 + n + m',  1, 3 + n + m']$) can be written as (by slightly abusing the notations of $W_0$ and $V_0$)
\begin{equation}\label{equ:lower_s_form} \begin{split}& K_0 \cdot \left(S[4 + n, 3 + n + m',  1, 3 + n + m'] \right) \\ = & \left(\begin{array}{cc}  Z_{Q_1, V} Q_1 + Z_{Q_0, V} Q_0 &  Z_{R_1, W} R_1 + Z_{R_0, W} R_0 \\ Z_{Q_1, V}' Q_1 + Z_{Q_0, V}' Q_0   & W_0 \\ V_0  &  Z_{R_1, V}' R_1 + Z_{R_0, V}' R_0\end{array}  \right) 
\end{split}
\end{equation}
for some $\tau_S \times \tau_S$ dimensional $Z_{Q_1, V}, Z_{R_1, W}$, 
$\tau_S\times \rank{Q_0}$ dimensional $Z_{Q_0, V}$, 
$\tau_S\times \rank{R_0}$ dimensional $Z_{R_0, W}$, 
$(m' - \rank{R})\times \tau_S$ dimensional $Z_{Q_1, V}'$,
$(m' - \rank{R}) \times \rank{Q_0}$ dimensional $Z_{Q_0, V}'$, 
$(3 + n - \rank{Q})  \times \tau_S$ dimensional $Z_{R_1, W}'$, 
and $(3 + n - \rank{Q}) \times \rank{R_0}$ dimensional $Z_{R_0, W}'$. 
In the rest of this proof, we consider three cases. 

 Case 1. $\tau_S = 0$. Since $Q_1$ and $R_1$ do not exist by the condition of $\tau_S = 0$,
\[K_0 \cdot (S[4 + n, 3 + n + m',  1, 3 + n + m']) = \left(\begin{array}{cc}   Z_{Q_0, V}' Q_0   & W_0   \\ V_0 &  Z_{R_0, W}' R_0  \end{array}  \right).\]
Since \[-K_0WB_\ell^TV^TK_0^T + K_0 V B_\ell W^T K_0^T = K_0 (-WB_\ell^TV^T + V B_\ell W^T ) K_0^T\] is a zero matrix, 
\[ - Z_{Q_0, V}' Q_0  B_\ell R_0^T Z_{R_0, W}'^T  +W_0B_\ell^T V_0^T = W_0B_\ell^T V_0^T \]
is a zero matrix based on the fact that 
$Q B_\ell R^T = 0$ for all the $B_\ell$. 
Let $q$ be the number of rows of $Q_0$. Let  
\[S' = \left(\begin{array}{cc} Q_0 & 0 \\ 0 & R_0 \\ 0 & W_0    \\ V_0  & 0 \end{array}\right),\]
$J = J_0$, and $K = K_0$. 
Let $Q' = Q_0, R' = R_0, V' = V_0$, and $W' = W_0$.

By the fifth property of Lemma~\ref{lem:property_restricted_isometry}, each row vector $v$ that is a row of $V$
satisfies $v A_\ell = 0$ for each type 1 matrix $F_\ell$, and thus the second condition of the current lemma holds. 

For each type 2 matrix $F_\ell$, by the fact that $Q_0 B_\ell R_0^T$ is a zero matrix,
$S' F_\ell S'^T$ is also a type 2 matrix. 
Furthermore, we have 
\begin{align*}& (S' F_\ell S'^T)[1, 3 + n;4+n, 3 + n + m'] \\ = & \left( \begin{array}{c} Q_0 \\ 0 \end{array} \right) B_\ell \left(\begin{array}{c} W_0 \\ 0 \end{array} \right)^T - \left( \begin{array}{c} 0 \\ R_0 \end{array} \right) B_\ell^T \left( \begin{array}{c} 0 \\ V_0  \end{array} \right)^T \\
= & J \left(S F_\ell S^T [1, 3 + n;4+n, 3 + n + m'] \right)K^T. \end{align*}
Thus, the third condition of the current lemma also holds. Thus, the current lemma holds for this case.

Case 2. $\tau_S > 0$ and at least one of $Z_{Q_1, V}$ and $Z_{R_1, W}$ is full rank. 
We show that if $Z_{Q_1, V}$ is full rank, then the current lemma holds. The case that $Z_{R_1, W}$ is full rank is similar.
Let $S^\dagger$ be the matrix of 
\[
S^\dagger = \left(\begin{array}{cc} Q^\dagger & R^\dagger \\K_0\cdot V & K_0 \cdot W \end{array} \right)\]
where \[Q^\dagger = \left(\begin{array}{c} 0 \\ Q_0 \\ 0 \end{array}\right) \text{ and } R^\dagger = \left(\begin{array}{c} R_1' \\ 0  \\ R_0 \end{array}\right) \]
with 
$R_1' = R_1 - (Z_{Q_1, V})^{-1} (Z_{R_1, W} R_1 + Z_{R_0, W}R_0)$. 
Since $Q_0 B_\ell R^T$ is a zero matrix for every type 2 matrix $F_\ell$, 
$Q^\dagger B_\ell (R^\dagger)^T$ is a zero matrix for every type 2 matrix $F_\ell$. And thus,
$(S^\dagger F_\ell (S^\dagger)^T) [1, 3 + n; 1, 3 + n]$ is a zero matrix for every type 2 matrix $F_\ell$. 

On the other hand, 
notice that 
\[
S^\dagger = \left(\begin{array}{cc} X & U \\ 0 & I_{m'}\end{array}\right) \left(\begin{array}{cc} J_0 & 0 \\ 0 & K_0 \end{array}\right) S\] 
for some $X\in \GL(3+n, \mathbb{F}_p)$ and $U \in M(3+ n, m', \mathbb{F}_p)$. 
We have 
\[\left(\begin{array}{cc} X & U \\ 0 & I_{m'}\end{array}\right) \left(\begin{array}{cc} J_0& 0 \\ 0 & K_0 \end{array}\right) 
= \left(\begin{array}{cc} XJ_0 & UK_0 \\ 0 & K_0\end{array}\right)
=
 \left(\begin{array}{cc} XJ_0& 0 \\ 0 & K_0 \end{array}\right) \left(\begin{array}{cc} I_{3 + n} & J_0^{-1} X^{-1}U K_0 \\ 0 & I_{m'}\end{array}\right)\]
 using the fact that $J_0$ is an invertible matrix.
 Let 
\[
S' = \left(\begin{array}{cc} (XJ_0)^{-1} & 0 \\ 0 & K_0^{-1}  \end{array} \right)S^\dagger  = \left(\begin{array}{cc} (XJ_0)^{-1} Q^\dagger &  (XJ_0)^{-1} R^\dagger \\ V & W  \end{array} \right)\text{ and }
P =  \left(\begin{array}{cc} I_{3 + n} & J^{-1}U K \\ 0 & I_{m'} \end{array} \right).\]
We have $S' = PS$. 
Using the fact that $(S^\dagger F_\ell (S^\dagger)^T) [1, 3 + n; 1, 3 + n]$ is a zero matrix for every type 2 matrix $F_\ell$,
$(S' F_\ell S'^T) [1, 3 + n; 1, 3 + n]$ is a zero matrix for every type 2 matrix $F_\ell$.
Hence, $PS F_\ell S^TP^T [1, 3 + n; 1, 3 + n]$ is a zero matrix for every type 2 matrix $F_\ell$. 
By Lemma~\ref{lem:upper_triangle}, $ S' F_\ell S'^T  =  S F_\ell S^T $ for all the $F_\ell$ in $\FF_{\semic(\mathbf{G})}$.
In addition, since $Q^\dagger B_\ell (R^\dagger)^T$ is a zero matrix for each $B_\ell$, 
we have 
$\tau_{S'} = 0$, where 
\[\tau_{S'} \coloneqq \mathrm{dim}\left(\left\langle\left\{v (XJ_0)^{-1}Q^\dagger B_\ell ((XJ_0)^{-1}R^\dagger)^T : v \in \mathbb{F}_p^{3+n}, 1 \leq \ell \leq t\right\} \right\rangle \right).\]
By Case 1, the current lemma holds for this case.

Case 3. Both $Z_{Q_1, V}$ and $Z_{R_1, W}$ are not full rank. 
Let $S''$ be the matrix of 
\[
\left(\begin{array}{cc} Q'' & R'' \\ V & W \end{array} \right)\]
where \[Q'' = \left(\begin{array}{c} Q_1'' \\ Q_0 \\ 0 \end{array}\right) \text{ and } R'' = \left(\begin{array}{c} R_1'' \\ 0  \\ R_0 \end{array}\right) \]
with $Q_1'' = Z_{R_1, W} Q_1$ and $R_1'' = (2Z_{R_1, W} - Z_{Q_1, V}) R_1$. 
We prove some useful properties of $S''$
\begin{enumerate}
\item[(a).] $S''$ is a full rank matrix, and there is a full rank matrix $P'' =  \left(\begin{array}{cc} Z'' & U'' \\ 0 & I_{m'} \end{array} \right)$ such that $S'' = P''S$.
\item[(b).] $S'' F_\ell S''^T[1, 3+n;1, 3+n]$ is a zero matrix for each type 2 matrix $F_\ell$. 
\item[(c).] $\tau_{S''} < \tau_S$, where 
\[\tau_{S'} \coloneqq \mathrm{dim}\left(\left\langle\left\{v Q'' B_\ell (R'')^T : v \in \mathbb{F}_p^{3+n}, 1 \leq \ell \leq t\right\} \right\rangle \right).\]
\end{enumerate}
To prove the property (a), 
we first show that $Z_{Q_1, V} - Z_{R_1, W}$ is a full rank matrix, i.e., \[\rank{Z_{Q_1, V} - Z_{R_1, W}} = \tau_S.\]
Suppose it is not. Then there is a row vector $v \in \mathbb{F}_p^{\tau_S}$ such that $v (Z_{Q_1, V}- Z_{R_1, W})$ is a zero row vector, which means that 
there is a non-zero linear combination of the first $3+n$ rows of $S$ equals  
a non-zero linear combination of the last $m'$ rows of $S$, which contradicts to the fact that $S$ is a full rank matrix. 
Hence, $Z_{Q_1, V} - Z_{R_1, W}$ is a full rank matrix. 
Since there is an invertible matrix $Z'$ such that 
\[Z'\cdot S[1, 3+n; 1, 3+m] = \left(\begin{array}{cc} (Z_{R_1, W} - Z_{Q_1, V} )Q_1 & (Z_{R_1, W} - Z_{Q_1, V} ) R_1 \\ Q_0 & 0 \\ 0 & R_0 \end{array}\right),\]
the property (a) holds.


To prove the property (b), we show that for each type 2  matrix $F_\ell$,
$Q'' B_\ell R''^T - R'' B_\ell^T Q''^T$ is a zero matrix. We have
\begin{align*}Q_1'' B_\ell R_1''^T = & Z_{R_1, W} Q_1 B_\ell R_1^T (2 Z_{R_1, W} - Z_{Q_1, V})^T \\
= & 2 Z_{R_1, W} Q_1 B_\ell R_1^T Z_{R_1, W} ^T - Z_{R_1, W} Q_1 B_\ell R_1^T Z_{Q_1, V}^T
\end{align*}
$Z_{R_1, W} Q_1 B_\ell R_1^T Z_{R_1, W} ^T$ is a skew-symmetric matrix by the fact that $Q_1 B_\ell R_1^T$ is a skew-symmetric matrix for all the $1 \leq \ell \leq t$.
$Z_{R_1, W} Q_1 B_\ell R_1^T Z_{Q_1, V}^T$ is also a skew-symmetric matrix by the fact that $W B_\ell V^T$ is a zero matrix and Equation~(\ref{equ:lower_s_form}).
Hence, $Q'' B_\ell R''^T - R'' B_\ell^T Q''^T$ is a zero matrix. 


For the property (c), since $Z_{R_1, W}$ is not full rank, we have $\tau_{S''} < \tau_S$.


Let \[S' = \left(\begin{array}{cc} (Z'')^{-1}  & 0 \\ 0 & I_{m'}\end{array}\right) S'' =  \left(\begin{array}{cc} (Z'')^{-1}  & 0 \\ 0 & I_{m'}\end{array}\right)\left(\begin{array}{cc}  Z'' & U'' \\ 0 & I_{m'}\end{array}\right)S= \left(\begin{array}{cc}  I_{3+n} & (Z'')^{-1}U'' \\ 0 & I_{m'}\end{array}\right)S.\]
Based on the properties of $S''$,
$S'' F_\ell S''^T[1, 3+n;1, 3+n]$ is a zero matrix for each type 2 matrix $F_\ell$, and $\tau_{S'} = \tau_{S''} < \tau_S$.
By Lemma~\ref{lem:upper_triangle}, we have $S' F_\ell S'^T = F_\ell'$ for each $F_\ell$. 
Repeating the process for at most $3 + n$ times, 
we obtain a matrix of either Case 1 or Case 2. 
Then the current lemma follows. 
\end{proof}




The following lemma was proved in the proof of Lemma 2.2 in \cite{futorny2019wildness}.
\begin{lemma}\label{lem:block_matrix_tuple}
Let $\AA = (A_1, \dots A_k)$ and $\BB = (B_1, \dots, B_k)$ be two matrix tuples in $M(m, n, \mathbb{F}_{p})^k$. 
Suppose there are $1 \leq q < m$ and $1 \leq r < n$ such that each $A_i$ equals   
\[\left(\begin{array}{cc} A_i' & 0
\\ 0
& A_i'' \end{array}\right)\]
for some $A_i'\in M(q, r, \mathbb{F}_p)$ and $A_i'' \in M(m - q, n - r, \mathbb{F}_p)$,
and each $B_i$ equals 
\[\left(\begin{array}{cc} A_i' & 0
\\ 0
& B_i'' \end{array}\right)\]
for some $B_i'' \in M(m - q, n - r, \mathbb{F}_p)$. 
There are $P \in \mathrm{GL}(m, \mathbb{F}_p)$ and $Q \in \mathrm{GL}(n, \mathbb{F}_p)$ such that $PA_i Q = B_i$ for each $1 \leq i\leq k$
if and only if there are 
$P'' \in \mathrm{GL}(m - q, \mathbb{F}_p)$ and $Q'' \in \mathrm{GL}(n - r, \mathbb{F}_p)$ such that
$P'' A_i'' Q'' = B_i''$ for all the $1 \leq i \leq k$.
\end{lemma}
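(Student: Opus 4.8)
The plan is to treat the two implications separately, the reverse (``if'') direction being immediate and the forward (``only if'') direction carrying all the content. For ``if'', given $P'' \in \GL(m-q, \mathbb{F}_p)$ and $Q'' \in \GL(n-r, \mathbb{F}_p)$ with $P'' A_i'' Q'' = B_i''$ for all $i$, I would take $P$ to be the block-diagonal matrix with diagonal blocks $I_q$ and $P''$, and $Q$ the block-diagonal matrix with diagonal blocks $I_r$ and $Q''$; these lie in $\GL(m, \mathbb{F}_p)$ and $\GL(n, \mathbb{F}_p)$, and a one-line block multiplication gives $P A_i Q = B_i$ for every $i$.

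For the ``only if'' direction, the approach is to reinterpret matrix-tuple equivalence as isomorphism of representations of the Kronecker quiver with $k$ parallel arrows, $\bullet \rightrightarrows \bullet$. To a tuple $(A_1, \dots, A_k) \in M(m, n, \mathbb{F}_p)^k$ one associates the representation that places $\mathbb{F}_p^n$ at the source vertex, $\mathbb{F}_p^m$ at the target vertex, and sends the $i$-th arrow to $A_i$; two tuples are equivalent, i.e.\ $P \AA Q = \BB$ for some $P \in \GL(m, \mathbb{F}_p)$ and $Q \in \GL(n, \mathbb{F}_p)$, exactly when the associated representations are isomorphic, the isomorphism being $P$ on the target space and $Q^{-1}$ on the source space. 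Under this translation the block-diagonal form of the $A_i$ expresses the representation of $\AA$ as a direct sum $\AA' \oplus \AA''$, with $\AA'$ the summand carried by $\mathbb{F}_p^r, \mathbb{F}_p^q$ and arrows $A_i'$ and $\AA''$ the summand carried by $\mathbb{F}_p^{n-r}, \mathbb{F}_p^{m-q}$ and arrows $A_i''$; since the first block of each $B_i$ is the \emph{same} matrix $A_i'$, the representation of $\BB$ is $\AA' \oplus \BB''$.

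Thus the hypothesis becomes $\AA' \oplus \AA'' \cong \AA' \oplus \BB''$, and I would finish by invoking cancellation of direct summands: the category of finite-dimensional representations of a finite quiver over $\mathbb{F}_p$ is a Krull--Schmidt category --- every object has finite length, hence decomposes into indecomposables, and the decomposition is unique up to permutation and isomorphism because, by Fitting's lemma, the endomorphism ring of an indecomposable is local --- so a common summand may be removed. Cancelling $\AA'$ gives $\AA'' \cong \BB''$, which unwinds to invertible $P'' \in \GL(m-q, \mathbb{F}_p)$ and $Q'' \in \GL(n-r, \mathbb{F}_p)$ with $P'' A_i'' Q'' = B_i''$ for all $i$.

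The step I expect to be the main obstacle is supplying the cancellation rigorously without simply quoting Krull--Schmidt: cancellation of a common summand among finite-length modules still needs a proof, typically by induction on the number of indecomposable summands of $\AA'$ together with Fitting's lemma to peel off one summand at a time. An alternative, entirely elementary route avoids representation theory: write $P$ and $Q$ in $2\times 2$ block form, read off the system $P_{11}A_i'Q_{11} + P_{12}A_i''Q_{21} = A_i'$, $P_{21}A_i'Q_{12} + P_{22}A_i''Q_{22} = B_i''$, together with the two vanishing equations, and then conjugate by block-unitriangular matrices to clear the off-diagonal blocks of $P$ and $Q$ step by step until the block-diagonal situation is reached; this bookkeeping is precisely what the cited Lemma~2.2 of \cite{futorny2019wildness} carries out, and it is the part that would require the most care.
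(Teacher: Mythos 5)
The paper itself supplies no proof of this lemma; it is stated with the sentence ``The following lemma was proved in the proof of Lemma~2.2 in~\cite{futorny2019wildness},'' so your comparison should really be against that cited argument, which (as you correctly anticipate) is an elementary block-matrix reduction: write $P,Q$ in $2\times2$ block form, extract the four resulting matrix equations, and successively multiply by block-unitriangular matrices to kill the off-diagonal blocks while preserving the tuple equivalence. Your primary route is genuinely different and entirely correct: identifying a length-$k$ matrix tuple in $M(m,n,\mathbb{F}_p)^k$ with a finite-dimensional representation of the $k$-arrow Kronecker quiver turns tuple equivalence into representation isomorphism (with $P$ acting on the target space and $Q^{-1}$ on the source), the block-diagonal shape of the $A_i$ and $B_i$ exhibits the two representations as $\AA'\oplus\AA''$ and $\AA'\oplus\BB''$ with the \emph{same} first summand, and Krull--Schmidt cancellation in the category of finite-length modules over the path algebra — valid because endomorphism rings of indecomposables over a field are local by Fitting's lemma — yields $\AA''\cong\BB''$, i.e.\ the desired $P''$ and $Q''$. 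What the conceptual route buys you is a one-line deduction from a standard structural theorem, at the cost of importing the Krull--Schmidt machinery; what the elementary route buys is self-containedness and an explicit algorithm for producing $P''$ and $Q''$ from $P$ and $Q$, which is closer in spirit to the constructive flavor of the rest of the paper (and is what makes the algorithm in Lemma~\ref{lem:test_semi_canonical_form} actually computable). Either is a legitimate proof; the only caveat is that if you wanted the cancellation to be fully self-contained you would indeed need to prove it by peeling off indecomposable summands of $\AA'$ one at a time via Fitting's lemma, exactly as you flag.
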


Now we give our algorithm for isometry testing of semi-canonical forms of two skew-symmetric matrix space tensors.

\begin{framed}
\noindent \textbf{Isometry Testing of Tensor Semi-Canonical Forms Algorithm}

\noindent \textbf{Input:} Semi-canonical forms $\semic(\mathbf{G})$ and $\semic(\mathbf{H})$ of two skew-symmetric matrix space tensors.

\noindent \textbf{Output:} Yes or no.

\begin{enumerate}
\item Return no if the parameters of the two semi-canonical forms are different.
Otherwise, let $\alpha_\matrixspaceX, \beta_\matrixspaceX, \alpha_\matrixspaceY$, and $\beta_\matrixspaceY$ be the parameters of the two semi-canonical forms. 
\item Construct $\FF_{\semic(\mathbf{G})} = (F_1, \dots, F_t)$ and $\FF_{\semic(\mathbf{H})} = (F_1', \dots, F_t')$.
\item Run the skew-symmetric matrix tuple isometry algorithm on $\FF_{\semic(\mathbf{G})}$ and $\FF_{\semic(\mathbf{H})}$. If the algorithm returns no, then return no. Otherwise, the algorithm returns a matrix $S$ of form \[\left(\begin{array}{cc} Q & R \\ V & W \end{array} \right)\]
for some $P, Q, R$, and $S$ of dimensions $(n + 3)\times (n + 3)$, $(n + 3) \times m'$, $m' \times (n + 3)$, and $m' \times m'$ respectively.

\item If either $Q$ or $W$ is a full rank matrix, return yes. 
\item Construct matrices $J \in \GL(3+n, \mathbb{F}_p), K \in \GL(m', \mathbb{F}_p)$ and matrix $S' \in \GL(3+n + m', \mathbb{F}_p)$ of form \[S' = \left(\begin{array}{cc} Q' & 0 \\ 0 & R' \\ 0 & W' \\ V' & 0\end{array}\right)\] 
for some positive integer $q$,  $Q'$ of dimension $q \times (3 + n)$, $V'$ of dimension $(3 + n - q) \times (3 + n)$, 
$R'$ of dimension $(3 + n - q) \times m'$, and $W'$ of dimension $(m' - 3 - n + q) \times m'$
such that 
Lemma~\ref{lem:s_transform_final} is satisfied.
\item Return the output of the matrix tuple equivalence algorithm with matrix tuples  
$(V' B_1 R'^T, \dots, V' B_t R'^T)$ and 
$(R' B_1^T V'^T, \dots, R' B_t^T V'^T)$.
\end{enumerate}

\end{framed}

\begin{lemma}\label{lem:test_semi_canonical_form}
There is an algorithm for the isometry testing of the semi-canonical forms of two skew-symmetric matrix space tensors in $\mathbb{F}_p^{m \times n \times n}$ for some prime $p > 2$ and positive integers $n, m$
with running time $\mathrm{poly}(n, m, p)$. 
\end{lemma}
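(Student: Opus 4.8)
The plan is to show that the \textbf{Isometry Testing of Tensor Semi-Canonical Forms Algorithm} stated above is correct and runs in $\mathrm{poly}(n,m,p)$ time; the two together give the lemma. The running time is immediate: step~2 builds $\FF_{\semic(\mathbf{G})}$ and $\FF_{\semic(\mathbf{H})}$, which have $t = \mathrm{poly}(n,m)$ matrices each of dimension $(3+n+m')\times(3+n+m')$ by Lemma~\ref{lem:property_F_single}; step~3 invokes the skew-symmetric matrix tuple isometry algorithm of Theorem~\ref{thm:skew_tuple_isometry_intro} ($\mathrm{poly}$ time); steps~4--5 are the polynomial-time constructions guaranteed by Lemma~\ref{lem:reduction_easy_case} and Lemma~\ref{lem:s_transform_final}; step~6 invokes the matrix tuple equivalence algorithm of Theorem~\ref{thm:tuple_isometry} ($\mathrm{poly}$ time). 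So the entire burden is correctness.

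For correctness I would argue in the following order. First, by Lemma~\ref{lem:restricted_tuple_isometry}, $\semic(\mathbf{G})$ and $\semic(\mathbf{H})$ are isometric if and only if there exists a \emph{block diagonal} $S = \left(\begin{smallmatrix} Q & 0 \\ 0 & W\end{smallmatrix}\right)$ with $S\FF_{\semic(\mathbf{G})}S^T = \FF_{\semic(\mathbf{H})}$. Hence if step~3 returns no, then a fortiori no block diagonal isometry exists, so the two forms are not isometric, and returning no is correct. If step~3 returns some isometry $S = \left(\begin{smallmatrix} Q & R \\ V & W\end{smallmatrix}\right)$, we split on whether $Q$ or $W$ has full rank. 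If one of them does, Lemma~\ref{lem:reduction_easy_case} produces a block diagonal isometry $\left(\begin{smallmatrix} Q' & 0 \\ 0 & W'\end{smallmatrix}\right)$ from $\FF_{\semic(\mathbf{G})}$ to $\FF_{\semic(\mathbf{H})}$, so by Lemma~\ref{lem:restricted_tuple_isometry} the two forms are isometric and returning yes in step~4 is correct. The remaining case is that neither $Q$ nor $W$ is full rank; here step~5 applies Lemma~\ref{lem:s_transform_final} to obtain $J \in \GL(3+n,\mathbb{F}_p)$, $K \in \GL(m',\mathbb{F}_p)$, an integer $q$, and $S' = \left(\begin{smallmatrix} Q' & 0 \\ 0 & R' \\ 0 & W' \\ V' & 0\end{smallmatrix}\right)$ with the three listed properties. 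I would then show that $\semic(\mathbf{G})$ and $\semic(\mathbf{H})$ are isometric if and only if the matrix tuples $(V'B_1R'^T, \dots, V'B_tR'^T)$ and $(R'B_1^TV'^T, \dots, R'B_t^TV'^T)$ are equivalent, which justifies step~6.

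The heart of the argument — and the step I expect to be the main obstacle — is this last equivalence. By properties~2 and~3 of Lemma~\ref{lem:s_transform_final}, after conjugating by $S'$ every $F_\ell$ becomes a matrix that is block-structured with respect to the partition $q \mid (3+n-q) \mid (m'-(3+n-q))$ of the coordinates, and the block $D_\ell = \left(\begin{smallmatrix} D_\ell' & 0 \\ 0 & D_\ell''\end{smallmatrix}\right)$ carrying the ``type 2'' content of $F_\ell$ splits correspondingly, with $D_\ell'' = V'B_\ell R'^T$ (up to transpose). The point is that, because $\left(\begin{smallmatrix} J & 0 \\ 0 & K\end{smallmatrix}\right)\FF_{\semic(\mathbf{H})}\left(\begin{smallmatrix} J^T & 0 \\ 0 & K^T\end{smallmatrix}\right)$ equals $S'\FF_{\semic(\mathbf{G})}S'^T$ block-for-block, and because $S'$ already matches the $Q'$-block of $\FF_{\semic(\mathbf{G})}$ to the corresponding block of $\FF_{\semic(\mathbf{H})}$ after the $J,K$ conjugation, finding a \emph{block diagonal} isometry between $\FF_{\semic(\mathbf{G})}$ and $\FF_{\semic(\mathbf{H})}$ reduces — via Lemma~\ref{lem:block_matrix_tuple}, which strips off a common block from two matrix tuples — to deciding equivalence of the residual tuples, and the residual tuples are exactly $(V'B_\ell R'^T)_\ell$ and $(R'B_\ell^T V'^T)_\ell$ (the transpose appears because the two halves $R'$ and $V'$ of the surface rows of $\FF_{\semic(\mathbf{G})}$ are tied together by the skew-symmetric constraint, so a general — not necessarily isometric — change of basis on them is what remains free). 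One direction (a block diagonal isometry yields tuple equivalence) follows by reading off the blocks; the converse requires reassembling a block diagonal isometry from a tuple equivalence $P''A_i''Q'' = B_i''$, using Lemma~\ref{lem:block_matrix_tuple} to extend it to the full tuples and then Lemma~\ref{lem:reduction_easy_case} or a direct check to pass from that isometry back to a block diagonal one, and finally Lemma~\ref{lem:restricted_tuple_isometry} to conclude isometry of the semi-canonical forms. Carefully tracking which blocks are forced equal, which are free, and how the skew-symmetry constraint couples $R'$ with $V'$ is the delicate bookkeeping that the proof must get exactly right.
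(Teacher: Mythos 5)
Your proposal follows the same route as the paper: run the \textbf{Isometry Testing of Tensor Semi-Canonical Forms Algorithm}, appeal to Lemma~\ref{lem:restricted_tuple_isometry} for the reduction to block-diagonal tuple isometry, dispose of the full-rank case via Lemma~\ref{lem:reduction_easy_case}, invoke Lemma~\ref{lem:s_transform_final} for the remaining case, pass to residual tuples via Lemma~\ref{lem:block_matrix_tuple}, and bound the running time using Theorems~\ref{thm:skew_tuple_isometry_intro} and~\ref{thm:tuple_isometry} and the polynomial-time constructions in the cited lemmas. The one place your sketch is slightly off is in the final reassembly step: Lemma~\ref{lem:block_matrix_tuple} upgrades equivalence of the residual tuples to equivalence of $(Q''B_\ell W''^T)_\ell$ and $(JB_\ell'K^T)_\ell$, and hence of $(B_\ell)_\ell$ and $(B_\ell')_\ell$, but that is a tuple \emph{equivalence} (two-sided multiplication), not a skew-symmetric tuple isometry, so Lemma~\ref{lem:reduction_easy_case} --- which takes as input a genuine isometry of $\FF_{\semic(\mathbf{G})}$ and $\FF_{\semic(\mathbf{H})}$ --- does not apply directly as you suggest; the paper instead argues this step (quite tersely) as a consequence of Lemma~\ref{lem:restricted_tuple_isometry} and the necessity of $B_\ell$-equivalence, using that the $J,K$-conjugated tuples agree with $S'\FF_{\semic(\mathbf{G})}S'^T$ block-for-block so the common top-left block can be stripped. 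Apart from that bookkeeping slip, the structure, lemma usage, and level of detail all match the paper's proof.
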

\begin{proof}
We first prove the correctness of the algorithm. 
By Definition~\ref{def:skew_tensor_semi-canonical_form_iso} and Lemma~\ref{lem:restricted_tuple_isometry}, 
the two semi-canonical forms are isometric if and only if the following two conditions hold
\begin{enumerate}
\item[(a)] The two semi-canonical forms have the same parameters  $\alpha_\matrixspaceX, \beta_\matrixspaceX, \alpha_\matrixspaceY$, and  $\beta_\matrixspaceY$.
\item[(b)] 
There is a matrix $S_0$ of form 
\[\left(\begin{array}{cc} Q_0 & 0 \\ 0 & W_0\end{array}\right)\] such that $S_0 \cdot \FF_{\semic(\mathbf{G})} \cdot S_0^T = \FF_{\semic(\mathbf{H})}$, where $Q_0$ is a $(3+n) \times (3+n)$ matrix and $W_0$ is an $m' \times m'$ matrix.
\end{enumerate}

If the two input semi-canonical forms are isometric, then the first step of the algorithm does not return no by Definition~\ref{def:skew_tensor_semi-canonical_form_iso}. 
Hence, step 4 of the algorithm returns yes and a matrix $S$ of form \[\left(\begin{array}{cc} Q & R \\ V & W \end{array} \right).\]
If at least one of $Q$ and $W$ is full rank, then the algorithm returns yes. 
Otherwise, 
 step 6 of the algorithm constructs matrices $J, K$, and $S'$ satisfying Lemma~\ref{lem:s_transform_final}. 
Let $Q''$ be the $(3+n) \times (3+n)$ matrix such that
\[Q''[1, q;1, 3+ n] = Q' \text{ and }Q''[q+1, 3+n; 1, 3+n] = V',\]
and $W''$ be the $m' \times m'$ matrix such that 
\[W''[1, m' - 3 - n + q;1, m'] = W' \text{ and } W''[m' - 2 - n + q, m'; 1, m'] = R'.\]
By Lemma~\ref{lem:s_transform_final}, for each type 2 matrix $F_\ell$, we have 
\[Q'' B_\ell W''^T = \left(\begin{array}{cc} Q' B_\ell W'^T & 0 \\ 0 & V' B_\ell R'^T \end{array}\right),\]
and 
\[J B_\ell' K^T =  \left(\begin{array}{cc} Q' B_\ell W'^T & 0 \\ 0 & R' B_\ell^T V'^T \end{array}\right).\]
Since one necessary condition for two semi-canonical forms being isometric is that 
there are $Q^\dagger \in \GL(3 + n, \mathbb{F}_p)$ and $W^\dagger \in \GL(m', \mathbb{F}_p)$ such that 
$Q^\dagger B_\ell (W^\dagger)^T = B_\ell'$ for all the type 2 matrices $F_\ell$,
by Lemma~\ref{lem:block_matrix_tuple},
the matrix tuples $(V' B_1 R'^T, \dots, V' B_t R'^T)$  and  $(R' B_1^T V'^T, \dots, R' B_t V'^T)$ are isometric. 
Hence, the algorithm returns yes.

If the input two semi-canonical forms are not isometric, then at least one of condition (a) and condition (b) does not hold.
If condition (a) does not hold, then the algorithm returns no at step 1. 
If condition (b) does not hold and the algorithm does not return no at step 3, 
then by Theorem~\ref{thm:skew_tuple_isometry_intro}, step 3 returns a matrix $S$ of form \[\left(\begin{array}{cc} Q & R \\ V & W \end{array} \right)\]
such that $S \FF_{\semic(\mathbf{G})} S^T = \FF_{\semic(\mathbf{H})}$. 
Both $Q$ and $W$ are not full rank because otherwise, by Lemma~\ref{lem:reduction_easy_case}, Lemma~\ref{lem:restricted_tuple_isometry} does not hold.
Hence, step 5 of the algorithm constructs matrices $J, K$, and $S'$ satisfying 
Lemma~\ref{lem:s_transform_final}. 
Let $Q''$ be the $(3+n) \times (3+n)$ matrix such that $Q''[1, q;1, 3+ n] = Q'$ and $Q''[q+1, 3+n; 1, 3+n] = V'$.
Let $W''$ be the $m' \times m'$ matrix such that $W''[1, m' - 3 - n + q;1, m'] = W'$ and $W''[m' - 2 - n + q, m'; 1, m'] = R'$.
By Lemma~\ref{lem:s_transform_final}, for each type 2 matrix $F_\ell$, we have 
\[Q'' B_\ell W''^T = \left(\begin{array}{cc} Q' B_\ell W'^T & 0 \\ 0 & V' B_\ell R'^T \end{array}\right),\]
and 
\[J B_\ell' K^T =  \left(\begin{array}{cc} Q' B_\ell W'^T & 0 \\ 0 & R' B_\ell^T V'^T \end{array}\right).\]
The matrix tuples $(V' B_1 R'^T, \dots, V' B_t R'^T)$  and  $(R' B_1^T V'^T, \dots, R' B_t V'^T)$ are not isometric because otherwise, it contradicts Lemma~\ref{lem:restricted_tuple_isometry}.
By Theorem~\ref{thm:tuple_isometry}
the algorithm returns no. 

The running time of the algorithm is obtained by Theorem~\ref{thm:skew_tuple_isometry_intro}, Theorem~\ref{thm:tuple_isometry}, and Lemma~\ref{lem:s_transform_final}. 
\end{proof}

\section{Proof of  Theorem~\ref{thm:main_group} and Theorem~\ref{thm:main_matrix_space}}\label{sec:final}
We first present our algorithm for the isometry testing of skew-symmetric matrix spaces. 

\begin{framed}
\noindent \textbf{Isometry Testing of Skew-Symmetric Matrix Spaces Algorithm}

\noindent \textbf{Input:} Linear bases for two skew-symmetric matrix spaces $\matrixspaceG, \matrixspaceH \leq SS(n, \mathbb{F}_p)$, both of dimension $m$, for some prime $p > 2$ and positive integers $n, m$.

\noindent \textbf{Output:} Yes or no.

\begin{enumerate}
\item Construct skew-symmetric matrix space tensors $\mathbf{G}$ and $\mathbf{H}$ for  $\matrixspaceG$ and  $\matrixspaceH$,  respectively.
\item Return the output of the algorithm for the isometry testing of skew-symmetric matrix space tensors with $\mathbf{G}$
 and $\mathbf{H}$.
\end{enumerate}
\end{framed}

\begin{proof}[Proof of Theorem~\ref{thm:main_matrix_space}]
The correctness of the algorithm is by Lemma~\ref{lem:matrixspace_tensor_iso_equivalent} and Lemma~\ref{lem:tensor_iso_algo}. 
Now we bound the running time. By Definition~\ref{def:tensor_from_space},  $\matrixspaceG$ and  $\matrixspaceH$ can be constructed in time $\mathrm{poly}(n, m, p)$. 
The running time of the second step of the algorithm is obtained by Lemma~\ref{lem:tensor_iso_algo} and Lemma~\ref{lem:test_semi_canonical_form}.
\end{proof}

\begin{framed}
\noindent \textbf{Isomorphism Testing for $p$-Groups of  Class 2 and Exponent $p$ Algorithm}

\noindent \textbf{Input:} Two $p$-groups $G$ and $H$ of class 2 and exponent $p$ for some prime $p > 2$. 

\noindent \textbf{Output:} Yes or no.

\begin{enumerate}
\item If the orders of the two groups are different, then return no. Otherwise, let $n$ denote the order of group $G$, and $k$ be $\log_p(n)$.

\item If $k \leq (\log_2( p))^5$, then 
\begin{enumerate}
\item Enumerate all the possible $g_1, \dots, g_k \in G$  and $h_1, \dots, h_k \in H$ such that $\{g_1, \dots, g_k\}$ is a generating set of $G$ and $\{h_1, \dots, h_k\}$ is a generating set of $H$. 
\item For each enumeration, if $f(g_i) = h_i$ gives an isomorphism from $G$ to $H$, then return yes.
\item Return no.
\end{enumerate}

\item Construct skew-symmetric matrix spaces  $\matrixspaceG$ and $\matrixspaceH$ for groups $G$ and $H$, respectively, via Baer's correspondence.

\item Return the output of the algorithm for the isometry testing of skew-symmetric matrix spaces with $\matrixspaceG$ and $\matrixspaceH$.

\end{enumerate}
\end{framed}

\begin{proof}[Proof of Theorem~\ref{thm:main_group}]
The correctness of the algorithm is obtained by Theorem~\ref{thm:baer_correspondance} and Theorem~\ref{thm:main_matrix_space}. 
Now we bound the running time of the algorithm. 
If $k \leq (\log_2( p))^5$, then the running time of the algorithm is $p^{O(k^2)}$ because there are $p^k$ elements in each group, and there are $2k$ elements need to enumerate, $k$ elements for $G$ and $k$ elements for $H$.
Since
\[k = k^{5/6} \cdot k^{1/6}  \leq  k^{5/6} \cdot (\log_2(p))^{5/6} = (k\cdot \log_2(p))^{5/6},\]
 we have $p^{O(k^2)} \leq p^{O(k \cdot  (k\cdot \log_2(p))^{5/6})} = n^{O((\log n)^{5/6})}$.
Hence, the running time for this case is $ n^{O((\log n)^{5/6})}$.

If $k > (\log_2( p))^5$, then by Theorem~\ref{thm:main_matrix_space}, the running time of the algorithm 
is $p^{O(k^{1.8} \cdot \log_2 (p))}$.  
Since 
\[k^{0.8} \cdot \log_2(p) = k^{0.8} \cdot (\log_2 (p))^{1/6} \cdot (\log_2(p))^{5/6} < k^{0.8} \cdot k^{1/30} \cdot (\log_2(p))^{5/6} = (k\cdot \log_2(p))^{5/6},\]
 we have $p^{O(k^{1.8} \cdot \log_2 (p))}  \leq p^{O(k \cdot  (k\cdot \log_2(p))^{5/6})} = n^{O((\log n)^{5/6})}$.
Hence, the running time for this case is also $ n^{O((\log n)^{5/6})}$.
\end{proof}

\bibliography{isomorphism}
\bibliographystyle{alpha}

\end{document}